\RequirePackage{etex}
\def\draft{0}
\def\sigconf{0}
\def\big{0}
\def\anon{0}

\def\masterthesis{0}
\def\cryptology{0}
\def\llncs{0}  

\ifnum\big=1
    \documentclass[final,17pt]{extarticle} \usepackage{fullpage}
    
\else
    \ifnum\sigconf=1
        \documentclass[sigconf,final]{acmart}
    \else 
        \ifnum \llncs=1

            \documentclass[runningheads]{llncs}
        \else
            \ifnum\draft=1
                \documentclass[11pt,draft,a4paper]{article}
            \else
                \ifnum\masterthesis=1
                    \documentclass[11pt,final,a4paper,titlepage]{article}
                \else
                    \documentclass[11pt,final,a4paper]{article}
                \fi
            \fi
            \usepackage{lmodern}
        \fi
    \fi
\fi
\input{macros}
\usepackage{tikz}
\usepackage{thm-restate}
\usepackage{multirow}
\newcounter{Game} 
\setcounter{Game}{0}
\newenvironment{game}[1][htb]
    {
    \floatname{algorithm}{Game}
    \refstepcounter{Game}
   \begin{algorithm}[#1]%
  }{\end{algorithm}}
\makeatother  


\usepackage{xspace}

\newcommand{\Noise}[1]{\ensuremath{\mathsf{Noise(#1)}}\xspace}
\newcommand{\tolerant}[1]{\ensuremath{#1\textit{-}\mathsf{noise}\textit{-}\mathsf{tolerant}}\xspace}
\newcommand{\oneunrest}[1]{\ensuremath{\CTMAC^{#1}_{\mathsf{1\textit{-}time}}}\xspace}
\newcommand{\Junrest}[1]{\ensuremath{\CTMAC^{#1}_{\mathsf{J\textit{-}times}}}\xspace}
\newcommand{\polyunrest}[1]{\ensuremath{\CTMAC^{#1}_{\mathsf{Poly\textit{-}times}}}\xspace}

\newcommand{\sigOrUnrest}[1]{\ensuremath{\widetilde{\CTMAC}^{#1}_{\mathsf{Poly\textit{-}times}}}\xspace}
\newcommand{\Bitspace}{\ensuremath{\mathsf{\{0,1\}}}\xspace}
\newcommand{\Stamp}{
 {\mathchoice
  {\includegraphics[height=1.8ex]{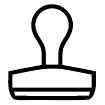}}
  {\includegraphics[height=1.8ex]{RubberStamp}}
  {\includegraphics[height=1.2ex]{RubberStamp}}
  {\includegraphics[height=0.9ex]{RubberStamp}}
 }
}

\newcommand{\Miss}{\ensuremath{\mathsf{Miss}}\xspace}
\newcommand{\Bog}{\ensuremath{\mathsf{Bog}}\xspace}

\newcommand{\TMS}{\ensuremath{\mathsf{TMS}}\xspace}
\newcommand{\qa}{\ensuremath{\mathsf{q}}\xspace}

\newcommand{\Tmac}{\ensuremath{\mathsf{TMAC}}\xspace}
\newcommand{\Exly}{\ensuremath{\mathsf{Existentially}}\xspace}

\newcommand{\keygen}{\ensuremath{\mathsf{key\textit{-}gen}}\xspace}
\newcommand{\Vrfy}{\ensuremath{\mathsf{verify}}\xspace}

\newcommand{\QBitspace}{\ensuremath{\mathsf{\mathcal{BB}84}}\xspace}
\newcommand{\Sign}{\ensuremath{\mathsf{sign}}\xspace}
\newcommand{\Sabotage}{\ensuremath{\mathsf{Sabotage}}\xspace}

\newcommand{\QPT}{\ensuremath{\mathsf{QPT}}\xspace}
\newcommand{\str}{\ensuremath{\mathsf{Strong}}\xspace}

\newcommand{\NVR}{\ensuremath{\mathsf{\mathcal{VR}^\eta}}\xspace}
\newcommand{\NVRk}{\ensuremath{\mathsf{\mathcal{VR}^\eta_\mathnormal{k}}}\xspace}

\newcommand{\VR}{\ensuremath{\mathsf{\mathcal{VR}}}\xspace}

\newcommand{\PPT}{\ensuremath{\mathsf{PPT}}\xspace}
\newcommand{\Woof}{\ensuremath{\mathsf{UOWHF}}\xspace}

\newcommand{\Herm}{\ensuremath{\mathsf{Herm}}\xspace}
\newcommand{\Pos}{\ensuremath{\mathsf{Pos}}\xspace}
\newcommand{\Pd}{\ensuremath{\mathsf{Pd}}\xspace}
\newcommand{\cca}{\ensuremath{\mathsf{CCA}}\xspace}
\newcommand{\PrivKcca}{\ensuremath{\mathsf{PrivK^\cca}}\xspace}

\newcommand{\EncForge}{\ensuremath{\mathsf{Enc\textit{-}forge}}\xspace}
\newcommand{\Aenc}{\ensuremath{\mathsf{AENC}}\xspace}

\newcommand{\cma}{\ensuremath{\mathsf{CMA}}\xspace}
\newcommand{\cpa}{\ensuremath{\mathsf{CPA}}\xspace}
\newcommand{\Sec}{\ensuremath{\mathsf{Secret}}\xspace}
\newcommand{\ksig}{\ensuremath{\mathsf{ksig}}\xspace}

\newcommand{\Dec}{\ensuremath{\mathsf{dec}}\xspace}
\newcommand{\MM}{\ensuremath{\mathsf{PM}}\xspace}

\newcommand{\Unc}{\ensuremath{\mathsf{Unconditionally}}\xspace}

\newcommand{\OTL}{\ensuremath{\mathsf{OTL}}\xspace}
\newcommand{\OTO}{\ensuremath{\mathsf{OT1}}\xspace}
\newcommand{\OT}{\ensuremath{\mathsf{OT}}\xspace}
\newcommand{\OTM}{\ensuremath{\mathsf{OTM}}\xspace}
\newcommand{\wrap}{\ensuremath{\mathsf{wrap}}\xspace\xspace}

\newcommand{\UC}{\ensuremath{\mathsf{UC}}\xspace}
\renewcommand{\TM}{\ensuremath{\mathsf{TM}}\xspace}
\newcommand{\CTMAC}{\ensuremath{\mathsf{CTMAC}}\xspace}

\newcommand{\Cenc}{\ensuremath{\mathsf{CENC}}\xspace}
\newcommand{\Cmac}{\ensuremath{\mathsf{CMAC}}\xspace}
\newcommand{\SDP}{\ensuremath{\mathsf{SDP}}\xspace}
\newcommand{\Enc}{\ensuremath{\mathsf{enc}}\xspace}
\newcommand{\QECD}{\ensuremath{\mathsf{QECD}\xspace}}
\newcommand{\CVPRM}{\ensuremath{\mathsf{CVPRM}}\xspace}

\newcommand{\rand}{\ensuremath{\mathsf{rnd}}\xspace}

\newcommand{\Del}{\ensuremath{\mathsf{del}}\xspace}
\newcommand{\Badv}{\ensuremath{\mathsf{\mathcal{B}}}\xspace}
\newcommand{\Cadv}{\ensuremath{\mathsf{\mathcal{C}}}\xspace}
\newcommand{\Sadv}{\ensuremath{\mathsf{\mathcal{S}}}\xspace}
\newcommand{\Radv}{\ensuremath{\mathsf{\mathcal{R}}}\xspace}

\newcommand{\Tadv}{\ensuremath{\mathsf{\mathcal{T}}}\xspace}

\newcommand{\create}{\ensuremath{\mathsf{create}}\xspace}
\newcommand{\run}{\ensuremath{\mathsf{run}}\xspace}

\newcommand{\exWCerDel}{\ensuremath{\mathsf{WEAK\textit{-} DEL\textit{-}IND}}\xspace}

\newcommand{\mint}{\ensuremath{\mathsf{mint}}\xspace}
\newcommand{\Cons}{\ensuremath{\mathsf{Cons}}\xspace}

\newcommand{\MoneyForge}{\ensuremath{\mathsf{MONEY\textit{-}FORGE}}\xspace}

\newcommand{\Iset}{\ensuremath{\mathcal{I}}}

\newcommand{\UnfExp}{\ensuremath{\mathsf{Forge}}\xspace}
\newcommand{\OneToken}{\ensuremath{\mathsf{1\textit{-}\tokengen}}\xspace}
\newcommand{\JToken}{\ensuremath{\mathsf{j\textit{-}\tokengen}}\xspace}
\newcommand{\UnfDef}{\ensuremath{\mathsf{Unforgeable}}\xspace}
\newcommand{\SabDef}{\ensuremath{\mathsf{Unsabotagable}}\xspace}

\newcommand{\kaut}{\ensuremath{\mathsf{k_{\Aenc}}}\xspace}
\newcommand{\kenc}{\ensuremath{\mathsf{c}}\xspace}
\newcommand{\Fwrap}{\ensuremath{\mathcal{F}_\wrap}\xspace}
\newcommand{\FOTM}{\ensuremath{\mathcal{F}_\OTM}\xspace}

\usepackage{adjustbox} 
\tikzset{
    pil/.style={
        ->,
       thick,
       shorten <=2pt,
       shorten >=2pt
    }
}
\usepackage{MnSymbol}
\makenomenclature

\begin{document}
\ifnum\masterthesis=0
    \title{Noise-Tolerant Quantum Tokens for MAC
    }
\fi

\ifnum\anon=0
    \ifnum\masterthesis=0
        \ifnum\sigconf=0
            \ifnum\cryptology=1
                \author{Amit Behera}
                \affil{Department of Computer Science, Ben-Gurion University of the Negev, Beersheba, Israel\\
                amitbehera1767@gmail.com}
                \author{Or Sattath}
                \affil{Department of Computer Science, Ben-Gurion University of the Negev, Beersheba, Israel\\
                sattath@post.bgu.ac.il}
                \author{Uriel Shinar}
                \affil{Department of Computer Science, Ben-Gurion University of the Negev, Beersheba, Israel\\
                shinaru@post.bgu.ac.il}
            \else
                \author[1]{Amit Behera}
                \author[1]{Or Sattath}
                \author[1]{Uriel Shinar}
                \affil[1]{Computer Science Department, Ben-Gurion University}
            \fi
        \else
            \author{Or Sattath}
            \affiliation{%
            \institution{Computer Science Department, Ben-Gurion University}
            \country{Israel}}
        \fi
    \fi
\else
    \ifnum\sigconf=0
        \author{}
        \ifnum\llncs=1
             \institute{}
        \fi

    \fi
\fi

\ifnum\masterthesis=0
    \ifnum\sigconf=0
        \maketitle
    \fi
\fi

\ifnum\masterthesis=1
    
    \begin{titlepage}
        \centering
        { Ben-Gurion University of the Negev}
        
        {The Faculty of Natural Sciences}
        
        {\small The Department of Computer Science}
        
        \vspace{2cm}
        
        {\Large \bfseries Template of Thesis}
        
        \vspace{2cm}
        
        {\small Thesis submitted in partial fulfillment of the requirements for the Master of Sciences degree}
        
        \vspace{1cm}
        
        {\bfseries Name of Student}
        
        {Under the supervision of Dr. Or Sattath}
        
        \vspace{2cm}
        
        \today
    \end{titlepage}
    
    \begin{titlepage}
        \centering
        { Ben-Gurion University of the Negev}
        
        {The Faculty of Natural Sciences}
        
        {\small The Department of Computer Science}
        
        \vspace{2cm}
        
        {\Large \bfseries Template of Thesis}
        
        \vspace{2cm}
        
        {\small Thesis submitted in partial fulfillment of the requirements for the Master of Sciences degree}
        
        \vspace{1cm}
        
        {\bfseries Name of Student}
        
        {Under the supervision of Dr. Or Sattath}
        
        \vspace{1cm}
        
        {\small Signature of student: \longunderline Date: \longunderline}
        
        \vspace{0.5cm}
        
        {\small Signature of supervisor: \longunderline Date: \longunderline}
        
        \vspace{0.5cm}
        
        \begin{changemargin}{-1cm}{-1cm}
        \centering
            {\small Signature of the committee for graduate studies: \longunderline Date: \longunderline}
        \end{changemargin}
        \vspace{2cm}
        
        \today
    \end{titlepage}
\fi

\ifnum\masterthesis=1
    \pagenumbering{roman}
    \begin{center}
        {\large \bfseries Title of Thesis}
        
        \vspace{0.5cm}
        
        {\bfseries Name of Student}
        
        \vspace{0.5cm}
        
        Thesis submitted in partial fulfillment of the requirements for the Master of Sciences degree
        
        \vspace{0.25cm}
        
        {Ben-Gurion University of the Negev}
        
        \vspace{0.25cm}
        
        \today
        
        \vspace{2cm}
        
        {\bfseries Abstract}
        
        \vspace{0.5cm}
    \end{center}
\else    
    \begin{abstract}
\fi
Message Authentication Code, or MAC, is a well-studied cryptographic primitive that is used to authenticate communication between two parties who share a secret key. A Tokenized MAC or TMAC  is a related cryptographic primitive, introduced by Ben-David \& Sattath (QCrypt'17), which allows limited signing authority to be delegated to third parties via the use of single-use quantum signing tokens. These tokens can be issued using the secret key, such that each token can be used to sign \emph{at most} one document.

We provide an elementary construction for TMAC based on BB84 states. Our construction can tolerate up to $14\%$ noise, making it the first noise-tolerant TMAC construction. 
The simplicity of the quantum states required for our construction, combined with its noise tolerance, make it practically more feasible than the previous TMAC construction.  

The TMAC presented is existentially unforgeable against adversaries with signing and verification oracles (i.e., it is analogous to EUF-CMA security for MAC), assuming that post-quantum one-way functions exist. 

\ifnum\masterthesis=0
    \end{abstract}
\fi

\ifnum\sigconf=1
    \keywords{}
    \maketitle
\fi

\ifnum\masterthesis=1
    \pagebreak
    \pagenumbering{arabic}
\fi

\ifnum\masterthesis=1
    \subsection*{Acknowledgments}
\fi

\ifnum\masterthesis=1
    \setcounter{tocdepth}{3}
    \tableofcontents
    
    \listoffigures
\fi
\begin{flushright}
 In memory of Stephen Wiesner, 1942–-2021.
\end{flushright}
\ifnum\llncs=0
 \tableofcontents
\fi

\section{Introduction} 
\label{sec:introduction}

The discovery of Wiesner's quantum money protocol~\cite{Wie83} initiated the study of quantum cryptographic primitives based on the no-cloning theorem. Ben-David and Sattath~\cite{BS16a} introduced one such primitive called a Tokenized Private Digital Signature scheme, or alternatively, Tokenized Message Authentication Code $(\Tmac)$.
Traditional $\mac$ schemes allow Alice to communicate with Bob in an authenticated manner by sharing a secret key. She can sign a document and send it along with its signature to Bob. Bob would then use the shared secret key to verify the signature and could respond back in the same manner. 
The motivation for $\Tmac$ arises from the following scenario in which a $\mac$ scheme is not enough. Suppose it so happens that Alice will be temporarily absent for a short time, and she would like Charlie to sign a few urgent documents on her behalf. The na\"ive thing that Alice can do is to send her key to Charlie. However, there are two main drawbacks with this approach: (i) The secret key would allow Charlie to forever sign an arbitrary number of documents on Alice's behalf, and (ii) even if Charlie is completely trusted, Charlie's computer can get hacked --- and the secret key copied --- without him even noticing. These two issues can be circumvented using a $\Tmac$ scheme. A $\Tmac$ consists of four algorithms: $\keygen$, $\tokengen$, $\Sign$, and $\Vrfy$. Alice can run the algorithm $\keygen$ to generate a \emph{classical} secret key $k$, which can then be shared with Bob, as in the case of $\mac$. Next, she can run $\tokengen_k$ to generate a \emph{quantum} token, denoted\footnote{The icon $\Stamp$ represents a rubber stamp.} as \nom{a_T}{$\ket{\Stamp}$}{ A quantum signing token}, that she would then send to Charlie. If needed, she could repeat this procedure \nom{a_r}{$r$}{The number of signing tokens given to the adversary} times to allow Charlie to sign $r$ documents. Charlie can then use the algorithm $\Sign_{\ket{\Stamp}}(m)$ to generate a \emph{classical} signature $\sigma$ for a \emph{classical} message $m$. Bob can verify the authenticity of the document Charlie sends by running $\Vrfy_k(m,\sigma)$.

The security guarantee of $\Tmac$ ensures that Charlie cannot produce $r+1$ signed documents (signature, document pair) given the $r$ signing tokens he was provided with. Hence, even if Charlie's computer is hacked, the damage would be limited to only a fixed number of documents, depending on the number of tokens Charlie had been given. One may wonder why cannot Charlie run the signing algorithm once with $m_1$, and then run the signing algorithm again with another document $m_2$. This is because the signing algorithm applies a (destructive) measurement on multiple qubits, and therefore, the token is consumed during the signing.  
 
Alice can even confirm after her return to the office, that Charlie has not kept any of the tokens or signed any unapproved documents. If Alice gave $r$ tokens to Charlie, and he claims to have used $n$ tokens for signing, then Alice would ask for those $n$ signed documents, and additionally, send Charlie $r-n$ fresh random documents to sign, in order to consume the remaining tokens. She would then verify all the $r$ signatures on those $r$ documents for confirmation. This property is known as revocability (see~\cite{BS16a} for a more rigorous definition). Note that the revocation procedure mentioned above, also allows Alice to learn the $n$ documents that Charlie signed when she was away.

\paragraph{Notions of Security.} \label{pg:notions_of_security}
The security notion that we consider for $\Tmac$ schemes is inspired by the $n$ to $n+1$ unforgeability of quantum money schemes and the $\cma$ security of vanilla $\mac$ schemes. This notion asserts that a quantum polynomial time adversary who is given $n\in \poly$ many tokens and has classical access\footnote{Classical here means that the adversary can only send classical documents and signatures to the oracles and, hence, cannot query in superposition.} to both a signing oracle and a verification oracle cannot produce valid signatures for $n+1$ distinct documents that have not been queried to the signing oracle (except with negligible probability). We say such a scheme is $\UnfDef^{\tokengen,\widetilde{\Sign},\Vrfy}$, where the superscripts denote the oracles available to the adversary.
 
One might also consider a strengthened notion of security, such as the one analogous to strong unforgeability (i.e., a \emph{different} signature for a document which was signed by the signing oracle also counts as forgery), as well as quantum access to the verification oracle.
Our construction satisfies neither of the two stronger security notions mentioned above, as shown in \cref{sec:Drawbacks Of the Conjugate Tmac}.
An extensive discussion on the different notions of security for $\Tmac$ is given in \cref{sec: notions of security}.

\paragraph{Our Contributions.} We consider an IID noise model $\Noise{p}$ for some $0\leq p\leq 1$, meaning the qubit at each coordinate gets corrupted\footnote{We say a qubit gets corrupted to mean that an arbitrary dimension-preserving CPTP map acts on it. Since we deal with BB84 states in our construction, the map in our case which (negatively) affects verification the most is the Pauli operator $Y$.} with a probability of $p$, and remains undeterred with a probability of $(1-p)$, independent of other qubits. A $\Tmac$ is $\tolerant{\delta}$ for some $0\leq \delta \leq 1$, if correctness holds for the scheme up to negligible error in the noise model $\Noise{\delta}$ (\cref{def:noise-model}). Our main result is as follows.
\nomenclature[A]{$\alpha$}{Approximately $\cos^2(\frac{\pi}{8})$}
\begin{restatable}{theorem}{MainResult}
Let\footnote {The value of $\alpha$ was computed numerically, the numerical error was relatively large and differed between engines (of magnitude $10^{-5}$ on the "sedumi" solver and $10^{-3}$ on the "SDPT3" solver), leaving some room for doubt if $\alpha$ is exactly $\cos^2(\frac{\pi}{8})$, or a slightly larger value. Regardless, the exact value of $\alpha$ does not affect the results of this work, besides the precise amount of noise tolerance.} $\alpha\approx \cos^2(\pi/8)$. 
Assuming post-quantum one-way functions exist, for every constant $\delta < 1-\alpha$, there exists an \\$\UnfDef^{\tokengen,\Vrfy,\widetilde{\Sign}}$  $\tolerant{\delta}$ $\Tmac$ (see \cref{def:Unforgeability,def:noise tolerance}) scheme $\Pi_\delta$  (see \cref{alg:noise_tolerant Conjugate Tmac,prp:oneRtofb}) based on conjugate coding states.
\label{thm:main_result}
\end{restatable}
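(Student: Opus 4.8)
The plan is to establish the two halves of the statement --- noise tolerance and unforgeability --- separately, for the explicit construction $\Pi_\delta$ of \cref{alg:noise_tolerant Conjugate Tmac}, whose signing token $\ket{\Stamp}$ is a tensor product of BB84 qubits $H^{\theta_i}\ket{x_i}$, with the strings $x,\theta$ and all other per-token randomness derived pseudorandomly from the secret key via a post-quantum PRF. Signing a document $m$ amounts to measuring the token qubits in a basis string determined by $m$, and $\Vrfy_k$ recomputes $x,\theta$ and accepts iff the reported outcomes agree with $x$ on the positions measured in the ``correct'' basis, up to a fraction of disagreements bounded by a threshold $t$. The first move is to fix $t$ strictly between $\delta$ and $1-\alpha$; since $\delta<1-\alpha$ by hypothesis, such a $t$ exists, and the whole proof is organized around the fact that honest (even noisy) signatures land below $t$ while every forged signature is forced above it.

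For noise tolerance I would argue correctness directly. On the positions that verification checks, the honest signer measures each qubit in its own preparation basis, so in the noiseless case the outcome equals $x_i$ deterministically; under $\Noise{\delta}$ each such qubit is corrupted independently with probability $\delta$, and the worst-case Pauli ($Y$) flips the checked outcome with probability $\delta$. Because the number of checked positions is $\Theta(n)=\poly(\lambda)$, a Chernoff bound shows that the empirical disagreement fraction exceeds $t>\delta$ only with negligible probability, giving correctness in the model $\Noise{\delta}$ and hence that $\Pi_\delta$ is \tolerant{\delta}.

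The heart of the argument is unforgeability. I would first replace the PRF by a truly random function, incurring only a negligible change by post-quantum PRF security, so that distinct tokens use information-theoretically independent $(x,\theta)$. Next, via \cref{prp:oneRtofb}, I would reduce the $n$-token-to-$(n+1)$-signatures game to the single-token base case: a forger outputting $n+1$ valid pairs on fresh documents must, informally, extract ``two signatures'' from one token, and embedding a single challenge token among $n-1$ honestly simulated ones reduces general $n$ to $n=1$ at a $1/\poly$ loss. The single-token bound is then a \emph{monogamy}/no-cloning statement: to pass verification on two distinct documents $m_1\neq m_2$, the adversary must, on every position where the two induced basis strings disagree, report $x_i$ for whichever document checks that position, without knowing the preparation basis $\theta_i$. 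This is exactly the one-qubit two-basis guessing game whose optimal winning probability is the SDP value $\alpha\approx\cos^2(\pi/8)$, attained by the Breidbart measurement; thus on a constant fraction of checked positions the adversary errs with probability at least $1-\alpha$, and a Chernoff bound pushes the disagreement fraction of at least one of the two signatures above $t$ except with negligible probability.

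Two points need care. The genuinely delicate step is the single-token monogamy bound: I must upgrade the per-qubit optimality of $\alpha$ to a statement about the \emph{joint} strategy on all $n$ qubits (the adversary's measurement need not be a tensor product), which is where the SDP/parallel-repetition flavour of the argument lives and where the gap $\delta<1-\alpha$ is actually consumed. The second subtlety is the verification oracle: classical $\Vrfy$ queries could in principle leak which positions were guessed correctly. I would neutralize them with a hybrid that answers every ``dangerous'' query with \emph{reject}, arguing that any query on which the real oracle would instead \emph{accept} is itself a forgery and hence already covered by the single-token bound; this makes the oracle efficiently simulable and closes the reduction. I expect the joint-measurement monogamy bound to be the main obstacle, with the verification-oracle hybrid the most error-prone bookkeeping.
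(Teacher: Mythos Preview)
Your high-level decomposition (tolerance via Chernoff, then single-token monogamy, then lift via \cref{prp:oneRtofb}) matches the paper's shape, but the step you flag as ``most error-prone bookkeeping'' is in fact a genuine gap, and your proposed fix does not work.

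\textbf{The verification-oracle hybrid fails.} You propose to answer ``dangerous'' $\Vrfy$ queries with \emph{reject}, arguing that any query the real oracle would \emph{accept} is already a forgery. But in the single-token game a successful query on \emph{one} document is not a forgery; forgery requires valid signatures on \emph{two} distinct documents. Concretely: the adversary can honestly sign $0$ to get $\sigma_0$, query $\Vrfy(0,\sigma_0)$ (accept), and then query $\Vrfy(0,\sigma_0\oplus e_i)$ for each unit vector $e_i$; in $\CTMAC$ the answer reveals whether $i\in\Cons_0$, so after $\secpar$ accepting queries the adversary learns the entire basis string $\theta$. None of these queries is a forgery, yet your hybrid rejects all of them, so the two views diverge completely. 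The paper explicitly notes that there exist $\Tmac$s that are $\UnfDef^{\OneToken}$ but not $\UnfDef^{\OneToken,\Vrfy}$, which is exactly why this shortcut cannot work. The paper's actual device is the opposite of rejecting: it \emph{strengthens} the oracle to a verify-and-reveal oracle $\NVR$ that, on the first successful query for $m$, hands the adversary $\Cons_m$ and $\Miss_{m,\sigma}$; this makes all subsequent queries for $m$ simulable by the adversary itself (\cref{prp:weak_to_strong_supplied basis}), reducing to at most one successful query per document, and only then does one reduce to the weak certified-deletion game (Game~\ref{exp: weak certified deletion}) whose optimal value is $\alpha^\secpar$.

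\textbf{Two smaller mismatches.} First, the construction referenced in the statement does not derive $(x,\theta)$ from a PRF; the $1$-restricted scheme uses fresh randomness per token, and the lift to many tokens (\cref{alg: Tom}) packages each per-token key inside an authenticated-encryption ciphertext appended to the token, so your ``swap PRF for random function'' step is not the reduction actually used. Second, your unforgeability argument for the noise-tolerant threshold (``per-qubit error $\geq 1-\alpha$, then Chernoff'') assumes independence across coordinates that an entangled attack need not respect. The paper handles this with a separate reduction (\cref{lem:noise reduction lemma}): embed a noiseless $f(\secpar)$-qubit challenge at random positions inside a $\secpar$-qubit token, so that with probability roughly $\bigl(1-\tfrac{2\eta\secpar}{\secpar-f(\secpar)}\bigr)^{f(\secpar)}$ all of the adversary's $\leq\eta\secpar$ errors miss the embedded coordinates, converting a noise-tolerant forger into a noiseless one with security parameter $f(\secpar)$; the choice $f(\secpar)=\Theta(\secpar)$ and $\eta<\tfrac{1-\alpha}{2}$ then gives an exponentially small bound.
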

The formal proof is given in \cref{sec:main_results}, on \cpageref{pf:thm:main_result}.

A previous construction for $\Tmac$~\cite{BS16a} achieves the same security under the same assumption.\footnote{\cite{BS16a} actually assumes the slightly stronger assumption of collision resistant hash functions. However using a universal one-way hash function (\Woof) instead, which is equivalent to one-way functions, would achieve the same result, as shown in \cref{sec:Security proof for full blown}.} However, the construction is not known to be noise-tolerant, and it also requires entangled states as the tokens. 

In comparison, our construction is noise-tolerant and only requires simple tensor product states, which improves the practicality of our construction.
Hence, the implementation of our $\Tmac$ scheme requires a non-perfect quantum channel capable of transmitting a BB84 state with less than a threshold percentage of error, and a quantum memory for the third party to store the BB84 token states for the duration it is granted to sign. Transmission of BB84 states or Wiesner money states for long distances with relatively low noise has already been demonstrated in quantum key distribution experiments~\cite{HRP+06,KLH+15}. Hence, the main practical challenge that remains is that of a quantum memory capable of storing BB84 states for reasonable periods of time. The main vectors used for transmission of quantum information are photons that are difficult to store with high fidelity, and the qubit lifespan is short even on quantum computers.  However, for tokenized private signatures, a short-term storage may very well be enough, as the tokens are expected to be temporary by nature.

As an application for $\Tmac$, we also show that any $\Tmac$ scheme can be used to construct a classically verifiable quantum money scheme, see \cref{sec:quantum money}. The verification can even be made non-interactive, at the cost of requiring temporary memory dependence.\footnote{This can be viewed as a small database that needs to be maintained for a short time frame, at the end of which the database can be deleted.} To the best of our knowledge, none of the known private quantum money schemes based on similarly simple states has non-interactive verification. We remark that these are the same constructions used in \cite{BS16a} in the context of public tokenized signature and public quantum money.  The fact that a secure $\Tmac$ implies secure private quantum money, combined with a recent result~\cite{Aar20} which shows that unconditionally secure private quantum money schemes do not exist, implies that no  $\Tmac$ scheme can be unconditionally unforgeable if the adversary is allowed to request a polynomial number of tokens (see \cref{thm: impossibility of unconditionally secure TMAC scheme.}). Since the existence of post-quantum one-way functions is one of the weakest possible computational assumptions in cryptography, this shows that our construction is close to optimal in terms of the assumptions required. 

Another application of our result is that it resolves an open problem of~\cite{BGZ21} regarding the construction of one-time memories in the stateless hardware model. This is a non-standard model in which it is possible to implement any program with classical input, as a stateless black-box that can be queried classically. One-time memories are an ideal modeling of $1$ out of $2$ non-interactive oblivious transfer, which are impossible to realize in the standard model. In~\cite{BGZ21}, the authors presented a conjugate coding-based construction of one-time memories in the stateless hardware model, and conjectured it to be a universally composable realization of one-time memories against unbounded malicious receivers in that model. In~\cite{BGZ21}, the authors prove this conjecture partially; they require the added assumption that the receiver makes fewer than $c\secpar$ (classical) queries to the stateless hardware where $c<0.114$ and $\secpar$ is the security parameter. In \cref{sec:One-Time-Memories from Stateless Hardware}, we show that our results imply that the result above can be extended to any polynomial number of classical queries.\footnote{One cannot hope for an unbounded number of queries as the entire input-output behavior of the stateless hardware can be extracted. In addition, as covered in~\cite{BGZ21}, quantum queries to the stateless hardware also result in insecure schemes.} Moreover, our work proves that this construction of one-time memories remains secure even in a noisy setting. We note that~\cite{CGL19} also presented a construction of one-time memories from stateless hardware, with a full security proof, but it was previously unknown whether this could be done with a conjugate coding-based scheme. 

\paragraph{Construction.}\label{pg:construction}
Our construction has two main steps. First, we construct a one-restricted $\Tmac$, meaning only single-bit documents can be signed. The construction is unconditionally unforgeable against single-token attacks in the following sense. Given a single token and access to a verification oracle, which can be queried only with classical strings and only a polynomial number of times,  a computationally unbounded adversary that is given one signing token, cannot produce valid signatures for both $0$ and $1$ (except with negligible probability). We say such a scheme is $\Unc$ $\UnfDef^{\OneToken,\Vrfy}$, where $\OneToken$ represents the single token available to the adversary, and $\Vrfy$ the verification oracle.

Assuming post-quantum one-way functions exist, any such single bit $\Tmac$ that is $\Unc$
$\UnfDef^{\OneToken,\Vrfy}$ can be lifted in a noise-tolerance preserving manner using standard techniques similar to the ones used in~\cite{BS16a} to a scheme that can sign documents of any length, and that is $\UnfDef^{\tokengen,\widetilde{\Sign},\Vrfy}$\footnote{Unforgeability against Quantum Polynomial Time  adversaries suffices for this lift as well.} (see ``Notions of security'', \cpageref{pg:notions_of_security}). The approach is reminiscent of the mini-scheme to full-scheme lift used for quantum money \cite{AC13}.
The above-mentioned lifting is the only place where we need (standard) computational hardness assumptions in our entire construction. This lifting is described in detail in \cref{sec:Security proof for full blown}.
    
The main challenge in our work is to construct a single-bit $\Tmac$ that is  $\Unc$ $\UnfDef^{\OneToken,\Vrfy}$. Our construction draws inspiration from the classically verifiable variant of Wiesner's quantum money~\cite{MVW12} (see also~\cite{PYJ+12}). In this variant of Wiesner's quantum money, the bill is a tensor product of random conjugate coding states (also known as BB84 states), such as $\ket{1}\tensor\ket{-}\tensor\ket{0}\tensor\ket{-}\tensor\ldots\tensor\ket{0}$, the classical representation of which is kept secret by the bank. The bank sends a uniformly random challenge bit per qubit when a customer approaches the bank to verify the money. An honest customer should measure each qubit either in the computational basis or in the Hadamard basis, according to the challenge received, and should send the result as a proof. The bank accepts if the result is consistent with the corresponding qubit at the coordinates where the challenge string agrees with the secret string representing the conjugate state. The money scheme can be made noise-tolerant by relaxing the verification to accept even if the consistency check of the result with respect to the challenge fails at a small fraction of the coordinates. The same security guarantees hold for the noise-tolerant variant, as shown in~\cite{PYJ+12,MVW12}.

In our construction for the single-bit noise-sensitive $\Tmac$ scheme (\cref{alg:Conjugate Tmac}), the quantum money state for the above-mentioned scheme serves as the token, and its classical representation serves as the secret key. The challenges are no longer random, but instead correspond to the (single-bit) document to be signed, i.e., an honest signer measures all the token's qubits in the computational basis in order to sign $0$, or measures them in the Hadamard basis in order to sign $1$. In both cases, the measurement outcome is the signature. Verification of an alleged signature for the document $0$ (respectively, $1$) is done by checking if the signature string is consistent with the secret key at the coordinates that had computational basis (respectively, Hadamard basis) states.
The construction is made noise-tolerant in a way similar to the quantum money scheme, i.e., we relax the verification to accept even if the consistency check of the signature string for the document with the secret key fail at a small fraction of the coordinates. 

\ifnum\llncs=0
Forgery against the noise-sensitive (respectively, noise-tolerant) variant of the $\Tmac$ scheme --- i.e., producing valid signatures of both $0$ and $1$ using a single token --- corresponds to passing two verifications in the noise-sensitive (respectively, noise-tolerant) variant of the money scheme using a single money state. However, there is a caveat that unlike the money scheme, the challenges are chosen by the adversary. We prove that despite the mentioned caveat, the probability to forge a signature for both $0$ and $1$ decreases exponentially in the number of qubits in both the noise-sensitive and noise-tolerant variants. Moreover, the security guarantee needs to hold even against adversaries that have classical access to a verification oracle (which is not the case for Wiesner's quantum money). For these reasons, we cannot reduce the $\Tmac$ unforgeability to the quantum money unforgeability, and provide a complete new proof instead.
\fi
\ifnum\llncs=0
 The complete scheme is shown in \cref{alg:Full construction}.

 \begin{algorithm} 
    \caption{${\CTMAC^{0.07}}$ - A $14\%$ noise-tolerant $\Tmac$ scheme.}
    \label{alg:Full construction}
    \textbf{Assumes:} \nom{Enc_Aut}{$\Aenc$}{Authenticated encryption scheme} is a post-quantum classical-queries strong authenticated encryption scheme, $\{h_r : \Bitspace^\ast \rightarrow \Bitspace^{\ell'(|r|)}\}_{r\in\Bitspace^\ast}$ is a universal one-way hash function family  with indexing function $I$, $\ell(\secpar)=\ell'(\secpar)+\secpar$.
    \begin{algorithmic}[1] 
        \Procedure{$\keygen$}{$1^\secpar$}
            \State \textbf{Return} $ \Aenc.\keygen(1^\secpar)$.
        \EndProcedure
    \end{algorithmic}
    \begin{algorithmic}[1] 
        \Procedure{$\tokengen_k$}{}
            \For  {$j=1,\ldots,\ell(\secpar)$}
            \State $a^i,b^j\sample\Bitspace^\secpar$.
            \EndFor
            \State Denote $\kappa=(a^1,b^1),\ldots,(a^{\ell(\secpar)},b^{\ell(\secpar)})$.
         \State    Compute $\ket{\widetilde{\Stamp}_j}=H^{b^j}\ket{a^j}$.
           \State \textbf{Return} $(\otimes^{\ell(\secpar)}_{j=1}\ket{\widetilde{\Stamp}_j}, \Aenc.\enc_{k}(\kappa))$.
        \EndProcedure

    \end{algorithmic}
	\begin{algorithmic}[1] 
        \Procedure{$\Sign_{\ket{\Stamp}}$}{$m$}
            \State Interpret $\ket{\Stamp}$ as $(\otimes^{\ell(\secpar)}_{j=1}\ket{\widetilde{\Stamp}_j},e)$.
            \State $r_1\gets I(\secpar) $, $r_2\sample \Bitspace^\secpar$. Let $m'\equiv (h_{r_1}(m||r_2)||r_1)$
            \For  {$j=1,\ldots,\ell(\secpar)$}
            \State Let $m'_j$ denote the $j^{th}$ bit of $m'$ and measure $({H^{m'_j}})^{\otimes\secpar}\ket{\widetilde{\stamp}_j}$ in the computational basis to obtain the classical string $s^j$.
            \EndFor
            \State \textbf{Return:}$((s^1,s^2,\ldots,s^{\ell(\secpar)}),e,r_1,r_2)$.
        \EndProcedure

    \end{algorithmic}
  	\begin{algorithmic}[1] 
    \Procedure{$\Vrfy_k$}{$m,\sigma$}
            \State Interpret $\sigma$ as $((s^1,s^2,\ldots,s^{\ell(\secpar)}),e,r_1,r_2)$.
            \State Compute $\Aenc.\Dec_{k}(e)$ to obtain $\kappa\equiv(a^1,b^1),\ldots,(a^{\ell(\secpar)},b^{\ell(\secpar)})$ and denote $m'\equiv (h_{r_1}(m||r_2)||r_1)$, let $m'_j$ be the $j^{th}$ bit of $m'$.
            \For{$j=1,\ldots,\ell(\secpar)$}
                \State Let $a^j_i,b^j_i,s^j_i$ denote the $i^{th}$ bit of $a^j,b^j,s^j$ respectively.
               \State  Construct the subset $\Cons_{m'_j}=\{i\in[\secpar]\mid b^j_i=m'_j\}$.
                \State Construct the subset
                $\Miss_{m'_j,s^j}=\{i\in \Cons_{m'_j}\mid s^j_i\neq a^j_i\}$.
                \If{$\abs{\Miss_{m'_j,s^j}}>0.07\secpar$}
                    \State \textbf{Return} $0$.
                \EndIf
            \EndFor
            \State \textbf{Return} $1$.
    \EndProcedure
    \end{algorithmic}
\end{algorithm}
 \fi
 
\ifnum\llncs=0 
\paragraph{Proof Techniques.}
We first give the proof sketch for the security of the noise-sensitive scheme, and then explain how to reduce the security of the noise-tolerant scheme to that of the noise-sensitive one.

First, we prove unforgeability against single-token attacks for the noise-sensitive one-restricted $\Tmac$ scheme (see the paragraph ``Construction'' on \cpageref{pg:construction}). 
The main challenge in the proof is to tackle the verification oracle, and to understand what information the adversary can procure from the verification oracle. In order to do so, we define a new game for our specific scheme where we strengthen the adversary by replacing the verification oracle with a stronger oracle. This oracle not only answers verification queries but also outputs a part of the secret key after a successful query for a document. The part of the secret key models all the information that the adversary might learn using the verification oracle.
Let $m$ be a document for which a successful query was made in the game. We argue that in such a game, if a successful query is made with respect to a document $m$, then it is redundant to make verification queries corresponding to $m$, thereafter. This is because the adversary can simulate such queries on her own, using the part of the secret key that was received after the first successful query for $m$.

In the next step, an adversary in the new game that we described in the last paragraph is reduced to an adversary against a Quantum Encryption with Certified Deletion scheme. In a Quantum Encryption with Certified Deletion ($\QECD$) scheme (\cite{BI20}, see also \cref{def:qunatum encryption with certified deletion}), a user can provide proof of deletion for a quantum cipher-text such that the proof can be verified using a separate algorithm. The security guarantee is that if the adversary submits a proof of deletion for a cipher-text that passes verification, then they could not learn the message later, even if the secret key is provided. 

We map any adversary against $\Tmac$ in the intermediate game described in the last paragraph to an adversary against a $\QECD$ scheme in a weaker security game that is a relaxation of the certified deletion game in~\cite{BI20} (see \cref{sec: reduction to weak certified deletion}). We should point out that this is a scheme-specific reduction, and we do not know if it can be made generic.

Lastly, standard semi-definite programming techniques are used to provide a bound on the success probability of an adversary in the weak certified deletion game against the $\QECD$ scheme mentioned above--see \cref{sec: analysis of weak certified deletion scheme}.

The proof of unforgeability for the noise-tolerant variant (see the paragraph ``Construction'' on \cpageref{pg:construction})  follows along similar lines. We consider an intermediate scheme-specific game for the noise-tolerant scheme with a strengthened adversary, similar to what we did for the noise-sensitive scheme. In the next step, however, instead of mapping the strengthened adversary to an adversary in a certified deletion game, the adversary is mapped to a strengthened adversary in the corresponding game\footnote{This is the same intermediate game that we discussed above in the analysis of the noise-sensitive scheme.} for the noise-sensitive scheme with a smaller security parameter. In our analysis for the noise-sensitive scheme, we have shown that the winning probability of the strengthened adversary against the noise-sensitive scheme in the later game is exponentially small. The reduction then gives an exponentially small bound on the winning probability of the strengthened adversary against the noise-tolerant scheme.
 
 \fi

 \paragraph{Related Works.}\label{pg:Related Works}
 In~\cite{BS16a}, the authors also introduced the public variant of tokenized MACs called Public Tokenized Digital Signatures that allow anyone to verify the validity of a signature using a public key. The authors provide a construction for the private variant and a candidate construction for the public variant based on hidden sub-spaces that were originally used by Aaronson and Christiano to construct public quantum money~\cite{AC13}. A variant of the construction for public tokenized digital signature construction in~\cite{BS16a} was proven to be secure based on post-quantum  Indistinguishability Obfuscation (IO) in~\cite{CLLZ21}. The main advantage of the private scheme in~\cite{BS16a}, as well as the public variant in~\cite{CLLZ21}, over our construction is that it is secure even against adversaries performing superposition queries to the verification oracle. Another advantage is that the $\Tmac$ scheme in~\cite{BS16a} is a strongly unforgeable $\Tmac$, i.e., given a token, the adversary cannot even produce two different signatures for the same document. The analysis of the construction in~\cite{BS16a} is based on query-complexity theoretic lower bounds, whereas the analysis of our construction involves techniques based on semi-definite programming. 
 However,~\cite{BS16a}  does not achieve noise tolerance. 

$\Tmac$ is an enhanced version of the much studied cryptographic primitive called Message Authentication Codes ($\mac$) in terms of functionality. 
There are two kinds of $\mac$ schemes that are of interest to us: post-quantum $\mac$s for classical documents, and $\mac$s for quantum documents. There have been quite a few works~\cite{BZ13,GLL+16,SY17} on post-quantum $\mac$s, starting with the work of Boneh and Zhandry~\cite{BZ13}. Among other results, the authors in~\cite{BZ13} showed that quantum secure Pseudorandom functions imply an existentially unforgeable $\mac$ that is secure against quantum-chosen message attacks, i.e., even against quantum adversaries who can query the $\mac$ on superposition of documents. The notion of authentication of quantum data was first studied by Barnum et al. in~\cite{BCG+02}. Later, there were subsequent works including~\cite{BW16,AM17,GYZ17} that strengthened the security definitions and the constructions achieving them, although these results are less relevant for this work since we concentrate on the authentication of \emph{classical} documents.

Tokenized digital signatures are not the only cryptographic primitives that achieve the task of revocable signature delegation. 
One such primitive that achieves such delegation is called the ``one-shot signature'', introduced recently by Amos et al.~\cite{AGK+20}. 
In one-shot signatures, everyone can generate a quantum signing token and a classical verification key that can verify alleged signatures produced by the token.
The security guarantees that one cannot produce two signatures that pass verification with the same verification key. One-shot signatures allow an owner to delegate signing authority to a signer only using classical communication.\footnote{The owner can sign a verification key $vk$ of the signer, using a $\mac$ or a digital signature, and send the signature of $vk$ to the signer. The signer then can sign one document using its token and append the signature of $vk$ that he received to it. 
The signature of $vk$ made by the owner validates all signatures that pass verification with respect to $vk$. The security offered by the one-shot signatures restricts the signer to only produce one signature that passes verification with respect to $vk$.}
This is the main advantage of  one-shot signatures over $\Tmac$ since a $\Tmac$ requires the owner to send the quantum tokens to the signer over a quantum channel.
However, the obvious limitation is that one-shot signatures lack efficient construction. The construction given in~\cite{AGK+20} is based on an oracle, for which we have no efficient instantiation as of yet. In contrast, tokenized digital signatures exist under standard assumptions. 
Another related notion is the bolt-to-certificate transformation introduced in~\cite{CS20}. In~\cite{CS20}, the authors used Zhandry's quantum lightning~\cite{Zha21} to construct such a transformation.  
Note that the lightning bolts are quantum states that are similar to tokens for digital signatures in two ways. First, both of these are unclonable, i.e., it is not possible to make two bolts or two tokens from a single bolt or token. 
Second, they are revocable; bolts can be converted to certificates (as shown in~\cite{CS20}), and tokens can be spent by signing a bit. However, the main difference between the two is that unlike lightning bolts, there are at least two revocation procedures for tokens: one can either sign the document with $0$ or $1$ in order to spend it. Note that in order to use the bolts for signing bits, it is crucial to have two different revocation procedures: one for signing the bit $0$ and the other for $1$. 
However, lightning-to-certificate provides only a single revocation method. Nevertheless, the transformation of bolt-to-certificate has been proven useful in other areas such as in the bitcoin scalability problem~\cite{CS20} and for semi-quantum money~\cite{RS20}. There is no construction of quantum lightning based on standard assumption, and the security of the existing scheme~\cite{Zha21} was put into question by Roberts~\cite{Rob21}.

 $\Tmac$s also imply other important cryptographic primitives that are well studied in the literature. One such example is unforgeable private-key quantum money~\cite{Wie83,AC13}, which we discuss in more detail in \cref{sec:quantum money}. Similarly, it is also possible to construct unforgeable public quantum money~\cite{AC13,Zha21,FGH+12} from public tokenized digital signatures (as shown in~\cite{BS16a}), which is a stronger and much harder primitive to construct. 
 
 Another application of $\Tmac$ is that of disposable cryptographic backdoors~\cite{CGL19}. For a cryptographic primitive such as an encryption scheme, the task is to give one-time backdoor access to secret information, such as the message hidden under the cipher, in case of an encryption. This task is achieved by defining a disposable backdoor variant of encryption, where the key-generation algorithm additionally outputs a quantum token or backdoor that can be used only once to learn the message from the cipher. In theory, the above variant can be constructed using exotic primitives called one-time programs~\cite{GKR08}, which are programs that can be executed once and then they become useless. However, these primitives cannot exist in the standard model, even in a quantum setting, see~\cite{GKR08,BGS13}. In~\cite{CGL19}, the authors show how to construct one-time programs from $\Tmac$s, relative to classical stateless hardware. It should be noted that all queries to the oracle in the stateless hardware model are classical, and hence the reduction in~\cite{CGL19} only requires the $\Tmac$ scheme to be unforgeable with respect to classical queries. Unforgeability against superposition queries for the underlying $\Tmac$ does not provide any additional utility. 
 This is yet another scenario where the unforgeability definition considered in our work (i.e., unforgeability with respect to classical queries), despite not being the strongest security notion, is the correct notion to consider under the circumstances.  
 
 The same result regarding one-time programs was considered in~\cite{BGZ21}, in which the authors directly use a conjugate coding state-based construction to achieve one-time programs. However, only a partial result regarding the security of the construction was provided. In \cref{sec:One-Time-Memories from Stateless Hardware}, the results of this work are used to prove a conjecture in~\cite{BGZ21}, and to strengthen their result.
 
 \paragraph{Organization.}
 \ifnum\llncs=0
We start with some preliminaries in \cref{sec:Preliminaries and Notations}, where we also discuss the definition, correctness, and security notions of $\Tmac$. An overview of the main result is given in \cref{sec:main_results}. The construction for a $1$-bit $\Tmac$ scheme that is unforgeable against single token attacks with verification oracle, $\CTMAC^\eta$ is given in \cref{sec:candidate_construction}, and its unforgeability is proved in \cref{sec:Security of Conjugate TMAC}.  \cref{sec:applications} covers applications of our result: a solution to an unsolved problem in~\cite{BGZ21} regarding a construction of one-time memories from stateless hardware, and a construction of private quantum money from $\Tmac$. In \cref{sec:open questions}, we discuss some open questions and future directions. The expansion of $\CTMAC^\eta$ to a full-blown scheme is covered in \cref{sec:Expansion To a Full-Blown Scheme}. In \cref{sec: analysis of weak certified deletion scheme}, we provide a detailed security analysis of the two $\QECD$ schemes used in the proof, which is left out of the main text. \cref{sec:Drawbacks Of the Conjugate Tmac} discusses drawbacks of $\CTMAC^\eta$,  \cref{sec:Comparison with Vanilla Unforgeability} shows that any $\Tmac$ satisfying a standard notion of unforgeability can also be used as a classical $\mac$ and \cref{sec:Fixed J Secure Tokens} discusses a lift to a (length-restricted) $\Tmac$ scheme which is unconditionally unforgeable for a fixed number of tokens given to the adversary. A nomenclature appears on the last page (\cpageref{pg:nomenclature}).
\else
We start with some preliminaries in \cref{sec:Preliminaries and Notations}, where we also discuss the definition, correctness, and security notions of $\Tmac$. An overview of the main result is given in \cref{sec:main_results}. The construction for a $1$-bit $\Tmac$ scheme that is unforgeable against single token attacks with verification oracle $\Tmac$, $\CTMAC^\eta$ is given in \cref{sec:candidate_construction}, and its unforgeability is proved in \cref{sec:Security of Conjugate TMAC}.   In \cref{sec:open questions}, we discuss some open questions and future directions. \cref{sec:applications} covers some applications of our result: a solution to an unsolved problem in~\cite{BGZ21} regarding a construction of one-time memories from stateless hardware, and a construction of private quantum money from $\Tmac$. In \cref{sec:extra proofs}, we cover some proofs left outside of the main text. The expansion of $\CTMAC^\eta$ to a full-blown scheme is covered in \cref{sec:Expansion To a Full-Blown Scheme}, and \cref{sec:Drawbacks Of the Conjugate Tmac} discusses drawbacks of $\CTMAC^\eta$,  \cref{sec:Comparison with Vanilla Unforgeability} shows that any $\Tmac$ satisfying a standard notion of unforgeability can also be used as a classical $\mac$. A nomenclature appears on the last page (\cpageref{pg:nomenclature}).
\fi
\section{Notations and Definitions}\label{sec:Preliminaries and Notations}
\subsection{Notations}\label{sec:notations}
We assume that the reader is familiar with classical cryptography (see~\cite{KL14,Gol04}), as well as quantum computing (see~\cite{NC11}).

The operation of the Hadamard gate on a single qubit is defined by the following matrix:

For a string $b\in \Bitspace^n$, we define $H^b$ as the $n$-qubit operator $H^{b_1}\otimes\cdots\otimes H^{b_n}$, where \ifnum\llncs=1 $H$ is the Hadamard gate.\else

\[H\equiv\frac{1}{\sqrt{2}}\begin{pmatrix}
1 &1\\
1 &-1
\end{pmatrix}.\]
\fi \nomenclature[H]{$H$}{Hadamard gate} 
We use the notation \nom{brakN}{$[n]$}{The set $\{1,\ldots,n\}$} to denote the set $\{1,\ldots,n\}$. For any string $x=(x_1,\ldots,x_n)$ and a subset $\mathcal{J}\subseteq [n]$, let $x|_{\mathcal{J}}$ denote the string $x$ restricted to the bits with indices in $\mathcal{J}$. We occasionally also use the notation $s(i)$ to denote the $i^{th}$ bit of the string $s$, and $x||y$ to denote the concatenation of $x$ and $y$. For every $n\in \NN$ and $j\leq n$, let \nom{brakNChoose}{$\binom{[n]}{j}$}{The set of all $j$-sized subsets of $[n]$} denote the set of all $j$-sized subsets of $[n]$.

We write $y \gets \mathsf{alg}(x)$ to denote the probabilistic process in which $y$ is sampled according to the distribution $\mathsf{alg}(x)$. The sampler $\mathsf{alg}$ takes the string $x$ as input and outputs the string $y$. To define a variable $x$ as the value $y$, we use the notation $x\equiv y$. For a finite set of $S$, the notation $x\sample S$ denotes that $x$ is sampled uniformly at random from $S$. 

When discussing a scheme, $\mathcal{S}$, which is dependent on a security parameter $\secpar$, we assume that $\secpar$ is known to all algorithms in this scheme.   To avoid ambiguity, when discussing multiple schemes $\mathcal{S}_1, \mathcal{S}_2,\ldots$ for some cryptographic primitive, we use the notation $\mathcal{S}_1.\mathsf{alg}, \mathcal{S}_2.\mathsf{alg}$ to differentiate between the algorithm $\mathsf{alg}$ of the respective schemes. 

As per usual convention, security is defined by a game consisting of two sides: a challenger $\Cadv$ and an adversary $\adv$. Unless explicitly stated otherwise, it is assumed that $\Cadv$ always refers to the challenger, and likewise, $\adv$ always refers to the adversary.

We discuss both computationally bounded and computationally unbounded adversaries. A computationally unbounded adversary is a family of quantum circuits $\{\mathcal{C}_\secpar\}_{\secpar\in\mathbb{N}}$.
A Quantum Polynomial Time (\nom{QPT}{$\QPT$}{Quantum Polynomial Time}) adversary is a family of uniform quantum circuits $\{\mathcal{C}_\secpar\}_{\secpar\in\mathbb{N}}$ such that for every $\secpar$, circuit $\mathcal{C}_\secpar$ has $p(\secpar)$ input nodes and size $S(\secpar)$ for both $q(\secpar), S(\secpar)\in \poly$. Likewise, a Probabilistic Polynomial Time (\nom{PPT}{$\PPT$}{Probabilistic Polynomial Time}) adversary denotes a probabilistic Turing Machine running in polynomial time.

A function is negligible in $\secpar$ if it decreases to $0$ faster than an inverse polynomial, with $\secpar \rightarrow \infty$. 
We also use the shorthand $f(\lambda)\leq \negl$ to state that $f$ is negligible. 
For a function $f: \Bitspace^n\rightarrow \Bitspace^m$, we define the unitary operator $\mathsf{U}_f$ as the linear operator such that
$\mathsf{U}_f(\ket{x}\ket{y})=\ket{x}\ket{f(x)\oplus y}$ where $x\in \Bitspace^n,y\in \Bitspace^m$. The notation $\QBitspace$ is used to denote the set of quantum states $\{\ket{0},\ket{1},\ket{+}, \ket{-}\}$. 

\subsection{Tokens for MAC}\label{sec:Conjugate TMAC}
\subsubsection{Definition and Correctness}
We define tokenized $\mac$, also known as a tokenized private digital signature, as in~\cite{BS16a}:

\begin{definition}
[Tokenized MAC] \label{def: TMAC scheme} A Tokenized Message
Authentication Code, or $\Tmac$ scheme consists of four $\QPT$ algorithms: $\keygen,\, \tokengen,\, \Sign$, and $\Vrfy$, with the following syntax.
\begin{enumerate}
    \item Upon input $1^\secpar$,
    where $\secpar$ is the security parameter, $\keygen$ outputs a classical key $k$ known as the secret key.
    \item The algorithm $\tokengen$ receives $k$ as input and generates some quantum state $\ket{\Stamp}$, a signing token. We stress that, in the general case, if $\tokengen$ is called $r$ times, it may output different states $\ket{\Stamp_1},\ldots,\ket{\Stamp_r}$.
    \item The algorithm $\Sign$ receives $\ket{\Stamp}$ and a classical  document \nom{a_m}{$m$}{A document or a message, that usually needs to be signed} and outputs the signature \nom{a_sig}{$\sigma$}{A signature of a document}, which is a classical string.
    \item The algorithm $\Vrfy$ receives $k$, a classical document $m$ and a classical  signature $\sigma$, and outputs a Boolean answer.
\end{enumerate}
We say a $\Tmac$ scheme is correct if for every document $m\in\Bitspace^*$, every $k$ in the range of $\keygen$, and every token $\ket{\Stamp}$ generated by $\tokengen_k$:
\[\Pr[\Vrfy_k(\Sign_{\ket{\Stamp}}(m))=1 ]=1.\]
\end{definition}
 
We also define a length-restricted version of the scheme:
\begin{restatable}[Length-Restricted $\Tmac$]{definition}{Length-Restricted TMAC}
\label{def: ell-restricted TMAC} A $\Tmac$ is $\ell$ restricted if the document space is restricted to $\Bitspace^\ell$ for some integer function $\ell(\secpar)$.
\end{restatable}
Next, we define the noise model we will be working with
as the following noise model $\Noise{p}$ for some $0\leq p\leq 1$.
\begin{definition}[Noise model $\Noise{p}$]\label{def:noise-model}
$\Noise{p}$ is the noise model in which the noise acts on each qubit as an IID Boolean random variable with a probability of $p$, i.e., at each coordinate, with a probability of $p$, the qubit gets corrupted, meaning an arbitrary CPTP map that preserves dimensions acts on the qubit, and with a probability of $1-p$, the qubit remains undeterred. 
Since we deal with BB84 states in our construction, the CPTP map in our setting that negatively affects verification the most is the Pauli operator $Y$.
\end{definition}
\begin{definition}[Noise-tolerant scheme and noise-tolerance preserving lift]\label{def:noise tolerance}
A $\Tmac$ is $\tolerant{\delta}$ for some $0\leq \delta \leq 1$ if correctness holds for the scheme up to a negligible error in the noise model $\Noise{\delta}$ (\cref{def:noise-model}), i.e., there exists a negligible function $\negl$ such that
\[\Pr[\Vrfy_k(\Sign_{\ket{\Stamp'}}(m))=1\mid \ket{\Stamp'}\gets \Noise{\delta}(\ket{\Stamp}); \ket{\Stamp}\gets \tokengen_k; k\gets \keygen(1^\secpar)]\geq 1-\negl.\]

A transformation that maps every $\tolerant{\delta}$ $\Tmac$ scheme to a $\tolerant{\delta}$ $\Tmac$ for every $0\leq \delta \leq 1$ is called a noise-tolerance preserving transformation or a  noise-tolerance preserving lift.
\end{definition}

\subsubsection{Notions of Security}\label{sec: notions of security}
In a vanilla $\mac$ scheme, we say forgery occurs when the adversary submits a valid signature (i.e., they pass verification) of a document that was not previously signed by a signing oracle. Clearly for $\Tmac$, this notion of forgery does not make sense since an adversary given $r$ tokens should be able to sign $r$ different documents. However, we expect that an adversary would not be able to sign more than $r$ documents. Moreover, we would like to give the adversary access to some kind of signing oracle in order to model the fact that the adversary might have access to other signed documents which were legitimately signed by others. It is natural to define forgery in the following manner that is similar in spirit to the forgery of quantum money: we say forgery occurs if an adversary who is given $r\in \poly$ tokens submits valid signatures for $r+1$ distinct documents, none of which were previously signed by the signing oracle.
However, there is a syntactic issue that needs to be taken care of. The signing procedure of a $\Tmac$ scheme receives a token as input, rather than a secret key. Therefore, a small change is needed to define the signing oracle. The analogue in $\Tmac$ for the signing oracle is the procedure $\widetilde{\Sign}_{k}(m)$, which first generates a token $\ket{\Stamp}$ by $\tokengen_k$, and then signs the document $m$ with $\Sign_{\ket{\Stamp}}(m)$.

Similarly to quantum money, an adversary that received $r$ tokens may attempt to submit any $w>r\ (w\in \poly)$ number of documents with the hope that at least $r+1$ fresh\footnote{Recall that a signature is considered fresh if the document was not signed by the $\widetilde{\Sign}_{k}$ oracle.} distinct documents pass verification. Hence, we allow the adversary to submit signatures for polynomially many documents in the unforgeability game, and they win if at least $r+1$ of them pass verification.

Furthermore, an adversary may wish to submit a signature, check if it gets accepted, and improve future guesses based on the result. This is modeled in the unforgeability game by giving the adversary access to a verification oracle. 
Readers familiar with $\mac$ might find it peculiar that we provide the adversary with a verification oracle.  Many $\mac$ schemes have deterministic signing (and in fact, this is w.l.o.g.--see~\cite[Section 6.1.5.2]{Gol04}). In such schemes, there is no need for a verification algorithm altogether: 
Verification of a document $m$ and an alleged signature $\sigma$ is done by testing whether $\sigma \overset{?}{=
}\Sign_k(m)$. This type of verification is called \emph{canonical verification}.
It negates the need for a verification oracle, as explained in Ref.~\cite[Section 6.1.5.1]{Gol04}. Therefore, the verification oracle is often ignored.
Since the $\Tmac$ presented in this work does not have deterministic signing, the verification oracle is necessary to model the capabilities of the adversary.

There is some subtlety in determining to which of the oracles the adversary is allowed quantum access to, i.e., allowed to make superposition queries. While there is strong motivation to allow quantum access to the verification oracle,\footnote{The verification procedure could be potentially given as an obfuscated circuit~\cite{BGIRSVY12} in a potential lift to a public tokenized digital signatures scheme. A quantum adversary could then run the obfuscated circuit in superposition.} there is less so for quantum access to a signing oracle. In addition, even defining security with regard to quantum access to the signing oracle is a contrived task  (see also related works on~\cpageref{pg:Related Works}). On the other hand, security with regard to quantum access to verification is concisely defined for $\Tmac$s with deterministic verification, but likewise, this is complicated to define in the most general scenario. While our construction does have deterministic verification, it does not achieve this stronger security notion. For the sake of generality, we present here a definition allowing for non-deterministic or even quantum verification, and postpone the definition of unforgeability against adversaries with quantum access to verification to \cref{sec:A Quantum Superposition Attack}, along with the appropriate attack on our construction.

To ease notation, we provide the tokens to the adversary by the oracle $\tokengen_k$ that the adversary can query polynomially many times. However, it can be assumed, without loss of generality, that an adversary always makes exactly $r(\secpar)\in \poly$ calls to $\tokengen_k$ immediately after receiving the security parameters. In an intermediate step of our analysis, we will also discuss unforgeability against adversaries holding a single token, i.e., the adversary can access the $\tokengen$ oracle, but only once.

With the above motivations in mind, we define the following unforgeability game in Game~\ref{exp:UnfExp}. 
\nomenclature[Forge]{$\UnfExp$}{Unforgeability game for a $\Tmac$}
  \begin{savenotes}
\begin{game}
 \caption{   $\UnfExp^{\tokengen,\widetilde{\Sign},\Vrfy}_{\adv,\Pi}(\secpar)$
} \label{exp:UnfExp} 
\begin{algorithmic}[1] 
    \State  $\Cadv$ creates a secret key by running $k \gets \keygen(1^\secpar)$ and sends $1^\secpar$ to $\adv$.
    
    \State $\adv$ is given classical oracle access to $\widetilde{\Sign}_k$, $\Vrfy_k$ and $\tokengen_k$, and can query each oracle a polynomial number of times.  Let $r$ denote the number of times $\tokengen$ is called, and let $Q$ denote the set of all queries that are made to $\widetilde{\Sign}_k$. 
    
    \State $\adv$ sends the challenger $w$ documents and their signature $(m_i,\sigma_i)_{w}$.
    \State $\Cadv$ runs $c_i\gets \Vrfy_k(m_i,\sigma_i)$.

\end{algorithmic}
     Let $S$ be the set of all $ i\in [w]$ such that $c_i= 1$, and let  $count=\abs{\{m_i\mid i\in S\bigwedge m_i\notin Q\}}$. Then the output of the game is $1$ if and only if $count\geq r+1$.

 \end{game}
\end{savenotes}

\begin{definition}\label{def:Unforgeability}
A $\Tmac$ scheme $\Pi = (\keygen,\tokengen, \Sign, \Vrfy)$ is said to be 
$\Exly$ $\UnfDef^{\tokengen,\Vrfy,\widetilde{\Sign}}$
if for any $\QPT$ adversary $\adv$ 
\[\Pr[\UnfExp^{\tokengen,\Vrfy,\widetilde{\Sign}}_{\adv,\Pi}(\secpar) = 1] \leq \negl.\]
Likewise, various other types of unforgeability can be defined by preventing access to some of the oracles, which is notated by removing the corresponding oracles from the superscript in the definition and security game. The notation $(\cdot)^{\OneToken}$ instead of $(\cdot)^{\tokengen}$ denotes that only a single access to the oracle $\tokengen_k$ is allowed. The functionality of the oracles is summarized in  \cref{table:oracle types}.
\end{definition}
\renewcommand\arraystretch{1.5}
 \begin{table}[ht] 
\centering
\begin{adjustbox}{max width=\textwidth}
\begin{tabular}{l l }

\toprule                                          Oracle        & Functionality                                                                                            \\

\midrule \multirow{2}{*}{}  $(\cdot)^\Vrfy$       & $\adv$ has classical access to $\Vrfy_k$                                                                                                                \\
                                                          $(\cdot)^{\widetilde{\Sign}}$        & $\adv$ has classical access to $\widetilde{\Sign}_k$    \\
                                                                                   $(\cdot)^{\tokengen}$  & $\adv$ has  access to $\tokengen_k$
                                                \\
                                                $(\cdot)^{\OneToken}$  &  $\adv$ can make at most one query to $\tokengen_k$     \\
 \bottomrule   
\end{tabular}
\end{adjustbox}
\caption{Possible oracles an adversary may have access to in Game~\ref{exp:UnfExp}}
\label{table:oracle types}
\end{table}
The definition above can easily be adjusted to allow weaker forms of unforgeability such as random unforgeability, selective unforgeability, and universal unforgeability. In this work, we only address existential unforgeability; hence, the prefix $\Exly$ is omitted for the remainder of the paper.

An $\Unc$ $\UnfDef^{(\cdot)}$ $\Tmac$ is defined similarly, but where the above holds even against computationally unbounded adversaries. It is emphasized that a computationally unbounded adversary could still only query the oracles a polynomial number of times.

\ifnum\llncs=0
\subsubsection{Sabotage Attacks}
Another notion often discussed in relation to $\mac$ is that of sabotage. An adversary might not be able to forge a fresh signature, but still harm other users in other ways. Sabotage is the act of creating a malicious signature, that successfully passes one verification, but might not pass a second verification. 

\begin{savenotes}
\begin{game}
 \caption{   $\Sabotage^{\tokengen,\widetilde{\Sign},\Vrfy}_{\adv,\Pi}(\secpar)$
} \label{exp:UnfSab} 
\begin{algorithmic}[1] 
    \State  $\Cadv$ creates a secret key by running $k \gets \keygen(1^\secpar)$ and sends $1^\secpar$ to $\adv$.
    
    \State $\adv$ is given classical oracle access to $\widetilde{\Sign}_k$, $\Vrfy_k$ and $\tokengen_k$ and,  and can query each oracle a polynomial number of times.
    
    \State $\adv$ sends the challenger a signed document $(m,\sigma)$
    \State $\Cadv$ runs  $\Vrfy_k(m,\sigma)$ \emph{twice}.

\end{algorithmic}
     The output of the game is $1$ if and only if the first verification was successful, but the second was not.

 \end{game}
\end{savenotes}

\begin{definition}\label{def:sabotage secure}
A $\Tmac$ scheme $\Pi = (\keygen,\tokengen, \Sign, \Vrfy)$ is said to be 
 $\SabDef^{\tokengen,\Vrfy,\widetilde{\Sign}}$
if for any $\QPT$ adversary $\adv$ 
\[\Pr[\Sabotage^{\tokengen,\Vrfy,\widetilde{\Sign}}_{\adv,\Pi}(\secpar) = 1] \leq \negl.\]
Likewise, various other types of sabotage security can be defined by preventing access to some of the oracles, which is represented in the same way as in \cref{exp:UnfExp}.
\end{definition}

If a party can use a token to create such a signature, the party does not gain anything from it, but can harm others: Say Bob makes a transaction with Charlie's company, Charlie's assistant Alice uses one of Charlie's tokens to maliciously sign the transaction on his absence, and to verify the signature Bob turns to the trusted Dave, who knows Charlie's key. Upon Charlie's return, Bob shows him the signature, but to his surprise verification fails, and the annoyed Charlie calls for security to deport Bob, while Alice enjoys her stolen loot. Of course, if Bob is allowed to verify his signature an unlimited amount of times, verification will eventually pass, as the signature is a classical string. However even then, Bob would not necessarily know the number of times he would need to verify the signature in order to succeed with high probability. One could imagine that the verifier would lose patience and become suspicious after $100$ failed verifications.

It is clear that sabotage is not an issue in any $\Tmac$ that has deterministic verification, since in that case, signatures would either be accepted or rejected with probability $1$. As the construction presented in this paper has deterministic verification,\footnote{Assuming that the lift in \cref{lem: unrestricted onetime implies full blown unrestricted} is instantiated with a deterministic verification authenticated encryption, as most practical authenticated encryptions are.} we will not discuss security against sabotage any further.
\fi
\section{Main Result}\label{sec:main_results} 
We say that a $\Tmac$ scheme is based on conjugate coding states if every token is a tensor product of states from $\{\ket{0},\ket{1},\ket{+},\ket{-}\}$, perhaps in addition to some classical string. 
Our main result is the following:
\MainResult*
\begin{proof}\label{pf:thm:main_result}
Given \nom{eta}{$\eta$}{Error threshold}$\geq 0$, we first construct a $1$-restricted (that is for $m\in \Bitspace$) $\tolerant{2c\eta}$ $\Tmac$ (see \cref{def:noise tolerance}) scheme $\CTMAC^\eta$ (see \cref{alg:noise_tolerant Conjugate Tmac,thm:correctness_noise tolerance}) based on conjugate states for any constant $0\leq c<1$. 
In \cref{sec:noise-tolerant unforgeability}, we show that the scheme $\CTMAC^\eta$ is  $\UnfDef^{\OneToken,\Vrfy}$ (see \cref{thm:Pi_W is secure}), for $0\leq \eta \leq \frac{1-\alpha}{2}$.

Hence, given $\delta<1-\alpha$, there exists $\widetilde{\eta}$ such that $\delta<\widetilde{\eta}<1-\alpha$. Let $\widetilde{c}=\frac{\delta}{\widetilde{\eta}}$, which is clearly smaller than $1$. Since $\frac{\widetilde{\eta}}{2}<\frac{1-\alpha}{2}$, by \cref{thm:Pi_W is secure}, $\CTMAC^{\frac{\widetilde{\eta}}{2}}$ is $\UnfDef^{\OneToken,\Vrfy}$ and by \cref{thm:correctness_noise tolerance} is $\tolerant{2\widetilde{c}\frac{\widetilde{\eta}}{2}}$, i.e., $\tolerant{\delta}$.
\nomenclature[CTMACFull]{$\sigOrUnrest{\eta}$}{The result of the expansion of $\CTMAC^\eta$ to a full blown scheme}
Next, we lift $\CTMAC^{\frac{\widetilde{\eta}}{2}}$ to a full-blown $\UnfDef^{\tokengen,\Vrfy,\widetilde{\Sign}}$ $\Tmac$ scheme $\sigOrUnrest{\frac{\widetilde{\eta}}{2}}$, while preserving the parameter of noise tolerance using the following proposition, which is proven in \cref{sec:Security proof for full blown} on \cpageref{pf:prp:oneRtofb}. 
\begin{proposition}
\label{prp:oneRtofb}

Assuming the existence of post-quantum one-way functions, there is a noise-tolerance preserving lift (see \cref{def:noise tolerance}) of any $1$-restricted $\UnfDef^{\OneToken,\Vrfy}$ $\Tmac$  (see \cref{def: ell-restricted TMAC,def:Unforgeability}) to an unrestricted  $\UnfDef^{\tokengen,\Vrfy,\widetilde{\Sign}}$ $\Tmac$. Moreover, if the $1$-restricted scheme is based on conjugate states then the same holds for the resulting unrestricted scheme.
 \end{proposition}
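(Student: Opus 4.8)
The plan is to exhibit a \emph{generic} lift that uses the one-restricted scheme $\Sigma$ as a black box, following the template instantiated in \cref{alg:Full construction}. First I would fix the construction. The lifted key is an authenticated-encryption key $k \gets \Aenc.\keygen(1^\secpar)$. A token is a bundle of $\ell(\secpar)$ independent one-bit tokens $\ket{\Stamp_1},\dots,\ket{\Stamp_\ell}$, produced by $\ell$ fresh runs of $\Sigma.\tokengen$ with keys $k_1,\dots,k_\ell$, together with the ciphertext $e = \Aenc.\enc_k(k_1,\dots,k_\ell)$; encrypting the one-bit keys under $\Aenc$ keeps $\tokengen$ stateless and lets $\Vrfy$ later recover exactly the keys it needs. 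To sign an arbitrary-length $m$ one first compresses it to a fixed-length effective string $m' = h_{r_1}(m \Vert r_2)\Vert r_1$ using a freshly drawn \Woof{} index $r_1 \gets I(\secpar)$ and salt $r_2$, then signs the $j$-th bit of $m'$ with the $j$-th one-bit token via $\Sigma.\Sign$, and appends $(e,r_1,r_2)$. Verification decrypts $e$, recomputes $m'$, and accepts iff every block passes $\Sigma.\Vrfy_{k_j}(m'_j,\cdot)$.

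Correctness and the two preservation claims come essentially for free. Because the blocks sit on disjoint tensor factors, under $\Noise{\delta}$ each block independently sees the model $\Noise{\delta}$; if $\Sigma$ is $\tolerant{\delta}$ every block verifies except with negligible probability, and a union bound over the $\ell(\secpar)=\poly$ blocks keeps the total error negligible, while the purely classical ingredients ($\Aenc$ and $h$) are untouched by qubit noise. Hence the lift is noise-tolerance preserving. If moreover each $\Sigma$-token is a tensor product of conjugate-coding states then so is $\bigotimes_j\ket{\Stamp_j}$ (alongside the classical $e$), so that property survives the lift as well.

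For unforgeability I would run a counting argument and reduce each way it can fail to one of the three assumptions. Suppose a $\QPT$ adversary wins $\UnfExp^{\tokengen,\Vrfy,\widetilde{\Sign}}$ with non-negligible probability after obtaining $r$ tokens, outputting at least $r+1$ valid and fresh signatures. By ciphertext integrity of $\Aenc$, any accepted signature must carry a ciphertext $e$ that the challenger actually produced (from a $\tokengen$ call or the $\widetilde{\Sign}_k$ oracle), so I may assume every accepted $e$ names a genuine token. Next, by the security of the hash family, distinct fresh documents --- and any fresh document versus a $\widetilde{\Sign}_k$-queried one --- yield distinct effective strings except with negligible probability; here the commit-then-key order (the document is fixed before the fresh index $r_1$ and salt $r_2$ are chosen) is what lets a collision be recast as a \Woof{} target collision. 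Finally, single-token unforgeability forces each genuine token to support a valid signature for at most one effective string: two distinct effective strings on one token differ at some block $j$, giving valid one-bit signatures for both $0$ and $1$ under the lone token $\ket{\Stamp_j}$, and likewise a fresh document reusing a spent $\widetilde{\Sign}_k$-token would need a one-bit signature at a block it never held, which again yields signatures for both bits there. Since each of the $r$ tokens accounts for at most one fresh effective string and the oracle tokens account for none, at most $r$ fresh documents can be valid --- contradicting $r+1$.

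To turn the last step into a reduction, the $\Sigma$-adversary I build guesses the offending token $t^\ast\in[r]$ and a differing block $j^\ast\in[\ell]$, embeds its single challenge token as block $j^\ast$ of token $t^\ast$, and samples everything else itself, answering every $\widetilde{\Sign}_k$ query with fresh self-generated instances so that no signing oracle for $\Sigma$ is required. It holds the $\Aenc$ key, so it decrypts and identifies the relevant block for each $\Vrfy_k$ query, routing only the embedded block through $\Sigma$'s own verification oracle; since the challenge key is unknown it plants a dummy in its place inside the embedded ciphertext, and a hybrid over the CPA-security of $\Aenc$ shows this perturbs the adversary negligibly. The guesses are right with probability $1/(r\ell)=1/\poly$, so a non-negligible win survives, contradicting the unconditional $\UnfDef^{\OneToken,\Vrfy}$ security of $\Sigma$. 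I expect the main obstacle to be exactly this faithful emulation of the lifted verification oracle while holding only $\Sigma$'s single-token verification oracle and not its key --- simultaneously leaning on $\Aenc$ integrity (to discard forged tokens), $\Aenc$ confidentiality (to hide the missing key), the $\Sigma$ oracle (for the embedded block), and the polynomial guessing loss --- together with threading the \Woof{} step so that target-collision-resistance, rather than full collision resistance, suffices.
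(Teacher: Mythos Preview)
Your construction and overall strategy coincide with the paper's lift (parallel one-bit tokens, hash-and-sign with a \Woof, authenticated encryption of the per-token keys, and salting for the signing oracle), analyzed directly rather than through the paper's four modular lemmas (\cref{lem: 1 restricted onetime implies ell-restricted onetime,lem: ell restricted onetime implies unrestricted onetime,lem: unrestricted onetime implies full blown unrestricted,lem:-> signing oracle}). The noise-tolerance and conjugate-state preservation parts are fine.

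There is a gap in the reduction. You embed the challenge only at $t^\ast\in[r]$, i.e.\ among the $\tokengen$ ciphertexts, and justify this via ``the oracle tokens account for none.'' But that assertion is precisely what the reduction must establish: if a valid fresh signature carries a $\widetilde{\Sign}_k$-ciphertext, you do obtain one-bit signatures for both $0$ and $1$ at some block of that oracle token, yet your challenge sits in a $\tokengen$-token and you have nothing to hand your $\Sigma$-challenger. The paper sidesteps this through its modular order: in \cref{lem:-> signing oracle} the reduction requests $r+q$ tokens from the underlying $(\tokengen,\Vrfy)$-challenger and spends $q$ of them to answer the signing queries, so every ciphertext the adversary can reuse is already a challenger-issued token, and the subsequent one-token embedding in \cref{lem: unrestricted onetime implies full blown unrestricted} covers them uniformly. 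In your direct approach the natural fix is to guess $t^\ast$ over all $r+q$ ciphertexts; when it lands on an oracle token, consume the challenge $\Sigma$-token to produce that block of the signing response, and the later forgery supplies the second one-bit signature. Two smaller points: your ``commit-then-key'' reading of the \Woof{} only covers a fresh document against a $\widetilde{\Sign}_k$-queried one (the honest signer samples $r_1$ after the adversary commits $m$); for two \emph{fresh} forgeries the adversary controls $r_1$ in both, and target-collision resistance alone does not obviously exclude $h_{r_1}(m\Vert r_2)=h_{r_1}(m'\Vert r_2')$. And the dummy-key hybrid must still answer $\Vrfy_k$ queries that decrypt adversarial ciphertexts, so you need the $\cca$/authenticated-encryption guarantee the paper invokes in \cref{lem: unrestricted onetime implies full blown unrestricted}, not merely $\cpa$.
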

Hence, $\sigOrUnrest{\frac{\widetilde{\eta}}{2}}$ is the required $\Pi_\delta$.
 \end{proof}

In particular, since $0.14<1-\alpha$, we get the following corollary of \cref{thm:main_result}.
\begin{corollary}\label{cor:concrete_bound-main-result}
Assuming post-quantum one-way functions exist, there exists an $\UnfDef^{\tokengen,\Vrfy,\widetilde{\Sign}}$ $\tolerant{14\%}$ (see \cref{def:Unforgeability,def:noise tolerance}) $\Tmac$ scheme (see \cref{alg:noise_tolerant Conjugate Tmac,prp:oneRtofb}) based on conjugate coding states.
\end{corollary}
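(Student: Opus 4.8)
The plan is to obtain the corollary as a direct specialization of \cref{thm:main_result}, so that the only genuine work is a numerical comparison. First I would set $\delta = 0.14$, matching the claimed $14\%$ noise tolerance, and verify that this value lies in the admissible range of \cref{thm:main_result}, namely $\delta < 1-\alpha$.

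To check the inequality I would compute $1-\alpha$ explicitly via the half-angle identity: since $\alpha \approx \cos^2(\pi/8) = \frac{1+\cos(\pi/4)}{2} = \frac{2+\sqrt{2}}{4} \approx 0.8536$, we get $1-\alpha \approx \sin^2(\pi/8) = \frac{2-\sqrt{2}}{4} \approx 0.1464$. Because $0.14 < 0.1464$, the hypothesis $\delta < 1-\alpha$ holds with a comfortable margin of roughly $6\times 10^{-3}$. I would emphasize that this margin exceeds the numerical uncertainty in $\alpha$ flagged in the footnote to \cref{thm:main_result} (at most of order $10^{-3}$), so the comparison remains valid even if the true value of $\alpha$ is slightly larger than $\cos^2(\pi/8)$.

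With the inequality established, I would invoke \cref{thm:main_result} with $\delta = 0.14$, which immediately produces an $\UnfDef^{\tokengen,\Vrfy,\widetilde{\Sign}}$ $\tolerant{0.14}$ $\Tmac$ scheme based on conjugate coding states; this is exactly the $\tolerant{14\%}$ scheme asserted. All the substantive content — the construction $\Pi_\delta$, its unforgeability assuming post-quantum one-way functions, and its noise tolerance — is already carried by \cref{thm:main_result} (itself resting on \cref{prp:oneRtofb}), so no additional argument is required. There is accordingly no real obstacle here beyond the arithmetic: the sole point demanding care is confirming that the target threshold $0.14$ sits \emph{strictly} below $1-\alpha$ rather than merely at or above it, which the computation above settles.
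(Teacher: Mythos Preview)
Your proposal is correct and matches the paper's approach exactly: the paper proves the corollary in a single line, noting that since $0.14 < 1-\alpha$ the result follows directly from \cref{thm:main_result}. Your additional numerical check via the half-angle identity and the remark about the margin relative to the footnoted uncertainty are nice touches but not strictly necessary.
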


\section{Construction of a 1-Restricted TMAC Based on Conjugate Coding States}\label{sec:candidate_construction}

\subsection{A Noise-Sensitive Scheme}
In \cref{alg:Conjugate Tmac}, a $1$-restricted $\Tmac$ scheme (that is for $m\in \Bitspace$) based on the classical verification variant of Wiesner's money is described. We refer to this scheme as the Conjugate $\Tmac$, and write \nom{CTMAC}{$\CTMAC$}{$1$-bit noise-sensitive Conjugate $\Tmac$ scheme}.
\begin{algorithm*}
    \caption{$\CTMAC$ - The $1$-bit Conjugate $\Tmac$}
    \label{alg:Conjugate Tmac}
    \begin{algorithmic}[1] 
        \Procedure{$\keygen$}{$1^\secpar$}
            \State $a \sample \Bitspace^\secpar$, $b \sample \Bitspace^\secpar$.
            \State \textbf{Return} $k\equiv(a,b)$.
        \EndProcedure

    \end{algorithmic}
    \begin{algorithmic}[1] 
        \Procedure{$\tokengen_k$}{}
            \State Interpret $k=(a,b)$.
            \State \textbf{Return} $\ket{\Stamp}\equiv H^{b}\ket{a}$.
        \EndProcedure

    \end{algorithmic}
	\begin{algorithmic}[1] 
        \Procedure{$\Sign_{\ket{\Stamp}}$}{$m$} \Comment{ $m\in \Bitspace$}
        \State Measure $(H^m)^{\otimes\secpar}\ket{\Stamp}$
         in the standard basis to obtain a string $\sigma$. 
            \State \textbf{Return} $\sigma$.
        \EndProcedure

    \end{algorithmic}
  	\begin{algorithmic}[1] 
        \Procedure{$\Vrfy_k$}{$m,\sigma$}
            \State Interpret $k$ as $(a,b)$.
            \State  Construct the subset \nom{cons_c}{$\Cons_m$}{The subset  $\{i\in[\secpar]\mid b_i=m\}$, with regards to (a,b)}$=\{i\in[\secpar]| b_i=m\}$.
            \If{$a|_{\Cons_m}=\sigma|_{\Cons_m}$}
                \State \textbf{Return} $1$.
            \Else
                \State \textbf{Return} $0$.
            \EndIf
        \EndProcedure
    \end{algorithmic}
\end{algorithm*}

\ifnum\llncs=0
We first prove the correctness of $\CTMAC$.
\fi

\begin{proposition}[Correctness of $\CTMAC$]
\label{prp:correctness-one_restricted}
 The $1$-restricted Conjugate $\Tmac$, $\CTMAC$, given in \cref{alg:Conjugate Tmac}, is correct.
\end{proposition}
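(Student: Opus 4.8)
The plan is to verify directly that the signing procedure produces, on every coordinate lying in $\Cons_m$, a measurement outcome that matches the secret string $a$, so that the verification equality $a|_{\Cons_m} = \sigma|_{\Cons_m}$ holds with certainty. First I would write the token in its factored form
\[
\ket{\Stamp} = H^b\ket{a} = \bigotimes_{i=1}^{\secpar} H^{b_i}\ket{a_i},
\]
and then apply the signing operation $(H^m)^{\otimes\secpar}$ coordinate-wise, obtaining the state $\bigotimes_{i=1}^{\secpar} H^{m}H^{b_i}\ket{a_i}$ just before the standard-basis measurement.

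The crux is to analyze each single-qubit factor $H^{m}H^{b_i}\ket{a_i}$ separately, splitting on whether $i\in\Cons_m$. Recall $\Cons_m = \{i\in[\secpar]\mid b_i=m\}$. For $i\in\Cons_m$ we have $b_i=m$, so $H^{m}H^{b_i}=H^{2m}=I$ because $H^2=I$; the factor collapses to $\ket{a_i}$, and measuring it in the standard basis yields the bit $a_i$ with probability $1$. Hence $\sigma_i=a_i$ deterministically for every $i\in\Cons_m$. For $i\notin\Cons_m$ we have $b_i\neq m$, so $H^{m}H^{b_i}=H$ and the factor is $H\ket{a_i}\in\{\ket{+},\ket{-}\}$, whose standard-basis measurement outcome is a uniformly random bit; but these coordinates lie outside $\Cons_m$ and therefore play no role in verification.

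Combining the two cases, the signature $\sigma$ returned by $\Sign_{\ket{\Stamp}}(m)$ satisfies $\sigma|_{\Cons_m}=a|_{\Cons_m}$ with probability $1$, which is exactly the condition checked by $\Vrfy_k(m,\sigma)$; thus verification returns $1$ with certainty, establishing $\Pr[\Vrfy_k(m,\Sign_{\ket{\Stamp}}(m))=1]=1$ for every $m\in\Bitspace$, every $k$ in the range of $\keygen$, and every token generated by $\tokengen_k$. There is no genuine obstacle here: the entire argument rests on the single algebraic identity $H^2=I$, which makes the two Hadamard layers cancel precisely on the coordinates that the verifier inspects. The only point requiring a word of care is emphasizing that the randomness of the outcomes on the complementary coordinates $i\notin\Cons_m$ is irrelevant, so that correctness holds with probability exactly $1$ rather than merely with high probability.
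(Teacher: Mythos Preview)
Your proposal is correct and follows essentially the same approach as the paper: both argue that on coordinates $i\in\Cons_m$ the two Hadamard layers cancel via $H^2=I$, leaving $\ket{a_i}$ and forcing $\sigma_i=a_i$ deterministically, so the verification condition $\sigma|_{\Cons_m}=a|_{\Cons_m}$ holds with probability~$1$. You are slightly more explicit than the paper in writing out the tensor factorization and in noting that the random outcomes on $i\notin\Cons_m$ are irrelevant, but the underlying argument is identical.
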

\ifnum\llncs=1
The proof of \cref{prp:correctness-one_restricted} is trivial and hence is omitted.
\fi
\ifnum\llncs=0

\begin{proof}\label{pf:correctness-one_restricted} 
Let $k\equiv(a,b)\gets \keygen(1^\secpar)$ be the secret key, where $\secpar$ is the security parameter, fixed arbitrarily. Hence, the token $\ket{\Stamp}$ generated by $\tokengen_k$ is $H^b\ket{a}$. Fix $m\in \{0,1\}$. 
Let, $\sigma\gets\Sign_{\ket{\Stamp}}(m)$. By definition, $\sigma$ is the (standard basis) measurement outcome of \[(H^m)^{\otimes\secpar}\ket{\Stamp}= (H^m)^{\otimes\secpar}(H^b\ket{a}).\]  
Let $\Cons_m\equiv\{i\in [\secpar]\quad |\quad b_i=m\}$. Due to the action of the Hadamard gate twice, for every $i\in \Cons_m$, the $i^{th}$ qubit of the state $(H^m)^{\otimes\secpar}(H^b\ket{a})$ remains $\ket{a_i}$. Hence, with probability $1$, $\sigma|_{\Cons_m}=a|_{\Cons_m}$. 
Since, $\Vrfy_k(m,\sigma)$ simply checks if $\sigma|_{\Cons_m}=a|_{\Cons_m}$, the verification accepts it with certainty. Therefore, for every $k$ in the range of $\keygen$, and every token $\ket{\Stamp}$ generated by $\tokengen_k$ 
\[\Pr[\Vrfy_k(m, \Sign_{\ket{\Stamp}}(m))=1 ]=1,\]
for all $m\in\Bitspace$.
Since $\secpar$ and $m$ were fixed arbitrarily, we conclude that $\CTMAC$ is correct.
\end{proof}
\fi

\begin{restatable}{theorem}{NoiseSensSecurity}
\label{thm:Pi_W is secure}
 The conjugate $\Tmac$ scheme $\CTMAC$ given in \cref{alg:Conjugate Tmac}, is $\Unc$ $\UnfDef^{\OneToken,\Vrfy}$.
\end{restatable}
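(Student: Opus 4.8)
The plan is to peel away the cryptographic wrapping and recognize the underlying quantum guessing game, then confront the one genuinely hard ingredient — the verification oracle — before closing with a semidefinite bound. First I would reformulate forgery. Since the document space is $\Bitspace$, a successful single-token forgery is a pair $\sigma_0,\sigma_1\in\Bitspace^\secpar$ with $\Vrfy_k(0,\sigma_0)=\Vrfy_k(1,\sigma_1)=1$, which by construction means $\sigma_0$ agrees with $a$ on $\Cons_0=\{i:b_i=0\}$ and $\sigma_1$ agrees with $a$ on $\Cons_1=\{i:b_i=1\}$. Because $\Cons_0,\Cons_1$ partition $[\secpar]$ and the token $H^b\ket a=\bigotimes_i H^{b_i}\ket{a_i}$ is a tensor product with the $(a_i,b_i)$ drawn independently and uniformly, this is exactly the BB84 monogamy game: per coordinate the adversary must output a bit equal to $a_i$ in the computational case ($b_i=0$) and simultaneously a bit equal to $a_i$ in the conjugate case ($b_i=1$), without knowing $b_i$. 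Absent any oracle, the optimal single-coordinate success probability is $\alpha\approx\cos^2(\pi/8)$ (a Breidbart-type strategy), and since both the state and the winning predicate factorize across coordinates, the forging probability is at most $\alpha^\secpar$, which is negligible. Hence all the difficulty lives in the verification oracle.

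To handle it I would replace $\Vrfy_k$ by a strictly more informative oracle that, on any accepted query $(m,\sigma)$, additionally hands the adversary the relevant key part, namely $\Cons_m$ together with $a|_{\Cons_m}$. Since this only augments the adversary's power, it suffices to bound forging in this strengthened game. The payoff is that verification effectively becomes non-adaptive: once a query for a fixed $m$ has been accepted, the adversary can answer every later query for that same $m$ herself from the revealed $(\Cons_m,a|_{\Cons_m})$, so repeated queries for an already-solved document are redundant and can be pruned. What remains is an adversary whose only meaningful interaction is obtaining, for each of the two documents, a single accepting signature, and whose success is precisely accepting signatures for both $0$ and $1$.

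Next I would cast this residual game as an attack on the BB84-based Quantum Encryption with Certified Deletion scheme (\cite{BI20}; see \cref{def:qunatum encryption with certified deletion}), whose ciphertext is the same conjugate-coding state. The correspondence is that a valid signature for $m=1$ — a conjugate-basis string matching $a$ on $\Cons_1$ — plays the role of a valid deletion certificate, while a valid signature for $m=0$ — matching $a$ on $\Cons_0$ — plays the role of recovering the encrypted message; the key part leaked by the strengthened oracle matches the certified-deletion game's release of the key only after deletion. A forger thus both certifies deletion and recovers the plaintext, exactly the event forbidden by certified-deletion security. Because our acceptance condition is looser (it absorbs the oracle leakage), this is a weakened, scheme-specific variant of the \cite{BI20} game rather than a generic reduction. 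I would then bound the winning probability of this weak certified-deletion game by a semidefinite program whose optimal operator factorizes over the $\secpar$ independent coordinates, each contributing a factor $\alpha\approx\cos^2(\pi/8)$, yielding an $\alpha^{\Theta(\secpar)}$ bound that is negligible and establishes $\Unc$ $\UnfDef^{\OneToken,\Vrfy}$.

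The main obstacle, as the structure makes clear, is the verification oracle on two fronts: arguing that the strengthened, key-leaking oracle faithfully upper-bounds the real adversary's power (and genuinely renders repeated same-document queries redundant), and setting up the semidefinite program so that the leakage introduced by the strengthening does not destroy the per-coordinate factorization. The clean tensor-product monogamy bound is the easy part; recovering the same exponential decay in the presence of adaptive, key-revealing verification queries is where the real work lies.
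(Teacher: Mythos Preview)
Your proposal matches the paper's proof: replace $\Vrfy$ by a key-leaking oracle (the paper's $\VR$), prune to at most one accepting query per document (the $\VR^*$ game), reduce to a weak certified-deletion game on the BB84 scheme, and bound that by an SDP that tensorizes to $\alpha^\secpar$ with $\alpha\approx\cos^2(\pi/8)$.

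One calibration on where the effort actually sits: the two obstacles you name are the easy parts --- the leaking oracle dominates the real one by monotonicity, and the SDP factorizes because it is posed on the clean two-round certified-deletion game (ciphertext, certificate, key release, recovery), not on the oracle game, so there is no ``leakage'' left to threaten the tensor structure. The step you pass over as routine is the one the paper handles most carefully: even after pruning repeated \emph{accepting} queries, the adversary still makes polynomially many \emph{failing} queries interleaved with the two accepts, and the certified-deletion game has only two rounds. The paper bridges this by guessing the indices $i<j$ of the two accepting queries among the $q$ total and blindly rejecting everything else; correctness of this simulation uses that in the $\VR^*$ game a winning run has at most one accept per document, so all non-guessed queries really were rejections. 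This costs a factor $2\binom{q+2}{2}$ (the extra $2$ from guessing which document is accepted first, hence whether to reduce to $E_0$ or $E_1$), giving the final bound $2\binom{q+2}{2}\alpha^\secpar$, which is negligible for polynomial $q$.
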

\cref{thm:Pi_W is secure} is proven in \cref{sec: reduction to weak certified deletion}. In \cref{sec:Drawbacks Of the Conjugate Tmac}, it is shown that $\CTMAC$ does not satisfy some stronger notions of unforgeability, namely, strong unforgeability and unforgeability against adversaries with quantum verification queries.
\subsection{A Noise-Tolerant Scheme}\label{sec: A Noise-Tolerant Scheme}
In this section, we follow ideas similar to the constructions in~\cite{PYJ+12,MVW12} to extend our $1$-restricted $\Tmac$ scheme $\CTMAC$ (\cref{alg:Conjugate Tmac}) scheme to a noise-tolerant variant which we call \nom{CTMACeta}{$\CTMAC^\eta$}{$1$-bit noise-tolerant Conjugate $\Tmac$ scheme} (see \cref{alg:noise_tolerant Conjugate Tmac}), where $\eta$ represents an error threshold.
The construction of $\CTMAC^\eta$ is the same as $\CTMAC$, but with a lenient verification procedure where we accept an alleged signature for a bit $b$ even if the signature is not consistent with the secret key, up to a constant fraction of the coordinates. The full construction is given in \cref{alg:noise_tolerant Conjugate Tmac}.

\begin{algorithm*}
    \caption{$\CTMAC^\eta$ - $\tolerant{2c\eta}$ Conjugate $\Tmac$, for all constant $0\leq c<1$}
    \label{alg:noise_tolerant Conjugate Tmac}
    \begin{algorithmic}[1] 
        \Procedure{$\keygen$}{$1^\secpar$}
            \State The same as $\CTMAC.\keygen$.
        \EndProcedure

    \end{algorithmic}
    \begin{algorithmic}[1] 
        \Procedure{$\tokengen_k$}{}
        \State The same as $\CTMAC.\tokengen_k$.
        \EndProcedure

    \end{algorithmic}
	\begin{algorithmic}[1] 
        \Procedure{$\Sign_{\ket{\Stamp}}$}{$m$}
        \State  The same as $\CTMAC.\Sign_{\ket{\Stamp}}$.
        \EndProcedure

    \end{algorithmic}
  	\begin{algorithmic}[1] 
        \Procedure{$\Vrfy_k$}{$m,\sigma$}
            \State Interpret $k$ as $(a,b)$.
            \State  Construct the subset $\Cons_m=\{i\in[\secpar]\mid b_i=m\}$.
            \State Construct the subset
            \nom{cons_miss}{$\Miss_{m,\sigma}$}{The subset $\{i\in \Cons_m\mid \sigma_i\neq a_i\}$ with regards to (a,b)}$=\{i\in \Cons_m\mid \sigma_i\neq a_i\}$.
            \If{$\abs{\Miss_{m,\sigma}}\leq\eta\secpar$}
                \State \textbf{Return} $1$.
            \Else
                \State \textbf{Return} $0$.
            \EndIf
        \EndProcedure
    \end{algorithmic}
\end{algorithm*}

\begin{theorem}\label{thm:correctness_noise tolerance}
The $1$-restricted (see \cref{def: ell-restricted TMAC}) $\Tmac$ scheme $\CTMAC^\eta$ is correct. Moreover, for any constant $0\leq c<1$, $\CTMAC^\eta$ is $\tolerant{2c\eta}$ (see \cref{def:noise tolerance}).  
\end{theorem}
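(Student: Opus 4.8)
The plan is to prove the two assertions in turn: ordinary (noiseless) correctness, and then the quantitative $\tolerant{2c\eta}$ bound, the latter being the real content.

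For correctness I would simply reuse the argument of \cref{prp:correctness-one_restricted}, since $\CTMAC^\eta$ inherits $\keygen$, $\tokengen$ and $\Sign$ from $\CTMAC$ and only loosens verification. For the noiseless token $H^b\ket{a}$ and any $i\in\Cons_m$ (so $b_i=m$), applying $H^m$ at signing cancels the $H^{b_i}=H^m$ used in token generation, so the standard-basis outcome is $\sigma_i=a_i$ with certainty. Hence $\Miss_{m,\sigma}=\emptyset$, giving $\abs{\Miss_{m,\sigma}}=0\le\eta\secpar$, and the relaxed verification accepts with probability $1$.

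For noise tolerance I work in the model $\Noise{2c\eta}$ and, following \cref{def:noise-model}, take the worst-case per-coordinate corruption to be $Y$; any other CPTP map can only produce fewer misses, so this choice upper-bounds $\abs{\Miss_{m,\sigma}}$. The first step is a single-qubit computation: for $i\in\Cons_m$ the token qubit is $H^m\ket{a_i}$, and after a $Y$ corruption the signer measures $H^m Y H^m\ket{a_i}$ in the computational basis. Using $HYH=-Y$ one gets $H^m Y H^m=\pm Y$, and since $Y$ flips the computational-basis label up to a phase, the outcome is $\overline{a_i}\neq a_i$. Thus every corrupted coordinate of $\Cons_m$ lands in $\Miss_{m,\sigma}$, while uncorrupted coordinates never do, so $\abs{\Miss_{m,\sigma}}$ equals the number of corrupted coordinates inside $\Cons_m$.

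The key step is a concentration bound taken over the \emph{joint} randomness of $\keygen$ (the string $b$) and the noise. For each $i\in[\secpar]$, the event $i\in\Miss_{m,\sigma}$ occurs exactly when $b_i=m$ and coordinate $i$ is corrupted — two independent events of probabilities $\tfrac12$ and $2c\eta$. These indicators are mutually independent across $i$, each with mean $c\eta$, so $\abs{\Miss_{m,\sigma}}$ is a sum of $\secpar$ i.i.d.\ Bernoulli variables with expectation $c\eta\secpar$. A Chernoff bound then gives $\Pr[\abs{\Miss_{m,\sigma}}>\eta\secpar]\le e^{-\Omega(\secpar)}\le\negl$, because $c<1$ forces a constant multiplicative gap between the mean $c\eta\secpar$ and the acceptance threshold $\eta\secpar=(1/c)\,c\eta\secpar$, making the deviation factor $1/c>1$ a fixed constant.

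I expect the main subtlety to be precisely this averaging over the key. The bound cannot hold for every fixed $b$ — e.g.\ $b$ constantly equal to $m$ makes $\Cons_m=[\secpar]$ and the expected miss count $2c\eta\secpar$, which already exceeds $\eta\secpar$ once $c>\tfrac12$ — so the factor $2$ in $2c\eta$ is exactly what the expected size $\secpar/2$ of $\Cons_m$ cancels. I would therefore state the noise-tolerant correctness probability as being over the randomness of key generation, signing, and noise, and keep explicit track of the constant deviation factor $1/c$ to confirm the Chernoff exponent is a positive constant whenever $0<c<1$ and $\eta>0$; the degenerate cases $c=0$ or $\eta=0$ reduce to the noiseless setting and are trivial.
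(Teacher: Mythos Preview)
Your proposal is correct and follows essentially the same approach as the paper: reduce correctness to \cref{prp:correctness-one_restricted} via the relaxed verification, then bound $\abs{\Miss_{m,\sigma}}$ by a sum of $\secpar$ i.i.d.\ Bernoulli$(c\eta)$ indicators (from the independent $\tfrac12$ and $2c\eta$ events) and apply Chernoff. Your explicit remark that the bound requires averaging over the key $b$---and fails for worst-case $b$ once $c>\tfrac12$---is a nice clarification the paper leaves implicit.
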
 
\begin{proof}\label{pf:thm:correctness_noise tolerance}
The correctness follows from the correctness of the noise-sensitive scheme $\CTMAC$ (see \cref{alg:Conjugate Tmac}) \ifnum\llncs=0 because the two schemes have the same $\keygen,\tokengen$, and $\Sign$ algorithms, and the verification\footnote{Note that the verifications of both $\CTMAC$ (\cref{alg:Conjugate Tmac}) and $\CTMAC^\eta$ (\cref{alg:noise_tolerant Conjugate Tmac}) are deterministic.} is more lenient in $\CTMAC^\eta$, i.e., for the same secret key $k$, \[\CTMAC^\eta.\Vrfy_k(m,\sigma)=1\implies \CTMAC.\Vrfy_k(m,\sigma)=1,\] for any alleged document-signature pair $(m,\sigma)$.\else.\fi

Next, we show noise tolerance for $\CTMAC^\eta$. Let $0\leq c<1$ be an arbitrary constant. 
We will consider the noise model $\Noise{2c\eta}$ (\cref{def:noise-model}) from here on. Fix $m\in \{0,1\}$.
Let $k\gets \CTMAC^\eta.\keygen(1^\secpar)$ and $\ket{\Stamp}=\tensor_{j=1}^\secpar \ket{\Stamp_j}\gets \CTMAC^\eta. \tokengen_k$. Suppose some qubits of $\ket{\Stamp}$ got corrupted due to noise, and let the resulting state be $\ket{\Stamp'}$.

For every $j\in [\secpar]$, let $W_j,Y_j$ be Boolean random variables such that $Y_j=1$ \ifnum\llncs=0 if and only if \else iff \fi the $j^{th}$ qubit of $\ket{\Stamp}$ was corrupted by noise, and $W_j=1$ \ifnum\llncs=0 if and only if  the quantum state of the $j^{th}$ qubit of $\ket{\Stamp}$ is in the relevant basis with respect to $m$, i.e.,
\[\ket{\Stamp_j}\in \{H^m\ket{0},H^m\ket{1}\}.\]\else iff the $j^{th}$ qubit of $\ket{\Stamp}$ is in the set $\{H^m\ket{0},H^m\ket{1}\}$. \fi
Let the Boolean random variable $X_j$ be defined as \ifnum\llncs=1 $X_j=W_j\cdot Y_j$. \else  the product of $W_j$ and $Y_j$.
\[X_j=W_j\cdot Y_j.\]\fi

By definition of the $\CTMAC^\eta.\keygen$ algorithm, $\{W_j\}_{j\in[\secpar]}$ are IID random variables with parameter $\frac{1}{2}$ meaning,
$\forall j \in [\secpar],\Pr[W_j=1]=\frac{1}{2},$
moreover, $\{Y_j\}_{j\in [\secpar]}$ are IID with parameter $2c\eta$,  by definition of the noise model $\Noise{2c\eta}$ (\cref{def:noise-model}).
Clearly, for every $j\in [\secpar]$, $W_j$ and $Y_j$ independent random variables\ifnum\llncs=0 because the noise acts on the qubits independent of the $\keygen$ algorithm\fi.
Therefore, by definition, $\{X_j\}_{j\in [\secpar]}$ are IID random variables with parameter $\frac{1}{2}\cdot 2c\eta=c\eta$. 
Let $X\equiv\sum_{i=j}^\secpar X_j$.

Note that \ifnum\llncs=0 $X_j=1$ denotes the event that the qubit at the $j^{th}$ coordinate is in the relevant basis with respect to the message $m$ and that the qubit was corrupted by the noise. Hence, \fi the quantum states of the qubits of $\ket{\Stamp'}$ at the relevant coordinates with respect to $m$, differ from that of $\ket{\Stamp}$ at exactly $X$ coordinates. Since there are a total of $\secpar$ qubits in each token, and $\{X_j\}_{j\in [\secpar]}$ are IID with parameter $c\eta$,
$E[X]=E[\sum_{i=1}^\secpar X_i]=c\eta\secpar.$

\ifnum\llncs=0
Moreover, using a standard Chernoff-bound 
argument, we get,
\[\Pr[X>\eta \secpar]=
d^\secpar
 \implies \Pr[X\leq\eta \secpar]\geq 1-d^\secpar,\]
where $d\equiv e^{-2\left(1-c\right)^2\eta^2}<1$.
\else
Next, we use a  standard Chernoff-bound 
argument to conclude that  $\Pr[X\leq\eta \secpar]\geq 1-d^\secpar$ where $d\equiv e^{-2\left(1-c\right)^2\eta^2}<1$.
\fi
Let $\sigma'\gets\CTMAC^\eta.\Sign_{\ket{\Stamp'}}( m)$. Conditioned on the event that $X\leq \eta \secpar$, the signature $\sigma'$ would be inconsistent with the secret key at less than $\eta\secpar$ of the relevant coordinates with respect to $m$, and hence, $\CTMAC^\eta.\Vrfy_k(\sigma', m)=1$ with certainty. Therefore, \[\Pr[\CTMAC^\eta.\Vrfy_k(\sigma', m)=1]\geq \Pr[X\leq\eta \secpar]\geq  1-d^\secpar\ifnum\llncs=0.\else,\fi\]
\ifnum\llncs=0 Since $d^{\secpar}$ is negligible in $\secpar$, and $m\in \Bitspace$ was arbitrary, we conclude that $\CTMAC^\eta$ is correct up to a negligible error in the noise model, $\Noise{2c\eta}$, and is hence $\tolerant{2c\eta}$. \else which is negligibly close to $1$. \fi Since $0\leq c<1$ was arbitrary, we conclude that $\CTMAC^\eta$ is $\tolerant{2c\eta}$ for every $0\leq c<1$.
\end{proof}

\begin{restatable}{theorem}{NoiseSecurity}
  
\label{thm:Pi^t_W is secure}
The $1$-restricted (\cref{def: ell-restricted TMAC}) scheme $\CTMAC^\eta$ (\cref{alg:noise_tolerant Conjugate Tmac}) is $\UnfDef^{\OneToken,\Vrfy}$ for any $0<\eta<\frac{1-\alpha}{2}$, where $\alpha= \cos^2(\frac{\pi}{8})$. 
\end{restatable}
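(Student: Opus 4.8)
The plan is to follow the same two-stage strategy used for the noise-sensitive scheme $\CTMAC$, but to replace the final semi-definite programming step by an information-theoretic reduction back to the (already established) unforgeability of $\CTMAC$ at a smaller security parameter. Concretely, I would first set up a strengthened single-token game for $\CTMAC^\eta$ exactly as in the analysis of $\CTMAC$: the verification oracle is replaced by one that, in addition to answering a query $(m,\sigma)$, releases the portion of the secret key relevant to $m$ (the set $\Cons_m$ together with $a|_{\Cons_m}$) the first time a query for that value of $m$ is accepted. As in the noise-sensitive case, once this information has been released the adversary can answer all further $m$-queries on its own, so it suffices to bound the winning probability of an adversary in this strengthened game; and since the reduction below is purely information-theoretic, the unbounded security of $\CTMAC$ from \cref{thm:Pi_W is secure} transfers to $\CTMAC^\eta$.

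The heart of the argument is a reduction turning an adversary $\adv$ against the strengthened game for $\CTMAC^\eta$ with security parameter $\secpar$ into an adversary $\adv'$ against the strengthened game for $\CTMAC$ with a smaller parameter $n=\beta\secpar$. Upon receiving its $n$-qubit challenge token, $\adv'$ samples a uniformly random subset $\mathcal{J}\subseteq[\secpar]$ of size $n$, plants the challenge qubits at the coordinates of $\mathcal{J}$, and fills the remaining $\secpar-n$ coordinates with fresh BB84 states whose classical description $(a,b)$ it generates itself. It then runs $\adv$ on this $\secpar$-qubit token and, when $\adv$ outputs alleged signatures $\sigma^{(0)},\sigma^{(1)}$ for the documents $0$ and $1$, it submits the restrictions $\sigma^{(0)}|_{\mathcal{J}},\sigma^{(1)}|_{\mathcal{J}}$ to its own challenger. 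The key combinatorial observation is that $\Cons_0$ and $\Cons_1$ partition $[\secpar]$, so a forgery winning the noise-tolerant game has at most $\eta\secpar$ misses on $\Cons_0$ and at most $\eta\secpar$ on $\Cons_1$, hence a total of at most $2\eta\secpar$ "bad" coordinates; whenever $\mathcal{J}$ avoids this bad set, the restricted signatures match the real key exactly on $\Cons_0\cap\mathcal{J}$ and $\Cons_1\cap\mathcal{J}$, and thus win the strict game for $\CTMAC$ on the $n$ planted coordinates.

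Next I would estimate the avoidance probability and assemble the bound. Conditioned on $\adv$ winning, the bad set has size at most $2\eta\secpar$, and a uniformly random $n$-subset avoids a fixed set of this size with probability at least $\prod_{i=0}^{n-1}\frac{\secpar-2\eta\secpar-i}{\secpar-i}\geq\bigl(1-\tfrac{2\eta}{1-\beta}\bigr)^{n}$. Writing $p$ for $\adv$'s winning probability and using the bound from \cref{thm:Pi_W is secure} (which gives winning probability at most $\alpha^{n}$, since the optimal per-qubit probability of passing both the computational and Hadamard checks is $\cos^2(\pi/8)=\alpha$), I obtain $p\cdot\bigl(1-\tfrac{2\eta}{1-\beta}\bigr)^{n}\le\alpha^{n}$, that is $p\le\bigl(\alpha/(1-\tfrac{2\eta}{1-\beta})\bigr)^{\beta\secpar}$. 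The hypothesis $\eta<\tfrac{1-\alpha}{2}$ is exactly the statement $1-2\eta>\alpha$, so taking $\beta>0$ small enough forces the base strictly below $1$ and makes $p$ decay exponentially in $\secpar$; this is precisely where the threshold $\tfrac{1-\alpha}{2}$ enters, matching the $14\%$ figure through the $2c\eta$ relation of \cref{thm:correctness_noise tolerance}.

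The step I expect to be the main obstacle is making $\adv'$ faithfully simulate the lenient verification oracle of $\CTMAC^\eta$ while it holds only the strict verification oracle of $\CTMAC$ on the planted coordinates. On the $\secpar-n$ self-generated coordinates $\adv'$ can count misses exactly, but on $\mathcal{J}$ its oracle reveals only whether the restriction matches perfectly, not the exact number of misses needed to decide the $\le\eta\secpar$ threshold. This is what the strengthened game is engineered to handle: after the first accepting query the relevant key is released, so all subsequent queries for that document become locally computable by $\adv'$, and the delicate point will be to argue — as in the redundancy argument for $\CTMAC$ — that the queries preceding the first acceptance can be answered consistently without disturbing the favorable event in which $\mathcal{J}$ avoids the forgery's misses. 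Verifying that this simulation is correct and preserves the winning event is the technical core that the remainder of the proof must establish.
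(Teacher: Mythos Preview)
Your proposal is correct and follows essentially the paper's own proof: pass to the revealed-basis game $\UnfExp^{\OneToken,\NVR^*}$, then reduce the noise-tolerant instance at parameter $\secpar$ to the noise-sensitive one at parameter $\beta\secpar$ by planting the challenge qubits at a uniformly random set $\mathcal{J}$ and padding with self-generated BB84 states, and combine the $\alpha^{\beta\secpar}$ bound (times a polynomial in the number of verification queries) with the avoidance probability $\bigl(1-\tfrac{2\eta}{1-\beta}\bigr)^{\beta\secpar}$. The obstacle you flag is exactly the one the paper handles via a view lemma: the simulation's rejection rule (reject if the self-generated coordinates already show more than $\eta\secpar$ misses, else forward the restriction to the strict oracle) never accepts a bad query, and conditional on both good queries having all their misses outside $\mathcal{J}$ the simulation's responses coincide with the true lenient oracle's, so the favorable event transfers without circularity.
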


The proof of \cref{thm:Pi^t_W is secure} is given in \cref{sec:noise-tolerant unforgeability}.

In particular, since $0.07<\frac{1-\alpha}{2}$, we get the following corollary of \cref{thm:correctness_noise tolerance,thm:Pi^t_W is secure}.

\begin{corollary}\label{cor:concrete_bound-onerstricted}
The $1$-restricted scheme (see \cref{def: ell-restricted TMAC})  $\CTMAC^{0.07}$ (see \cref{alg:noise_tolerant Conjugate Tmac}), which is tolerant to noise up to $14\%$ noise, is $\Unc$ $\UnfDef^{\OneToken,\Vrfy}$.
\end{corollary}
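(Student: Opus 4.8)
The plan is to read off the corollary as the concrete specialization of \cref{thm:correctness_noise tolerance,thm:Pi^t_W is secure} at the threshold $\eta = 0.07$; there is essentially no new work beyond a single numerical inequality. I would split the statement into its two assertions — noise tolerance and unconditional single-token unforgeability — and dispatch each by substitution into the appropriate theorem.

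For the unforgeability half, the only thing to check is $0.07 < \frac{1-\alpha}{2}$ with $\alpha = \cos^2(\pi/8)$. Using the closed form $\alpha = \frac{2+\sqrt{2}}{4} \approx 0.85355$, one gets $\frac{1-\alpha}{2} \approx 0.07322 > 0.07$, so the hypothesis $0 < \eta < \frac{1-\alpha}{2}$ of \cref{thm:Pi^t_W is secure} holds at $\eta = 0.07$, and that theorem immediately yields $\UnfDef^{\OneToken,\Vrfy}$ for $\CTMAC^{0.07}$. To justify the stronger $\Unc$ qualifier appearing in the corollary, I would note that the proof of \cref{thm:Pi^t_W is secure} (via the semidefinite-programming bound on the single-token forging probability, paralleling the unconditional \cref{thm:Pi_W is secure}) invokes no computational hardness assumption, so the guarantee holds against computationally unbounded adversaries making polynomially many verification queries.

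For the noise-tolerance half, I would apply \cref{thm:correctness_noise tolerance} with $\eta = 0.07$: the scheme is $\tolerant{2c\eta} = \tolerant{0.14\,c}$ for every constant $0 \le c < 1$. Letting $c \to 1$ shows $\CTMAC^{0.07}$ is $\tolerant{\delta}$ for every $\delta < 0.14$, which is precisely the claimed tolerance ``up to $14\%$ noise''; the strict inequality $c < 1$ is the reason the endpoint $14\%$ is not itself asserted.

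The argument is routine, so there is no genuine obstacle — the only point demanding care is the size of the numerical margin. Since $\alpha$ was obtained numerically (the footnote to \cref{thm:main_result} flags a possible error of order $10^{-3}$ and the worry that the true $\alpha$ is slightly larger than $\cos^2(\pi/8)$), I would confirm that the gap $\frac{1-\alpha}{2} - 0.07 \approx 3.2 \times 10^{-3}$ comfortably exceeds the corresponding perturbation of $\frac{1-\alpha}{2}$ induced by a $10^{-3}$ error in $\alpha$ (about $5 \times 10^{-4}$), so the strict inequality — and hence the corollary — is robust to the stated numerical uncertainty.
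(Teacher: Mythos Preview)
Your proposal is correct and matches the paper's approach exactly: the paper derives the corollary in one line by noting that $0.07 < \frac{1-\alpha}{2}$ and invoking \cref{thm:correctness_noise tolerance,thm:Pi^t_W is secure}. Your additional remarks on the $\Unc$ qualifier and the numerical robustness are accurate elaborations that the paper leaves implicit.
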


\section{Proving Single Token Unforgeability for  \texorpdfstring{$\CTMAC^\eta$}{One-Restricted TMAC}}\label{sec:Security of Conjugate TMAC}
In this section, we prove that  $\CTMAC^\eta$, the Conjugate $\Tmac$ (\cref{alg:Conjugate Tmac}), is $\Unc$  $\UnfDef^{\OneToken,\Vrfy}$ (see \cref{def:Unforgeability}).

Recall that in the security game $\UnfExp^{\OneToken,\Vrfy}_{\adv,\Pi}(\secpar)$ for a $1$-restricted scheme (see \cref{def: ell-restricted TMAC}), the adversary wins if it submits valid signatures for both $0$ and $1$, given a single token $\ket{\Stamp}$ and oracle access to $\Vrfy$. 

A na\"ive attempt for a proof would be to follow the blueprint used to prove the security for classically verifiable private quantum money schemes in Refs.~\cite{MVW12,PYJ+12} because the bill, challenges, and answers to the challenges there roughly correspond to the tokens, documents, and signatures in our construction, respectively. The blueprint proceeds as follows. In the first step, unforgeability is proven against $1$-to-$2$ counterfeiting (called "simple counterfeiting" in~\cite{MVW12}), meaning an adversary with a single bill (without any access to the verification oracle) cannot submit two bills that pass verification. Next, a reduction is shown from an adversary with oracle access to verification, submitting polynomially many money-states to a  $1$-to-$2$ adversary.  
We can define $1$-to-$2$ unforgeability in a manner similar to $\UnfDef^{\OneToken}$ with a small change that the adversary is forced to submit only two documents, and hence, we can follow the first step of the blueprint with respect to $\CTMAC^\eta$ without an issue. The analysis would almost be the same as in~\cite[Section 4.2]{MVW12}. 
However, there is an issue in mimicking the second step in the blueprint, i.e., it is not clear in case of $\CTMAC^\eta$, how to simulate an adversary with access to verification oracles using an adversary mentioned without the verification oracle. In order to shed more light on this issue, we briefly review the reduction used in~\cite{MVW12}. 
The reduction constructs a $1$-to-$2$ adversary (without any oracle access) that simulates the adversary augmented with a verification oracle by guessing the responses of the oracle queries. The decrease in success probability is only polynomial because it can be assumed that at most two of the queries of the augmented adversary were successful. This property holds in the context of~\cite{MVW12} because the challenger chooses the challenges (representing measurements) which means with overwhelming probability, the challenges are all distinct, and all queries to the oracle would be with regard to distinct challenges.
 
 On the contrary, in the setting of a $1$-restricted $\Tmac$, the adversary in the unforgeability game (corresponding to $\UnfExp^{\OneToken,\Vrfy}(\secpar)$) gets to choose the "challenges" themselves, and hence the challenges need not be random. In particular, the adversary might query the signature of the same document successfully queried multiple times with different signatures. Hence, a uniformly random guess for the set of successful queries would be correct only with exponentially small probability.\footnote{In fact, there are $\Tmac$ schemes which are $\UnfDef^{\OneToken}$, but are not $\UnfDef^{\OneToken,\Vrfy}$. In order to see this, consider an $\UnfDef^{\OneToken,\Vrfy}$ $\Tmac$ such as our construction or~\cite{BS16a}, and modify its verification procedure to also accept valid signatures appended with a prefix of the secret key. Repeated tries and fails of verification could then uncover the key. }
It is easy to see that the issue discussed above cannot occur in a strong $\Tmac$ (see \cref{def:strong TMAC}) scheme because it is not possible to successfully query different signatures of the same document in a strong $\Tmac$, except with negligible probability.
However, our construction is not a strong $\Tmac$.
For instance, if we consider the  noise-sensitive  scheme $\CTMAC^0$, then we can construct an adversary that given one token and access to verification oracle, produces two different signatures of a single bit, as described in \cref{sec:A break of Strong Unforgeability}. The main idea is that an adversary with a verification oracle could easily measure the token in one of the two bases (computational or Hadamard), and obtain a valid signature as described in \cref{sec:A break of Strong Unforgeability}. Then she could perform queries to the oracle changing one bit at a time to uncover the quantum states of all coordinates in the chosen basis which would enable her to create multiple signatures of the bit corresponding to the chosen basis.

Ergo, we argue for security against attacks augmented with verification oracle via a scheme specific approach.
The crucial point to observe is that in the attack mentioned above, the adversary only learns the quantum state of the coordinates that are in one particular basis. Informally, we claim that this is all the information the adversary can learn using the verification oracle. This statement is made formal in \cref{sec:red_supplied_basis_unforgeability}.

The structure of the proof is given in \cref{fig:proof idea}.  We first consider the security game $\UnfExp_{\adv,\CTMAC^\eta}^{\OneToken,\NVR}(\secpar)$ that is the same as the security game for $\UnfExp_{\adv,\CTMAC^\eta}^{\OneToken,\Vrfy}(\secpar)$, but with a different verification oracle that along with the result, also returns additional information after a successful query. The additional information that the oracle provides, represents all the information that the adversary could have obtained by making repeated successful queries to the same message. Hence, there is no need for the adversary to make repeated successful queries to the same message, and the game is equivalent in power if the adversary is not allowed to do so, which is the definition of the game $\UnfExp_{\adv,\CTMAC^\eta}^{\OneToken,\NVR^*}(\secpar)$.
$\UnfExp_{\adv,\CTMAC^\eta}^{\OneToken,\NVR}(\secpar)$ clearly gives more power to the adversary compared to the security game corresponding to $\UnfExp_{\adv,\CTMAC^\eta}^{\OneToken,\Vrfy}(\secpar)$. 
Hence, if the success probability for any adversary in $\UnfExp_{\adv,\CTMAC^\eta}^{\OneToken,\NVR^*}(\secpar)$ is negligible, then the same holds in $\UnfExp_{\adv,\CTMAC^\eta}^{\OneToken,\Vrfy}(\secpar)$. 
Next, it is shown that for $\eta=0$, an adversary in $\UnfExp_{\adv,\CTMAC^\eta}^{\OneToken,\NVR^*}(\secpar)$  can be reduced to an adversary in the "weak" certified deletion game for a tailor-made quantum encryption with certified deletion scheme~\cite{BI20}. 

Finally, we conclude by showing a reduction from $\UnfExp_{\adv,\CTMAC^\eta}^{\OneToken,\NVR^*}(\secpar)$ with respect to the noise-tolerant scheme to the same game but with respect to the noise-sensitive scheme (i.e $\eta=0$), with a decrease in the security parameter. The reduction is unusual in the sense that the success probability decreases by a multiplicative exponential factor, but the bound on the success probability of the adversary in the reduced game is small enough that we still get a meaningful bound on the success probability of the adversary in the original game.

The arguments used in all the steps of the proof are information-theoretic and hence, we conclude that $\CTMAC^\eta$ is  $\Unc$ $\UnfDef^{\OneToken,\Vrfy}$. This is a stronger result in the sense that we require unforgeability only against $\QPT$ adversaries to lift the scheme to an $\UnfDef^{\tokengen,\Vrfy,\widetilde{\Sign}}$ $\Tmac$. 
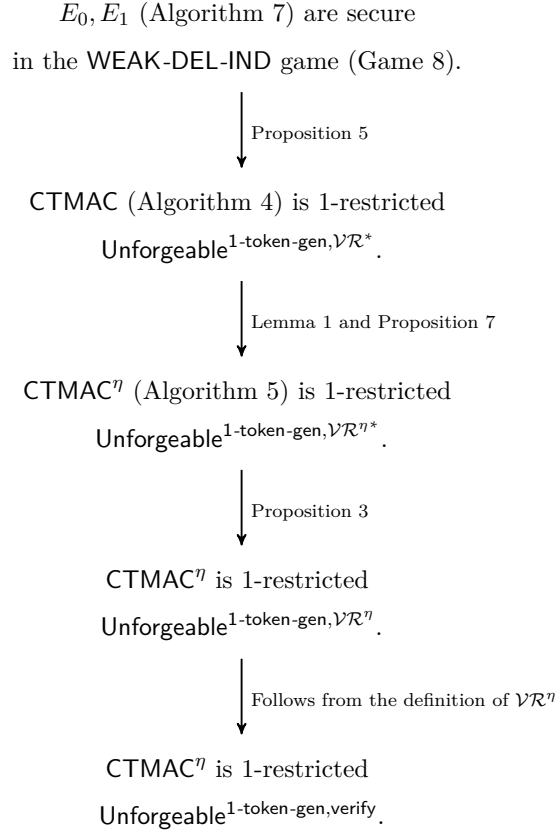
\begin{figure*}[htp]
\centering
\begin{adjustbox}{max width=0.6 \textwidth,center}
    \begin{tikzpicture}
  \matrix (m) [matrix of nodes, row sep=3em, column sep=0em, align=center]
    {$\begin{matrix} \text{$E_0,E_1$ (\cref{alg: E_m}) are secure}\\ \text{in the $\exWCerDel$ game (Game~\ref{exp: weak certified deletion}).} \end{matrix}$
    \\
    $\begin{matrix}\text{$\CTMAC$ (\cref{alg:Conjugate Tmac}) is $1$-restricted}\\ \text{ $\UnfDef^{\OneToken,{\VR}^*}$.}\end{matrix}$
\\
$\begin{matrix}\text{$\CTMAC^\eta$ (\cref{alg:noise_tolerant Conjugate Tmac}) is $1$-restricted}\\ \text{ $\UnfDef^{\OneToken,{\NVR}^*}$.}\end{matrix}$
\\
$\begin{matrix}\text{$\CTMAC^\eta$ is $1$-restricted}\\ \text{ $\UnfDef^{\OneToken,\NVR}$.}\end{matrix}$
\\
      $\begin{matrix}\text{$\CTMAC^\eta$  is $1$-restricted}\\ \text{ $\UnfDef^{\OneToken,\Vrfy}$.}\end{matrix}$
\\
     }; 
  
 \path[->,draw,thick]
   (m-1-1) edge [auto] node[labeled]  [scale=1] 
   {\text{$\begin{matrix}\text{\cref{prp:weak certififed deletion security implies weak basis supply security}}
   \end{matrix}$
    }} (m-2-1) ;
     
\path[->,draw,thick]
   (m-2-1) edge [auto] node[labeled]  [scale=1] 
   {\text{$\begin{matrix}\text{\cref{lem:noise reduction lemma,prp: reduction bound}}
   \end{matrix}$
    }} (m-3-1) ;
    
\path[->,draw,thick]
   (m-3-1) edge [auto] node[labeled]  [scale=1] 
   {\text{$\begin{matrix}\text{\cref{prp:weak_to_strong_supplied basis}}
   \end{matrix}$
    }} (m-4-1) ;
    
\path[->,draw,thick]
   (m-4-1) edge [auto] node[labeled]  [scale=1] 
   {\text{$\begin{matrix}\text{Follows from the definition of $\NVR$}
   \end{matrix}$
    }} (m-5-1) ;    
     
    
\end{tikzpicture}
\end{adjustbox} 

\caption[Stages In Proving  \cref{thm:Pi^t_W is secure}]{
The above diagram summarizes the main stages of the proof of \cref{thm:Pi^t_W is secure}.}
\label{fig:proof idea}
\end{figure*}
\subsection{The Revealed Basis Setting} \label{sec:red_supplied_basis_unforgeability}

\nomenclature[vr]{$\NVR$}{Verify and reveal oracle with parameter $\eta$ } 
In this section, we consider a modified version of the security game $\UnfExp^{\OneToken, \Vrfy}$ that we get by replacing the verification oracle $\Vrfy$ with an oracle $\NVRk$ that provides more information to the adversary. 

For a key $k=(a,b)$ representing an $n$ qubit length state $\ket{\Stamp}=H^b\ket{a}$, the oracle $\NVRk$ (Verify and Reveal) receives input of the form $(m,\sigma)$, runs the verification of $\CTMAC^\eta$ and returns a bit $0$ or $1$ accordingly, called the result bit. In addition, if the verification is successful, then along with the result bit, it also responds with the sets ${\Cons_m}=\{i\in [n]| b_i=m\}$, and the set $\Miss_{m,\sigma}=\{i\in \Cons_m\mid\sigma_i\neq a_i\}$. Queries whose result bit is $0$ are referred to as failed queries, and the others are referred to as successful queries.

The corresponding security game $\UnfExp^{\OneToken,\NVR}_{\adv,\CTMAC^\eta}(\secpar)$, described in \cref{exp: supplied_basis noise-tolerant} is defined by replacing the verification oracle in $\UnfExp^{\OneToken,\Vrfy}_{\adv,\CTMAC^\eta}(\secpar)$ with $\NVRk$. 
 As $\NVRk$ is scheme specific, $\UnfExp^{\OneToken,\VR}_{\adv,\CTMAC^\eta}(\secpar)$ is defined only in the context of $\CTMAC^\eta$.

\begin{game}
\caption{
$\UnfExp^{\OneToken,\NVR}_{\adv,\CTMAC^\eta}(\secpar)$:}\label{exp: supplied_basis noise-tolerant}
\begin{algorithmic}[1] 
    \State $\Cadv$ generates a secret key $k$ by $\CTMAC^\eta.\keygen(\secpar)$.
    \State  $\adv$ is given input $1^\secpar$ as well as  a single access to $\CTMAC^\eta.\tokengen_k$  and classical oracle access to $ \NVRk$, which it can query a polynomial number of times. $\adv$ then outputs $(\sigma_0,\sigma_1)$.
\end{algorithmic}
The output of the game is defined to be $1$ if and only if $(1)$ $\CTMAC^\eta.\Vrfy_k(0,\sigma_0)=1 $ and  $(2)$ $\CTMAC^\eta.\Vrfy_k(1,\sigma_1)=1$.
\end{game}

We also define a stricter variant of the game in \cref{dfn: weak basis break}. 
\nomenclature[ForgeStar]{$\UnfExp^{\OneToken,{\NVR^*}}_{\adv,\CTMAC^\eta}$}{ A variant of the security game $\UnfExp$, specific to the scheme $\CTMAC^\eta$}
 \begin{definition} \label{dfn: weak basis break}
The security game, $\UnfExp^{\OneToken,{\NVR^*}}_{\adv,\CTMAC^\eta}(\secpar)$ is defined to be the same as $\UnfExp^{\OneToken,\NVR}_{\adv,\CTMAC^\eta}(\secpar)$, except for the added restriction that the value of the game is $0$ if the adversary performs two successful queries to $\NVRk$ for the same document.
\end{definition}

Next, we will show that an adversary in $\UnfExp^{\OneToken,{\NVR^*}}_{\adv,\CTMAC^\eta}(\secpar)$ does not gain additional information by repeated successful queries with respect to the same message, and hence, the two games are equivalent.

\begin{proposition}\label{prp:weak_to_strong_supplied basis}
For every (computationally unbounded) $\adv$ that wins $\UnfExp^{\OneToken,\NVR}_{\adv,\CTMAC^\eta}(\secpar)$ with probability $\epsilon(\secpar)$ using at most $q(\lambda)$ queries to $\NVRk$, there exists a (computationally unbounded) adversary $\Badv$ winning $\UnfExp^{\OneToken,\NVR^*}_{\Badv,\CTMAC^\eta}(\secpar)$ with probability $\epsilon(\secpar)$ making at most $q(\secpar)$ queries to $\NVRk$.
\end{proposition}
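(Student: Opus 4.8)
The plan is to build $\Badv$ as a perfect simulator that runs $\adv$ internally and forwards to its own $\NVRk$ oracle only those queries it cannot answer itself. The key observation I would exploit is that a single successful $\NVRk$ query already hands the simulator all the key material relevant to that document: a successful query $(m,\sigma)$ returns $\Cons_m=\{i\in[\secpar]\mid b_i=m\}$ together with $\Miss_{m,\sigma}=\{i\in\Cons_m\mid\sigma_i\neq a_i\}$, and since $\sigma$ is known, the simulator recovers $a_i$ for every $i\in\Cons_m$ (taking $a_i=\sigma_i$ when $i\notin\Miss_{m,\sigma}$ and $a_i=1-\sigma_i$ otherwise). As both the result bit of $\CTMAC^\eta.\Vrfy_k(m,\cdot)$ and the sets it would reveal depend on the key only through $\Cons_m$ and $a|_{\Cons_m}$, the simulator can henceforth answer any further query for the same $m$ on its own, with no additional oracle access.

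Concretely, I would have $\Badv$ relay the token obtained from its single $\tokengen_k$ access to $\adv$ and maintain, for each document $m\in\{0,1\}$, a flag recording whether a successful query for $m$ has occurred and, if so, the recovered pair $(\Cons_m, a|_{\Cons_m})$. On each query $(m,\sigma)$ from $\adv$: if no successful query for $m$ has yet occurred, $\Badv$ forwards $(m,\sigma)$ to its own $\NVRk$, relays the answer verbatim, and updates its record whenever that answer is successful; otherwise $\Badv$ answers locally, computing $\Miss_{m,\sigma}=\{i\in\Cons_m\mid\sigma_i\neq a_i\}$ from its record and returning result bit $1$ together with $(\Cons_m,\Miss_{m,\sigma})$ when $\abs{\Miss_{m,\sigma}}\leq\eta\secpar$, and result bit $0$ otherwise. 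At the end, $\Badv$ outputs whatever $\adv$ outputs.

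Verifying the three requirements is then routine. First, $\Badv$ forwards a query only while the corresponding document has not been successfully queried, so it makes at most one successful query per document and thus never triggers the abort condition that distinguishes $\UnfExp^{\OneToken,\NVR^*}_{\Badv,\CTMAC^\eta}(\secpar)$ from $\UnfExp^{\OneToken,\NVR}_{\adv,\CTMAC^\eta}(\secpar)$; it also forwards only a subset of $\adv$'s queries, hence makes at most $q(\secpar)$ of them. Second, the locally computed answers coincide exactly with those of a genuine $\NVRk$, since verification is deterministic and both the result bit and the revealed sets are deterministic functions of $(m,\sigma)$ and $(\Cons_m, a|_{\Cons_m})$. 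Consequently $\adv$'s view inside $\Badv$ is distributed identically to its view in the real $\NVR$ game, the emitted pair $(\sigma_0,\sigma_1)$ has the same distribution and passes both final verifications with probability $\epsilon(\secpar)$; since $\Badv$ never aborts, it wins with probability exactly $\epsilon(\secpar)$.

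The only real obstacle is confirming that one successful reveal truly suffices to simulate every later query for that document, i.e.\ that nothing in the oracle's response depends on the key beyond $\Cons_m$ and $a|_{\Cons_m}$. Once the reconstruction of $a|_{\Cons_m}$ from $\Cons_m$, $\Miss_{m,\sigma}$ and the known $\sigma$ is established, the remainder is a loss-free perfect-simulation argument with no degradation in either success probability or query count.
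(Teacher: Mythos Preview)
Your proposal is correct and follows essentially the same approach as the paper: both construct $\Badv$ to simulate $\adv$, forwarding queries until the first success for each document and thereafter answering locally using the recovered $\Cons_m$ and $a|_{\Cons_m}$ (what the paper calls $x_m$). Your writeup is in fact slightly more explicit than the paper's about how $a|_{\Cons_m}$ is reconstructed from $\sigma$ and $\Miss_{m,\sigma}$, and about why the simulated responses match exactly, but the underlying argument is the same perfect-simulation reduction with no loss in success probability or query count.
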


\begin{proof}\label{pf:prp:weak_to_strong_supplied basis}
Let $\adv$ be an adversary who makes at most $q(\secpar)$ queries to the oracle $\NVRk$ in the security game $\UnfExp^{\OneToken,\NVR}_{\adv,\CTMAC^\eta}(\secpar)$, and wins the game with probability $\epsilon(\secpar)$. 
We construct an adversary $\Badv$  to $\UnfExp^{\OneToken,\NVR}_{\adv,\CTMAC^\eta}(\secpar)$, who makes at most $q(\secpar)$ queries. 
$\Badv$ runs $\adv$ on  the security parameter she receives. $\Badv$ answers any oracle query $(m,\sigma)$ that $\adv$ makes to $\NVR$, as follows. 
  \begin{itemize}
  \item If $m$ has not been queried in any previous successful oracle query, $\Badv$ queries its own oracle with $(m,\sigma)$ and passes the answer to $\adv$. If the query is successful, $\Badv$ stores the result $\Cons_{m}$ and $\Miss_{\sigma,m}$, and uses these sets to uncover $x_m$, a string describing the quantum state of the token at the coordinates $\Cons_{m}$.
  \item Else, $\Badv$ uses the stored result to check if $\CTMAC^\eta.\Vrfy_k(m,\sigma)$ would pass (checking that $\sigma|_{\Cons_{m}}=x_m$ for all but $\eta\secpar$ indices), and answers accordingly to $\adv$, re-forwarding the stored result if necessary.
  \end{itemize}
       The view of $\adv$ is clearly identical to that in $\UnfExp^{\OneToken,\NVR}_{\adv,\CTMAC^\eta}(\secpar)$. Note that, once $\Badv$ successfully queries the oracle for a particular document, it would not query for the same document again. Hence, $\Badv$ wins $\UnfExp^{\OneToken,\NVR^*}_{\adv,\CTMAC^\eta}(\secpar)$ if and only if $\adv$ wins $\UnfExp^{\OneToken,\NVR}_{\adv,\CTMAC^\eta}(\secpar)$. Therefore, $\Badv$ wins $\UnfExp^{\OneToken,\NVR^*}_{\adv,\CTMAC^\eta}(\secpar)$ with a probability of $\epsilon(\secpar)$.
\end{proof}
Note that any adversary in $\UnfExp^{\OneToken,\Vrfy}_{\adv,\CTMAC^\eta}(
\secpar)$ (Game~\ref{exp:UnfExp}) is also an adversary in $\UnfExp^{\OneToken,\NVR}_{\adv,\CTMAC^\eta}(
\secpar)$. Therefore, if the winning probability for any $\QPT$ adversary $\adv$ in $\UnfExp^{\OneToken,\NVR}_{\adv,\CTMAC^\eta}(
\secpar)$ is bounded by $\epsilon(\secpar)$, then the same would hold for the security game $\UnfExp^{\OneToken,\Vrfy}_{\adv,\CTMAC^\eta}(
\secpar)$.

By taking the contra-positive of the \cref{prp:weak_to_strong_supplied basis}, we reach the following corollary:

\begin{corollary}

\label{cor: noise-tolerant weak  basis supply security implies noise-tolerant security}
If for every $\secpar$ and for every computationally unbounded $\Badv$ making at most $q(\secpar)$ queries to $\NVRk$ 
\[\Pr[\UnfExp^{\OneToken,{\NVR}^*}_{\Badv,\CTMAC^\eta}(\secpar) = 1] \leq \epsilon(\secpar),\]
then for every $\secpar$ and for any computationally unbounded adversary $\adv$ making at most $q(\secpar)$ queries to $\Vrfy_k$ \[\Pr[\UnfExp^{\OneToken,\Vrfy}_{\adv,\CTMAC^\eta}(\secpar) = 1] \leq \epsilon(\secpar),\]
where $\UnfExp^{\OneToken,\Vrfy}_{\adv,\CTMAC^\eta}(\secpar)$ and $\UnfExp^{\OneToken,{\NVR}^*}_{\Badv,\CTMAC^\eta}(\secpar)$ are the security games defined in Game~\ref{exp:UnfExp} and \cref{dfn: weak basis break}, respectively.
\end{corollary}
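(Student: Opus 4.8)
The plan is to chain together two facts already established in this subsection: the content of \cref{prp:weak_to_strong_supplied basis} and the observation that $\NVRk$ dominates the ordinary verification oracle. I would prove the statement by contraposition combined with an oracle-domination argument, keeping the query budget $q(\secpar)$ and the winning predicate fixed throughout, and working at each fixed value of $\secpar$.

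First I would pass from the starred game to the unstarred one. Suppose, toward a contradiction, that the hypothesis holds but that some computationally unbounded adversary $\adv$ making at most $q(\secpar)$ queries wins $\UnfExp^{\OneToken,\NVR}_{\adv,\CTMAC^\eta}(\secpar)$ with probability strictly exceeding $\epsilon(\secpar)$. Applying \cref{prp:weak_to_strong_supplied basis} to this $\adv$ produces a computationally unbounded adversary $\Badv$ that wins $\UnfExp^{\OneToken,\NVR^*}_{\Badv,\CTMAC^\eta}(\secpar)$ with the \emph{same} probability and using at most $q(\secpar)$ queries to $\NVRk$, contradicting the assumed bound $\epsilon(\secpar)$ for the starred game. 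Hence every such $\adv$ is already bounded by $\epsilon(\secpar)$ in the unstarred game $\UnfExp^{\OneToken,\NVR}$.

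Second I would descend from the reveal oracle to the plain verification oracle. The key observation is that $\NVRk$ returns the verification bit of $\CTMAC^\eta$ together with strictly more information (the sets $\Cons_m$ and $\Miss_{m,\sigma}$ on successful queries), while the winning condition of both games is identical (valid signatures for both $0$ and $1$). Therefore any adversary against $\UnfExp^{\OneToken,\Vrfy}_{\adv,\CTMAC^\eta}(\secpar)$ can be regarded verbatim as an adversary against $\UnfExp^{\OneToken,\NVR}_{\adv,\CTMAC^\eta}(\secpar)$ that simply discards the extra revealed sets; this embedding preserves both the query count and the winning probability. Consequently the bound $\epsilon(\secpar)$ just obtained for the $\NVR$ game transfers directly to the $\Vrfy$ game, which is exactly the desired conclusion.

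I do not expect any genuine obstacle, since the corollary is a purely logical rearrangement of \cref{prp:weak_to_strong_supplied basis} together with oracle monotonicity. The only points that require care are bookkeeping ones: confirming that the query budget $q(\secpar)$ is preserved by both reductions, that the two games share the same winning predicate so that ``winning probability'' is unambiguous, and that the computational unboundedness of the adversary (restricted to polynomially many oracle queries) is respected at every step, so that the contrapositive of \cref{prp:weak_to_strong_supplied basis} applies to precisely the same adversary class on both sides.
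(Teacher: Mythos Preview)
Your proposal is correct and matches the paper's own argument essentially verbatim: the paper also observes that any $\Vrfy$-adversary is automatically an $\NVR$-adversary (oracle domination) and then invokes the contrapositive of \cref{prp:weak_to_strong_supplied basis} to pass from the starred to the unstarred game while preserving the query count and winning probability. No additional ideas are needed.
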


\subsection{Reduction to Certified Deletion in a Noiseless Setting}\label{sec: reduction to weak certified deletion}
In the previous section, we saw that \cref{cor: noise-tolerant weak  basis supply security implies noise-tolerant security}
 implies that proving unforgeability can be reduced to finding a bound on 
\[\Pr[\UnfExp^{\OneToken,\NVR^*}_{\adv,\CTMAC^\eta}(\secpar) = 1],\]
where $\UnfExp^{\OneToken,\NVR^*}$ is the game described in \cref{dfn: weak basis break}. We begin by proving the bound for the special case $\eta=0$, meaning the underlying scheme is the simpler $\CTMAC$. 
The bound for the special case is achieved by a reduction to a weaker notion of security of a Quantum Encryption with Certified Deletion scheme (QECD). Hence, we take a short detour to QECD~\cite{BI20}. 

\begin{definition}[Quantum Encryption with Certified Deletion,~\cite{BI20}] \label{def:qunatum encryption with certified deletion}A quantum encryption with certified deletion scheme,
 or $\QECD$, consists of five $\QPT$ algorithms: $\keygen,\Enc, \Dec, \Del$ and $ \Vrfy$ fulfilling the following:\footnote{We defer here from the definition given previously. In~\cite{BI20}, the length of the encrypted message, $n$, is an integer independent of $\secpar$. In addition,~\cite{BI20} uses density operator formalism, while we use pure states to preserve consistency with the rest of the paper.}
\begin{enumerate}
    \item On input $1^\secpar$, where $\secpar$ is the security parameter, the algorithm $\keygen$ outputs a classical key $k$.
    \item $\Enc_k(a)$ receives a classical string $a\in\Bitspace^\secpar$ and outputs a quantum state $\ket{c}$ denoting its encryption.
    \item $\Dec_k(\ket{c})$ receives a quantum state $\ket{c}$ and outputs a classical string $a'\in \Bitspace^\secpar$.
    \item $\Del(\ket{c})$ receives a quantum state and returns a classical string $cer$, which we will refer to as the deletion certificate.
    \item $\Vrfy_k(cer)$ outputs a Boolean answer.
\end{enumerate}
\end{definition}

The motivation for a quantum encryption with certified deletion scheme is that an adversary in possession of some cipher-text could supply a trusted authority with a certificate that the adversary has deleted the cipher, which the authority can then verify. If verification of the certificate succeeds, then the trusted authority is assured that even if the key is later leaked, or intentionally given to the adversary, the adversary cannot recover the data previously encrypted. 
The two $\QECD$ schemes $E_m$, for $m\in \{0,1\}$ that are of interest to us are given in \cref{alg: E_m}.
\begin{algorithm*}
    \caption{$E_m$ - A $\QECD$ scheme (for $m\in \Bitspace$)}
    \label{alg: E_m}
    \begin{algorithmic}[1] 
        \Procedure{$\keygen$}{$1^\secpar$}
                 \State $k\sample\Bitspace^\secpar$.
            \State \textbf{Return} $k$.
        \EndProcedure
    \end{algorithmic}
    \begin{algorithmic}[1] 
        \Procedure{$\Enc_k$}{$a$}
            \State \textbf{Return} $H^k(\ket{a})$.
        \EndProcedure

    \end{algorithmic}
	\begin{algorithmic}[1] 
        \Procedure{$\Dec_k$}{$\ket{c}$}
            \State Compute $H^k\ket{c}$ to obtain $\ket{cc}$.
            \State Measure $\ket{cc}$ to obtain $a'$.
            \State \textbf{Return} $a'$.
                \EndProcedure

    \end{algorithmic}
  	\begin{algorithmic}[1] 
        \Procedure{$\Del$}{$\ket{c}$}
            \State Compute $(H^m)^{\otimes\secpar}\ket{c}$to obtain $\ket{cc}$.
            \State Measure $\ket{cc}$ to obtain $cer$.
            \State \textbf{Return} $cer$.
            \EndProcedure
    \end{algorithmic}
      	\begin{algorithmic}[1] 
        \Procedure{$\Vrfy_k$}{$cer$}
            \State Define $\Cons_m=\{i\in [\secpar]| k_i=m\}$.
            \If{$cer=a|_{\Cons_m}$}
            \State \textbf{Return} $1$.
            \Else
                \State \textbf{Return} $0$.
            \EndIf
            \EndProcedure
    \end{algorithmic}
\end{algorithm*}

 As per the original definition in~\cite{BI20}, a QECD scheme is secure if no adversary can distinguish between the encryption of a message of his choice and that of the $0^\secpar$ string, even if she is provided with the secret key after it is verified that she deleted the cipher. These schemes are not certified deletion secure in that sense.\footnote{A simple adversary could choose a string differing by $1$ bit from the $0$ string as $m_1$: measuring honestly results in a certificate $r$ that is then provided to the adversary. If the key is $1$ at the corresponding coordinate (meaning the corresponding qubit was in the Hadamard basis), then the adversary outputs $1$, or else they output a uniformly random guess. It is easy to see that the attacker wins with a probability of $\frac{3}{4}$.} However, for our needs, it is sufficient to provide a bound on the winning probability in a stricter security game (Game~\ref{exp: weak certified deletion}) that is tailor-made for our needs.
\begin{savenotes}
\begin{game}
\caption{Weak Certified Deletion Game $\exWCerDel_{\adv,\Pi}(\secpar)$:}\label{exp: weak certified deletion}  
\begin{algorithmic}[1]
    \State The challenger $\Cadv$ runs $\keygen(1^\secpar)$ to generate $k$, and uniformly samples $a\in\Bitspace^\secpar$. $\Cadv$ then sends $\ket{c}=\Enc_k(a)$.
   \State $\adv$ sends $\Cadv$ some string $cer$.
    \State $\Cadv$ computes  $V\gets\Vrfy_k(cer)$, and  then sends $k$ to the adversary.
    \State $\adv$ outputs $a'$.
\end{algorithmic}
 We say the output of the game is 1 if and only if $V=1$ and $a=a'$.
\end{game}
\end{savenotes}

In  \cref{sec: analysis of weak certified deletion scheme}, we prove the following proposition, based on semi-definite programming methods.

\begin{restatable}{proposition}{weakCertSecure}\label{cor:weak_cert_secure}
The optimal success probability for an adversary in\\ $\exWCerDel_{\adv,E_0}(\secpar)$ (see  Game~\ref{exp: weak certified deletion} and \cref{alg: E_m}), as well as in $\exWCerDel_{\adv,E_1}(\secpar)$ is  $\alpha^\secpar$, where $\alpha= \cos^2(\frac{\pi}{8})$.
\end{restatable}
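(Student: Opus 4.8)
The plan is to reduce the $\secpar$-qubit game to a single-qubit game, solve the single-qubit game exactly via semidefinite programming, and then tensorize. I treat $E_0$ explicitly; $E_1$ follows verbatim after conjugating everything by $H^{\otimes\secpar}$, which interchanges the computational and Hadamard bases and leaves the game invariant.

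First I would put an arbitrary adversary's strategy into a canonical form. Since $\Enc_k(a)=\bigotimes_i H^{k_i}\ket{a_i}$ and each pair $(k_i,a_i)$ is uniform and independent, and since both the verification predicate ($cer|_{\Cons_0}=a|_{\Cons_0}$) and the recovery predicate ($a'=a$) factor coordinatewise, the whole game is the $\secpar$-fold tensor product of a single-coordinate game whose input is uniform over $\{\ket{0},\ket{1},\ket{+},\ket{-}\}$. In the single-coordinate game, when $k=1$ verification is vacuous (the constraint set is empty at that coordinate), while when $k=0$ a win already forces $cer=a$, so the adversary may as well recycle $a'=cer$ in that branch. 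Hence every single-qubit strategy --- even one retaining quantum memory and measuring adaptively after $k$ is revealed --- is captured by a single POVM $\{P_{c,g}\}_{c,g\in\Bitspace}$, where $c$ is the certificate and $g$ the guess to be used when $k=1$ (this is the equivalence between sequential measurements and joint POVMs). Writing the marginals $Q_c=\sum_g P_{c,g}$ and $R_g=\sum_c P_{c,g}$, the single-qubit success probability is
\[
\tfrac14\Big(\bra{0}Q_0\ket{0}+\bra{1}Q_1\ket{1}+\bra{+}R_0\ket{+}+\bra{-}R_1\ket{-}\Big),
\]
a small SDP over $\{P_{c,g}\succeq0,\ \sum_{c,g}P_{c,g}=I\}$ with objective $\sum_{c,g}\tr(A_{c,g}P_{c,g})$ and $A_{c,g}=\tfrac14\big(\ketbra{c}+H\ketbra{g}H\big)$.

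Next I would solve this SDP. For the lower bound I would exhibit the Breidbart-type "square" POVM with four rank-one elements $P_{c,g}=\tfrac12\ketbra{v_{c,g}}$ whose Bloch vectors point in the four diagonal directions $\pm\pi/4,\pm3\pi/4$, placing each element symmetrically between the two target states it must serve; a direct computation makes each of the four terms equal to $\cos^2(\pi/8)$, so the value is exactly $\alpha=\cos^2(\pi/8)$. For the matching upper bound I would present the dual-feasible operator $Y=\tfrac{\alpha}{2}I$: using $\ketbra{0}=\tfrac12(I+Z)$ and its analogues, each $A_{c,g}$ is of the form $\tfrac14\big(I+\tfrac12(\pm X\pm Z)\big)$ and thus has top eigenvalue $\tfrac14\big(1+\tfrac1{\sqrt2}\big)=\tfrac\alpha2$, so $Y\succeq A_{c,g}$ for all four outcomes while $\tr Y=\alpha$. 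Strong duality then pins the single-qubit optimum to exactly $\cos^2(\pi/8)$ (the paper's numerics are consistent with, and this certificate sharpens, that value).

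Finally I would tensorize. The product of optimal single-qubit strategies attains $\alpha^\secpar$, giving the lower bound. For the upper bound --- the step that must rule out measurements entangled across the $\secpar$ qubits --- I would note that the coefficient operators of the $\secpar$-qubit SDP are themselves tensor products, $A_{cer,g}=\bigotimes_i A_{cer_i,g_i}$, because for each measurement outcome and each $k$ there is a \emph{unique} winning $a$ (namely $a_i=cer_i$ where $k_i=0$ and $a_i=g_i$ where $k_i=1$). Taking the tensor power $Y^{\otimes\secpar}$ of the single-qubit dual certificate then yields a dual-feasible operator for the product SDP, since $Y\succeq A_{c,g}\succeq0$ implies $Y^{\otimes\secpar}\succeq\bigotimes_i A_{cer_i,g_i}$ for positive operators, with $\tr(Y^{\otimes\secpar})=\alpha^\secpar$. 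This certifies the upper bound $\alpha^\secpar$ and hence equality. I expect the main obstacle to be precisely this tensorization: one must verify that the game really factorizes into a clean product SDP and that a genuine product dual certificate is feasible, rather than merely asserting that independent coordinates cannot be helped by joint attacks. The remaining delicate point, that the single combined POVM loses no generality over quantum-memory strategies, is handled by the sequential-measurement/joint-POVM equivalence noted above.
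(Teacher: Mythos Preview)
Your single-qubit analysis is sharper than the paper's: you exhibit explicit primal and dual certificates (the Breidbart-type POVM and $Y=\tfrac{\alpha}{2}I$), pinning the value to exactly $\cos^2(\pi/8)$, whereas the paper only solves the single-qubit SDP numerically and then invokes Slater's condition. That part is a genuine improvement.

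However, your upper bound for $\secpar>1$ has a real gap. Your collapse of the single-qubit game to a single POVM $\{P_{c,g}\}$ is valid precisely because only one branch of the second stage ($k=1$) requires a nontrivial measurement; the ``$k=0$'' branch just recycles $cer$, so the sequential-measurement/joint-POVM equivalence applies cleanly. For $\secpar>1$ this breaks: after committing to $cer$ the adversary's second measurement may depend on the \emph{entire} revealed key $k\in\{0,1\}^\secpar$, giving $2^\secpar$ possible second-stage measurements on the residual state, and nothing forces these to be the marginals of a single $k$-independent POVM with outcome $g\in\{0,1\}^\secpar$. Consequently your product SDP with coefficients $A_{cer,g}=\bigotimes_i A_{cer_i,g_i}$ models only the \emph{non-adaptive} (pre-committing) strategies, and the tensor-dual $Y^{\otimes\secpar}$ upper-bounds only that subclass --- it does not by itself rule out adaptive adversaries that exploit $k$ in the second round.

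The paper sidesteps this by formulating the single-qubit game as a genuine \emph{two-round} SDP (state variables $\rho_1,\rho_2$ with partial-trace constraints $\Tr_{\mathcal X_{m_1}}\rho_1=\Tr_{\mathcal X_{m_1}}\sigma_0$ and $\Tr_{\mathcal X_{m_2}}\rho_2=\Tr_{\mathcal X_{m_2}}\rho_1$ encoding that each party touches only its own registers in each round). That SDP correctly captures all adaptive strategies. The paper then tensorizes \emph{that} SDP and its dual, using that the objective operator is PSD so that $\Phi^*(Y)\succeq Q\succeq 0$ implies $(\Phi^*)^{\otimes\secpar}(Y^{\otimes\secpar})\succeq Q^{\otimes\secpar}$, to obtain the $\alpha^\secpar$ upper bound against arbitrary adaptive adversaries. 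Your argument would be complete if you either carried out this two-round formulation, or supplied a direct proof that adaptive second-stage measurements confer no advantage in this particular game --- but the sequential/joint-POVM equivalence you invoke does not do that job once $\secpar>1$.
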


\nomenclature[vr]{$\VR$}{$\VR^0$}
Next, we exhibit a reduction from the game $\UnfExp^{\OneToken,{\VR^0}^*}_{\adv,\CTMAC}(\secpar)$ to the weak certified deletion game (Game~\ref{exp: weak certified deletion}) against the schemes $E_0$, $E_1$ (\cref{alg: E_m}), to conclude the bound on 
the winning probability in $\UnfExp^{\OneToken,{\VR^0}^*}_{\adv,\CTMAC}(\secpar)$.

Note that for $\eta=0$, $\Miss_{m,\sigma}$ is always the empty set and can be disregarded.
For the sake of succinctness, we will omit $\eta$ from the superscript $\NVR$ and use $\VR$ to denote  $\VR^0$ from now onward.  

\begin{proposition}
\label{prp:weak certififed deletion security implies weak basis supply security}
For every infinite set of integers $S$ and a  computationally unbounded $\adv$ making $q(\secpar)$  queries to $\VR_{k}$, such that for all $\secpar\in S$
\[\Pr[\UnfExp^{\OneToken,{\VR}^*}_{\adv,\CTMAC}(\secpar) = 1]= \epsilon(\secpar),\]
where 
$\UnfExp^{\OneToken,{\NVR}^*}_{\Badv,\CTMAC^\eta}(\secpar)$ is the game defined in
 \cref{dfn: weak basis break}, either there is a computationally unbounded adversary $\Badv_0$ such that for infinitely many $\secpar\in S$,
 \[\Pr[\exWCerDel_{\Badv_0,E_0}(\secpar) = 1] \geq \frac{1}{2\binom{q(\secpar)+2}{2}}\epsilon(\secpar),\]
or there is a computationally unbounded adversary $\Badv_1$ such that for infinitely many $\secpar\in S$,
 \[\Pr[\exWCerDel_{\Badv_1,E_1}(\secpar) = 1] \geq \frac{1}{2\binom{q(\secpar)+2}{2}}\epsilon(\secpar).\]
\end{proposition}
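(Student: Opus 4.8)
The plan is to exploit a tight correspondence between signing in $\CTMAC$ and the two deletion schemes $E_0,E_1$ of \cref{alg: E_m}. Observe that the token $H^b\ket{a}$ is exactly an $E_m$ ciphertext $\Enc_k(a)$ under the identification $k\leftrightarrow b$, and that a string $\sigma$ passing $\CTMAC.\Vrfy_k(m,\cdot)$ — i.e.\ agreeing with $a$ on $\Cons_m=\{i: b_i=m\}$ — is precisely a string passing $E_m.\Vrfy_k$, that is, a valid deletion certificate for $E_m$. Hence a valid signature for the bit $0$ is a valid certificate for $E_0$, and a valid signature for $1$ is a valid certificate for $E_1$. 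The second key observation is that (for $\eta=0$, where the set $\Miss_{m,\sigma}$ is forced to be empty) a valid signature for $0$ reveals $a|_{\Cons_0}$ exactly and a valid signature for $1$ reveals $a|_{\Cons_1}$ exactly; since $\Cons_0\cup\Cons_1=[\secpar]$, once the key $b$ is known the two signatures \emph{jointly} reconstruct all of $a$. This is exactly the combination of a "valid certificate'' and a subsequent "recovery of $a$'' that is needed to win the weak deletion game of Game~\ref{exp: weak certified deletion}.

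First I would build, from a forger $\adv$ for $\UnfExp^{\OneToken,{\VR}^*}_{\adv,\CTMAC}$, a reduction playing the deletion game against $E_0$ (and, symmetrically, against $E_1$). The reduction receives the challenge ciphertext $\ket{c}=H^k\ket{a}$ and hands it to $\adv$ as its single token, so that $b=k$. It runs $\adv$ and locates, among the signatures appearing during the run, one valid signature $\sigma_0$ for $0$ and one valid signature $\sigma_1$ for $1$; it submits $\sigma_0$ as the deletion certificate (valid for $E_0$), and after receiving $k=b$ it outputs the reconstruction $a'$ defined by $a'|_{\Cons_0}=\sigma_0|_{\Cons_0}$ and $a'|_{\Cons_1}=\sigma_1|_{\Cons_1}$. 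Whenever $\adv$ wins (both final signatures valid) and the two signatures have been correctly located, the certificate verifies and $a'=a$, so the reduction wins the deletion game.

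The main obstacle is the faithful simulation of the $\VR_k$ oracle: the reduction knows neither $a$ nor $b$ during the run, yet $\VR_k$ must answer verification queries and, on a success, reveal $\Cons_m$. Here I would use the ${\VR}^*$ restriction crucially: at most one successful query is permitted per document, so across the whole execution at most two verification \emph{successes} occur among the $q(\secpar)+2$ verification points (the $q(\secpar)$ oracle queries together with the two final outputs). The reduction guesses which two of these points are the successful ones — correct with probability at least $1/\binom{q(\secpar)+2}{2}$ — and answers every other query with $0$, which is faithful conditioned on the guess, while the two guessed points supply $\sigma_0$ and $\sigma_1$. The genuinely delicate case is when a located signature is produced at a successful oracle query rather than at a final output, since then $\VR_k$ must reveal $\Cons_m$, and hence all of $b$, which the reduction does not hold until the deletion game discloses $k$ only at the very end. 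This is the crux of the argument, and I expect reproducing this reveal to be the hardest part: the ${\VR}^*$ restriction (bounding the successes to two) and the two-way choice of which located signature serves as the certificate are precisely what let one route the extraction so that each located signature is committed either to the certificate (submitted after the run) or to the post-$k$ reconstruction, rather than being fed back as an incomputable reveal set inside the run.

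Finally, the factor $\tfrac12$ and the passage to an infinite subset come from the choice of which signature serves as the certificate — equivalently, whether to reduce to $E_0$ or to $E_1$. For each $\secpar\in S$ the winning event of $\adv$ (of probability $\epsilon(\secpar)$), intersected with a correct position guess, is covered by the union of the success events of the $E_0$-reduction and the $E_1$-reduction, so for each such $\secpar$ at least one of the two reductions succeeds with probability at least $\tfrac{1}{2\binom{q(\secpar)+2}{2}}\epsilon(\secpar)$. Since $S$ is infinite, a pigeonhole argument shows that one of the two reductions attains this bound for infinitely many $\secpar\in S$, which is exactly the claimed dichotomy. This reduction is what later combines with \cref{cor:weak_cert_secure} to turn the $\alpha^\secpar$ bound on the deletion games into the desired bound on $\Pr[\UnfExp^{\OneToken,{\VR}^*}_{\adv,\CTMAC}(\secpar)=1]$.
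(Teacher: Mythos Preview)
Your high-level plan matches the paper's: identify tokens with $E_m$ ciphertexts, identify valid signatures with valid certificates, guess the positions of the two successful verifications, split on the bit $\hat m$ to get the factor $\tfrac12$, and pigeonhole over $S$. The binomial loss and the reconstruction of $a$ from the two signatures are also exactly as in the paper.

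The gap is in how you handle the ``delicate case.'' You correctly flag that when a successful query occurs mid-run, the oracle must return $\Cons_m$ (and $\Cons_m$ determines all of $b$), which the reduction does not know. But your proposed resolution---routing each located signature either to the certificate ``submitted after the run'' or to the post-$k$ reconstruction, so that neither has to be ``fed back as an incomputable reveal set inside the run''---does not work. If both successful verifications arise as oracle queries (which you cannot exclude), both must be answered with their $\Cons_m$ sets during the run; you cannot dodge this by declaring one of them the certificate to be submitted later.

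The paper's fix is the opposite of postponing: the certificate is submitted \emph{mid-run}, at the moment of the \emph{first} guessed successful query. The bit $\hat m$ is chosen not as a free ``two-way choice'' but as the document of that first success (this is why the split into $p_0+p_1=\epsilon$ is over which bit is signed first). The reduction forwards that signature as the certificate right then, immediately receives $k=b$ back, and can therefore compute and return $\Cons_{\hat m}$ to $\adv$ on the spot. From that point on the reduction knows $b$, so it can also answer the second successful query with $\Cons_{1-\hat m}$ when it arrives, and only afterwards assemble $a'$ from the two signatures. In short, the key is obtained interactively just in time to simulate both reveals; your ``submitted after the run'' framing is precisely what prevents the simulation from going through.
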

\begin{proof}
Let $\adv$ be an arbitrary adversary for the $\UnfExp^{\OneToken,\VR^*}_{\adv,\CTMAC}(\secpar)$ game (see \cref{dfn: weak basis break}) who makes $q(\secpar)$ queries to $\VR$, and  wins with probability $\epsilon(\secpar)$. Since verification is deterministic, without loss of generality, it can be assumed that if $\adv$ wins $\UnfExp^{\OneToken,\VR^*}_{\adv,\CTMAC}(\secpar)$, it always makes two successful queries for different documents, perhaps at the cost of forcing $\adv$ to make at most two more queries.  
Let $A_m(\secpar)$ for $m\in \Bitspace$ be the event that the adversary sent two successful queries corresponding to two different documents and the first of the two successful queries made by $\adv$ is for the document $m$, and let $p_m(\secpar)\equiv\Pr[A_m(\secpar)]$. Since $p_0(\secpar)+p_1(\secpar)=\epsilon(\secpar)$, there exists a bit $\hat{m}$, such that $p_{\hat{m}}(\secpar)\geq \frac{1}{2}\epsilon(\secpar)$ for infinitely many $\secpar\in S$. Denote the corresponding set of infinitely many $\secpar$ as $G$.
We construct an adversary $\Badv_{\hat{m}}$ against $E_{\hat{m}}$ as follows. Given an encryption $\ket{c}=H^k\ket{a}$ for message $a\in\Bitspace^{\secpar}$ and key  $k\in\Bitspace^{\secpar}$, $\Badv_{\hat{m}}$ would pass $\ket{c}$ to $\adv$ as a token. $\Badv_{\hat{m}}$ would then simulate $\adv$, and uniformly guess the two indices of its two successful queries $i<j$. On the $i^{th}$ query, $\Badv_{\hat{m}}$ will forward the queried signature $\sigma_{\hat{m}}$ to the challenger as a certificate of deletion. Let $\Cons_{\hat{m}}\equiv\{i\in[\secpar]| k_i={\hat{m}}\}$. The challenger will check that $a|_{\Cons_{\hat{m}}}=\sigma_{\hat{m}}|_{\Cons_{\hat{m}}}$, which is exactly what is expected of a valid signature for ${\hat{m}}$. Hence, for a correct guess of the indices, the signature would be deemed valid, and $\Badv_{\hat{m}}$ would get the key $k$ from the challenger. $\Badv_{\hat{m}}$ can then use $k$ to construct $\Cons_{\hat{m}}$ by itself, and answer $\adv$'s successful oracle query. $\Badv_{\hat{m}}$ will continue to reject all other queries made by $\adv$ until the $j^{th}$ query. On the $j^{th}$ query, it will accept the sent string $\sigma_{1-\hat{m}}$. $\Badv_{\hat{m}}$ could then reassemble the string $a'$ as follows.
\[a'(i)=\begin{cases}\sigma_{\hat{m}}(i) & \mbox{if } i\in \Cons_{\hat{m}}\\
\sigma_{1-{\hat{m}}}(i) & \mbox{otherwise, }
\end {cases}\]
and send $a'$ to the challenger.

$\Badv_{\hat{m}}$ guesses the indices correctly with a probability greater than $\frac{1}{\binom {q(\secpar)+2} {2}}$  (since up to two oracle queries may have been added). In such an event, the view of $\adv$ is the same as in $\UnfExp^{\OneToken,\VR^*}_{\adv,\CTMAC}(\secpar)$, meaning  for all $\secpar\in G$,  with a probability greater than $\frac{1}{2}\epsilon(\secpar)$, $\adv$ outputs two successfully verified signatures, and the first successful query is for the document $\hat{m}$. 

In that case, $a|_{\Cons_0}=\sigma_0|_{\Cons_0}$ and $a|_{\Cons_1}=\sigma_1|_{\Cons_1}$ and, hence, $a'=a$ (that is, $\Badv$ wins). Therefore, $\Badv_{\hat{m}}$ could win the $\exWCerDel_{\Badv_{\hat{m}},E_{\hat{m}}}(\secpar)$ with a probability of  at least $\frac{1}{2\binom {q(\secpar)+2} {2}}\epsilon(\secpar)$ for all $\secpar\in G$.
\end{proof}

An immediate result is the following:
\begin{restatable}{proposition}{weaksb}

\label{prp: weak supplied basis security}
For every computationally unbounded $\adv$ making $q(\secpar)$ queries to $\VR_{k}$, there exists $\secpar_0$ such that for all $\secpar>\secpar_0$:
\[\Pr[\UnfExp^{\OneToken,\VR^*}_{\adv,\CTMAC}(\secpar) = 1]\leq 2\alpha^\secpar\binom{q(\secpar)+2}{2},\] 
where $\alpha= \cos^2(\frac{\pi}{8})$ and $\UnfExp^{\OneToken,{\NVR}^*}_{\Badv,\CTMAC^\eta}(\secpar)$ is the game defined in
 \cref{dfn: weak basis break}.
\end{restatable}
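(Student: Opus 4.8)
The plan is to obtain this as a direct consequence of the reduction in \cref{prp:weak certififed deletion security implies weak basis supply security} combined with the optimal success bound on the weak certified deletion game established in \cref{cor:weak_cert_secure}. The cleanest way to chain these together is through the contrapositive: I would assume that the claimed inequality fails for infinitely many values of $\secpar$, and then show that feeding this infinite set into \cref{prp:weak certififed deletion security implies weak basis supply security} manufactures an adversary against $E_0$ or $E_1$ whose success probability strictly exceeds $\alpha^\secpar$ infinitely often, contradicting the optimality asserted by \cref{cor:weak_cert_secure}.

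Concretely, fix an arbitrary computationally unbounded $\adv$ making $q(\secpar)$ queries to $\VR_k$, and suppose toward a contradiction that the set
\[S = \Bigl\{\secpar : \Pr[\UnfExp^{\OneToken,\VR^*}_{\adv,\CTMAC}(\secpar) = 1] > 2\alpha^\secpar\binom{q(\secpar)+2}{2}\Bigr\}\]
is infinite. Setting $\epsilon(\secpar) \equiv \Pr[\UnfExp^{\OneToken,\VR^*}_{\adv,\CTMAC}(\secpar) = 1]$ for $\secpar \in S$ makes $\adv$ satisfy exactly the hypothesis of \cref{prp:weak certififed deletion security implies weak basis supply security} with respect to this $S$ and this $\epsilon$. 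That proposition then supplies a bit $\hat{m}\in\Bitspace$ and a computationally unbounded adversary $\Badv_{\hat{m}}$ against the scheme $E_{\hat{m}}$ such that, for infinitely many $\secpar \in S$,
\[\Pr[\exWCerDel_{\Badv_{\hat{m}},E_{\hat{m}}}(\secpar) = 1] \geq \frac{1}{2\binom{q(\secpar)+2}{2}}\,\epsilon(\secpar).\]
Substituting the defining inequality of $S$ for $\epsilon(\secpar)$, the right-hand side is strictly larger than $\alpha^\secpar$, so $\Badv_{\hat{m}}$ beats the value $\alpha^\secpar$ for infinitely many $\secpar$.

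This contradicts \cref{cor:weak_cert_secure}, which states that the \emph{optimal} success probability in both $\exWCerDel_{\adv,E_0}(\secpar)$ and $\exWCerDel_{\adv,E_1}(\secpar)$ equals $\alpha^\secpar$ — so no adversary can exceed it at even a single $\secpar$, let alone infinitely many. Hence $S$ is finite, and taking $\secpar_0 = \max S$ (or $\secpar_0 = 0$ when $S=\emptyset$) gives the claimed bound for all $\secpar > \secpar_0$. I do not anticipate any genuine obstacle here, since the two substantive ingredients — the index-guessing reduction of \cref{prp:weak certififed deletion security implies weak basis supply security} and the semidefinite-programming evaluation of the certified-deletion optimum in \cref{cor:weak_cert_secure} — are already in hand. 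The only point demanding care is the bookkeeping of the ``infinitely often'' quantifiers: I must ensure the infinite set $S$ arising from the contradiction hypothesis is precisely the one passed to \cref{prp:weak certififed deletion security implies weak basis supply security}, and that the strict inequality defining $S$ survives intact through the division by $2\binom{q(\secpar)+2}{2}$.
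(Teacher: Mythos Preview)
Your proposal is correct and follows essentially the same argument as the paper: assume the bound fails infinitely often, feed that infinite set into \cref{prp:weak certififed deletion security implies weak basis supply security} to obtain an adversary against $E_0$ or $E_1$ exceeding $\alpha^\secpar$ infinitely often, and contradict \cref{cor:weak_cert_secure}. Your write-up is simply more explicit about the bookkeeping than the paper's two-line version.
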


\begin{proof}
Assume towards contradiction that this is not so, then by \cref{prp:weak certififed deletion security implies weak basis supply security}, there exists  $\Badv_0$ winning $\exWCerDel_{\Badv_0,E_0}$ with a probability greater than $\alpha^\secpar$ for infinitely many $\secpar$, or there exists  $\Badv_1$ winning $\exWCerDel_{\Badv_1,E_1}$ with a probability greater than $\alpha^\secpar$ for infinitely many $\secpar$. In either case, this contradicts \cref{cor:weak_cert_secure}.
\end{proof}
The bound above also holds against any adversary $\adv$ that makes \emph{at most} $q(\secpar)$ queries to $\VR_k$
because $\adv$ can always be assumed to make exactly $q(\secpar)$ queries by forcing $\adv$ to submit extra failing queries. If $q(\secpar)$ is polynomial, this bound is negligible in $\secpar$. Combined with \cref{cor: noise-tolerant weak  basis supply security implies noise-tolerant security}, this bound suffices to prove that $\CTMAC$ is $\UnfDef^{\OneToken,\Vrfy}$ (\cref{thm:Pi_W is secure}).
\subsection{Noisy to Noiseless Reduction in a Revealed Basis Setting}\ifnum\llncs=0\label{sec: Noisy to Noiseless Reduction}\fi
\ifnum\llncs=1\label{sec:noise-tolerant unforgeability}\fi


We extend the results to the noise-tolerant setting by giving a reduction from the noise-tolerant revealed basis setting to the noise-sensitive revealed basis setting. 

\begin{restatable}{lemma}{NoiseReductionLemma}
\label{lem:noise reduction lemma}
Let $f:\mathbb{N}\rightarrow \mathbb{N}$ be a function such that $f(\secpar)< \secpar$, and let  $\adv$ be an adversary making $q(\secpar)$ queries to $\NVRk$ and denote
\[W_{\eta}(\secpar)\equiv\Pr[\UnfExp^{\OneToken,\NVR^*}_{\adv,\CTMAC^\eta}(\secpar) = 1], \]
where 
$\UnfExp^{\OneToken,{\NVR}^*}_{\Badv,\CTMAC^\eta}(\secpar)$ is the game defined in
 \cref{dfn: weak basis break}.
There exists an adversary $\Badv$ making at most $q(\secpar)+2$ queries to $\VR_k$ such that for every $\secpar$ and any $0\leq\eta\leq1$,
\[\Pr[\UnfExp^{\OneToken,\VR^*}_{\Badv,\CTMAC^\eta}(f(\secpar))= 1] \geq \left(1-\frac{2\eta \secpar}{\secpar-f(\secpar)}\right)^{f(\secpar)}W_{\eta}(\secpar).\]
\end{restatable}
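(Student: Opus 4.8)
The plan is to reduce the noise-tolerant revealed-basis game at security parameter $\secpar$ to the noise-sensitive one at the smaller parameter $f(\secpar)$ by \emph{embedding} an unknown $f(\secpar)$-qubit token inside a freshly padded $\secpar$-qubit token. Concretely, $\Badv$, playing the noise-sensitive game $\UnfExp^{\OneToken,\VR^*}_{\Badv,\CTMAC^\eta}(f(\secpar))$ (whose oracle $\VR=\VR^0$ verifies exactly) and holding an unknown token $H^{b'}\ket{a'}$ on $f(\secpar)$ qubits, first samples a uniformly random subset $T\subseteq[\secpar]$ with $\abs{T}=f(\secpar)$ together with a uniformly random padding key $(\hat a,\hat b)$ on the remaining $\secpar-f(\secpar)$ coordinates. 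It forms the $\secpar$-qubit state carrying its own token on the coordinates $T$ and the self-generated states $H^{\hat b}\ket{\hat a}$ on $[\secpar]\setminus T$, and hands this to $\adv$ as a token of $\CTMAC^\eta$ at parameter $\secpar$. Since $(a',b')$ and $(\hat a,\hat b)$ are independent and uniform, the induced full key $(a,b)$ with $a|_T=a'$, $b|_T=b'$ is uniform on $\secpar$ coordinates and independent of $T$, so $\adv$'s token has exactly the right distribution and $T$ stays hidden from $\adv$. At the end, $\Badv$ outputs the restrictions $\sigma_0|_T,\sigma_1|_T$ of $\adv$'s two forgeries.

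The first simplification I would exploit is that in the $\NVR^*$ game the documents are single bits and two successful queries for the same document are forbidden, so $\adv$ makes at most two successful queries in total. Using the already-employed reduction that forces $\adv$ to resubmit its final pair as oracle queries at the cost of at most two extra queries, I may assume that whenever $\adv$ wins, these two successful queries \emph{are} the forgeries, with miss sets $M_0\subseteq\Cons_0$ and $M_1\subseteq\Cons_1$ satisfying $\abs{M_0},\abs{M_1}\le\eta\secpar$. $\Badv$ simulates $\NVRk$ as follows: on a query $(m,\sigma)$ for which it has not yet had a successful own-query on $m$, it forwards $(m,\sigma|_T)$ to $\VR_k$; if $\VR_k$ accepts, the zero-miss reveal of $\Cons_m\cap T$ together with the padding it knows lets $\Badv$ reconstruct the full $\Cons_m$ and $\Miss_{m,\sigma}$ and answer exactly as $\NVRk$ would — accepting and revealing iff the total miss count is at most $\eta\secpar$ — while if $\VR_k$ rejects, $\Badv$ answers ``reject''. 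After any successful own-query for $m$, $\Badv$ answers all further $m$-queries itself from the recovered description, so it never makes two successful own-queries for one document and stays within $q(\secpar)+2$ queries.

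The heart of the argument is the event $\mathsf{Good}$ that $T$ avoids every miss of $\adv$'s (at most two) successful queries, i.e.\ $T\cap(M_0\cup M_1)=\emptyset$. On $\mathsf{Good}$ every successful query has no miss inside $T$, so $\VR_k$ accepts precisely the queries $\NVRk$ deems successful and the simulation is perfect; coupling the two experiments through the shared $\adv$-randomness, full key $(a,b)$, and subset $T$, the transcripts coincide, $\adv$ wins in the simulation exactly when it wins the real game, and the restricted forgeries pass $\CTMAC.\Vrfy$ at parameter $f(\secpar)$ because $M_m\cap T=\emptyset$. Since $T$ is a uniform $f(\secpar)$-subset independent of the real-game transcript and $\abs{M_0\cup M_1}\le 2\eta\secpar$ (the supports $\Cons_0,\Cons_1$ are disjoint), the hypergeometric estimate
\[\Pr[\,T\cap(M_0\cup M_1)=\emptyset\,]=\frac{\binom{\secpar-\abs{M_0\cup M_1}}{f(\secpar)}}{\binom{\secpar}{f(\secpar)}}=\prod_{j=0}^{f(\secpar)-1}\frac{\secpar-\abs{M_0\cup M_1}-j}{\secpar-j}\ge\left(1-\frac{2\eta\secpar}{\secpar-f(\secpar)}\right)^{f(\secpar)}\]
holds on the winning event, using $\secpar-j>\secpar-f(\secpar)$ for every $j<f(\secpar)$. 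Multiplying by $\Pr[\adv\text{ wins}]=W_{\eta}(\secpar)$ and invoking the coupling yields the claimed lower bound on $\Badv$'s success probability.

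The step I expect to be the main obstacle is verifying that the simulation is \emph{faithful} on $\mathsf{Good}$ in every case — in particular that $\Badv$ answers correctly on true-rejecting queries whose restriction to $T$ happens to be accepted by $\VR_k$ (where it must deduce rejection from the padding miss count exceeding $\eta\secpar$), and that the bookkeeping preventing a second successful own-query per document never forces a deviation from $\NVRk$. Care is also needed to keep $T$ genuinely independent of the transcript, so that the hypergeometric bound is applied \emph{after} conditioning on $\adv$'s forgeries rather than before; the step that rewrites the final output as (at most two) oracle queries is what makes $M_0,M_1$ the relevant miss sets and is therefore not merely cosmetic.
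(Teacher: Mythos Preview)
Your proposal is correct and follows essentially the same embedding-and-averaging approach as the paper: place the unknown $f(\secpar)$-qubit token at a uniformly random set of coordinates $T$, pad with self-generated qubits, and argue that when the two forgeries' miss sets avoid $T$ the simulation is faithful, yielding the same hypergeometric factor $\binom{\secpar-2\eta\secpar}{f(\secpar)}/\binom{\secpar}{f(\secpar)}\ge(1-\tfrac{2\eta\secpar}{\secpar-f(\secpar)})^{f(\secpar)}$. The only cosmetic difference is the order in which $\Badv$ handles a query---you forward $(m,\sigma|_T)$ to $\VR_k$ first and then use the revealed information together with the padding to decide, whereas the paper first checks the padding-miss count and only forwards if it is at most $\eta\secpar$; both variants are faithful on the event $T\cap(M_0\cup M_1)=\emptyset$ and respect the $\VR^*$ one-success-per-document constraint, so the bound is identical.
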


\ifnum\llncs=0
\begin{proof}
Let $\adv$ be an adversary for $\UnfExp^{\OneToken,\NVR^*}_{\adv,\CTMAC^\eta}(\secpar)$ (\cref{dfn: weak basis break}), as described in \cref{lem:noise reduction lemma}. For simplicity of description, we assume that that $\adv$ always makes at least two queries to $\NVR$, and that $\adv$ always submits the first two successful queries it makes to $\NVR$. We also assume that if $\adv$ wins, it always makes two successful queries for different documents, perhaps at the cost of forcing $\adv$ to make at most two more queries. We construct an adversary $\Badv$ for the corresponding noise-sensitive security game $\UnfExp^{\OneToken,\VR^*}_{\Badv,\CTMAC}(f(\secpar))$ (\cref{dfn: weak basis break}) as follows.

\nomenclature[Iset]{$\Iset$}{A set of coordinates chosen by the adversary in the proof of \cref{lem:noise reduction lemma}}
Let $\Iset\in \binom{[\secpar]}{f(\secpar)}$ be a set of indices, and also denote $\overline{\Iset}\equiv[\secpar]\backslash \Iset$. For every $\Iset\in \binom{[\secpar]}{f(\secpar)}$, we define $\Badv^\Iset$, an adversary for the noise-sensitive scheme $\CTMAC^\eta$ as follows. Upon receiving a $f(\secpar)$-qubit token $\ket{\Stamp}$ from $\Cadv$, $\Badv^\Iset$ generates another token $\ket{\widetilde{\Stamp}}$ consisting of $\secpar-f(\secpar)$ qubits. This is done by running $\CTMAC^\eta.\keygen(1^{\secpar-f(\secpar)})$ to obtain a key $\widetilde{k}=(\widetilde{a},\widetilde{b})$ followed by running $\CTMAC^\eta.\tokengen_{\widetilde{k}}$ to get the token. $\Badv^\Iset$ then assembles a $\secpar$-length token $\ket{\Stamp'}$ by placing the qubits of  $\ket{\Stamp}$ at coordinates $\Iset$, and the qubits of $\ket{\widetilde{\Stamp}}$ at coordinates $\overline{\Iset}$.  This also defines a fictional key $k'\equiv(a',b')$, composed of the strings $a', b'$, which are derived by placing $\widetilde{a}$,$\widetilde{b}$ at coordinates $\overline{\Iset}$ and $a$,$b$ at coordinates $\Iset$. $\Badv^\Iset$  simulates $\adv$ on the token $\ket{\Stamp'}$.

 For every alleged query $(m,\sigma)$ made to $\NVR$ by $\adv$ in the simulated game of $\Badv^\Iset$, let \[{\Cons'}^{\Iset}_m\equiv\{i\in [\secpar] \mid b'_i=m\},\] \[{\Miss'}^{\Iset}_{m,\sigma}\equiv\{i\in {\Cons'}^{\Iset}_m\mid a'_i\neq\sigma_i\},\] which are defined with respect to the fictional key $k'$. Additionally, let
 \[\widetilde{\Cons}^{\overline{\Iset}}_{m}\equiv\{i\in \overline{\Iset}\mid {b'_i}=m\},\]
 \[\widetilde{\Miss}^{\overline{\Iset}}_{m,\sigma}\equiv\{i\in \widetilde{\Cons}^{\overline{\Iset}}_{m}\mid {a'_i}\neq \sigma_i\}.\]
 We say a query $(m,\sigma)$ to the oracle $\NVR$ in the true game $\UnfExp^{\OneToken,\NVR^*}_{\adv,\CTMAC^\eta}(\secpar)$ is "good" if $\abs{\Miss_{m,\sigma}}\leq\eta\secpar$. Similarly, we say such a query in the simulation of $\adv$ within $\Badv^\Iset$ is "good" if $\abs{{\Miss'}^{\Iset}_{m,\sigma}}\leq\eta\secpar$.
 
We further describe the behavior of $\Badv^\Iset$: whenever $\adv$ queries $\NVR$, with a query $(m,\sigma)$, $\Badv^\Iset$ runs $\CTMAC.\Vrfy_{\widetilde{k}}(m|_{\overline{\Iset}},\sigma|_{\overline{\Iset}})$ and constructs the corresponding sets $\widetilde{\Cons}^{\overline{\Iset}}_{m}$,  $\widetilde{\Miss}^{\overline{\Iset}}_{m,\sigma}$.
If $\abs{\widetilde{\Miss}^{\overline{\Iset}}_{m,\sigma}}>\eta \secpar$, meaning there are more than $\eta\secpar$ errors at coordinates in $\overline{\Iset}$ that $\Badv^\Iset$ rejects. Otherwise, it queries $\VR$ with $(m|_{\Iset},\sigma|_{\Iset})$, and accepts if and only if $\VR$ accepts. 
Note that if $\VR$ accepts, $\Badv^\Iset$ would receive the set $\Cons^{\Iset}_{m}\equiv\{i\in \Iset | b'_i=m\}$, and since the challenger for $\Badv^\Iset$ is with respect to the noise-sensitive scheme, $\sigma|_{\Iset}$ must be consistent on $\Cons^{\Iset}_{m}$. $\Badv^{\Iset}$ could thus reliably accept the query and forward to $\adv$ the sets ${\Cons'}^{\Iset}_{m}=\widetilde{\Cons}^{\overline{\Iset}}_{m}\cup \Cons^{\Iset}_{m}$ and ${\Miss'}^{\Iset}_{m,\sigma}=\widetilde{\Miss}^{\overline{\Iset}}_{m,\sigma}$.
Finally, when $\adv$ submits two signed documents $(\sigma_0,\sigma_1)$, $\Badv^{\Iset}$ would submit to its challenger $({\sigma_0}|_{\Iset},{\sigma_1}|_{\Iset})$. The important observation is that $\Badv^\Iset$ would never accept a query $(m,\sigma)$, unless it is "good", even though it might reject queries that are "good".
The adversary $\Badv$, is defined as the adversary which uniformly selects $\Iset\in \binom{[\secpar]}{f(\secpar)}$, and runs $\Badv^\Iset$. Informally, we claim that in events where $\adv$ wins $\UnfExp_{\adv,\CTMAC^\eta}^{\OneToken,\NVR}(\secpar)$, the view of $\adv$ is identical to the view when run within the simulation of $\Badv^\Iset$, for sufficiently many choices of $\Iset$.
  
 \begin{definition}
The random variables $M_1$ and $S_1$ denote the document and signature, respectively, in the first "good" query, in the true game $\UnfExp^{\OneToken,\NVR^*}_{\adv,\CTMAC^\eta}$. If such a query does not exist, $M_1,S_1$ are the document and signature, respectively, in the next-to-last query made by $\adv$ . Similarly, $M_2, S_2$ denote the document and signature, respectively, in the second "good" query. If such a query does not exist, $M_2,S_2$ are the document and signature, respectively, in the last query made by $\adv$. The random variable $J_1$ represents the indices in which $S_1$ is inconsistent with the secret key, that is, the set $\Miss_{M_1,S_1}$, and likewise, $J_2$ is the set $\Miss_{M_2,S_2}$. The random variable $T$ represents the token $\ket{\Stamp'}$ that $\adv$ receives from the challenger.
 In addition, for any $\Iset\in \binom{[\secpar]}{f(\secpar)}$, we define corresponding random variables $M^\Iset_1,S^\Iset_1,M^I_2,S^\Iset_2,J^\Iset_1,J^\Iset_2,T^\Iset$, denoting the same, but when $\adv$ is simulated within $\Badv^\Iset$, with $J^\Iset_i$  defined with regard to ${\Miss'}^{\Iset}_{M^\Iset_i,S^\Iset_i}$.  
  \end{definition}
  
 We emphasize, that the value of $M^\Iset_1,S^\Iset_1,M^\Iset_2,S^\Iset_2,J^\Iset_1,J^\Iset_2$ is defined with regard to "good" queries, and regardless of the response of $\Badv^\Iset$ to the query, which could potentially reject a "good" query.
  
 Denote:
 \[\mathcal{E}_{good}=\{\adv \text{ makes two "good" queries in } \UnfExp_{\adv,\CTMAC^\eta}^{\OneToken,\NVR^*}\},\]
   \[\mathcal{E}^{j_1,j_2,b,\ket{\Stamp}}_{view}=\{J_1=j_1, J_2=j_2, M_1=b, M_2=1-b,T=\ket{\Stamp} \}, \]
   \[\mathcal{\overset{\Iset}{F}}_{good}=\{\adv \text{ makes two "good" queries in the simulation within $\Badv^I$}\}\]
   \[\mathcal{\overset{\Iset}{F}}^{j_1,j_2,b,\ket{\Stamp}}_{view}=\{J^\Iset_1=j_1, J^\Iset_2=j_2, M^\Iset_1=b, M^\Iset_2=1-b,T^\Iset=\ket{\Stamp} \}. \]
  \begin{lemma}\label{lem:view lemma}
  For all $\Iset\in \binom{[\secpar]}{f(\secpar)}$, $j_1,j_2\subseteq \mathcal{\overline{\Iset}}$, $b\in \Bitspace$ and $\ket{\Stamp}\in \QBitspace^\secpar$ such that $\abs{j_1}\leq\eta\secpar,\abs{j_2}\leq\eta\secpar$:
  \begin{align}\Pr[\mathcal{E}_{good}\wedge \mathcal{E}^{j_1,j_2,b,\ket{\Stamp}}]
  =\Pr[\mathcal{\overset{\Iset}{F}}_{good}\wedge \mathcal{\overset{\Iset}{F}}^{j_1,j_2,b,\ket{\Stamp}}_{view}].\end{align}
  \end{lemma}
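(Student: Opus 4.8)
The plan is to prove the equality by exhibiting a measure-preserving coupling of the two experiments and showing that, on the events in question, $\adv$ sees an identical view in both. First I would couple the randomness so that the token is shared: in the true game the key $(a,b)$ is uniform, while in the simulation $\Badv^\Iset$ receives a uniform key on the coordinates $\Iset$ (from its own challenger) and samples a uniform $\widetilde{k}$ on $\overline{\Iset}$, so the fictional key $k'=(a',b')$ is again uniform on $\Bitspace^\secpar\times\Bitspace^\secpar$. Hence $T$ and $T^\Iset$ are identically distributed over $\QBitspace^\secpar$, and for a fixed product state $\ket{\Stamp}$ both experiments work with the same key $(a,b)=(a',b')$. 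After also fixing $\adv$'s internal randomness, the oracle $\NVR$ and the oracle simulated by $\Badv^\Iset$ become deterministic functions of the query, and the claim reduces to showing these two oracle functions induce the same transcript on the relevant events.

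The core step is a case analysis of a single query $(m,\sigma)$, using that being ``good'' (i.e.\ $\abs{\Miss_{m,\sigma}}\le\eta\secpar$, computed against the shared key) is the same notion in both games. Splitting $\Miss_{m,\sigma}$ into its part on $\Iset$ and its part on $\overline{\Iset}$, I would record: (i) if the query is not good, both oracles reject --- the true one by definition, and the simulated one because its acceptance would require both zero errors on $\Iset$ and at most $\eta\secpar$ errors on $\overline{\Iset}$, forcing the total miss below $\eta\secpar$, a contradiction; (ii) if the query is good and $\Miss_{m,\sigma}\subseteq\overline{\Iset}$ (no error on $\Iset$), then $\VR$ accepts $(m|_{\Iset},\sigma|_{\Iset})$ and $\Badv^\Iset$ forwards ${\Cons'}^{\Iset}_{m}=\widetilde{\Cons}^{\overline{\Iset}}_{m}\cup\Cons^{\Iset}_{m}=\Cons_m$ and ${\Miss'}^{\Iset}_{m,\sigma}=\widetilde{\Miss}^{\overline{\Iset}}_{m,\sigma}=\Miss_{m,\sigma}$, which is exactly the true oracle's answer; (iii) the only possible divergence is a good query with a miss coordinate in $\Iset$, where the true oracle accepts but $\Badv^\Iset$ rejects.

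With this in hand I would conclude as follows. The hypothesis $j_1,j_2\subseteq\overline{\Iset}$ forces the first two good queries into case (ii), so no case-(iii) query occurs before or at the second good query, and every query up to that point is answered identically by the two oracles. Since $\adv$'s behaviour --- its queries and measurement outcomes --- depends only on the oracle answers it has received, the two coupled runs produce identical transcripts through the second good query whenever the shared token equals $\ket{\Stamp}$; in particular the first two good queries, their documents, and their miss sets coincide. Thus $\mathcal{E}_{good}\wedge \mathcal{E}^{j_1,j_2,b,\ket{\Stamp}}$ and $\mathcal{\overset{\Iset}{F}}_{good}\wedge \mathcal{\overset{\Iset}{F}}^{j_1,j_2,b,\ket{\Stamp}}_{view}$ are the same event on the coupled space, and taking marginals (which agree with each game, since the coupled token and randomness are identically distributed) yields the claimed equality of probabilities.

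The main obstacle I anticipate is the rigorous treatment of case (iii) together with the role of the hypothesis $j_1,j_2\subseteq\overline{\Iset}$: one must argue carefully that a good query whose miss set meets $\Iset$ --- the one input on which the two oracles disagree --- cannot appear among the first two good queries under this hypothesis, as otherwise the transcripts would diverge and the two events would fail to correspond. A secondary point needing care is stating the ``identical answers $\Rightarrow$ identical behaviour'' step for a quantum $\adv$ interacting with a classical oracle; this is standard but should be phrased at the level of the classical transcript of oracle answers and measurement outcomes rather than the internal quantum state.
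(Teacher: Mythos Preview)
Your proposal is correct and follows essentially the same argument as the paper: both rest on the observation that, once the token (equivalently the key) is coupled, the true oracle $\NVR$ and $\Badv^\Iset$'s simulated oracle agree on every non-good query and on every good query whose miss set lies in $\overline{\Iset}$, so under the hypothesis $j_1,j_2\subseteq\overline{\Iset}$ no divergence can occur up to and including the second good query. The only presentational difference is that the paper packages this via a step-by-step chain-rule decomposition of conditional probabilities (conditioning successively on $T$, then $(J_1,M_1)$, then $(J_2,M_2)$, then $\mathcal{E}_{good}$), whereas you phrase it as a single coupling that identifies the two events on a common sample space; the underlying case analysis and the role of $j_1,j_2\subseteq\overline{\Iset}$ are identical.
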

\begin{proof} 
Fix $\Iset\in \binom{[\secpar]}{f(\secpar)}$ arbitrarily and $j_1,j_2$, as described in the lemma.
 Conditioned on the events where $T=\ket{\Stamp}$ and $T^\Iset=\ket{\Stamp}$ for some  fixed value $\ket{\stamp}$, in the true game $\UnfExp^{\OneToken,\NVR^*}_{\adv,\CTMAC^\eta}(\secpar)$ and in the simulation within $\Badv^\Iset$, respectively, it is easy to argue by induction  that for every "step" until a  "good" query occurs, the view remains the same in both the true game and the simulation. We can thus derive that for any $j_1,b$: \begin{align}&\Pr[J_1^\Iset=j_1,M_1^\Iset=b \mid T^\Iset=\ket{\Stamp}]=\Pr[J_1=j_1,M_1=b \mid T=\ket{\Stamp}].\end{align}

Continuing with similar reasoning, conditioned on $J_1=j_1,M_1=b, T=\ket{\Stamp}$ in the true game and $J_1^\Iset=j_1,M_1^\Iset=b,T^\Iset=\ket{\Stamp}$ in the simulation, in addition to  conditioning on a fixed index of the first "good" query (if such a query exists) in both, we can argue that:

\begin{align}\Pr&[J_2^\Iset=j_2, M_2^\Iset=1-b\mid J_1^\Iset=j_1,M_1^\Iset=b, T^\Iset=\ket{\Stamp}]\\=&\Pr[J_2=j_2, M_2=1-b\mid J_1=j_1,M_1=b, T=\ket{\Stamp}].\end{align}
This is because the conditioning on $j_1$ ensures that the first "good" query (if such exists) will be answered positively even in the simulation: $j_1\subseteq \overline{\Iset}$, meaning the signature corresponding to the query is consistent with $b'$ on $\Iset$, and contains less than $\eta\secpar$ errors on $\overline{\Iset}$ ; hence, the same inductive reasoning  on the view applies.

Similarly 
\begin{align}\Pr&[\mathcal{\overset{\Iset}{F}}_{good}\mid  J_2^\Iset=j_2, M_2^\Iset=1-b, J_1^\Iset=j_1,M_1^\Iset=b, T^\Iset=\ket{\Stamp}]\\=&\Pr[\mathcal{E}_{good}\mid  J_2=j_2, M_2=1-b, J_1=j_1,M_1=b, T=\ket{\Stamp}].\end{align}

Finally, observe that the token $\ket{\Stamp}$ that $\adv$ receives is distributed uniformly in both the true game and in the simulation. Using standard conditional probability, we break down the probability of the event $\mathcal{E}_{good}\wedge \mathcal{E}^{j_1,j_2,b,\ket{\Stamp}}_{view}$ into simpler terms, and then reassemble them.
\begin{align}\label{eq: view conditioning}
\Pr&[\mathcal{E}_{good}\wedge \mathcal{E}^{j_1,j_2,b,\ket{\Stamp}}_{view}]\\
%
%
  &=\Pr[\mathcal{E}_{good}\mid  J_2=j_2, M_2=1-b, J_1=j_1,M_1=b, T=\ket{\Stamp}]\\&\cdot\Pr[J_2=j_2, M_2=1-b\mid J_1=j_1,M_1=b, T=\ket{\Stamp}]\cdot\Pr[J_1=j_1,M_1=b \mid T=\ket{\Stamp}]\cdot\Pr[T=\ket{\Stamp}]\\
&=\Pr[\mathcal{\overset{\Iset}{F}}_{good}\mid  J_2^\Iset=j_2, M_2^\Iset=1-b, J_1^\Iset=j_1,M_1^\Iset=b, T^\Iset=\ket{\Stamp}]\\&\cdot\Pr[J_2^\Iset=j_2, M_2^\Iset=1-b\mid J_1^\Iset=j_1,M_1^\Iset=b, T^\Iset=\ket{\Stamp}]\cdot\Pr[J_1^\Iset=j_1,M_1^\Iset=b \mid T^\Iset=\ket{\Stamp}]\cdot\Pr[T^\Iset=\ket{\Stamp}]\\
%
  %
  &=\Pr[\mathcal{\overset{\Iset}{F}}_{good}\wedge \mathcal{\overset{\Iset}{F}}^{j_1,j_2,b,\ket{\Stamp}}_{view}],
\end{align}
which proves \cref{lem:view lemma}.
\end{proof}
Notice that for any $\Iset,j_1,j_2,b,\ket{\Stamp}$, as described in \cref{lem:view lemma},
\[\mathcal{\overset{I}{F}}_{good}\wedge \mathcal{\overset{I}{F}}^{j_1,j_2,b,\ket{\Stamp}}_{view}\subseteq \{\UnfExp^{\OneToken,\Vrfy}_{\Badv^\Iset,\CTMAC}(\secpar)=1\},\]
which implies, 
\begin{equation}\label{eq:union of goods contained in simulation win}
\bigcupdot_{\mathclap{\substack{j_1,j_2\subseteq \overline{\Iset}\\ \abs{j_1},\abs{j_2}\leq \eta\secpar\\ b\in\Bitspace\\\ket{\Stamp}\in \QBitspace^\secpar}}}\quad\left(\mathcal{\overset{I}{F}}_{good}\wedge \mathcal{\overset{I}{F}}^{j_1,j_2,b,\ket{\Stamp}}_{view}\right)\subseteq\{ \UnfExp^{\OneToken,\Vrfy}_{\Badv^\Iset,\CTMAC}(\secpar)=1\}.
\end{equation}
%
In addition, by the assumption that if $\adv$ wins, it always makes two successful queries for different documents:
\begin{equation}\label{eq:Union of good is a win}
\{\UnfExp^{\OneToken,\NVR}_{\adv,\CTMAC^\eta}(\secpar)=1\}=\bigcupdot_{\mathclap{\substack{ b\in\Bitspace \\ \ket{\Stamp}\in \QBitspace^\secpar\\\abs{j_1},\abs{j_2}\leq \eta\secpar\\ j_1,j_2\subseteq[\secpar]}}}\left( \mathcal{E}_{good}\wedge \mathcal{E}^{j_1,j_2,b,\ket{\Stamp}}\right)
\end{equation}
Observe that the success probability of $\Badv$, is the average over the winning probabilities of $\Badv^\Iset$ for all choices of $\Iset\in \binom{[\secpar]}{f(\secpar)}$. Combining this with \cref{eq:union of goods contained in simulation win}, and \cref{lem:view lemma}, we get,
\begin{align}
\Pr&[\UnfExp^{\OneToken,\VR^*}_{\Badv,\CTMAC}(\secpar)=1]\\
&=\frac{1}{\binom{\secpar}{f(\secpar)}}\sum_{\Iset\in \binom{[\secpar]}{f(\secpar)}}\Pr[\UnfExp^{\OneToken,\VR^*}_{\Badv^\Iset,\CTMAC}(\secpar)=1]\\
&\geq\frac{1}{\binom{\secpar}{f(\secpar)}}\sum_{\Iset\in \binom{[\secpar]}{f(\secpar)}}\Pr[\bigcupdot_{\mathclap{\substack{j_1,j_2\subseteq \overline{\Iset}\\ \abs{j_1},\abs{j_2}\leq \eta\secpar\\ b\in\Bitspace\\\ket{\Stamp}\in \QBitspace^\secpar}}}\mathcal{\overset{I}{F}}_{good}\wedge \mathcal{\overset{I}{F}}^{j_1,j_2,b,\ket{\Stamp}}_{view}]& \text{by \cref{eq:union of goods contained in simulation win}}\\
&=\frac{1}{\binom{\secpar}{f(\secpar)}}\sum_{\Iset\in \binom{[\secpar]}{f(\secpar)}}\quad\quad\sum_{\mathclap{\substack{j_1,j_2\subseteq \overline{\Iset}\\ \abs{j_1},\abs{j_2}\leq \eta\secpar\\ b\in\Bitspace\\\ket{\Stamp}\in \QBitspace^\secpar}}}\Pr[\mathcal{\overset{I}{F}}_{good}\wedge \mathcal{\overset{I}{F}}^{j_1,j_2,b,\ket{\Stamp}}_{view}]\\
&=\frac{1}{\binom{\secpar}{f(\secpar)}}\sum_{\Iset\in \binom{[\secpar]}{f(\secpar)}}\quad\quad\sum_{\mathclap{\substack{j_1,j_2\subseteq \overline{\Iset}\\ \abs{j_1},\abs{j_2}\leq \eta\secpar\\ b\in\Bitspace\\\ket{\Stamp}\in \QBitspace^\secpar}}}\Pr[\mathcal{E}_{good}\wedge \mathcal{E}^{j_1,j_2,b,\ket{\Stamp}}] &\text{By   \cref{lem:view lemma},}\\
&=\frac{1}{\binom{\secpar}{f(\secpar)}}\quad\sum_{\mathclap{\substack{ b\in\Bitspace \\ \ket{\Stamp}\in \QBitspace^\secpar\\\abs{j_1},\abs{j_2}\leq \eta\secpar}}}\quad\quad\quad\sum_{\mathclap{\substack{\Iset\in \binom{[\secpar]}{f(\secpar)} \\ \text{ s.t } \\\quad  \Iset \cap (j_1\cup j_2) =\emptyset }}}\Pr[\mathcal{E}_{good}\wedge \mathcal{E}^{j_1,j_2,b,\ket{\Stamp}}].
\label{eq:reduction analysis}
\end{align}
For any $j_1,j_2\subseteq[\secpar]$ such that $\abs{j_1},\abs{j_2}\leq \eta\secpar$, we denote by $C_{j_1,j_2}$, the number of subsets $ \Iset\in \binom{[\secpar]}{f(\secpar)}$ such that $\Iset \cap (j_1\cup j_2) =\emptyset$. Since $\abs{j_1\cup j_2}\leq 2\eta\secpar$, $C_{j_1,j_2}$ is at least $\binom{\secpar-2\eta\secpar}{f(\secpar)}$. Plugging this bound on $C_{j_1,j_2}$, as well as \cref{eq:Union of good is a win}, to \cref{eq:reduction analysis} achieves the required bound: 
\begin{align}
\Pr&[\UnfExp^{\OneToken,\VR^*}_{\Badv,\CTMAC}(\secpar)=1]\\
&\geq\frac{1}{\binom{\secpar}{f(\secpar)}}\quad\sum_{\mathclap{\substack{ b\in\Bitspace \\ \ket{\Stamp}\in \QBitspace^\secpar\\\abs{j_1},\abs{j_2}\leq \eta\secpar}}} C_{j_1,j_2}\Pr[\mathcal{E}_{good}\wedge \mathcal{E}^{j_1,j_2,b,\ket{\Stamp}}]\\
&\geq\frac{1}{\binom{\secpar}{f(\secpar)}}\quad\sum_{\mathclap{\substack{ b\in\Bitspace  \\ \ket{\Stamp}\in \QBitspace^\secpar\\\abs{j_1},\abs{j_2}\leq \eta\secpar}}}\quad\binom{\secpar-2\eta\secpar}{f(\secpar)}\Pr[\mathcal{E}_{good}\wedge \mathcal{E}^{j_1,j_2,b,\ket{\Stamp}}]\\
&=\frac{\binom{\secpar-2\eta\secpar}{f(\secpar)}}{\binom{\secpar}{f(\secpar)}}\quad\Pr[\bigcupdot_{\mathclap{\substack{ b\in\Bitspace \\ \ket{\Stamp}\in \QBitspace^\secpar\\\abs{j_1},\abs{j_2}\leq \eta\secpar}}} \mathcal{E}_{good}\wedge \mathcal{E}^{j_1,j_2,b,\ket{\Stamp}}]\\
&=\frac{\binom{\secpar-2\eta\secpar}{f(\secpar)}}{\binom{\secpar}{f(\secpar)}}\quad\Pr[\UnfExp^{\OneToken,\NVR}_{\adv,\CTMAC^\eta}(\secpar)=1] &\text{By \cref{eq:Union of good is a win}}\\
&=\frac{\binom{\secpar-2\eta\secpar}{f(\secpar)}}{\binom{\secpar}{f(\secpar)}}W_{\eta}(\secpar)&\text{By definition of $W_{\eta}(\secpar)$}\\
&=\frac{\frac{(\secpar-2\eta \secpar)!}{(\secpar-2\eta \secpar-f(\secpar))!}}{\frac{\secpar!}{(\secpar-f(\secpar))!}}W_{\eta}(\secpar)\\
&=\frac{\prod^{f(\secpar)}_{i=1} \left(\secpar-2\eta \secpar-f(\secpar)+i\right)}{\prod^{f(\secpar)}_{i=1} (\secpar-f(\secpar)+i)}W_{\eta}(\secpar)\\
&=\prod_{i=1}^{f(\secpar)}\left(1-\frac{2\eta \secpar}{\secpar-f(\secpar)+i}\right)W_{\eta}(\secpar)\\
&\geq \left(1-\frac{2\eta \secpar}{\secpar-f(\secpar)}\right)^{f(\secpar)}W_{\eta}(\secpar)
\end{align}
which completes the proof of \cref{lem:noise reduction lemma}.

\end{proof}
\fi
\ifnum\llncs=1
The idea of this lemma is to construct $\Badv$ that creates $\secpar-f(\secpar)$ qubits of his own to complete the  $f(\secpar)$ qubits he received from the challenger to $\secpar$ qubits. The qubits are placed in a randomly chosen set of coordinates $\Iset$, and sent to $\adv$. When receiving a verification query, $\Badv$  rejects if more than $\eta\secpar$ errors occurred in the qubits he injected. 
However, for the $f(\secpar)$ true qubits,  $\Badv$ has to rely on the verification procedure of the challenger, which only accepts if none of the errors occurred in those qubits. Moreover,  if the query was accepted, then $\Badv$ is able to derive the correct sets $\Cons_m,\Miss_{m,\sigma}$ to respond to $\adv$. 
$\adv$ wins the $\UnfExp$ game when the number of errors is at most $\eta \lambda$.   Since $\adv$ is oblivious to the location of the coordinates that were injected, there is a probability that none of $\adv$'s errors occurs in the injected coordinates, and in this case $\Badv$ also wins.  
The full proof of  \cref{lem:noise reduction lemma} is provided in \cref{sec:Noisy to Noiseless Reduction proof}.

\fi
\ifnum\llncs=0

\subsection{Choosing an Appropriate Function for the Reduction}\label{sec:noise-tolerant unforgeability}

\fi
 We achieve an exponential bound  on $\Pr[\UnfExp^{\OneToken,{\NVR}^*}_{\Badv,\CTMAC^\eta}(\secpar)]$ by instantiating \cref{lem:noise reduction lemma}  with an appropriate $f$ as follows.
\begin{proposition}
\label{prp: reduction bound}
Let $0\leq\epsilon<1$, and $\eta\equiv\epsilon\frac{1-\alpha}{2}$. For every computationally unbounded $\adv$ making $q(\secpar)$ queries to $\VR_{k}$, there exists $\secpar_0$ such that for all $\secpar>\secpar_0$
\[\Pr[\UnfExp^{\OneToken,\NVR^*}_{\adv,\CTMAC^\eta}(\secpar) = 1]\leq \left(\frac{1}{k^z}\right)^{\secpar}k\hat{q}(\secpar),\]
where $k\equiv \frac{(1+\alpha)-\epsilon(1-\alpha)}{2\alpha}$, $z\equiv \frac{1-\epsilon}{1+\epsilon}$, $\hat{q}(\secpar)=2\binom{q(\secpar)+4}{2}$, $\alpha= \cos^2(\frac{\pi}{8})$, and $\UnfExp^{\OneToken,\NVR^*}$ is the game described in \cref{dfn: weak basis break}.

\end{proposition}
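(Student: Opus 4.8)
The plan is to combine the noise-to-noiseless reduction of \cref{lem:noise reduction lemma} with the noiseless bound of \cref{prp: weak supplied basis security}. The only free parameter in the reduction is the function $f$, and I would set $f(\secpar)=\lfloor z\secpar\rfloor$ with $z=\frac{1-\epsilon}{1+\epsilon}$; this is precisely the value that makes the base of the resulting exponential collapse to $1/k$. Since $\epsilon<1$ gives $z<1$, this respects the hypothesis $f(\secpar)<\secpar$ of the lemma (the degenerate case $\epsilon=0$, where $z=1$, is handled separately below).

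First I would feed the given adversary $\adv$ into \cref{lem:noise reduction lemma} with this $f$, producing an adversary $\Badv$ for the noiseless revealed-basis game at security parameter $f(\secpar)$, making at most $q(\secpar)+2$ queries to $\VR_k$, and satisfying
\[\Pr[\UnfExp^{\OneToken,\VR^*}_{\Badv,\CTMAC}(f(\secpar))=1]\geq\left(1-\frac{2\eta\secpar}{\secpar-f(\secpar)}\right)^{f(\secpar)}W_\eta(\secpar).\]
Next I would invoke \cref{prp: weak supplied basis security} for $\Badv$ with query bound $q(\secpar)+2$; for all $\secpar$ large enough that $f(\secpar)$ exceeds the threshold of that proposition, this gives
\[\Pr[\UnfExp^{\OneToken,\VR^*}_{\Badv,\CTMAC}(f(\secpar))=1]\leq 2\alpha^{f(\secpar)}\binom{q(\secpar)+4}{2}=\alpha^{f(\secpar)}\hat q(\secpar).\]
Chaining the two inequalities and rearranging yields
\[W_\eta(\secpar)\leq\left(\frac{\alpha}{\,1-\frac{2\eta\secpar}{\secpar-f(\secpar)}\,}\right)^{f(\secpar)}\hat q(\secpar).\]

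The core of the argument is the algebraic verification that $f(\secpar)=\lfloor z\secpar\rfloor$ drives the base down to $1/k$. From $f(\secpar)\leq z\secpar$ one gets $\secpar-f(\secpar)\geq(1-z)\secpar$, and substituting $2\eta=\epsilon(1-\alpha)$ together with $1-z=\frac{2\epsilon}{1+\epsilon}$ gives $\frac{2\eta\secpar}{\secpar-f(\secpar)}\leq\frac{(1-\alpha)(1+\epsilon)}{2}$. A one-line computation then shows $1-\frac{(1-\alpha)(1+\epsilon)}{2}=\frac{(1-\epsilon)+\alpha(1+\epsilon)}{2}$, so the base is at most $\frac{2\alpha}{(1-\epsilon)+\alpha(1+\epsilon)}=\frac1k$, using $(1+\alpha)-\epsilon(1-\alpha)=(1-\epsilon)+\alpha(1+\epsilon)$. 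Finally, since $k>1$ for $0\leq\epsilon<1$ and $f(\secpar)\geq z\secpar-1$, I would bound $(1/k)^{f(\secpar)}\leq(1/k)^{z\secpar-1}=(k^{-z})^\secpar\,k$, which absorbs the integer-rounding loss into the single extra factor $k$ and produces the claimed bound $\left(\frac{1}{k^z}\right)^\secpar k\,\hat q(\secpar)$.

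I expect the main difficulty to be bookkeeping rather than conceptual: the base $\frac{\alpha}{1-2\eta\secpar/(\secpar-f(\secpar))}$ contains $f$ both in the exponent and inside the parenthesis, so one must verify that $\beta=z$ is exactly the choice collapsing everything into the clean constants $k,z$ (equivalently, the identity $\frac{u-\epsilon(1-\alpha)}{\alpha u}=k$ at $u=1-z=\frac{2\epsilon}{1+\epsilon}$). Care is also needed with the ``for all sufficiently large $\secpar$'' quantifier inherited from \cref{prp: weak supplied basis security} (which dictates the $\secpar_0$ in the statement, since $f(\secpar)\to\infty$), and with the degenerate case $\epsilon=0$: there $\eta=0$ and $\CTMAC^\eta=\CTMAC$, so the bound follows directly from \cref{prp: weak supplied basis security} without the reduction, and one checks separately that $2\alpha^\secpar\binom{q(\secpar)+2}{2}$ is dominated by $\left(\frac{1}{k^z}\right)^\secpar k\,\hat q(\secpar)$ because $\alpha k=\frac{1+\alpha}{2}<1$.
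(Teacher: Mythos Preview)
Your proposal is correct and follows essentially the same route as the paper: instantiate \cref{lem:noise reduction lemma} with $f(\secpar)=\lfloor z\secpar\rfloor$, apply \cref{prp: weak supplied basis security} at security parameter $f(\secpar)$ with query bound $q(\secpar)+2$, and carry out the same algebraic simplification reducing the base to $1/k$ and absorbing the floor via $f(\secpar)\geq z\secpar-1$. Your explicit treatment of the degenerate case $\epsilon=0$ (where $z=1$ and $f(\secpar)=\secpar$ would violate the hypothesis of the lemma) is in fact more careful than the paper, which silently passes over this boundary case.
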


\begin{proof}
Let $\adv$ be as described in  \cref{prp: reduction bound}, and
denote 
\[ W_{\eta}(\secpar)\equiv\Pr[\UnfExp^{\OneToken,\NVR^*}_{\adv,\CTMAC^\eta}(\secpar) = 1].\]
Let $f(\secpar)\equiv \floor{z\secpar}$ (notice $0<z\leq 1$, for any $0\leq \epsilon<1$), then by \cref{lem:noise reduction lemma}, there exists an adversary $\Badv$ that makes $q(\secpar)+2$ queries to $\VR$ for which \begin{equation} \label{eq: adversary specific}
\Pr[\UnfExp^{\OneToken,\VR^*}_{\Badv,\CTMAC}(\floor{z\secpar})]\geq\left(1-\frac{2\eta \secpar}{\secpar-\floor{z\secpar}}\right)^{\floor{z\secpar}}W_{\eta}(\secpar).\end{equation}  
Recall that by 
\cref{prp: weak supplied basis security},
for every computationally unbounded $\Badv$ making at most $q(\secpar)+2$ queries to $\VR$ there exists $\secpar_0$ such that for all $\secpar>\secpar_0$,
\[\Pr[\UnfExp^{\OneToken,\VR^*}_{\Badv,\CTMAC}(\floor{z\secpar}) = 1]\leq 2\alpha^{\floor{z\secpar}}\binom{q(\secpar)+4}{2}=\alpha^{\floor{z\secpar}}\hat{q}(\secpar),\] 
Combined with \cref{eq: adversary specific}, we get $\left(1-\frac{2\eta \secpar}{\secpar-\floor{z\secpar}}\right)^{\floor{z\secpar}}W_{\eta}(\secpar)\leq \alpha^{\floor{z\secpar}} \hat{q}(\secpar).$
Moreover $\left(1-\frac{2\eta \secpar}{\secpar-\floor{z\secpar}}\right) \geq 0$, since
$1-\frac{\floor{z\secpar}}{\secpar}\geq 1-\frac{z\secpar}{\secpar}=1-z=\frac{2}{1+
\epsilon}\epsilon\geq \epsilon 
> \epsilon(1-\alpha)=2\eta.$
Hence,
\begin{align}
W_{\eta}(\secpar)
&\leq\left(\frac{\alpha}{\left(1-\frac{2\eta}{1-\frac{\floor{z\secpar}}{\secpar}}\right)}\right)^{\floor{z\secpar}}\hat{q}(\secpar)
&\leq&\left(\frac{\alpha}{\left(1-\frac{2\eta}{1-\frac{\floor{z\secpar}}{\secpar}}\right)}\right)^{z\secpar - 1}\hat{q}(\secpar)\\
&\leq\left(\frac{\alpha}{(1-\frac{2\eta}{1-\frac{z\secpar}{\secpar}})}\right)^{z\secpar - 1}\hat{q}(\secpar)
&\stackrel{\eta=\epsilon\frac{1-\alpha}{2}}{=} &\left(\frac{\alpha}{(1-\frac{2\epsilon\frac{1-\alpha}{2}}{1-z})}\right)^{z\secpar - 1}\hat{q}(\secpar)\\&\stackrel{z=\frac{1-\epsilon}{1+\epsilon}}{=}\left(\frac{\alpha}{1-\left(\frac{2\epsilon\frac{1-\alpha}{2}}{1-\frac{1-\epsilon}{1+\epsilon}}\right)}\right)^{z\secpar - 1}\hat{q}(\secpar)
&=&\left(\frac{\alpha}{1-\frac{(1-\alpha)(1+\epsilon)}{2}}\right)^{z\secpar - 1}\hat{q}(\secpar)\\
&=\left(\frac{1}{\frac{2-(1-\alpha)(1+\epsilon)}{2\alpha}}\right)^{z\secpar - 1}\hat{q}(\secpar)&=&\left(\frac{1}{\frac{(1+\alpha)-\epsilon(1-\alpha)}{2\alpha}}\right)^{z\secpar - 1}\hat{q}(\secpar)\\&=\left(\frac{1}{k}\right)^{z\secpar-1} \hat{q}(\secpar)&=&\left(\frac{1}{k^z}\right)^{\secpar} k\hat{q}(\secpar).
\end{align}
The second inequality holds since $\left(\frac{\alpha}{\left(1-\frac{2\eta}{1-\frac{\floor{z\secpar}}{\secpar}}\right)}\right)\leq \left(\frac{\alpha}{\left(1-\frac{2\eta}{1-\frac{z\secpar}{\secpar}}\right)}\right)=\frac{1}{k}\leq 1$ and $\floor{z\secpar}\geq z\secpar-1$.
\end{proof}

 Note that for every $0\leq \epsilon < 1$, $k>1$ and $z>0$. \cref{prp: reduction bound} thus gives an inverse exponential bound. The bound gets stronger as $\epsilon$ decreases. Indeed,  when $\epsilon =0$, we get the strongest bound $\left(\frac{2\alpha}{1+\alpha}\right)^{\secpar-1}  \hat{q}(\secpar)$. However, as $\epsilon\rightarrow 1$, then $k\rightarrow 1$, $z\rightarrow 0$ and the bound $\left(\frac{1}{k^z}\right)^{\secpar} k\hat{q}(\secpar)\rightarrow \hat{q}(\secpar)$, thus becoming a pointless bound.

The bound above holds for adversaries making \emph{exactly} $q(\lambda)$ queries to $\NVRk$. As in \cref{sec: reduction to weak certified deletion}, this implies that the same bound holds for any $\adv$ making \emph{at most} $q(\secpar)$ queries.
Combining this with  \cref{cor: noise-tolerant weak basis supply security implies noise-tolerant security}, proves \cref{thm:Pi^t_W is secure}. 

\ifnum\llncs=0

\section{Applications}\label{sec:applications}

\subsection{One-Time-Memories From Stateless Hardware}
\label{sec:One-Time-Memories from Stateless Hardware}

In this section, we resolve an open problem presented in \cite{BGZ21}. In \cite{BGZ21}, a construction of a primitive called one-time-memories, similar to non-interactive oblivious transfer, was suggested to be possible when assuming a trusted setup of stateless hardware.\footnote{\cite{BGZ21} uses the term "stateless hardware tokens", as this usage of the term "tokens" conflicts with our own, we refrain from using it.}
Although their terminology differs, in their protocol, they used what we denote here as $\CTMAC$. However, the authors did not provide a full security proof for their construction. For completeness, we repeat their construction, and show that its security is implied immediately by $\CTMAC^\eta$ being $\UnfDef^{\OneToken,\Vrfy}$, for $0\leq\eta<\frac{1-\alpha}{2}$ (see \cref{thm:Pi_W is secure}).

One-time-memories ($\OTM$) is an ideal modeling of one out of two non-interactive oblivious transfer, as described in \cref{fun: F_OTM}. This ideal functionality allows one machine, called a transmitter, to communicate one, and only one, of two possible messages $s_0,s_1$ to another machine called the receiver, with the transmitter completely oblivious to which of the messages was requested by the receiver. It is emphasized that one-time-memories are not realizable in the standard model, even in a quantum setting \cite{GKR08,BGS13}.

The functionality of stateless hardware is described in \cref{fun: F_wrap}. This ideal functionality allows a transmitter to create an entity computing a stateless program of its choice, ensuring that the receiver can only query the program in a black-box manner. Although the environment and algorithm discussed in the security notion are unbounded, we stress that $\Fwrap$ may only be queried a polynomial number of times, and may only be queried classically. A protocol that uses $\Fwrap$ is said to be in the stateless-hardware-model.

\begin{algorithm*}
    \caption{Ideal Functionality $\FOTM$}
    \label{fun: F_OTM}

    \textbf{Create:} Upon input $(s_0,s_1)$ from the transmitter, with $s_0,s_1\in \Bitspace$, send $\create$ to the receiver and store  $(s_0,s_1)$.
    
   \textbf{Execute:} Upon input $m\in \Bitspace$ from the receiver, send $s_m$ to the receiver. Delete the instance $(s_0,s_1)$.
\end{algorithm*}

\begin{algorithm*}
    \caption{Ideal Functionality $\Fwrap$}
    \label{fun: F_wrap}
    The functionality is parameterized by a polynomial $p(\cdot)$, and an implicit security parameter $\secpar$.

    \textbf{Create:} Upon input $(\create,M) $ from the transmitter, where $M$ is a Turing machine, send $\create$ to the receiver and store  $M$.
    
   \textbf{Execute:} Upon input $(\run,msg)$ from the receiver, execute $M(msg)$ for at most $p(\secpar)$ steps, and let $out$ be the response. Let $out:=\bot$ if $M$ does not halt in $p(\secpar)$ steps. Send $out$ to the receiver.
\end{algorithm*}

The main question explored in \cite{BGZ21} is: can $\OTM$ be implemented in the stateless hardware model? 

They gave a (partial) positive answer:
\begin{theorem}[{\cite[Main Theorem]{BGZ21}} ]
\label{thm:main BGZ21}
There exists a protocol $\mathcal{P}$ in the stateless-hardware-model, which is based on conjugate coding, and implements the $\OTM$ functionality with statistical security in the universal composability framework against a corrupted receiver making at most $c\secpar$ queries to the stateless hardware for any $c<0.114$.
\end{theorem}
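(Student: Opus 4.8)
The plan is to realize $\FOTM$ by combining the conjugate-coding token of $\CTMAC$ with the stateless-hardware functionality $\Fwrap$, and then to reduce the security of the resulting protocol to the single-token unforgeability of $\CTMAC$ established in \cref{thm:Pi_W is secure}. Concretely, in the protocol $\mathcal{P}$ the transmitter, on input $(s_0,s_1)$, samples a $\CTMAC$ key $(a,b)\gets\CTMAC.\keygen(1^\secpar)$, sends the receiver the BB84 token $\ket{\Stamp}=H^b\ket{a}$, and instantiates $\Fwrap$ with the (stateless) machine $M_{a,b,s_0,s_1}$ that on classical input $(m,\sigma)$ outputs $s_m$ if $\CTMAC.\Vrfy_{(a,b)}(m,\sigma)=1$ and $\bot$ otherwise. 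An honest receiver wishing to learn $s_m$ measures $\ket{\Stamp}$ in the basis dictated by $m$, obtaining $\sigma\gets\CTMAC.\Sign_{\ket{\Stamp}}(m)$, and queries $\Fwrap$ with $(m,\sigma)$; correctness of $\CTMAC$ (\cref{prp:correctness-one_restricted}) guarantees that this query returns $s_m$, so correctness of the $\OTM$ follows immediately.

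For security against a corrupted receiver, I would exhibit a simulator $\Sadv$ in the ideal world and argue statistical indistinguishability from the real execution for every (unbounded, polynomially-query-bounded) environment $\Zadv$. The simulator samples its own key $(a,b)$ and hands the adversary the token $H^b\ket{a}$ exactly as the honest transmitter would, which is perfectly simulated since the token distribution is independent of $(s_0,s_1)$. It then answers the adversary's hardware queries itself: knowing $(a,b)$, it tests each query $(m,\sigma)$ against $\CTMAC.\Vrfy_{(a,b)}$, replying $\bot$ to every failing query. On the \emph{first} successful query, with bit $m^\ast$, the simulator extracts $m^\ast$, sends it to $\FOTM$ to obtain $s_{m^\ast}$, returns $s_{m^\ast}$, and thereafter answers any further successful query for the \emph{same} bit with the stored $s_{m^\ast}$. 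The only situation the simulator cannot handle is a successful query for the \emph{other} bit $1-m^\ast$, since $\FOTM$ may be executed only once.

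The key observation is that the simulation is perfect unless this bad event occurs, and the bad event is exactly the event that the receiver produces valid signatures for both $0$ and $1$ from the single token while using the hardware as a classical verification oracle, i.e.\ a win in $\UnfExp^{\OneToken,\Vrfy}_{\adv,\CTMAC}(\secpar)$. By \cref{thm:Pi_W is secure}, $\CTMAC$ is $\Unc$ $\UnfDef^{\OneToken,\Vrfy}$, so this probability is negligible even against unbounded adversaries, provided only that the number of hardware queries is polynomial; this yields statistical UC security. I expect the main obstacle to be the bookkeeping of the UC argument: one must verify that the receiver's entire view, the sequence of hardware responses, is identically distributed in the real and ideal executions up to the bad event, taking care that the hardware leaks the full message $s_{m^\ast}$ rather than a single bit. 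This last point is handled because on a valid query both the real hardware and $\Sadv$ (via $\FOTM$) return precisely $s_{m^\ast}$, and because the bad event maps to a forgery with at most a constant additive loss in the query count. Since \cref{thm:Pi_W is secure} tolerates any polynomial number of verification queries, this argument removes the $c<0.114$ restriction of \cref{thm:main BGZ21} and, in particular, establishes the stated theorem; instantiating the token with $\CTMAC^\eta$ and invoking \cref{thm:Pi^t_W is secure} in place of \cref{thm:Pi_W is secure} further shows that the construction remains secure in the noisy setting.
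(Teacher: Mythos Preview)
Your proposal is correct and matches the paper's approach essentially line for line: the simulator you describe is exactly the one in the paper's sketch proof (of \cref{thm:BGZ21 strengthening}), and the reduction of the bad event to a win in $\UnfExp^{\OneToken,\Vrfy}_{\adv,\CTMAC}$ is precisely how the paper argues. One point worth noting is that \cref{thm:main BGZ21} is a \emph{cited} result from \cite{BGZ21}, not something this paper proves independently; the paper instead proves the strictly stronger \cref{thm:BGZ21 strengthening} via the argument you give, and you correctly observe that your proof removes the $c<0.114$ restriction and extends to the noise-tolerant $\CTMAC^\eta$ via \cref{thm:Pi^t_W is secure}.
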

Note that in their construction, the adversary can only perform some bounded number of stateless hardware queries. In an earlier version of their manuscript, a stronger result was claimed, where the receiver could make a polynomial number of stateless hardware queries. Unfortunately, their early work \cite{BGZ18} "was withdrawn due to an error in the main security proof", and the superseding work (\cite{BGZ21}) reproduced the same result except for the limitation on the linear number of queries. Additionally, an open question was left unanswered as to whether noise tolerance could be introduced to the construction.

Here, we complete their work by proving the stronger result and solving the open problem by using $\CTMAC^\eta$ (instead of $\CTMAC$).

The protocol $\mathcal{P}^\eta$ described in \cite{BGZ21} is reformulated in \cref{fun:protocol P}, replacing many of the details with the formalism of the scheme $\CTMAC^\eta$. We defer from the original construction by using $\CTMAC^\eta$ instead of the plain noise-sensitive $\CTMAC$.
\begin{algorithm*}
    \caption{ $\mathcal{P^\eta}$: an $\OTM$ protocol}
    \label{fun:protocol P}
    $\Tadv$ Input: $s_0,s_1\in \Bitspace$ 
    
    $\Radv$ Input: $m\in \Bitspace$ 

    \begin{algorithmic}[1]
        \State The transmitter $\Tadv$ calls $k\gets\CTMAC^\eta.\keygen(1^\secpar)$ and $\ket{\stamp}\gets \CTMAC^\eta.\tokengen_{k}$.
        \State $\Tadv$ prepares the program $M(s_0,s_1,k)$, as described in \cref{alg:program for stateless hardware}.
        \State $\Tadv$ sends $(\create,M)$ to the functionality $\Fwrap$, and sends the token $\ket{\stamp}$ to $\Radv$.
        \State $\Radv$ runs $\sigma\gets \CTMAC^\eta.\Sign(\ket{\stamp},m)$ and sends $(\run,\sigma)$ to $\Fwrap$, receiving $out$. 
    \end{algorithmic}

\end{algorithm*}
\begin{algorithm*}
    \caption{$\mathcal{M}$:Program for Stateless Hardware}
   \label{alg:program for stateless hardware}
    Hard-coded values: $s_0,s_1\in \Bitspace,k$.
    
    Inputs: $m\in \Bitspace,\sigma$, where $m$ is the evaluator's choice bit, and $\sigma$ is the proclaimed signature.
    \begin{algorithmic}[1]
        \If {$\CTMAC^\eta.\Vrfy_k(m,\sigma)=1$} {Return $s_m$.}
        \Else{} {Return $\bot$}.
        \EndIf
        
    \end{algorithmic}
\end{algorithm*}
The correctness of the protocol in the noise model $\Noise{2c\eta}$ for any $0\leq c<1$ is trivial.

The security discussed in \cite[Appendix A]{BGZ21} is that of universally composable ($\UC$) security for the sender, as the framework of universally composable security, allows for easy lifting from one-time-memories to that of one-time-programs. We will not repeat the definition for universally composable security here, but refer the reader to \cite{Unr10} for universal composability in a quantum setting.

The main claim used to prove \cref{thm:main BGZ21} is the following:

\begin{theorem}[{\cite[Theorem 3.2]{BGZ21}}  ]
\label{thm: sub linear queries claim}
For $\eta=0$, given a single copy of $\ket{\stamp}$ generated by the transmitter in $\mathcal{P}^\eta$, and the ability to make $r$ (adaptive) queries to the stateless hardware, the probability that an unbounded quantum adversary can force the stateless hardware program $M$ to output both bits $s_0$ and $s_1$ scales as $O(2^{2r-0.228\secpar})$.
\end{theorem}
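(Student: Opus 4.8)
The plan is to observe that \cref{thm: sub linear queries claim} is, up to a purely syntactic reformulation, an immediate consequence of the single-token unforgeability of $\CTMAC^0=\CTMAC$ proved in \cref{thm:Pi_W is secure}. The key point is that forcing the stateless-hardware program $M$ (\cref{alg:program for stateless hardware}) to output \emph{both} $s_0$ and $s_1$ is exactly the event that the receiver submits an accepting query $(0,\sigma_0)$ \emph{and} an accepting query $(1,\sigma_1)$ to $M$: since $M$ returns $s_m$ precisely when $\CTMAC^0.\Vrfy_k(m,\sigma)=1$ and returns $\bot$ otherwise, forcing both outputs means exactly that valid signatures for both documents $0$ and $1$ were produced from a single token $\ket{\stamp}$.

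First I would set up a reduction turning any unbounded $\OTM$-adversary $\adv$ that makes $r$ classical queries to $\Fwrap$ into a forger $\Badv$ for the game $\UnfExp^{\OneToken,\Vrfy}_{\Badv,\CTMAC^0}(\secpar)$. Upon receiving a token $\ket{\stamp}$, $\Badv$ samples $s_0,s_1\sample\Bitspace$ itself, runs $\adv$ on $\ket{\stamp}$, and answers each hardware query $(m,\sigma)$ by forwarding $(m,\sigma)$ to its own oracle $\Vrfy_k$; if the oracle accepts, $\Badv$ returns $s_m$ to $\adv$, and otherwise it returns $\bot$. As soon as $\adv$ has received accepting answers for both $m=0$ and $m=1$, $\Badv$ outputs the corresponding pair $(\sigma_0,\sigma_1)$ as its forgery. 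Each hardware query becomes exactly one verification query, so $\Badv$ makes $q(\secpar)=r$ queries to $\Vrfy_k$.

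The one step that genuinely requires care — and the only place the protocol structure enters — is arguing that this simulation is \emph{perfect}, i.e., that the stateless hardware leaks nothing beyond what a verification oracle leaks. This holds because the hard-coded bits $s_0,s_1$ are chosen independently of the secret key $k$ and of the token, so the extra value $s_m$ returned on a successful query is independent of $k$ and can be generated by $\Badv$ on its own. Consequently $\adv$'s view inside the reduction is identically distributed to its view against the real $\Fwrap$, and $\Badv$ wins whenever $\adv$ forces both outputs. I would then invoke the concrete bound obtained while proving \cref{thm:Pi_W is secure} (namely \cref{prp: weak supplied basis security}), which gives
\[
\Pr[\UnfExp^{\OneToken,\Vrfy}_{\Badv,\CTMAC^0}(\secpar)=1]\leq 2\alpha^{\secpar}\binom{r+2}{2},\qquad \alpha=\cos^2(\pi/8).
\]
Since $\log_2\alpha<-0.228$ we have $\alpha^{\secpar}\leq 2^{-0.228\secpar}$, and since $\binom{r+2}{2}$ is polynomial in $r$ it is in particular $O(2^{2r})$; hence the probability of forcing both outputs is $O(2^{2r-0.228\secpar})$, as claimed. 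In fact the reduction delivers the sharper estimate $O(r^2\,2^{-0.228\secpar})$, so \cref{thm: sub linear queries claim} follows with room to spare — and this slack is exactly what lets us later replace the constraint $r< c\secpar$ with $c<0.114$ by an arbitrary polynomial number of queries.
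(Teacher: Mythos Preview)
The paper does not actually prove \cref{thm: sub linear queries claim}: it is quoted verbatim from \cite{BGZ21} as their Theorem~3.2, and the paper's contribution is to observe that its own unforgeability result (\cref{thm:Pi^t_W is secure}) yields the strictly stronger \cref{thm: polynomial queries claim}. Your proposal is therefore not a comparison target in the usual sense, but your argument is exactly the observation the paper makes informally in the sentence preceding \cref{thm: polynomial queries claim}: forcing $M$ to output both $s_0$ and $s_1$ is the same event as producing valid $\CTMAC$-signatures for both $0$ and $1$ from a single token, and the hardware oracle is simulable by the verification oracle.

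One minor imprecision: you invoke \cref{prp: weak supplied basis security} directly for the $\Vrfy$-oracle game, but that proposition bounds the $\VR^*$ game; to transfer the bound to $\UnfExp^{\OneToken,\Vrfy}$ you need to pass through \cref{cor: noise-tolerant weak  basis supply security implies noise-tolerant security} (and implicitly \cref{prp:weak_to_strong_supplied basis}). This is routine and the paper does it in one line after \cref{prp: weak supplied basis security}, so the argument stands. Your closing remark that the reduction in fact yields $O(r^2\,2^{-0.228\secpar})$, and hence works for any polynomial $r$, is precisely the content of the paper's \cref{thm: polynomial queries claim}.
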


Notice that the ability to force the token to output both bits corresponds to the adversary submitting $\sigma_0,\sigma_1$ such that the former is a signature for $0$, and the latter is a signature for $1$. Because of this, \cref{thm:Pi^t_W is secure} easily implies an improvement of \cref{thm: sub linear queries claim}:

\begin{theorem}
\label{thm: polynomial queries claim}
For any $0\leq\eta<\frac{1-\alpha}{2}$, where $\alpha=\cos^2(\frac{\pi}{8})$, given a single copy of $\ket{\stamp}$ generated by the transmitter, and the ability to make polynomially many (adaptive) queries to the stateless hardware, the probability that an unbounded quantum adversary can force the stateless hardware program $M$ to output both bits $s_0$ and $s_1$ is negligible.
\end{theorem}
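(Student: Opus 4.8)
The plan is to prove this by a direct reduction to the single-token unforgeability of $\CTMAC^\eta$ established in \cref{thm:Pi^t_W is secure}, exploiting the observation already flagged before the statement: a query to the stateless-hardware program $M$ of \cref{alg:program for stateless hardware} is, from the receiver's viewpoint, nothing but a classical query to the verification oracle $\Vrfy_k$ of $\CTMAC^\eta$. Indeed, $M$ is hard-coded with $(s_0,s_1,k)$ and on input a document--signature pair $(m,\sigma)$ returns $s_m$ exactly when $\CTMAC^\eta.\Vrfy_k(m,\sigma)=1$, and $\bot$ otherwise. Hence forcing $M$ to output both $s_0$ and $s_1$ requires the receiver to submit, over the course of its interaction, a pair $\sigma_0,\sigma_1$ with $\CTMAC^\eta.\Vrfy_k(0,\sigma_0)=1$ and $\CTMAC^\eta.\Vrfy_k(1,\sigma_1)=1$, which is precisely the winning condition of $\UnfExp^{\OneToken,\Vrfy}_{\adv,\CTMAC^\eta}(\secpar)$ (Game~\ref{exp:UnfExp}, \cref{def:Unforgeability}).

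First I would turn an arbitrary (computationally unbounded) corrupted receiver $\Radv$ that forces both outputs with probability $p(\secpar)$ using $q(\secpar)\in\poly$ stateless-hardware queries into an adversary $\adv$ for $\UnfExp^{\OneToken,\Vrfy}_{\adv,\CTMAC^\eta}$. The adversary $\adv$ requests a single token through $\tokengen_k$ and hands it to $\Radv$ as $\ket{\Stamp}$; it also fixes arbitrary hard-coded values $s_0,s_1$ itself (as the honest transmitter would), so it knows them throughout. Whenever $\Radv$ issues a stateless-hardware query $(\run,(m,\sigma))$, $\adv$ forwards $(m,\sigma)$ to its own oracle $\Vrfy_k$, returns $s_m$ to $\Radv$ if the oracle accepts and $\bot$ otherwise, thereby simulating both $\Fwrap$ and the program $M$ perfectly and without ever needing $k$. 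Since $\Fwrap$ is queried only classically and only polynomially many times, $\adv$ makes only classical, polynomially many $\Vrfy_k$ queries while holding a single token, exactly as the $\UnfExp^{\OneToken,\Vrfy}$ setting demands.

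Because the simulation is perfect, with probability $p(\secpar)$ the transcript of $\Radv$ contains an accepting query $(0,\sigma_0)$ and an accepting query $(1,\sigma_1)$; $\adv$ records the first accepting query for each of the two documents (it observes every oracle answer) and outputs $(\sigma_0,\sigma_1)$, thereby winning the game. By \cref{thm:Pi^t_W is secure}, $\CTMAC^\eta$ is $\Unc$ $\UnfDef^{\OneToken,\Vrfy}$ for every $0<\eta<\frac{1-\alpha}{2}$ (and \cref{thm:Pi_W is secure} covers the boundary case $\eta=0$), so this winning probability is negligible even against unbounded adversaries making polynomially many queries. Hence $p(\secpar)\leq\negl$, which is exactly the claim.

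The step I expect to need the most care is the faithfulness of the simulation and the precise translation of ``forcing both outputs'' into the forgery condition: one must verify that the information $\Radv$ extracts from $\Fwrap$ --- namely $s_m$ on an accepting query and $\bot$ otherwise --- is reproducible by $\adv$ using only the result bit of $\Vrfy_k$ together with the self-chosen $s_0,s_1$, so that $\Radv$'s view inside the reduction is identically distributed to its view in the real protocol $\mathcal{P}^\eta$. Everything else is immediate once this equivalence between a stateless-hardware query and a verification query is made precise; in particular, no new quantitative estimate is required beyond what \cref{thm:Pi^t_W is secure} already supplies, which is why this improvement over \cref{thm: sub linear queries claim} --- whose $O(2^{2r-0.228\secpar})$ bound was limited to a linear number of queries --- follows essentially for free.
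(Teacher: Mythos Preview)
Your proposal is correct and follows essentially the same approach as the paper: the paper's entire argument is the one-sentence observation preceding the theorem, namely that forcing $M$ to output both $s_0$ and $s_1$ amounts to producing valid signatures for both $0$ and $1$, so \cref{thm:Pi^t_W is secure} immediately yields the claim. You have simply spelled out the reduction in more detail (including the perfect simulation of $\Fwrap$ via the $\Vrfy_k$ oracle and the self-chosen $s_0,s_1$), and you correctly handle the $\eta=0$ boundary case via \cref{thm:Pi_W is secure}.
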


Replacing \cref{thm: sub linear queries claim} with \cref{thm: polynomial queries claim}, results in the following strengthening of \cref{thm:main BGZ21}:

\begin{theorem}
\label{thm:BGZ21 strengthening}
The protocol $\mathcal{P}^{0.07}$ in the stateless hardware model, which is based on conjugate coding, implements the $\OTM$ functionality with statistical security in the universal composability framework against a corrupted receiver making a polynomial number of queries. Moreover, $\mathcal{P}^{0.07}$ is tolerant to up to $14\%$ noise (\cref{def:noise tolerance}).  
\end{theorem}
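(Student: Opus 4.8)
The plan is to reduce the statement to two ingredients that are already available: the strengthened single-token bound of \cref{thm: polynomial queries claim}, and the noise tolerance recorded in \cref{cor:concrete_bound-onerstricted}. The crucial structural observation is that in the proof of \cref{thm:main BGZ21}, the linear-query restriction $c<0.114$ enters \emph{only} through the exponential bound of \cref{thm: sub linear queries claim}: the quantity $O(2^{2r-0.228\secpar})$ is negligible exactly when $r<0.114\secpar$, which is where the constant originates. Hence the strategy is to show that \cref{thm: sub linear queries claim} can be substituted, as a black box, by \cref{thm: polynomial queries claim} inside the \UC simulator argument of \cite[Appendix A]{BGZ21}, thereby lifting the query bound from linear to arbitrary polynomial without altering any other part of the proof, and to then observe that replacing the noise-sensitive scheme by $\CTMAC^{0.07}$ carries the noise tolerance over for free.

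First I would recall why \cref{thm: polynomial queries claim} holds, since it is the real crux. An unbounded receiver holding a single token $\ket{\Stamp}$ and making classical queries to the stateless program $\mathcal{M}$ (\cref{alg:program for stateless hardware}) is, with respect to the hidden key $k$, exactly an adversary with one call to $\tokengen_k$ and classical access to $\CTMAC^\eta.\Vrfy_k$: a query $(m,\sigma)$ returning $s_m\neq\bot$ is a successful verification, and one returning $\bot$ is a failure. Forcing $\mathcal{M}$ to emit both $s_0$ and $s_1$ is therefore identical to producing $(\sigma_0,\sigma_1)$ with $\CTMAC^\eta.\Vrfy_k(0,\sigma_0)=\CTMAC^\eta.\Vrfy_k(1,\sigma_1)=1$, i.e.\ winning $\UnfExp^{\OneToken,\Vrfy}_{\adv,\CTMAC^\eta}(\secpar)$. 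A forger simulates the receiver by fixing arbitrary hard-coded $s_0,s_1$, forwarding each hardware query to its own verification oracle, and answering $s_m$ on acceptance and $\bot$ otherwise; this simulation is perfect because $s_0,s_1$ are known to the forger, so nothing beyond the verification bit is ever needed. The query count is preserved, so \cref{thm:Pi^t_W is secure} (instantiated at $\eta=0.07$ via \cref{cor:concrete_bound-onerstricted}) bounds the forcing probability by a negligible function for any polynomial number of queries.

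Next I would feed this into the sender's \UC security proof. Their simulator is already built for a general, possibly polynomial-query, malicious receiver, and the bounded-query hypothesis is used solely to argue that the event in which the corrupted receiver extracts both $s_0$ and $s_1$ is negligible, which is what forces real and ideal executions to be statistically indistinguishable. Substituting our negligible bound for arbitrary polynomial query counts leaves the remaining hybrid steps untouched and yields statistical \UC security of $\mathcal{P}^{0.07}$ against receivers making $\poly(\secpar)$ queries. Noise tolerance is then inherited from the primitive: since $\mathcal{P}^{0.07}$ differs from the \cite{BGZ21} protocol only by using $\CTMAC^{0.07}$, an honest execution runs $\CTMAC^{0.07}.\Sign$ on the received token and has $\mathcal{M}$ apply $\CTMAC^{0.07}.\Vrfy$; by \cref{cor:concrete_bound-onerstricted} (through \cref{thm:correctness_noise tolerance}), correctness holds up to negligible error in $\Noise{2c\eta}$ for every constant $0\le c<1$, i.e.\ up to $14\%$ noise, so an honest receiver recovers $s_m$ with overwhelming probability over a noisy channel, while soundness is unaffected because the forcing bound holds against arbitrary unbounded strategies.

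I expect the main obstacle to be the careful verification that \cite{BGZ21} invoke \cref{thm: sub linear queries claim} purely as a black box, so that no other part of their composability argument silently relies on the linear query bound, together with confirming that the stateless-hardware-to-verification-oracle correspondence is an exact equivalence---in particular that returning $s_m$ rather than a raw verification bit leaks nothing extra---so that the substitution is fully sound.
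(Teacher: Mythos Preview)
Your proposal is correct and follows essentially the same approach as the paper: both substitute \cref{thm: polynomial queries claim} for the linear-query bound of \cref{thm: sub linear queries claim} inside the \UC\ simulator argument, observing that the only place the query restriction is used is to make the ``both $s_0$ and $s_1$ extracted'' event negligible, and both obtain noise tolerance directly from the underlying $\CTMAC^{0.07}$ scheme. The paper's sketch proof spells out the simulator explicitly (it calls $\FOTM$ on the first successful query for each bit and caches the answer), whereas you defer to \cite{BGZ21} for that construction, but the substance is identical.
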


We repeat the argument in \cite{BGZ21} for completeness and refer the reader to the original work for details regarding the security definitions. 
\begin{proof}[Sketch proof]
It is relatively straightforward to construct a simulator $\Sadv^{\Radv^*}$ in an ideal world (having access to the functionality \FOTM) for any malicious receiver $\Radv^*$: $\Sadv^{\Radv^*}$  will simulate $\Radv^*$, and also simulate by itself the functionality of $\Fwrap$, but with no actual inputs. Instead, the first time a signature is successfully submitted for a bit $b$, $\Sadv^{\Radv^*}$  will call \FOTM to recover $s_b$. $\Sadv^{\Radv^*}$ will keep $s_b$ and answer the same in any subsequent submission of a successful signature for $b$. If no successful signatures for $1-b$ are submitted, this will result in a perfectly indistinguishable view of the transcript. By \cref{thm: polynomial queries claim}, this occurs with an overwhelming probability.
\end{proof}
It is worth noting that even if instantiated with a $\Tmac$ that is secure even against superposition attacks, this would not result in allowing quantum queries to the stateless hardware. In a model where quantum superposition queries to the hardware are allowed, $\OTM$ could not be $\UC$-realized to be secure with statistical security~\cite[Section 4.1]{BGZ21}.

Another similar result is covered in \cite{CGL19}, not only constructing one-time-memories and one-time-programs from stateless hardware, but also providing a general transformation of stateful oracles to stateless oracles via a black box usage of $\UnfDef^{\OneToken,\Vrfy}$ $\Tmac$,\footnote{The terminology used in \cite{CGL19} refers to those as disposable mac, or $\mathsf{DMAC}$.} with various applications, among those the construction of cryptographic disposable back-doors. The results, while very similar, are not directly comparable due to differences in the models discussed. Regardless, instantiating the construction in \cite{CGL19} with $\CTMAC^\eta$ also results in a conjugate coding-based noise-tolerant construction for $\OTM$ in the corresponding model.

\subsection{Quantum Money From TMAC}\label{sec:quantum money}
As an application of $\Tmac$, some of the results in~\cite{BS16a} are restated, showing how a $\Tmac$ scheme can be used to construct a classical verification private money scheme. In addition, the theoretical implications of that construction on $\Tmac$ schemes are also reviewed. 
 
\subsubsection{Classically Verifiable Private Quantum Money}

Quantum money was first proposed in~\cite{Wie83} and has since been studied extensively. Here, we only consider the private variant and also restrict ourselves to schemes where verification of the bill is done by an interactive protocol where the bank's side of the protocol is classical. 

   
\begin{definition}
A  Classically Verifiable PRivate quantum Money ($\CVPRM$) scheme consists of two $\QPT$ algorithms: $\keygen$ and $\mint$, and an interactive protocol $\Vrfy$ fulfilling the following:
 \begin{enumerate}
     \item On input $1^\secpar$, where $\secpar$ is the security parameter, the algorithm $\keygen$ outputs a classical key $k$.
    \item $\mint_k$ produces a quantum state  $\ket{\$}$, which we refer to as the bill.
      \item $\Vrfy_k(\ket{\$})$ is an interactive protocol with polynomially many rounds between the bank which has access to $k$, and the user who has access to the state $\ket{\$}$. Communication between the sides is only done classically. The bank's algorithm is $\PPT$, and the user's algorithm is $\QPT$.
 \end{enumerate}
We say a classical verification private quantum money scheme is correct if for every $k$ in the range of $\keygen$:
\[\Pr[\Vrfy_k(\mint(k))=1]=1.\]
\end{definition}
\begin{definition}\label{def: unforgeable private money}
A $\CVPRM$ $\Pi$ is $\UnfDef$ if for every $\QPT$ adversary $\adv$

\begin{equation}
    \Pr[{\MoneyForge_{\adv,\Pi}(\secpar)} = 1] \leq \negl,
\end{equation}

if the same holds even for computationally unbounded adversaries (minted only polynomially many bills), the $\CVPRM$ $\Pi$ is said to be $\Unc$ $\UnfDef$ 
\end{definition}

where $\MoneyForge_{\adv,\Pi}(\secpar)$ is the security game described in Game~\ref{exp:The money counterfeiting game}.
\begin{savenotes}
\begin{game}
\caption{Money Forge Game  $\MoneyForge_{\adv,\Pi}(\secpar)$:}\label{exp:The money counterfeiting game}
\begin{algorithmic}[1]
\State $\Cadv$ generates a key $k\gets \keygen(1^\secpar)$. 
\State $\adv$ gets oracle access to $\mint_k$ and $\Vrfy_k$.
  Let $w$ be the number of successful verifications and $\ell$ the number of times that mint was called by the adversary.
\end{algorithmic}

The value of the game $\MoneyForge_{\adv,\Pi}(\secpar)=1$, i.e., the adversary wins if and only if $w > \ell$. 
\end{game}
\end{savenotes}

\subsubsection{Classically Verifiable Quantum Money From TMAC}\label{sec:quantum money from TMAC}
A $\Tmac$ scheme could be used to construct a $\CVPRM$ scheme: let $\TM$ be a $\Tmac$ scheme.

A money scheme $\MM$ is constructed as follows:
The bank runs $\TM.\keygen(1^\secpar)$ to produce a key $k$. In order to mint a bill $\ket{\$}$ to a user, the bank uses $\TM.\tokengen_k$. When the user approaches the bank to verify a bill, the bank will pick a document $m\in \Bitspace^\secpar$ at random and request the user to sign it with $\TM.\Sign_{\ket{\$}}(m)$. In order to verify, the bank will then run $\TM.\Vrfy_k(m,\sigma)$ on the signature supplied by the customer.

\begin{proposition}\label{prp:TMAC->CVPRM}

 If $\TM $ is an $\UnfDef^{\tokengen,\Vrfy}$\footnote{In this context, unforgeability in light of a signing oracle is not required.}  $\Tmac$ scheme, then $\MM$ is an unforgeable $\CVPRM$ scheme. Furthermore, an analogous result holds against unbounded adversaries.
\end{proposition}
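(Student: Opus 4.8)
The plan is to prove the contrapositive via a black-box reduction: from any adversary $\adv_M$ winning $\MoneyForge_{\adv_M,\MM}(\secpar)$ with non-negligible probability I will build an adversary $\adv_T$ against the unforgeability game $\UnfExp^{\tokengen,\Vrfy}_{\adv_T,\TM}(\secpar)$. The adversary $\adv_T$ simulates $\adv_M$ and translates its interface: every call $\adv_M$ makes to $\mint_k$ is answered by forwarding a query to $\adv_T$'s own $\tokengen_k$ oracle and handing back the resulting token; every verification-protocol instance that $\adv_M$ initiates is played out by $\adv_T$ acting as the bank --- it samples a fresh challenge $m\sample\Bitspace^\secpar$, sends it to $\adv_M$, receives an alleged signature $\sigma$, queries its verification oracle $\TM.\Vrfy_k(m,\sigma)$, and relays the accept/reject bit back to $\adv_M$. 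This simulation is perfect, so $\adv_M$'s view inside $\adv_T$ is exactly its view in the real $\MoneyForge$ game, and the number of mint calls $\ell$ and the number $w$ of accepting verifications are identically distributed.

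Next I would extract a forgery. At the end of the simulation $\adv_T$ outputs the list of pairs $(m_i,\sigma_i)$ corresponding to the verification instances that accepted. Because the money scheme provides $\adv_M$ with no signing oracle, the signing-query set $Q$ in the $\UnfExp$ game is empty, so every submitted document is automatically fresh; moreover the number of tokens requested is exactly $r=\ell$. Hence, to win $\UnfExp^{\tokengen,\Vrfy}$ it suffices that at least $r+1=\ell+1$ of the accepting instances carry pairwise \emph{distinct} documents.

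The main obstacle --- and the only place a negligible loss is incurred --- is ruling out challenge collisions. The challenges $m_1,\dots,m_t$ (where $t\in\poly$ bounds the total number of verification instances) are sampled independently and uniformly from $\Bitspace^\secpar$, so by a birthday-bound argument the probability that any two coincide is at most $\binom{t}{2}2^{-\secpar}\le\negl$. Conditioned on all challenges being distinct, the $w$ accepting instances correspond to $w$ distinct fresh documents, and whenever $\adv_M$ wins we have $w>\ell=r$, i.e.\ $w\ge r+1$; thus $\adv_T$'s submission contains at least $r+1$ distinct fresh valid pairs and the game outputs $1$. When verification is deterministic --- as it is for the construction in this paper --- the pairs that accepted during the simulation accept again verbatim upon the challenger's re-verification, so no further subtlety arises. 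Combining, $\Pr[\UnfExp^{\tokengen,\Vrfy}_{\adv_T,\TM}(\secpar)=1]\ge\Pr[\MoneyForge_{\adv_M,\MM}(\secpar)=1]-\negl$, which is non-negligible, contradicting the assumed unforgeability of $\TM$.

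Finally, the ``furthermore'' claim about unbounded adversaries requires no new idea: the reduction is black-box and its overhead (sampling uniform strings and forwarding oracle calls) is efficient and issues only polynomially many oracle queries, so it maps a computationally unbounded money forger making polynomially many queries to a computationally unbounded $\TM$-forger with the same query budget. Thus $\TM$ being $\Unc\ \UnfDef^{\tokengen,\Vrfy}$ yields $\MM$ being $\Unc\ \UnfDef$ by the identical argument.
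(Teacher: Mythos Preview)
Your proof is correct and follows essentially the same approach as the paper: a black-box reduction where a money forger is simulated by a $\Tmac$ forger, with the birthday bound on the uniformly random challenges guaranteeing distinctness of the resulting documents. Your write-up is in fact more careful than the paper's terse argument---you correctly bound collisions over all $t$ verification instances (the paper sloppily writes ``$r$ random documents'' tied to the number of minted bills), and you explicitly flag the deterministic-verification point needed for the re-verification step in $\UnfExp$.
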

\begin{proof}
  
  Correctness is immediate. For unforgeability, assume an adversary was minted $r \in \poly$ bills, so $r$ random documents have been generated. The probability for a collision between two random documents is less than $\frac{r^2}{2^{\secpar}}$, which is negligible if $r$ is polynomial in $\secpar$. Hence, except with a negligible probability, all documents are distinct, so if the adversary is able to generate $w$ bills of which at least $r+1$ bills pass verification, this means that the adversary is able to forge $w$ distinct signed documents, of which at least $r+1$ signatures pass verification, contradicting the security of $\TM$.
 \end{proof}

 While $\sigOrUnrest{0}$ is inspired by the classical verification variant of Wiesner's money~\cite{MVW12}, the $\CVPRM$ resulting by applying the above construction is distinct from it: the bills in the resulting scheme can be thought of as a two-dimensional array of qubits, where each bit of the challenge dictates the measurement of an entire row, rather than of a single qubit.
 \subsubsection{Trade-Off Theorem for Quantum Money and TMAC}
 In \cref{sec:Security of Conjugate TMAC}, we proved that $\CTMAC$ (\cref{alg:Conjugate Tmac} ) is  $\Unc$ $\UnfDef^{\OneToken,\Vrfy}$. In order to construct an $\UnfDef^{\tokengen,\Vrfy}$, computational assumptions were used, as described in \cref{sec:main_results}.
A natural question is then whether a $\Tmac$ can be constructed to be  $\Unc$ $\UnfDef^{\tokengen,\Vrfy}$. Based on a result by Aaronson \cref{thm: tradeoff for quantum money}, it is shown that this is impossible. In fact, even an  $\Unc$ $\UnfDef^{\tokengen}$, $\Tmac$ is impossible to achieve.

\begin{theorem}
[{Trade-off Theorem for Quantum Money,~\cite[Theorem 8]{Aar20}} ]\label{thm: tradeoff for quantum money}
Given any private-key quantum money
scheme,\footnote{And in particular a $\CVPRM$.} with $d$-qubit bills and an $m$-bit secret key held by the bank, a counterfeiter can produce additional bills which pass verification with $1-o(1)$ probability, given $\widetilde{O}(dm^4)$ legitimate bills and 
$exp(d,m)$ computation time. No queries to the bank are needed to produce these bills.
\end{theorem}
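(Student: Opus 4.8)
The statement is Aaronson's trade-off theorem, so the plan is to reconstruct its proof via \emph{shadow tomography}. First I would reduce the interactive verification to a static family of measurements. Since the secret key is only $m$ bits, there are at most $2^m$ possible keys, and for each candidate key $k$ the event ``a bill $\rho$ passes verification'' --- with the counterfeiter playing the honest user and the bank's internal randomness traced out --- is governed by a single two-outcome POVM element $E_k$ acting on the $d$-qubit bill. Thus, from the counterfeiter's point of view, the whole scheme is described by the publicly computable family $\{E_k\}_{k\in\Bitspace^m}$ of at most $M=2^m$ observables on a Hilbert space of dimension $D=2^d$. By correctness, a legitimate bill $\rho$ minted under the true key $k^\ast$ satisfies $\tr(E_{k^\ast}\rho)=1$.

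The heart of the argument is to learn, from copies of $\rho$, a single hypothesis state that the counterfeiter can then reproduce at will. I would invoke Aaronson's shadow tomography theorem: given $\widetilde{O}\!\left(\frac{\log^4 M\cdot \log D}{\epsilon^4}\right)=\widetilde{O}\!\left(\frac{m^4 d}{\epsilon^4}\right)$ copies of $\rho$, one can, with high probability, estimate $\tr(E_k\rho)$ for all $2^m$ keys simultaneously to additive error $\epsilon$. Crucially, the matrix-multiplicative-weights subroutine inside shadow tomography maintains an \emph{explicit} hypothesis density operator $\sigma$ on the $d$ qubits satisfying $\abs{\tr(E_k\sigma)-\tr(E_k\rho)}\le\epsilon$ for every $k$. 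In particular $\tr(E_{k^\ast}\sigma)\ge 1-\epsilon$, so $\sigma$ itself passes the true verification with probability $1-\epsilon=1-o(1)$ for a suitable choice of $\epsilon$, absorbing the $\epsilon^{-4}$ and logarithmic factors into $\widetilde{O}(dm^4)$.

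To finish, the counterfeiter computes a classical description of $\sigma$ --- a $2^d\times 2^d$ matrix, whence the $\exp(d,m)$ running time --- and prepares arbitrarily many independent copies of $\sigma$. Each copy passes verification with probability $1-o(1)$, so from the $\widetilde{O}(dm^4)$ legitimate bills consumed during the learning phase, the counterfeiter manufactures unboundedly many fresh bills, violating unforgeability. All measurements in the learning phase are performed by the counterfeiter on its own copies of the bill, and the family $\{E_k\}$ is computed from the known verification circuit ranging over all keys; hence no interaction with the bank is required, as claimed.

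The main obstacle is the shadow-tomography step itself, together with the accounting that yields the precise polynomial $\widetilde{O}(dm^4)$: the procedure must estimate exponentially many ($2^m$) overlaps while consuming only a number of copies polynomial in $m$ and $d$, which it achieves through gentle-measurement and threshold-search machinery that avoids collapsing the state on the overwhelming majority of measurements where the current hypothesis is already accurate. The conceptual point that makes this applicable to money --- rather than merely to estimation --- is that shadow tomography does not just report numbers but internally converges to a concrete state $\sigma$, and that this $\sigma$, being a fully specified classical density operator, is trivially clonable by an unbounded counterfeiter. A secondary subtlety I would need to verify is that the multi-round classical-channel verification of a $\CVPRM$ genuinely collapses to a fixed two-outcome measurement $E_k$ per key once the counterfeiter commits to the honest user strategy; this is exactly where the restriction to classical bank-to-user communication is used.
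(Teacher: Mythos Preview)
The paper does not prove this theorem at all; it is quoted verbatim from Aaronson~\cite{Aar20} and used as a black box to derive the analogous trade-off for $\Tmac$ (\cref{thm: impossibility of unconditionally secure TMAC scheme.}). Your reconstruction is a faithful sketch of Aaronson's original argument via shadow tomography, and the structure you outline---reduce to at most $2^m$ two-outcome measurements indexed by the key, invoke shadow tomography to obtain a hypothesis state $\sigma$ close on all of them, then clone the classical description of $\sigma$---is exactly the intended proof, so there is nothing to compare against within this paper.
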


The construction of $\CVPRM$ given in \cref{sec:quantum money from TMAC} implies that the same holds for $\Tmac$; otherwise, by \cref{prp:TMAC->CVPRM}, it could be used to construct unconditionally secure quantum money.
This proves \cref{thm: impossibility of unconditionally secure TMAC scheme.}.

\begin{theorem}[Trade-off Theorem for $\Tmac$]
\label{thm: impossibility of unconditionally secure TMAC scheme.}
Given any $\Tmac$
scheme, with $d$-qubit tokens and an $m$-bit secret key, a counterfeiter can produce additional tokens which would produce signatures that pass verification with $1-o(1)$ probability, given $\widetilde{O}(dm^4)$ legitimate tokens and 
$exp(d,m)$ computation time. No queries to the verification oracle are needed to produce these tokens.
\end{theorem}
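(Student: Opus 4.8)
The plan is to obtain this statement as a direct corollary of Aaronson's Trade-off Theorem for quantum money (\cref{thm: tradeoff for quantum money}) by passing through the $\Tmac$-to-$\CVPRM$ construction $\MM$ described in \cref{sec:quantum money from TMAC}. Given an arbitrary $\Tmac$ scheme $\TM$ with $d$-qubit tokens and an $m$-bit secret key, I would first instantiate $\MM$ exactly as prescribed: the bank's key is produced by $\TM.\keygen$, a bill is minted as $\ket{\$}\gets\TM.\tokengen_k$, and verification of a bill consists of the bank sampling a uniformly random document $m\in\Bitspace^\secpar$, requesting $\sigma\gets\TM.\Sign_{\ket{\$}}(m)$ from the holder, and accepting iff $\TM.\Vrfy_k(m,\sigma)=1$. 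The crucial bookkeeping observation is that the bills of $\MM$ are \emph{literally} tokens of $\TM$, so $\MM$ is a $\CVPRM$ with $d$-qubit bills and an $m$-bit secret key, matching the hypothesis of \cref{thm: tradeoff for quantum money} verbatim. Note that no security assumption on $\TM$ is invoked here; only the (trivial) correctness of $\MM$ is used.

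Next I would apply \cref{thm: tradeoff for quantum money} to $\MM$, which is legitimate since that theorem applies to any private-key quantum money scheme, and in particular to a $\CVPRM$. It yields a counterfeiter that, given $\widetilde{O}(dm^4)$ legitimate bills and $\exp(d,m)$ computation time, and \emph{without} any interaction with the bank, produces additional bills passing the $\MM$-verification protocol with probability $1-o(1)$. Transporting this back through the construction gives exactly the claimed conclusion: the legitimate bills are legitimate $\TM$ tokens, so $\widetilde{O}(dm^4)$ of them suffice; a freshly produced bill passing $\MM$-verification means precisely that the corresponding new token, asked to sign a freshly sampled document, outputs a signature accepted by $\TM.\Vrfy_k$ with probability $1-o(1)$; and "no queries to the bank" translates directly to "no queries to the verification oracle", since the bank's only role in $\MM$ is minting (already supplied as the legitimate tokens) and running $\TM.\Vrfy$.

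The reduction is essentially a faithful re-labeling, and I do not expect any substantive obstacle, as all the heavy lifting — both the money construction (\cref{prp:TMAC->CVPRM}) and the counterfeiting bound (\cref{thm: tradeoff for quantum money}) — is already in hand. The one point I would state carefully is the meaning of "passes verification" on the two sides: Aaronson's bound refers to the \emph{interactive} money-verification protocol, whose only randomness is the bank's uniformly random challenge document, whereas the $\Tmac$ conclusion is phrased in terms of the token producing an accepted signature. These coincide by construction, and I would make explicit that the $1-o(1)$ probability is taken jointly over the counterfeiter's internal randomness and the random challenge document, so that it agrees with the probability quantified in \cref{thm: tradeoff for quantum money}.
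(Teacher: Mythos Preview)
Your proposal is correct and follows essentially the same approach as the paper: instantiate the $\Tmac$-to-$\CVPRM$ construction $\MM$, apply Aaronson's trade-off theorem to it, and read off the conclusion via the identification of bills with tokens. The paper phrases this as a one-line contrapositive (an unconditionally secure $\Tmac$ would yield, via \cref{prp:TMAC->CVPRM}, an unconditionally secure $\CVPRM$, contradicting \cref{thm: tradeoff for quantum money}), but your direct unpacking is the same argument with the details made explicit.
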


\subsubsection{Temporarily Memory-Dependent  Non-Interactive Quantum Money}\label{sec: Temporarily Memory-Dependent  Non-Interactive Quantum Money}

While the construction above (\cref{sec:quantum money from TMAC}) is interesting from a theoretical perspective, it does not currently offer anything not already guaranteed by other existing private money schemes based on simple states~\cite{MVW12,PYJ+12,Gav12}. On the other hand, it clearly does not make full use of the power of the $\Tmac$: A $\Tmac$ guarantees unforgeability as long as the documents are not repeated, regardless of them being chosen at random.

Using the above property, \cite{BS16a} devised an alternative non-interactive verification procedure at the expense of losing the statelessness of the scheme: If Alice holds a quantum bill,
one thing she can do is spend it the usual way. However, an alternative thing she can do with the
bill is to use it to sign a document. Such a signature will necessarily consume the bill and, thus, it can
be used as proof that Alice has burned her bill, essentially creating a classical check.

To prevent double spending, the bank would only accept checks that sign a document consisting of the current time and date. The bank would have to keep a database of the cashed checks to prevent double-spending; however, this database could be time-limited. We could keep a short "time frame" during which the bank would need to keep a database of checks received, but it could clear that database at the end of a time frame. The bank can choose simply not to accept any checks whose date has "expired", i.e., the time stamp on the check does not fall within the current time frame.

If there are multiple bank branches, Alice could try to cash the check twice at different bank branches. In order to prevent this, the bank can require the signed document to list the bank branch number as well.

\fi
\section{Discussion and Open Questions}
\label{sec:open questions}

In this work, we proved that post-quantum one-way functions imply $\Tmac$ schemes. 
It is natural to ask if the converse holds in the quantum setting.  
\begin{openproblem}\label{op:tmac-implies-one-way-functions}
Does $\Tmac$ imply post-quantum one-way functions? 
\end{openproblem}

Note that in the classical setting, we know that $\mac$s exist if and only if one-way functions exist. 
A standard construction of one-way functions from $\mac$~\cite[page 592, Exercise 7]{Gol04} is based on a clever engineering of the classical randomness involved in the signing algorithm. It is not clear whether this approach can be applied in the quantum setting. 

We presented a simple construction for an $\UnfDef^{\tokengen,\Vrfy,\widetilde{\Sign}}$  $\Tmac$ based on simple tensor product states. However, the simplicity of the construction came at the expense of some of the stronger properties that the construction in~\cite[Appendix B.1, Section 6]{BS16a} achieves. One such property is security against adversaries that have quantum access to the verification oracle.
An interesting question would be to ask whether a scheme based on BB84 states, or similar "simple" states, could satisfy this stronger notion.

\begin{openproblem}\label{op:Strong Mac from simple states}
Is there a $\Tmac$ construction based on simple states that satisfies unforgeability, even against adversaries with quantum access to the verification oracle? 
\end{openproblem}
Note that an $\UnfDef^{\tokengen,\ket{\Vrfy},\widetilde{\Sign}}$ (see \cref{sec:A Quantum Superposition Attack}) $ \Tmac$, i.e., a $\Tmac$ secure against adversaries with quantum access to the verification oracle, could potentially result in a secure tokenized signature scheme if augmented with some form of obfuscation~\cite{BGIRSVY12}. This would immediately give rise to a public quantum money construction.  A public scheme based on product states seems to be a difficult task even with a quantum-queries secure $\Tmac$ scheme and obfuscation. A result on quantum state restoration~\cite{FGH+10} proves this is impossible to do with a rank-$1$ projection as the verification procedure. However, it is not known if a similar result applies for higher dimension projections.

Another interesting feature that can be added to a $\Tmac$ is to have classical token-generation, meaning, Alice instead of creating a quantum signing token and giving it to Bob would run an interactive protocol with Bob using classical communication, such that an honest Bob would have a quantum signing token at the end of the protocol.
\begin{openproblem}\label{op:Classical minting TMAC}
Is there a $\Tmac$ construction that allows classical token-generation? 
\end{openproblem}
A natural candidate for such a scheme is a simple variant on semi-quantum money~\cite{RS20}, which is a money scheme with both classical verification and classical minting. In analogy to how $\CTMAC$ signs the single bit message $b$ by "responding" to the challenge $b^\secpar$ in a  Wiesner's money scheme, we can devise a $\Tmac$ scheme that signs the single bit message $b$ by "responding" to the challenge $b^\secpar$ in semi-quantum money. Security against an adversary with no access to verification follows from the security of the money scheme, but we have not been unable to either prove or disprove the security against an adversary that has access to verification.

\ifnum\masterthesis=0
\ifnum\anon=0

    \subsection*{Acknowledgments}
We wish to thank Anne Broadbent for discussions related to \cref{sec:One-Time-Memories from Stateless Hardware}.
We also wish to thank Amos Beimel, Shalev Ben-David and Roy Radian for their comments. This work was supported by the Israel Science Foundation (ISF) grant No. 682/18 and 2137/19 and
by the Cyber Security Research Center at Ben-Gurion University.

The icon $\Stamp$ was downloaded from \url{http://icons8.com}, and is licensed under Creative Commons Attribution-NoDerivs 3.0 Unported.
\fi
\fi

\ifnum\sigconf=1
    \bibliographystyle{ACM-Reference-Format}
\else
    \ifnum\cryptology=1
        \bibliographystyle{abbrv}
    \else
        \ifnum \llncs=1
            \bibliographystyle{splncs04}
        \else
            \bibliographystyle{alphaabbrurldoieprint}
        \fi
    \fi
\fi

\ifnum\masterthesis=0
    \bibliography{main}
\fi
\appendix

\ifnum\llncs=1

\chapter*{Supplementary Material}
\fi

\ifnum\llncs=1

\section{Ancillary Proofs}\label{sec:extra proofs}

\subsection{Noisy to Noiseless Reduction Proof}
\label{sec:Noisy to Noiseless Reduction proof}
\begin{proof}[Proof of \cref{lem:noise reduction lemma}]

\end{proof}
\ifnum\llncs=0
\section{Detailed Analysis of Certified Deletion Scheme}\label{sec: analysis of weak certified deletion scheme}
\else
\subsection{Detailed Analysis of Certified Deletion Scheme}\label{sec: analysis of weak certified deletion scheme}
\fi

For convenience, the definitions for the schemes $E_m$ and the security game for weak certified deletion are brought here again.

\begin{algorithm*}
    \caption{The $\QECD$ $E_m$ - A $\QECD$ scheme (for $m\in \Bitspace$)}
    \label{alg: E_m (rep)}
    \begin{algorithmic}[1] 
        \Procedure{$\keygen$}{$1^\secpar$}
                 \State $k\sample\Bitspace^\secpar$
            \State \textbf{Return} $k$.
        \EndProcedure
    \end{algorithmic}
    \begin{algorithmic}[1] 
        \Procedure{$\Enc_k$}{$a$}
            \State \textbf{Return} $H^k(\ket{a})$.
        \EndProcedure

    \end{algorithmic}
	\begin{algorithmic}[1] 
        \Procedure{$\Dec_k$}{$\ket{c}$}
            \State Compute $H^k\ket{c}$ to obtain $\ket{cc}$
            \State Measure $\ket{cc}$ to obtain $a'$
            \State \textbf{Return} $a'$.
                \EndProcedure

    \end{algorithmic}
  	\begin{algorithmic}[1] 
        \Procedure{$\Del$}{$\ket{c}$}
            \State Compute $(H^m)^{\otimes\secpar}\ket{c}$ to obtain $\ket{cc}$
            \State Measure $\ket{cc}$ to obtain $cer$
            \State \textbf{Return} $cer$.
            \EndProcedure
    \end{algorithmic}
      	\begin{algorithmic}[1] 
        \Procedure{$\Vrfy_k$}{$cer$}
            \State Define $\Cons_m=\{i\in [\secpar]| k_i=m\}$.
            \If{$cer=a|_{\Cons_m}$}
            \State \textbf{Return} $1$.
            \Else
                \State \textbf{Return} $0$.
            \EndIf
            \EndProcedure
    \end{algorithmic}
\end{algorithm*}

\begin{savenotes}
\begin{game}
\caption{Weak Certified Deletion Game $\exWCerDel_{\adv,\Pi}(\secpar)$:}\label{exp: weak certified deletion (rep)}  
\begin{algorithmic}[1]
    \State The challenger $\Cadv$ runs $\keygen(1^\secpar)$ to generate $k$, and uniformly samples $a\in\Bitspace^\secpar$. $\Cadv$ then sends $\ket{c}=\Enc_k(a)$.
   \State $\adv$ sends $\Cadv$ some string $cer$.
    \State $\Cadv$ computes  $V\gets\Vrfy_k(cer)$, and  then sends $k$ to the adversary.
    \State $\adv$ outputs $a'$.
\end{algorithmic}
 We say the output of the game is 1 if and only if $V=1$ and $a=a'$.
\end{game}
\end{savenotes}

Presented ahead is one possible strategy for attacking the scheme. The main idea is that if the adversary could know both a valid result of a measurement of the token by the standard basis and a measurement in the Hadamard basis, then it could win the security game by submitting first the result of a measurement in the Hadamard basis as a certificate, and upon receiving the key, output the measurement corresponding to it as $a'$. Therefore, an adversary may act in the following manner: when receiving $\ket{\Stamp}$, it will create two strings
corresponding to guesses for measurement by both bases. After submitting the guessed result for a measurement by the standard basis, it will receive the key and answer accordingly in a deterministic manner. The question is then limited to computing the optimal probability for such an adversary to be correct in its guesses at the first stage of the protocol. This question is exactly the one discussed in~\cite[Section 4.2]{MVW12},  in the form of two distinct challenges sent by the bank to an adversary having a single quantum money bill. The optimal probability achieved by~\cite{MVW12} 
 is $\cos^2(\frac{\pi}{8})\approx0.85355$.

It can be concluded that there is an adversary winning $\exWCerDel_{\adv,E_1}(1)$ with a probability of at least $\cos^2(\frac{\pi}{8})$. A priori, it is not clear why this is the best the adversary can do. It seems reasonable that an adversary might use a more complex strategy to improve their chances instead of guessing ahead at the first stage of the protocol. We show that, in fact, this is an optimal strategy using semi-definite programming techniques.

\ifnum\llncs=0
\subsection{Semi-Definite Programming}
\else
\subsubsection{Semi-Definite Programming}
\fi

This section discusses standard notions and results of semi-definite programming. Our analysis in this section is similar to the one in Ref.~\cite{MVW12}, and we follow the conventions therein. This preliminary section is taken verbatim from~\cite{MVW12}.

Semi-definite programming is a topic that has found several interesting applications within quantum computing and quantum information theory in recent years. Here, we provide just a brief summary of semi-definite programming that is focused on the narrow aspects that we use. More comprehensive discussions can, for instance, be found in~\cite{VB96,Lov03,Dk02,BV04}. We first cover some notations:

For any finite-dimensional complex Hilbert space $\mathcal{X}$, we write
$L(\mathcal{X})$ to denote the set of linear operators acting on $\mathcal{X}$, 
 $\Herm(\mathcal{X})$ to denote the set of Hermitian operators acting
on $\mathcal{X}$, $\Pos(\mathcal{X})$ to denote the set of positive
semi-definite operators acting on $\mathcal{X}$, and $\Pd(\mathcal{X})$ to denote
the set of positive-definite operators acting on $\mathcal{X}$. For every $A,B\in \Herm(\mathcal{X})$, the notation $A\succeq B$ indicates that $A - B$  is positive semi-definite. $ D(\mathcal{X})$  denote the set of density operators acting on $\mathcal{X}$.

Given operators $A,B\in L(\mathcal{X})$, one defines the inner product
between $A$ and $B$ as $\langle A,B\rangle = \Tr(A^{\ast}B)$.
For Hermitian operators $A,B\in \Herm(\mathcal{X})$, it holds that
$\langle A,B\rangle$ is a real number and satisfies $\langle A,B\rangle = \langle B,A\rangle$.
For every choice of finite-dimensional complex Hilbert spaces $\mathcal{X}$ and
$\mathcal{Y}$, and for a given linear mapping of the form
$\Phi: L(\mathcal{X})\rightarrow L(\mathcal{Y})$, there is a unique mapping
$\Phi^{\ast}:L(\mathcal{Y})\rightarrow L(\mathcal{X})$ (known as the \emph{adjoint}
of $\Phi$) that satisfies
$\langle Y,\Phi(X)\rangle = \langle \Phi^{\ast}(Y),X\rangle$ for all $X\in L(\mathcal{X})$ and
$Y\in L(\mathcal{Y})$.

A semi-definite program is a triple $(\Phi,A,B)$, where
\begin{itemize}
\item
$\Phi: L(\mathcal{X}) \rightarrow L(\mathcal{Y})$ is a Hermiticity-preserving
  linear mapping, and
\item $A\in \Herm(\mathcal{X})$ and $B\in\Herm(\mathcal{Y})$ are Hermitian operators,
\end{itemize}
for some choice of finite-dimensional complex Hilbert spaces $\mathcal{X}$ and $\mathcal{Y}$.
We associate with the triple $(\Phi,A,B)$ two optimization problems,
called the \emph{primal} and \emph{dual} problems, as follows:
\begin{center}
\begin{minipage}{0.5\textwidth}
  \centerline{\underline{Primal problem}}\vspace{-6mm}
    \begin{align}
      \text{maximize:}\;\; & \langle A,X\rangle\\
      \text{subject to:}\;\; & \Phi(X) = B,\\
      & X\in \Pos(\mathcal{X})
    \end{align}
\end{minipage}\hspace*{5mm}
\begin{minipage}{0.4\textwidth}
  \centerline{\underline{Dual problem}}\vspace{-6mm}
    \begin{align}
      \text{minimize:}\;\; & \langle B,Y\rangle \\
      \text{subject to:}\;\; & \Phi^{\ast}(Y) \succeq A,\\
      & Y\in \Herm(\mathcal{Y}).
      \label{eq:recipe_dual}
    \end{align}
  \end{minipage}
\end{center}

The optimal primal value of this semi-definite program is
\[
\alpha = \sup\{\langle A,X\rangle \,:\,X\in \Pos(\mathcal{X}),\,\Phi(X) = B\},
\]
and the optimal dual value is
\[
\beta = \inf\{\langle B,Y\rangle \,:\,Y \in \Herm(\mathcal{Y}),\,\Phi^{\ast}(Y) \succeq A\}.
\]
(It is to be understood that the supremum over an empty set is
$-\infty$ and the infimum over an empty set is $\infty$, so $\alpha$
and $\beta$ are well-defined values in 
$\mathbb{R}\cup\{-\infty,\infty\}$.
In this paper, however, we will only consider semi-definite programs for
which $\alpha$ and $\beta$ are finite.)

It always holds that $\alpha \leq \beta$, which is a fact known as
\emph{weak duality}.
The condition $\alpha = \beta$, which is known as 
\emph{strong duality}, does not hold for every semi-definite program,
but there are simple conditions known under which it does hold.
The following theorem provides one such condition (that has both a
primal and dual form).
\begin{restatable}[Slater's theorem for semi-definite programs]{theorem}{slater}
  \label{theorem:Slater}
Let $(\Phi,A,B)$ be a semi-definite program and let $\alpha$ and
$\beta$ be its optimal primal and dual values.
\begin{itemize}
\item
  If $\beta$ is finite and there exists a positive definite operator
  $X\in \Pd(\mathcal{X})$ for which $\Phi(X) = B$,
  then $\alpha = \beta$ and there exists an operator $Y\in \Herm(\mathcal{Y})$
  such that $\Phi^{\ast}(Y)\succeq A$ and $\langle B,Y\rangle  = \beta$.
\item
  If $\alpha$ is finite and there exists a Hermitian operator
  $Y\in \Herm(\mathcal{Y})$ for which $\Phi^{\ast}(Y) > A$,
  then $\alpha = \beta$ and there exists a positive semi-definite 
  operator $X\in \Pos(\mathcal{X})$ such that $\Phi(X)=B$ and 
  $\langle A,X\rangle  = \alpha$.
\end{itemize}
\end{restatable}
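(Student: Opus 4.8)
The plan is to prove \emph{weak duality} first and then upgrade it to \emph{strong duality} together with attainment by a separating–hyperplane argument, deriving the two bullets from one another by primal–dual symmetry. For weak duality, take any primal-feasible $X \in \Pos(\mathcal{X})$ (so $\Phi(X) = B$) and any dual-feasible $Y \in \Herm(\mathcal{Y})$ (so $\Phi^{\ast}(Y) \succeq A$), and compute
\[
\langle A, X\rangle \;\le\; \langle \Phi^{\ast}(Y), X\rangle \;=\; \langle Y, \Phi(X)\rangle \;=\; \langle Y, B\rangle \;=\; \langle B, Y\rangle,
\]
where the inequality uses $\Phi^{\ast}(Y) - A \succeq 0$ and $X \succeq 0$, and the middle equality is the defining property of the adjoint. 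Taking the supremum over $X$ and the infimum over $Y$ gives $\alpha \le \beta$. This computation also shows that any feasible pair whose values coincide is simultaneously optimal, so once I exhibit a dual-feasible $Y$ with $\langle B, Y\rangle = \alpha$, both the equality $\alpha = \beta$ and the attainment claim of the first bullet follow.

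For strong duality under the primal Slater condition (first bullet), I would work in the space $\Herm(\mathcal{Y}) \times \mathbb{R}$ and form the convex cone
\[
W \;=\; \bigl\{\, (\Phi(X),\, \langle A, X\rangle - s) \;:\; X \in \Pos(\mathcal{X}),\ s \ge 0 \,\bigr\}.
\]
For each $\epsilon > 0$ the point $(B, \alpha + \epsilon)$ lies outside $W$, since a representation of it would give a feasible $X$ with objective value exceeding $\alpha$. Separating this point from $W$ yields a nonzero functional $(Y, \mu)$ with $\langle Y, \Phi(X)\rangle + \mu(\langle A, X\rangle - s) \le \langle Y, B\rangle + \mu(\alpha + \epsilon)$ for all $X \succeq 0$, $s \ge 0$. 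Letting $s \to \infty$ forces $\mu \ge 0$, and homogeneity of the cone forces $\langle \Phi^{\ast}(Y) + \mu A, X\rangle \le 0$ for all $X \succeq 0$, i.e. $\Phi^{\ast}(-Y) \succeq \mu A$. The decisive step is to rule out $\mu = 0$: here I would invoke the strictly feasible witness $X_0 \in \Pd(\mathcal{X})$ with $\Phi(X_0) = B$, deducing $\Phi^{\ast}(Y) \preceq 0$, then $\langle \Phi^{\ast}(Y), X_0\rangle = 0$, and finally $\Phi^{\ast}(Y) = 0$ (a negative-semidefinite operator with zero pairing against a \emph{positive-definite} $X_0$ must vanish) together with $\langle Y, B\rangle = 0$, which rules out a useful separator; strict separation makes this collapse into an outright contradiction. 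With $\mu > 0$ I rescale $\hat Y = -Y/\mu$, which is dual-feasible by $\Phi^{\ast}(\hat Y) \succeq A$, and the separation inequality evaluated at $X = 0$ gives $\langle B, \hat Y\rangle \le \alpha + \epsilon$. Letting $\epsilon \to 0$ and combining with weak duality yields $\alpha = \beta$, and a closedness/compactness argument on the dual feasible region promotes this to attainment by an actual $Y$.

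For the dual Slater condition (second bullet), I would recast the dual problem in standard primal form — introduce a slack $Z = \Phi^{\ast}(Y) - A \succeq 0$ and view the minimization of $\langle B, Y\rangle$ subject to $\Phi^{\ast}(Y) - Z = A$ as a primal SDP — so that the strict dual feasibility $\Phi^{\ast}(Y) > A$ translates into strict \emph{primal} feasibility of this reformulation, and then apply the first bullet verbatim. The main obstacle I anticipate is precisely the non-vertical-hyperplane issue: making fully rigorous that Slater's strict feasibility forces $\mu > 0$, and separately that the optimum is \emph{attained} rather than merely approached. Both are exactly the points where positive-definiteness (interior-point behavior) of the Slater witness is indispensable, and I would take care that without it the separating hyperplane could degenerate into a vertical one carrying no dual certificate.
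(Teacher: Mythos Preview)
The paper does not prove this theorem. Slater's theorem is quoted in the preliminaries section on semi-definite programming, which the authors state is ``taken verbatim from~\cite{MVW12}'' and refer to standard references~\cite{VB96,Lov03,Dk02,BV04} for proofs; no argument is given in the paper itself.

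Your sketch is the standard separating-hyperplane proof and is essentially correct in outline. A couple of places deserve more care if you flesh it out. First, to separate $(B,\alpha+\epsilon)$ from $W$ you need a separation theorem that applies to a point outside a convex set that need not be closed; you then obtain a family of separators indexed by $\epsilon$ and must pass to a limit, which is where the compactness you allude to is actually used. Second, in ruling out $\mu=0$ your chain ``$\Phi^{\ast}(Y)\preceq 0$, $\langle \Phi^{\ast}(Y),X_0\rangle=0$, hence $\Phi^{\ast}(Y)=0$'' is fine, but the step from there to a contradiction needs the separator to be nonzero on the $(B,\alpha+\epsilon)$ side; make explicit that finiteness of $\beta$ (hence dual feasibility) is what prevents the degenerate separator. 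These are routine but should be spelled out.
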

In other words, the first item of this theorem states that if the dual
problem is feasible and the primal problem is 
\emph{strictly feasible}, then strong duality holds and the optimal
dual solution is achievable.
The second item is similar, with the roles of the primal and dual
problems reversed.

\ifnum\llncs=0
\subsection{SDP Formulation of a 1-Fold Weak Certified Deletion Scheme}\label{sec:SDP formulation of one fold Wiesner's encryption weak certified deletion forger}
\else
\subsubsection{SDP Formulation of a 1-Fold Weak Certified Deletion Scheme}\label{sec:SDP formulation of one fold Wiesner's encryption weak certified deletion forger}
\fi

We begin by addressing the specific security game $\exWCerDel_{\adv,E_1}(1)$. That is, $m=1$, and since $\lambda=1$, the message, the key, and the certificate are all only a single qubit.

Let $\mathcal{X}_{a},\mathcal{X}_{k}$ represent the single-qubit registers designating the  message $a$ and the key $k$. Also, let $\mathcal{X}_{m_1},\mathcal{X}_{m_2}$ be registers used to transport messages, back and forth. As during the protocol, the key is sent to the adversary, we initialize $\mathcal{X}_{m_2}$ as a copy of the key, and $\mathcal{X}_{m_1}$ as the encrypted state itself. Finally, we pad with $\mathcal{X}_{p}$ to get a purified state. For ease of notation, we will also denote $\overline{\mathcal{X}}_{\ast}$ when referring to all register but the one appearing in the subscript.

Denote \[\mathcal{X}= \mathcal{X}_{p}\otimes\mathcal{X}_{m_1} \otimes \mathcal{X}_{m_2}\otimes  \mathcal{X}_{k}\otimes \mathcal{X}_{a},  \]
\[\ket{\psi}=\frac{1}{2}(\ket{00000}+\ket{11001}+\ket{0+110}+\ket{1-111})\in \mathcal{X}, \]
  \[\sigma_0=\ketbra{\psi}.\]
  For any adversary, the security game can be described as follows: at the first stage, the challenger in the hold of $\sigma_0$, sends $\mathcal{X}_{m_1}$ to the adversary. The adversary then applies some quantum channel $Q_1$ on its part of the system and sends back the register $\mathcal{X}_{m_1}$ with its guess of the certificate. The challenger responds by sending over $\mathcal{X}_{m_2}$, which holds a copy of the key. While the challenger should have measured the certificate first, by the deferred measurement principle, it is clear that the outcome of the game is the same as if it were not to be touched, and measured at the end of the protocol. The adversary then applies a second quantum channel $Q_2$, and sends back $\mathcal{X}_{m_2}$ with a guess for $a'$. In order to verify, the challenger checks its registers and verifies that the results are proper. This verification could be instantiated by the following positive operator valued measurement:
\begin{align}
    \Pi =& I_{\mathcal{X}_p}\tensor(\ketbra{0000}+\ketbra{1000}+\ketbra{0101}\\+&\ketbra{1101}+\ketbra{0010}+\ketbra{1111}).
\end{align}

We denote $\rho_1,\rho_2$ respectively as the state of the system after the adversary sends the guess for the certificate and the state of the system when the adversary sends back a guess for $a'$. Since $\rho_1,\rho_2$ must be quantum systems, their trace must be $1$, and since $Q_1,Q_2$ cannot affect the subsystems in possession of the challenger, it is clear that the following constraints are also a necessary condition:
\[\begin{aligned}
\Tr_{\mathcal{X}_{m_1}}(\rho_1) =\Tr_{\mathcal{X}_{m_1} } (\sigma_0), \\ \Tr_{ \mathcal{X}_{m_2}}(\rho_2) = \Tr_{ \mathcal{X}_{m_2}}(\rho_1).
\end{aligned}\]
While not initially clear, these $4$ constraints suffice to  describe the game fully. That is, for any $\rho_1$ and $\rho_2$ fulfilling this, there exist quantum channels $Q_1,Q_2$ operating only on the adversary's subsystem such that $Q_1(\sigma_0)=\rho_1$ and $Q_2(\rho_1)=\rho_2$. To see this, consider the following theorem.

\begin{theorem}[{Freedom in Purifications,~\cite[Exercise 2.81]{NC11}}] 
\label{thm:freedom in purification} Let $\ket{\psi}$ and $\ket{\phi}$ be two
purifications of a state $\rho$ to a composite system $\mathcal{A\otimes B}$. There exists a
unitary transformation $U$ acting on system $\mathcal{B}$ such that:
\[\ket{\phi}= (I_A \otimes U) \ket{\psi}\] 
\end{theorem}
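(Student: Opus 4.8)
The plan is to derive the statement from the Schmidt decomposition, exploiting the rigidity of purifications written in a fixed eigenbasis of $\rho$. First I would fix a spectral decomposition $\rho=\sum_k p_k\ketbra{k}$, where $\{\ket{k}\}$ is an orthonormal eigenbasis of $\mathcal{A}$ and the eigenvalues satisfy $p_k\geq 0$. The whole argument then rests on a single normal-form lemma: every purification of $\rho$ to $\mathcal{A}\otimes\mathcal{B}$ can be written as $\sum_{k:\,p_k>0}\sqrt{p_k}\,\ket{k}\tensor\ket{e_k}$ for some \emph{orthonormal} collection $\{\ket{e_k}\}$ in $\mathcal{B}$, so that the eigenstructure of $\rho$ is fixed and only the partner vectors in $\mathcal{B}$ are free.

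To prove this lemma I would expand an arbitrary purification in the chosen $\mathcal{A}$-basis, writing $\ket{\psi}=\sum_k \ket{k}\tensor\ket{c_k}$ with the $\ket{c_k}$ unnormalized vectors of $\mathcal{B}$. Computing $\Tr_{\mathcal{B}}(\ketbra{\psi})=\sum_{k,l}\braket{c_l}{c_k}\,\ket{k}\bra{l}$ and matching it against $\rho=\sum_k p_k\ketbra{k}$ forces $\braket{c_l}{c_k}=p_k\,\delta_{kl}$. Hence $\ket{c_k}=\sqrt{p_k}\,\ket{e_k}$ with $\{\ket{e_k}\}_{p_k>0}$ orthonormal, while $\ket{c_k}=0$ whenever $p_k=0$, which is exactly the claimed normal form.

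Applying the lemma to both purifications gives $\ket{\psi}=\sum_{k:\,p_k>0}\sqrt{p_k}\,\ket{k}\tensor\ket{e_k}$ and $\ket{\phi}=\sum_{k:\,p_k>0}\sqrt{p_k}\,\ket{k}\tensor\ket{f_k}$, where crucially \emph{the same} coefficients $\sqrt{p_k}$ and the same $\mathcal{A}$-vectors $\ket{k}$ appear, since both reductions equal $\rho$. I would then define $U$ on $\mathcal{B}$ by $U\ket{e_k}=\ket{f_k}$; because $\{\ket{e_k}\}$ and $\{\ket{f_k}\}$ are orthonormal sets of equal cardinality, this is an isometry between the subspaces they span, and a direct computation gives $(I_A\tensor U)\ket{\psi}=\sum_k\sqrt{p_k}\,\ket{k}\tensor U\ket{e_k}=\sum_k\sqrt{p_k}\,\ket{k}\tensor\ket{f_k}=\ket{\phi}$.

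The one genuinely delicate point — the hard part — is upgrading this partial isometry to an honest \emph{unitary} on all of $\mathcal{B}$. I would handle it by extending $\{\ket{e_k}\}$ and $\{\ket{f_k}\}$ to full orthonormal bases of $\mathcal{B}$ and mapping one completed basis onto the other; this is possible precisely because both sets live inside the same space $\mathcal{B}$, so their orthogonal complements have equal dimension. A subtlety worth flagging is the treatment of vanishing eigenvalues, where the partner vectors are entirely unconstrained by $\rho$ and must simply be absorbed into this basis-completion step rather than pinned down by the trace condition.
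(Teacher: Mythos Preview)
Your argument is the standard and correct proof of the freedom-in-purifications theorem via the Schmidt decomposition, and the one delicate point you flag (extending the partial isometry to a full unitary on $\mathcal{B}$ by completing both orthonormal sets to bases) is handled properly. There is nothing to compare against, however: the paper does not prove this statement at all but merely quotes it as a known fact from \cite[Exercise 2.81]{NC11} and immediately applies it to argue that the SDP constraints in \cref{sdp:1-fold} are not just necessary but sufficient.
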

Let $c_1\in D(\mathcal{X}\otimes \mathcal{C})$ be a purification of $\rho_1$. As both $\sigma_0\otimes \ketbra{0}_\mathcal{C}$ and $c_1 $, are purifications of $\Tr_{\mathcal{X}_{m_1}}(\sigma_0)$, by \cref{thm:freedom in purification} there is a unitary $U$ operating only on $\mathcal{X}_{m_1}\otimes \mathcal{C}$ such that $I_{\overline{\mathcal{X}}_{m_1}}\otimes U (\sigma_0\otimes \ketbra{0}_\mathcal{C})=c_1$. Dropping the extra registers, this gives a desired $Q_1(\sigma_0)=\rho_1$. The same argument applies for the existence of a channel $Q'_2$ only acting on the adversary's subsystem such that $Q'_2(c_1)=\rho_2$, perhaps demanding an even larger space. Throwing away the extra subsystem again provides $Q_2$.
We can now maximize over the probability of success:

 \vspace{2mm}
\begin{center}
\begin{minipage}{0.5\textwidth}
  \centerline{\underline{Primal problem}}\vspace{-6mm}
  \begin{align}\label{sdp:1-fold} 
    \text{maximize:}\;\; & \langle \Pi,\rho_2\rangle\\
    \text{subject to:}\;\; & \Tr_{\mathcal{X}_{m_1}}(\rho_1) = \Tr_{\mathcal{X}_{m_1} } (\sigma_0) \\ & \Tr_{ \mathcal{X}_{m_2}}(\rho_2) = \Tr_{ \mathcal{X}_{m_2}} (\rho_1) \\ &
    \Tr(\rho_1)=\Tr(\rho_2)=1\\ &
    \rho_1,\rho_2\in \Pos(\mathcal{X}) \\ &
\end{align}
\end{minipage}\hspace*{5mm}
\end{center}
In order to present this $\SDP$ in the standard form, we combine the two variables into a single variable by adding an auxiliary space $\mathcal{Z}\equiv\mathbb{C}^2$. We can then identify $\rho_1$ with  $\Tr_{\mathcal{Z}}\big((\ketbra{0}\otimes I_\mathcal{X})\rho(\ketbra{0}\otimes I_\mathcal{X} )\big)$ and $\rho_2$ with  $\Tr_{\mathcal{Z}}\big((\ketbra{1}\otimes I_\mathcal{X})\rho(\ketbra{1}\otimes I_\mathcal{X})\big)$.

\begin{center}
\begin{minipage}{0.5\textwidth}
  \centerline{\underline{Primal problem}}\vspace{-6mm}
  \begin{align}\label{sdp:rewrite1}
    \text{maximize:}\;\; & \langle \ketbra{1}\otimes\Pi,\rho\rangle\\
    \text{mark:}\;\; &
    \Gamma_0\equiv\ketbra{0}\otimes I_\mathcal{X}\\ &
    \Gamma_1\equiv\ketbra{1}\otimes I_\mathcal{X}\\ &
    \rho_1\equiv\Tr_{\mathcal{Z}}(\Gamma_0\rho\Gamma_0)\\ &
        \rho_2\equiv\Tr_{\mathcal{Z}}(\Gamma_1\rho\Gamma_1)\\ 
       \text{subject to:}\;\; & \Tr_{ \mathcal{X}_{m_1}}(\rho_1) = \Tr_{\mathcal{X}_{m_1} } (\sigma_0) \\ & \Tr_{ \mathcal{X}_{m_2}}(\rho_1) -\Tr_{ \mathcal{X}_{m_2}}(\rho_2)=0 \\ &
    \Tr(\rho_2)=1 \\ &
    \Tr(\rho_1)=1 \\ &
    \rho\in  \Pos(\mathcal{Z}\otimes \mathcal{X}) 
    \label{eq:primal_problem_ugly}
\end{align}

  \end{minipage}
\end{center}
We further simplify notations in  order to succinctly present the dual: we rewrite our constraints as a linear mapping
\[\Phi:L(\mathcal{Z\otimes X})\rightarrow L(\overline{\mathcal{X}}_{m_1}\oplus \overline{\mathcal{X}}_{m_2}\oplus \mathbb{C}\oplus \mathbb{C}),\] 
defined by
\[\phi(\rho)=\phi_1(\rho)\oplus\phi_2(\rho)\oplus\phi_3(\rho)\oplus\phi_4(\rho),\]

for
\[\phi_1(\rho)=\Tr_{\mathcal{X}_{m_1}, \mathcal{Z}}(\Gamma_0\rho\Gamma_0)\]
\[\phi_2(\rho)=\Tr_{\mathcal{X}_{m_2},\mathcal{Z}}(\Gamma_1\rho\Gamma_1-\Gamma_0\rho\Gamma_0)\]
\[\phi_3(\rho)=\Tr(\Tr_{Z}(\Gamma_0\rho\Gamma_0)).\]
\[\phi_4(\rho)=\Tr(\Tr_{Z}(\Gamma_1\rho\Gamma_1)).\]

We also denote $Q\equiv \ketbra{0}\otimes \Pi$ and notice $Q$ is positive semi-definite. 
For convenience, we also rename the left-hand side of the constraints: $R_1\equiv \sigma_0, R_2\equiv\overline{0}\in \overline{\mathcal{X}}_{m_2},R_3\equiv 1\in \mathbb{C}, R_4\equiv 1\in \mathbb{C}$, and $R\equiv R_1\oplus R_2\oplus R_3\oplus R_4$. Lastly, we define the space  $\mathcal{S}\equiv\overline{\mathcal{X}}_{m_1}\oplus \overline{\mathcal{X}}_{m_2}\oplus \mathbb{C}\oplus \mathbb{C}$.

\begin{center}
\begin{minipage}{0.5\textwidth}
  \centerline{\underline{Primal problem}}\vspace{-6mm}
  \begin{align}\label{sdp:rewrite2}
    \text{maximize:}\;\; & \langle  Q,\rho\rangle\\
    \text{subject to:}\;\; &  
    \Phi(\rho)=R,\\ &
    \rho\in \Pos({\mathcal{Z}\otimes \mathcal{X}}) \\ &
\end{align}
\end{minipage}\hspace*{5mm}
\begin{minipage}{0.4\textwidth}
  \centerline{\underline{Dual problem}}\vspace{-6mm}
    \begin{align}\label{sdp:rewrite2_dual}
      \text{minimize:}\;\; & \langle R, Y\rangle \\
      \text{subject to:}\;\; & \Phi^{\ast}(Y) \succeq Q,\\
      & Y\in \Herm(\mathcal{S})
    \end{align}
  \end{minipage}
\end{center}

\begin{remark}
\label{remark:sdp_value}
The value of SDP~\ref{sdp:rewrite2} for the primal problem is $\alpha$, where $\alpha= \cos^2(\frac{\pi}{8})$ is obtained numerically.\footnote{We used the CVX~\cite{GB08} package on Matlab, and the \href{https://www.dr-qubit.org/matlab.html}{Quantinf} package by Toby Cubbit, the numerical error was relatively large and differed between engines (of magnitude $10^{-5}$ on the "sedumi" solver and $10^{-3}$ on the "SDPT3" solver), leaving some room for doubt if $\alpha$ is exactly $\cos^2(\frac{\pi}{8})$, or a slightly larger value. Regardless, the exact value of $\alpha$ does not affect the results of this work, besides the precise amount of noise tolerance. \ifnum\anon=0 The Matlab source code is available at \url{https://arxiv.org/src/2105.05016/anc}.\fi } As the resulting solution is positive definite, by Slater's condition (\cref{theorem:Slater}), the optimal solution to the dual is also $\alpha$.
\end{remark}
\ifnum\llncs=0
\subsection{Parallel Repetition for Weak Certified Deletion Scheme}
\label{sec:parallel repetition}
\else
\subsubsection{Parallel Repetition for Weak Certified Deletion Scheme}
\label{sec:parallel repetition}
\fi
Let us now consider the security game $\exWCerDel_{\adv,E_1}(\secpar)$. This game can be thought of as the game of playing $\secpar$ copies of  $\exWCerDel_{\adv,E_1}(1)$  games simultaneously, where $\adv$ wins if and only if it wins in all copies. In particular, the preparation and verification of a  $\secpar$-qubit encrypted message and a certificate is, from the perspective of the challenger, equivalent to the independent preparation
and verification of $\secpar$ independent pairs of single-qubit encrypted messages and certificates. Hence, a successful decryption attack is equivalent to a successful decryption attack against all $\secpar$ of the single-qubit games.

It can be concluded from the semi-definite programming formulation above that an adversary gains no advantage whatsoever by correlating multiple qubits during an attack. The interested reader is encouraged to read~\cite{MS07} for a more comprehensive discussion on the conditions that this holds. In our game, the key factor is that the objective function $Q$ is positive semi-definite. We formulate a $\SDP$ for the $\exWCerDel_{\adv,E_1}(\secpar)$ game, which is the $\secpar$-fold repetitions of $\exWCerDel_{\adv,E_1}(1)$.

\begin{center}
\begin{minipage}{0.5\textwidth}
  \centerline{\underline{Primal problem}}\vspace{-6mm}
  \begin{align}\label{sdp:parallel} 
    \text{maximize:}\;\; & \langle Q^{\otimes\secpar},\rho\rangle\\
    \text{subject to:}\;\; &  
    \Phi^{\otimes\secpar}(\rho)=R^{\otimes\secpar},\\ &
    \rho\in \Pos((\mathcal{Z\otimes X})^{\otimes\secpar}) \\ &
\end{align}
\end{minipage}\hspace*{5mm}
\begin{minipage}{0.4\textwidth}
  \centerline{\underline{Dual problem}}\vspace{-6mm}
    \begin{align}
      \text{minimize:}\;\; & \langle R^{\otimes\secpar}, Y\rangle \\
      \text{subject to:}\;\; & (\Phi^{\otimes\secpar})^\ast(Y) \succeq Q^{\otimes\secpar},\\
      & Y\in \Herm(\mathcal{S}^{\otimes\secpar})
    \end{align}
  \end{minipage}
\end{center}

\begin{proposition}

The value of \cref{sdp:parallel} is $\alpha^\secpar$, where $\alpha= \cos^2(\frac{\pi}{8})$.
\label{prp:parallel-sdp-value}
\end{proposition}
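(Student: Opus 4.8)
The plan is to show that the value of the primal program in \cref{sdp:parallel} is simultaneously at least and at most $\alpha^\secpar$, in each case by tensoring an optimal solution of the single-fold program \cref{sdp:rewrite2}. By \cref{remark:sdp_value}, the single-fold primal and dual values both equal $\alpha$, and—since the displayed optimizer is positive definite—Slater's theorem (\cref{theorem:Slater}) guarantees that both optima are attained: there is a primal-feasible $\rho^\star\in\Pos(\mathcal{Z}\otimes\mathcal{X})$ with $\Phi(\rho^\star)=R$ and $\langle Q,\rho^\star\rangle=\alpha$, and a dual-feasible $Y\in\Herm(\mathcal{S})$ with $\Phi^\ast(Y)\succeq Q$ and $\langle R,Y\rangle=\alpha$.

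For the lower bound I would take $(\rho^\star)^{\otimes\secpar}$ as a candidate primal solution for \cref{sdp:parallel}. It is positive semi-definite, and since $\Phi^{\otimes\secpar}\big((\rho^\star)^{\otimes\secpar}\big)=\big(\Phi(\rho^\star)\big)^{\otimes\secpar}=R^{\otimes\secpar}$, it is feasible. Using multiplicativity of the Hilbert--Schmidt inner product under tensor products, its objective value is $\langle Q^{\otimes\secpar},(\rho^\star)^{\otimes\secpar}\rangle=\langle Q,\rho^\star\rangle^\secpar=\alpha^\secpar$, so the primal optimum is at least $\alpha^\secpar$.

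For the upper bound I would dualize, taking $Y^{\otimes\secpar}\in\Herm(\mathcal{S}^{\otimes\secpar})$ as the natural candidate. Since the adjoint of a tensor power is the tensor power of the adjoint, $(\Phi^{\otimes\secpar})^\ast(Y^{\otimes\secpar})=\big(\Phi^\ast(Y)\big)^{\otimes\secpar}$, so dual feasibility reduces to showing $\big(\Phi^\ast(Y)\big)^{\otimes\secpar}\succeq Q^{\otimes\secpar}$. The crucial input is that $Q=\ketbra{0}\otimes\Pi$ is positive semi-definite, hence $\Phi^\ast(Y)\succeq Q\succeq 0$, which lets me invoke tensor-power monotonicity of the L\"owner order: if $0\preceq B\preceq A$ then $B^{\otimes\secpar}\preceq A^{\otimes\secpar}$. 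This follows from the telescoping identity
\begin{equation}
A^{\otimes\secpar}-B^{\otimes\secpar}=\sum_{i=1}^{\secpar}A^{\otimes(i-1)}\otimes(A-B)\otimes B^{\otimes(\secpar-i)},
\end{equation}
in which every summand is a tensor product of positive semi-definite operators and is therefore positive semi-definite. Applying this with $A=\Phi^\ast(Y)$ and $B=Q$ yields dual feasibility of $Y^{\otimes\secpar}$, whose objective value is $\langle R^{\otimes\secpar},Y^{\otimes\secpar}\rangle=\langle R,Y\rangle^\secpar=\alpha^\secpar$. Weak duality then bounds the primal optimum by $\alpha^\secpar$, and combining the two bounds proves \cref{prp:parallel-sdp-value}.

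The argument is essentially routine once attainability of both single-fold optima is established; the one place needing genuine care—and the only spot where positive semi-definiteness of the objective $Q$ is used—is the tensor-power monotonicity step, since $A^{\otimes\secpar}\succeq B^{\otimes\secpar}$ can fail for indefinite $B$. I would additionally verify the bookkeeping that $\Phi^{\otimes\secpar}$ is a Hermiticity-preserving map into $\Herm(\mathcal{S}^{\otimes\secpar})$ and that $R^{\otimes\secpar}$ and $Q^{\otimes\secpar}$ are Hermitian, so that $(\Phi^{\otimes\secpar},Q^{\otimes\secpar},R^{\otimes\secpar})$ is a legitimate semi-definite program to which weak duality applies.
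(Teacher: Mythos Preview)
Your proposal is correct and follows essentially the same approach as the paper: tensor an optimal primal solution to get the lower bound $\alpha^\secpar$, tensor an optimal dual solution and use $Q\succeq 0$ together with tensor-power monotonicity of the L\"owner order to get dual feasibility and hence the matching upper bound via weak duality. You even supply the telescoping justification for the monotonicity step that the paper merely states as a fact.
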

\begin{proof}
By  \cref{remark:sdp_value}, the optimal value for the primal problem (\cref{sdp:rewrite2}) and its dual problem (\cref{sdp:rewrite2_dual}) is $\alpha$. 
It is easy to argue that the solution $P =  \bigotimes^{\secpar}_{i=1} P'$ for $P'$ being an optimal primal solution for the single-repetition semi-definite program is a solution for the primal parallel problem with value $\alpha^\secpar$. 
Similarly, the value of $D=\bigotimes^{\secpar}_{i=1} D'$ for $D'$ being optimal dual solutions for the single-repetition semi-definite program is $\alpha^\secpar$. 
In order to see that $D$ is indeed a feasible solution, we use the fact that for all positive semi-definite $A$ and $B$, $A \succeq B \succeq 0$ implies that  $ A^{\otimes \secpar}\succeq B^{\otimes \secpar}$, since $\Phi^*(D')\succeq Q\succeq 0$ we have that $(\Phi^{\otimes\secpar})^*)(D')=(\Phi^*(D'))^{\otimes\secpar}\succeq Q^{\otimes\secpar}\succeq 0$. This implies that the success probability of an optimal strategy adversary for $\exWCerDel_{\adv,E_1}(\secpar)$, is $\alpha^\secpar$.
\end{proof}
By symmetry, the same holds for the $\SDP$ corresponding to $\exWCerDel_{\adv,E_0}(\secpar)$ as well. As an immediate corollary, we get the following.
\weakCertSecure*
\fi

\section{Expansion To a Full-Blown Scheme}\label{sec:Expansion To a Full-Blown Scheme}
\subsection{Cryptographic Definitions}\label{Cryptographic Definitions}
\subsubsection{Universal One-Way Hash Functions}

\begin{definition}[{universal one-way hash functions – \nom{woof}{$\Woof$}{Universal one-way hash functions}, adapted from~\cite[Definition 6.4.18]{Gol04}}]
Let  $\ell: \mathbb{N} \rightarrow \mathbb{N}$. A
collection of functions $\{h_s : \Bitspace^\ast \rightarrow \Bitspace^{\ell(|s|)}\}_{s\in\Bitspace^\ast}$ is called universal one-way
hashing ($\Woof$) if there exists a probabilistic polynomial-time algorithm $I$ so that
the following holds:
\begin{enumerate}
\item For all sufficiently large $\secpar$ and all $s$ in the range of $I$, it holds that $|s|=\secpar$.
\item (efficient evaluation): There exists a polynomial-time algorithm that, given $s$ and $x$,
returns $h_s(x)$.
\item (hard-to-form designated collisions):
For every $\QPT$ $\adv_0$, outputting a quantum state $\ket{S}$, and a string x, and a $\QPT$ $\adv$ receiving $\ket{S}$, as well as the description of a hash function $s$ sampled by $I(1^\secpar)$, that outputs a string $y$:
\[\Pr[h_s(x) = h_s(y)
\textit{and } x\neq y]\leq \negl\]
where the probability is taken over the randomness of $I,\adv_0$ and $\adv$.
\end{enumerate}
\end{definition}
\begin{remark}
The existence of post-quantum $\Woof$'s, for any polynomially bounded integer function $\ell$ is implied by the existence of post-quantum one-way functions (\cite{Son14}).
\end{remark}
\subsubsection{Authenticated Encryption}\label{sec:Authenticated Encryption}
An authenticated encryption is an encryption scheme that satisfies both chosen cipher-text indistinguishability security and authentication guarantees analogous to the unforgeability guarantees of $\mac$, i.e.,  an adversary cannot create a valid encryption for a document that it has not seen encryptions for. We first recall the definition of $\cca$ security:
\begin{definition}[{Chosen-Cipher-Text Attack Security, \cite[Definition 3.33]{KL14}}]
\label{def: cca security}

A private-key encryption scheme $\Pi = (\keygen, \Enc, \Dec)$  has indistinguishable encryptions under a chosen-cipher-text-attack, or is $\cca$-secure, if for all
$\PPT$ adversaries $\adv$:
\[\Pr[\PrivKcca_{\adv,\Pi}(\secpar) = 1]\leq \frac{1}{2}+\negl,\]
where the probability is taken over all random coins used in the game.
\end{definition}
\begin{savenotes}
\begin{game}
\caption{The $\cca$ Indistinguishability Game $\PrivKcca_{\adv,\Pi}(\secpar)$:}
\label{exp:cca Indistinguishability Game}
\begin{algorithmic}[1]
    \State A random key $k$ is generated by running $\keygen(1^\secpar)$.
    \State The adversary $\adv$ is given input $1^\secpar$ and oracle access to $\Enc_k$ and $\Dec_k$. It outputs a pair of messages $(m_0,m_1)$ of the same length.
    \State A uniform bit $b\in \Bitspace$  is chosen, and then a cipher-text $c\gets   \Enc_k(m_b)$ is given to $\adv$. We call $c$ the challenge cipher-text.
    \State The adversary $\adv$ continues to have oracle access to $\Enc_k$ and $\Dec_k$, but is not allowed to query the latter on the challenge cipher-text itself, Eventually, $\adv$ outputs a bit $b'$.
\end{algorithmic}
 The output of the game is defined to be $1$ if $b = b'$, and
$0$ otherwise. If the output of the game is $1$, we say that $\adv$ succeeds.

\end{game}
\end{savenotes}

As it is in the case of $\mac$, there are two variants of unforgeability: strong unforgeability where even a fresh distinct encryption for an already encrypted plain-text counts as forgery, as opposed to the weaker, albeit more common security notion where the plain-texts are required to be distinct. We require the strong variant, and so we modify the definition in~\cite{KL14} accordingly.

\begin{definition}[{Unforgeable Encryption Scheme, based on~\cite[Definition 4.17]{KL14}} ]
\label{def: Unforgeable Encryption Schem}
 A private-key encryption  scheme  $\Pi = (\keygen, \Enc, \Dec)$ is strongly unforgeable if for all
$\PPT$ adversaries $\adv$:
\[\Pr[\EncForge_{\adv,\Pi}(\secpar) = 1]\leq \negl,\]
where the probability is taken over all random coins used in the game.
\end{definition}
\begin{savenotes}
\begin{game}
\caption{The Strong Unforgeable Encryption Game $\EncForge_{\adv,\Pi}(\secpar)$:}\label{exp:unforgeable encryption experiment}
\begin{algorithmic}[1]
    \State Run $\keygen(1^\secpar)$ to obtain a key $k$.
    \State The adversary $\adv$ is given input $1^\secpar$ and access to an encryption oracle $\Enc_k$. The adversary outputs a cipher-text $c$.

    \State Let $m\equiv\Dec_k(c)$, and let $Q$ denote the set of all responses to queries $\adv$ made to the encryption oracle.
\end{algorithmic}
 The output of the game is $1$  if and only if $(1)$ $m\neq\bot$ and $(2)$ $c\notin Q$.
 \end{game}
\end{savenotes}

A strong authenticated encryption is an encryption scheme that fulfills both of the security notions discussed above:

\begin{definition}[Strong Authenticated Encryption]\label{def:Post Quantum Authenticated Encryption}
A private key encryption scheme is an authenticated encryption if it is $\cca$-secure and strongly unforgeable.

Moreover, we call an encryption scheme a Post-Quantum Classical-Queries Strong Authenticated Encryption if it satisfies both of the security notions ($\cca$ indistinguishability and strong unforgeability), even against $\QPT$ adversaries. 

\end{definition}

Next, we give a brief discussion regarding the assumptions required for post-quantum classical-queries strong authenticated encryption.

A standard construction for authenticated encryption, as detailed in~\cite[Theorem 4.9]{KL14}, is an \emph{encrypt-then-sign} algorithm using a $\cpa$-secure encryption scheme and an unforgeable strong $\mac$ scheme.
Careful inspection of the proof brought in~\cite{KL14} shows that the resulting authenticated encryption is actually a strong authenticated encryption, and also that quantum adversaries with classical access to the oracles, would not have any substantial advantage, as long as the $\mac$ and encryption schemes are also secure against quantum adversaries with classical access to the oracles.
A  result in~\cite{BZ13} proves that post-quantum one-way functions imply the existence of a $\qa\cpa$ encryption scheme, and a $\qa\cma$ $\mac$ scheme, in which quantum adversaries are even allowed to perform quantum queries to the oracles, let alone classical queries. Such schemes thus suffice to construct a post-quantum classical-queries strong authenticated encryption. Hence, we have the following remark summarizing the discussion above.

\begin{remark}\label{rem: OWF->strong authenticated encryption}
If post-quantum one-way functions exist, so does a post-quantum classical-query strong authenticated encryption scheme.
\end{remark}

\subsection{Security Proof for the Full Scheme}\label{sec:Security proof for full blown}
In this section, we prove the lifting proposition, \cref{prp:oneRtofb}. \cref{fig:structure} illustrates the structure of the expansion of $\CTMAC$ to a full blown scheme, $\sigOrUnrest{\eta}$.
\begin{proof}[Proof of \cref{prp:oneRtofb}]\label{pf:prp:oneRtofb}
The proof is immediate by combining \cref{lem: 1 restricted onetime implies ell-restricted onetime,lem: ell restricted onetime implies unrestricted onetime,lem:-> signing oracle,lem: unrestricted onetime implies full blown unrestricted}, which we prove in this section. These lifting lemmas are achieved via a series of constructions, all of which preserve the tokens being BB84 states, perhaps appending them with classical data as well.
\end{proof}
 
 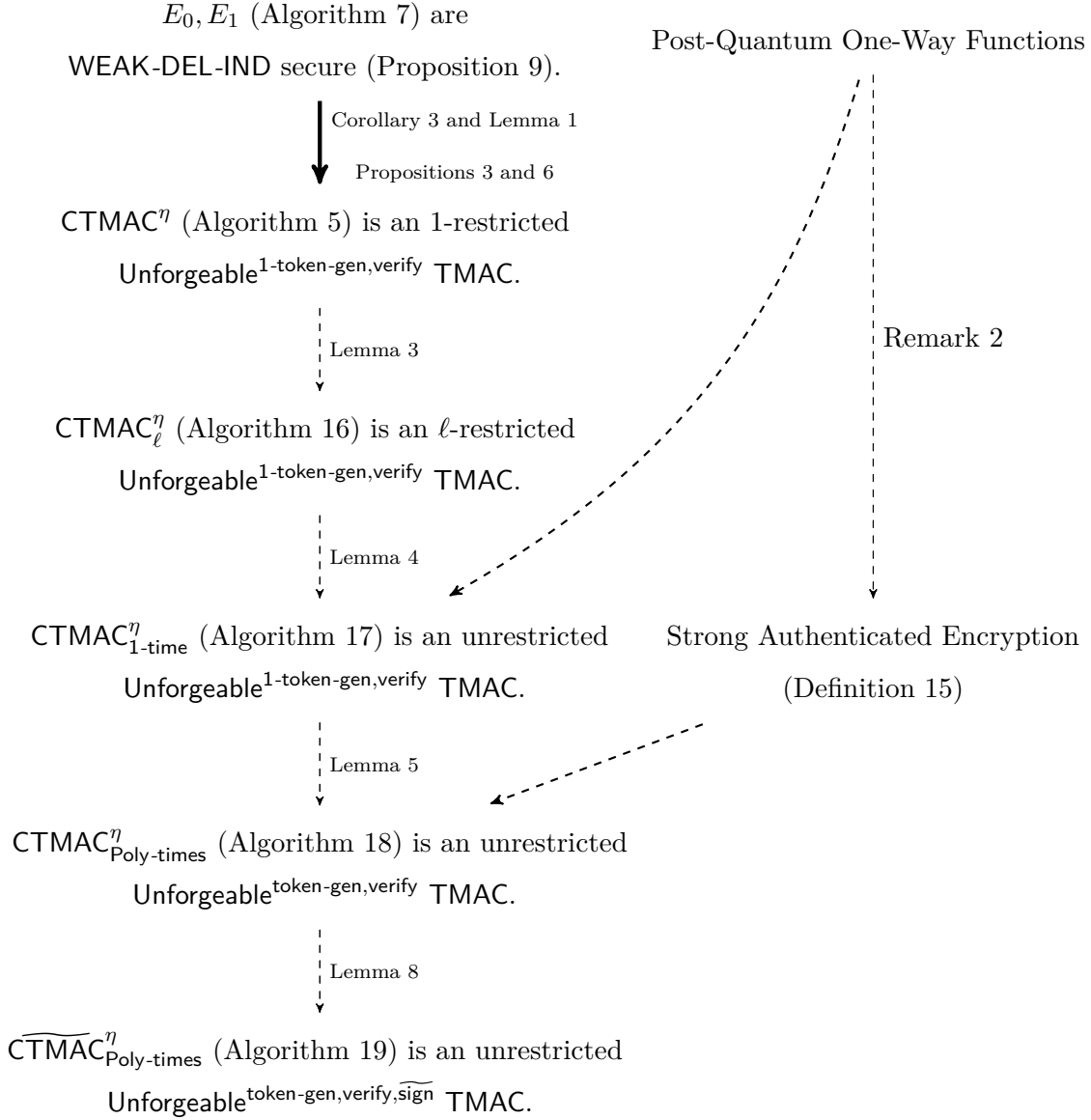
\begin{figure*}[htp]
\centering
\begin{adjustbox}{max width=1.4 \textwidth,center}
    \begin{tikzpicture}
  \matrix (m) [matrix of nodes, row sep=3em, column sep=0em, align=center]
    {
    $\begin{matrix} \text{$E_0,E_1$ (\cref{alg: E_m}) are}\\ \text{$\exWCerDel$ secure (\cref{prp:parallel-sdp-value}).} \end{matrix}$
    & $\begin{matrix}\text{Post-Quantum One-Way Functions}\end{matrix}$ \\
      $\begin{matrix}\text{$\CTMAC^\eta$ (\cref{alg:noise_tolerant Conjugate Tmac}) is an $1$-restricted}\\ \text{ $\UnfDef^{\OneToken,\Vrfy}$ $\Tmac$.}\end{matrix}$ &\\
      $\begin{matrix}
\text{$\CTMAC^\eta_{\ell}$ (\cref{alg: OTL}) is an $\ell$-restricted}\\
\text{ $\UnfDef^{\OneToken,\Vrfy}$ $\Tmac$.}
\end{matrix}$ & \\
      $\begin{matrix}
\text{$\oneunrest{\eta}$ (\cref{alg: OT}) is an unrestricted}\\
\text{ $\UnfDef^{\OneToken,\Vrfy}$ $\Tmac$.}
\end{matrix}$
&$\begin{matrix}\text{Strong Authenticated Encryption}\\ \text{(\cref{def:Post Quantum Authenticated Encryption})}\end{matrix}$
\\
      $\begin{matrix}
\text{$\polyunrest{\eta}$ (\cref{alg: Tom}) is an unrestricted}\\
\text{ $\UnfDef^{\tokengen,\Vrfy}$ $\Tmac$.}
\\ 
\end{matrix}$
& \\
$\begin{matrix}
\text{$\sigOrUnrest{\eta}$ (\cref{alg: TMS}) is an unrestricted} \\
\text{ $\UnfDef^{\tokengen,\Vrfy,\widetilde{\Sign}}$  $\Tmac$.} 
\\

\end{matrix}$
& \\
     };  
     
  { [start chain] \chainin (m-2-1);
    \chainin (m-3-1) [join={node[right,labeled] {\text{\cref{lem: 1 restricted onetime implies ell-restricted onetime}}}  }];
    \chainin (m-4-1) [join={node[right,labeled] {\text{\cref{lem: ell restricted onetime implies unrestricted onetime}}}}];
    \chainin (m-5-1) [join={node[right,labeled] {\text{\cref{lem: unrestricted onetime implies full blown unrestricted}}}  }];
    \chainin (m-6-1) [join={node[right,labeled] {\text{\cref{lem:-> signing oracle}}}  }];
    
     }

  { [start chain] \chainin (m-1-2);
    \chainin (m-4-2) [join={node[auto] {\text{\cref{rem: OWF->strong authenticated encryption}}} }];
    
    }

    \path[->,draw,dashed]
    (m-1-2) edge [pil, bend left=22] node[left,labeled] [scale=0.65] {} (m-4-1);
    
    \path[->,draw,dashed]
    (m-4-2) edge [pil] node[left,labeled] [scale=0.65] {} (m-5-1);

   \path[->,draw,ultra thick]
   (m-1-1) edge [auto] node[labeled]  [scale=1] 
   {\text{$\begin{matrix}\text{\cref{cor: noise-tolerant weak  basis supply security implies noise-tolerant security,lem:noise reduction lemma}}\\ \text{\cref{prp:weak_to_strong_supplied basis,prp: weak supplied basis security}}
   \end{matrix}$
    }} (m-2-1) ;
     
    
\end{tikzpicture}
\end{adjustbox} 

\caption[Expansion to Full Blown Scheme]{
The above diagram summarizes the expansion to a full blown scheme showing the different steps in our construction and the reductions in the security proofs. Each node represents a result, and two arrows arising from two different nodes pointing at the same destination node means that the results in the two nodes together imply the result in the destination node. For any $\eta\leq0.07$, all the $\Tmac$ schemes in the diagram above are $\tolerant{2c\eta}$, for every $0\leq c<1$. The bolded arrow is the main technical novelty of this work, and is expanded upon in \cref{fig:proof idea}. 
}
\label{fig:structure}
\end{figure*}

\cref{lem: 1 restricted onetime implies ell-restricted onetime,lem: ell restricted onetime implies unrestricted onetime,lem: unrestricted onetime implies full blown unrestricted} are an adaptation of corresponding lemmas in~\cite[Appendix C]{BS16a}, with few differences. One, unlike~\cite{BS16a}, we extend the results to take into account adversaries that have access to a verification oracle, and two, we only assume the existence of post-quantum one-way functions, instead of post-quantum collision resistant hash functions. Lastly the scheme in \cite{BS16a} was unforgeable even against adversaries with a signing oracle (see discussion in \cref{sec:Comparison with Vanilla Unforgeability}), since it was a strong $\Tmac$ (see \cref{sec:A break of Strong Unforgeability}). The scheme presented in this work is not such a priori, and hence we provide a general lift to unforgeability against adversaries with a signing oracle in \cref{lem:-> signing oracle}.
\subsubsection{Expansion to a Length-Restricted Scheme}
We extend a $1$-restricted scheme to an $\ell$-restricted scheme (see \cref{def: ell-restricted TMAC}) for $\ell\in \poly$ using a standard construction. This is done by signing the $\ell$ bits of the document using $\ell$ independent instances of the $1$-restricted scheme, each corresponding to a different bit of the document (see \cref{alg: OTL}):

\begin{restatable}{lemma} {OneRtolR}\label{lem: 1 restricted onetime implies ell-restricted onetime}
There is a noise-tolerance preserving lift (\cref{def:noise tolerance}) of any $1$-restricted $\UnfDef^{\OneToken,\Vrfy}$ $\Tmac$ (see \cref{def: ell-restricted TMAC,def:Unforgeability}) to an $\ell$-restricted  $\UnfDef^{\OneToken,\Vrfy}$ $\Tmac$, where $\ell(\secpar)\in \poly$, as shown in \cref{alg: OTL}.
\end{restatable}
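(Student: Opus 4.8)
The plan is to instantiate $\ell$ independent copies of the base $1$-restricted scheme, one per document bit, as in \cref{alg: OTL}: the secret key is $k=(k_1,\dots,k_\ell)$ with each $k_j\gets\keygen(1^\secpar)$; the token is the tensor product $\bigotimes_{j=1}^{\ell}\ket{\Stamp_j}$ of $\ell$ independent base tokens (possibly together with the classical data the base scheme appends); signing a document $m=m_1\cdots m_\ell$ means signing the bit $m_j$ with the $j$-th token to obtain $\sigma_j$; and verification of $(m,\sigma)$ accepts if and only if the base verification accepts $(m_j,\sigma_j)$ under $k_j$ for every $j\in[\ell]$. Since the combined token is again a tensor product of BB84 states, the ``based on conjugate states'' property is inherited, which settles the last sentence of \cref{prp:oneRtofb} for this step.

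Correctness and noise tolerance follow directly from the base scheme. In the noise model $\Noise{\delta}$ the noise acts independently on each of the $\ell$ blocks, so verification can fail on block $j$ only with the negligible probability guaranteed by the base scheme being $\tolerant{\delta}$; a union bound over the $\ell\in\poly$ blocks keeps the total failure probability negligible. Hence the lift preserves the noise-tolerance parameter $\delta$, which is exactly what \cref{def:noise tolerance} requires of a noise-tolerance preserving lift.

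The crux is unforgeability. First I would observe that in the $\UnfDef^{\OneToken,\Vrfy}$ game there is no signing oracle, so $Q=\emptyset$ and (with $r=1$) winning means outputting, among the polynomially many submitted pairs, valid signatures for two \emph{distinct} documents $m\neq m'$. Two distinct $\ell$-bit documents must disagree on at least one coordinate $j$, and then the restrictions $\sigma|_{j}$ and $\sigma'|_{j}$ are valid base-signatures for the two opposite bits of a single base instance --- precisely a forgery against the $1$-restricted scheme. To exploit this, given an $\ell$-restricted adversary $\adv$ with success probability $\epsilon(\secpar)$ I would build a base adversary $\Badv$ that samples $j^\ast\sample[\ell]$, embeds its own challenge token at coordinate $j^\ast$, and itself runs $\keygen$ and $\tokengen$ for the remaining $\ell-1$ coordinates, handing the assembled token to $\adv$. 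Each verification query $(m,\sigma)$ is answered faithfully: $\Badv$ checks the $\ell-1$ self-generated coordinates using the keys it knows, forwards $(m_{j^\ast},\sigma_{j^\ast})$ to its own $\Vrfy$ oracle, and accepts iff all $\ell$ checks pass. Because each base key is distributed correctly and this simulation of the verification oracle is perfect, the entire view of $\adv$ --- in particular its output and the set of valid documents --- is distributed exactly as in the real $\ell$-restricted game, \emph{independently} of $j^\ast$. When $\adv$ wins, $\Badv$ takes the first two valid distinct documents $m\neq m'$; if $j^\ast$ is a coordinate on which they disagree (probability at least $1/\ell$ by that independence), $\Badv$ outputs $(\sigma|_{j^\ast},\sigma'|_{j^\ast})$ and wins its game. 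Thus $\Badv$ succeeds with probability at least $\epsilon(\secpar)/\ell$, and since $\ell\in\poly$ a non-negligible $\epsilon$ would force a non-negligible $\epsilon/\ell$, contradicting the $\UnfDef^{\OneToken,\Vrfy}$ security of the base scheme.

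I expect the main obstacle to be the careful bookkeeping that the verification-oracle simulation is \emph{perfect and $j^\ast$-independent}: this is exactly what allows the reduction to lose only the polynomial factor $1/\ell$ rather than an exponential factor, and it is the novelty relative to the analogous lemma of~\cite{BS16a}, where no verification oracle had to be simulated. Finally, since the reduction is efficient and the analysis is information-theoretic, the same argument applies verbatim to computationally unbounded adversaries making polynomially many queries, so the lift also preserves $\Unc$ $\UnfDef^{\OneToken,\Vrfy}$ security.
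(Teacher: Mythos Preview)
Your proposal is correct and follows essentially the same approach as the paper: build $\ell$ independent copies of the base scheme, prove noise tolerance by a union bound over the $\ell$ blocks, and for unforgeability embed the challenger's token at a uniformly random coordinate $j^\ast$, simulate $\Vrfy$ perfectly using the self-generated keys on the other $\ell-1$ coordinates together with the base oracle on coordinate $j^\ast$, and lose only a $1/\ell$ factor via the independence of $\adv$'s view from $j^\ast$. Your explicit emphasis that the verification-oracle simulation is perfect and $j^\ast$-independent is exactly the point the paper highlights as well.
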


\begin{algorithm*}
    \caption{$\OTL$.\\ 
    The resulting scheme obtained by instantiating $\OTL$ with the specific scheme $\CTMAC^\eta$ (\cref{alg:noise_tolerant Conjugate Tmac}), is called $\CTMAC^\eta_{\ell}$. }
    \label{alg: OTL}
    \textbf{Assumes:} $\OTO$ is a $1$-restricted $\Tmac$, and $\ell=\ell(\secpar) \in \poly$.

    \begin{algorithmic}[1] 
        \Procedure{$\keygen$}{$1^\secpar$}
            \For {$i=1,2,\ldots,\ell$}
                 \State $k_i\gets \OTO.\keygen(1^\secpar)$
             \EndFor
            \State \textbf{Return} $(k_1,\ldots,k_\ell)$.
        \EndProcedure
    \end{algorithmic}
    \begin{algorithmic}[1] 
        \Procedure{$\tokengen_k$}{}
            \State Interpret $k=(k_1,\ldots,k_\ell)$.
            \For {$i=1,2,\ldots,\ell$}
                 \State   $\ket{\Stamp_i}\gets\OTO.\tokengen_{k_i}$.
             \EndFor
            \State \textbf{Return} $(\ket{\Stamp_1}\otimes\ldots\otimes\ket{\Stamp_\ell})$.
        \EndProcedure

    \end{algorithmic}
	\begin{algorithmic}[1] 
        \Procedure{$\Sign_{\ket{\Stamp}}$}{$m$}
            \State Interpret $\ket{\Stamp}=(\ket{\Stamp_1}\otimes\ldots\otimes\ket{\Stamp_\ell})$ and $m=(m_1,\ldots,m_\ell$).
            \For {$i=1,2,\ldots,\ell$}
                 \State $\sigma_i\gets\OTO.\Sign$ 
             \EndFor
            \State \textbf{Return} $(\sigma_1,\ldots,\sigma_\ell)$.
                \EndProcedure

    \end{algorithmic}
  	\begin{algorithmic}[1] 
        \Procedure{$\Vrfy_k$}{$m,\sigma$}
            \State Interpret $k$ as $(k_1,\ldots,k_\ell)$, $m$ as $(m_1,\ldots,m_\ell)$ and $\sigma$ as $(\sigma_1,\ldots,\sigma_\ell)$.
            \For {$i=1,2,\ldots,\ell$}
                 \State   $v_i\gets\OTO.\Vrfy_{k_i}(m_i,\sigma_i)$.
             \EndFor
            \State \textbf{Return} $\bigwedge_i v_i$
                \EndProcedure
    \end{algorithmic}
\end{algorithm*}

\begin{proof}

Correctness follows from the correctness of the $\OTO$ scheme, that is, the verification of a valid signature for $m$ accepts if and only if all the $\ell$ verifications for the $\ell$ copies of the $\OTO$ scheme accepts.  Next, we show that the noise-tolerance property is preserved in the transformation from $\OTO$ to $\OTL$. Suppose $\OTO$ is $\tolerant{\delta}$ where $\delta \geq 0$ is some constant. We claim that $\OTL$ is also $\tolerant{\delta}$, i.e., correctness holds for $\OTL$ up to negligible error, in the noise model $\Noise{\delta}$ (see \cref{def:noise-model}). 
Let $m=(m_1,\ldots,m_\ell)\in \Bitspace^\ell$ be arbitrary. 
Let $k\gets  \OTL.\keygen(1^\secpar)$ and $\ket{\Stamp}\gets \OTL.\tokengen_{k}$ be a valid token with respective to the secret key $k$.
The secret key $k$ and $\ket{\Stamp}$ can be viewed as a concatenation of $\ell$ $\OTO$ secret keys and the respective tokens, i.e., 
\[k=k_1\|\cdots\|k_\ell,\quad\ket{\Stamp}=\ket{\Stamp_1}\otimes\cdots \otimes\ket{\Stamp_\ell},\] where $\ket{\Stamp_j}$ is the $j^{th}$ $\OTO$ token with respect to the $\OTO$ secret key $k_j$. Similarly, let $\ket{\Stamp'}$, and $\ket{\Stamp'_1}\otimes\ldots\otimes\ket{\Stamp'_\ell}$ denote the quantum state of the respective tokens after the corruption of some of its qubits due to noise. 
Let $\sigma\gets \OTL.\Sign_{\ket{\Stamp'}}(m)$. Hence, $\sigma=\sigma_1\|\cdots\|\sigma_\ell$, where $\sigma_j\gets \OTO.\Sign_{\ket{\Stamp'_j}}(m_j)$, for every $j\in \ell$.

Let $Good$ be the event that $\OTL.\Vrfy_{k}(\sigma)=1$.
It is enough to show that \[\Pr[Good]\geq 1-\negl,\] where $\negl$ is some negligible function in the security parameter $\secpar$.

Note that by construction,
\begin{equation}\label{eq:Good_wedge}
Good= \bigwedge_{j\in \ell} Good_j,
\end{equation}
where for each $j\in [\ell]$, $Good_j$ denotes the event that in the $\OTL$ verification $\OTO.\Vrfy_{k_j}(\sigma_j)=1$.

By the $\delta$-noise tolerance of $\OTO$, for every $j\in [\ell]$,

\begin{equation}\label{eq:indvl_bound}
\Pr[Good_j]\geq 1-\widetilde{\negl},
\end{equation}
for some negligible function $\widetilde{\negl}$.
Recalling \cref{eq:Good_wedge,eq:indvl_bound} and using the fact that for each of the $\ell$ tokens, $\tokengen$ and the respective $\keygen$ were done independent of each other,
\[
\Pr[Good]=\Pr[\bigwedge_{j\in [\ell]}Good_j]=\prod_{j\in \ell}\Pr[Good_j]\geq(1-\widetilde{\negl})^{\ell(\secpar)}\geq 1 - {\ell(\secpar)}\cdot\widetilde{\negl}.
\] Since, ${\ell(\secpar)}\cdot\widetilde{\negl}$ is a negligible function, this concludes the proof regarding the noise tolerance.

We prove unforgeability as follows. Suppose there is an adversary $\adv$ winning $\UnfExp_{\adv,\OTL}^{\OneToken,\Vrfy}(\secpar)$ (see Game~\ref{exp:UnfExp}) with non-negligible probability $\epsilon(\secpar)$. In other words, $\adv$ gets one signing token and submits $\{m_i,\sigma_i\}_w$ in the $\UnfExp^{\OneToken,\Vrfy}(\secpar)$ game, where the  documents $m_i$ are of length $\ell$, such that with a probability of $\epsilon(\secpar)$, there exists $m_i\neq m_j$, and verification (for all the $\ell$ bits) is successful for both documents. We will construct a new adversary $\Badv$ that attempts to win the $\UnfExp^{\OneToken,\Vrfy}(\secpar)$ game against $\OTO$. $\Badv$ receives one signing token from the challenger. $\Badv$ then runs $\OTO.\keygen$ and $\OTO.\tokengen$ $\ell-1$ times to get additional $\ell-1$ tokens. $\Badv$ shuffles these together, i.e., inserts the token from the challenger to an index, chosen uniformly at random from $[\ell]$, and simulates $\adv$ to obtain $\{m_i,\sigma_i\}_w$. $\Badv$ would then extract from each of the $w$ signed documents, the index of the document and signature corresponding to the token he received from the challenger. $\Badv$ can simulate every oracle query from $\adv$ as follows. $\Badv$ queries the verification oracle for the $\OTO$ at the index corresponding to the challenge token and verifies the other indices by themselves. $\Badv$ returns $1$ if and only if $\OTO.\Vrfy$ outputs $1$, and all the other verifications $\Badv$ made were successful. Since the view of $\adv$ is the same as in $\UnfExp^{\OneToken,\Vrfy}_{\adv,\OTL}(\secpar)$, with non-negligible probability $\epsilon(\secpar)$, $count\geq2$ in 
$\UnfExp^{\OneToken,\Vrfy}_{\adv,\OTL}(\secpar)$, where $count$ is the random variable counting the number of successful verifications, as defined in Game~\ref{exp:UnfExp}. 

Conditioning on the event mentioned above, there exists $m_i\neq m_j$ such that both pass verification on all indices. Since $m_i\neq m_j$, they must differ on at least one of the bits. With a probability of at least $\frac{1}{\ell}$, this would be the same bit corresponding to the input token given to $\Badv$. Conditioning on the event that it is indeed the right index, $\Badv$ would have managed to get alleged signatures for two different documents with respect to $\OTO$, both of which would pass verification with a probability of at least $\epsilon(\secpar)$ (by the success guarantee of $\adv$). 
Here, we also use the fact that the event of $\adv$ winning the game against $\OTL$ is independent of the index in which the real token was inserted because $\adv$ always sees the same distribution no matter how the real token was inserted. Hence, with a probability of at least $\epsilon(\secpar)/\ell$, which is non-negligible, $\Badv$ would win $\UnfExp^{\OneToken,\Vrfy}_{\Badv,\OTO}(\secpar)$, thus, giving us a contradiction.
\end{proof}

\subsubsection{An Unrestricted Scheme Using Hash-and-Sign}
Next, we use the hash-and-sign paradigm
to extend a length-restricted (see \cref{def: ell-restricted TMAC}) scheme to an unrestricted scheme that can sign documents of arbitrary length. The main idea is to hash the documents to a fixed size and sign their hash. However, a slightly more contrived approach is necessary to achieve unforgeability, since we are using a $\Woof$ and not a $\mathsf{CRH}$ as the hash function. For each message $m$, we sample a fresh hash function, hash the message to a string of length $\ell$, sign the hash value concatenated with the description of the hash function, and then attach the description of the hash function as a part of the signature. 
The construction is described formally in \cref{{alg: OT}}.

\begin{restatable}{lemma}{lRtoRstar}\label{lem: ell restricted onetime implies unrestricted onetime}
Assuming the existence of post-quantum one-way functions, there is a noise-tolerance preserving lift (\cref{def:noise tolerance}) of any $\ell$-restricted $\UnfDef^{\OneToken,\Vrfy}$ $\Tmac$ for $\ell(\secpar)\in \poly$ (see \cref{def: ell-restricted TMAC,def:Unforgeability}), to an unrestricted  $\UnfDef^{\OneToken,\Vrfy}$ $\Tmac$, as shown in \cref{alg: OT}.
\end{restatable}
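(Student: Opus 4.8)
The plan is to verify the three defining requirements of a noise-tolerance preserving lift in turn: correctness, preservation of the noise-tolerance parameter, and $\UnfDef^{\OneToken,\Vrfy}$ unforgeability. Write $\OTO$ for the underlying $\ell$-restricted scheme and $\OT$ (\cref{alg: OT}) for the resulting unrestricted scheme. Recall the hash-and-sign shape of a signature: to handle a document $m$ of arbitrary length we sample a fresh \Woof{} index $r_1\gets I(\secpar)$, set $m'\equiv h_{r_1}(m)\,\|\,r_1$ (an $\ell$-bit string, together with any signing randomness), produce an $\OTO$-signature $\tilde{\sigma}$ of $m'$, and output $(\tilde{\sigma},r_1)$; verification recomputes $m'$ from $(m,r_1)$ and runs $\OTO.\Vrfy_k(m',\tilde{\sigma})$.

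Correctness is immediate from the correctness of $\OTO$, since $m'$ is a deterministic function of $(m,r_1)$ and the verifier recomputes exactly the $\ell$-bit string that was signed. For noise tolerance the key observation is that the token of $\OT$ \emph{is} the token of $\OTO$ and that the entire hashing layer is classical and untouched by the channel $\Noise{\delta}$ (\cref{def:noise-model}). Hence, conditioned on any secret key and any value of $r_1$, the event $\OTO.\Vrfy_k(m',\tilde{\sigma})=1$ under noise has exactly the distribution of the $\OTO$ scheme signing the $\ell$-bit document $m'$ under noise. Therefore if $\OTO$ is $\tolerant{\delta}$, the same negligible error bound transfers verbatim and $\OT$ is $\tolerant{\delta}$ (\cref{def:noise tolerance}); the lift is noise-tolerance preserving.

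For unforgeability I would give a reduction $\Badv$ that plays $\UnfExp^{\OneToken,\Vrfy}$ against $\OTO$ while internally running an assumed $\OT$-adversary $\adv$. $\Badv$ forwards the $\OTO$ token it receives to $\adv$ unchanged, and answers every verification query $(m,\sigma=(\tilde{\sigma},r_1))$ of $\adv$ by computing $m'=h_{r_1}(m)\,\|\,r_1$ and relaying $(m',\tilde{\sigma})$ to its own $\Vrfy_k$ oracle, passing back the reply; this simulation is perfect. When $\adv$ halts with two accepting pairs $(m_0,\sigma_0),(m_1,\sigma_1)$ with $m_0\neq m_1$, $\Badv$ forms $m'_0,m'_1$ and submits $(m'_0,\tilde{\sigma}_0),(m'_1,\tilde{\sigma}_1)$. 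If $m'_0\neq m'_1$, these are two distinct accepting $\OTO$-documents obtained from a single token, so $\Badv$ wins the $\OTO$ game, which is negligible by hypothesis. It therefore remains to bound the complementary event $m'_0=m'_1$ (with $m_0\neq m_1$), and this is where all the difficulty lies.

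The event $m'_0=m'_1$ is precisely a collision $h_{r_1}(m_0)=h_{r_1}(m_1)$ under a common index $r_1$ (the indices must agree, being the trailing $\secpar$ bits of the equal strings $m'_0=m'_1$). Bounding this is the \textbf{main obstacle}, exactly because we only assume a \Woof{} and not a collision-resistant family: a \Woof{} protects only against collisions found \emph{after} a randomly sampled index is revealed and with one colliding input committed beforehand, whereas here $r_1$ is adversarially chosen and a single token already suffices to honestly sign any one $m'$. This is why the construction is \emph{contrived} — the index is sampled fresh per signature and, crucially, bound into the signed string $m'$, so that a collision-based forgery is forced to reuse one committed index. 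The intended reduction builds a \Woof{}-attacker $(\adv_0,\adv_1)$ that generates the $\OTO$ keys and token itself (so it can simulate the token and all verification queries with no knowledge of the challenge index), and embeds the designated-collision game by fixing one of the two colliding messages as the target $x$ before the challenge index is drawn, then extracting the partner $y$ afterwards, at a polynomial cost for guessing which of the polynomially many indices across $\adv$'s transcript and which of the two forgeries plays the committed role. The delicate point I expect to spend the most care on is making this ordering rigorous — guaranteeing that the simulated view of $\adv$ is independent of the embedded challenge and that the target is genuinely fixed before the index is sampled, so that a surviving collision translates into a \Woof{} break rather than merely an index-of-its-own-choice collision. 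Once the collision event is shown negligible, a union bound with the $\OTO$ reduction completes the unforgeability argument, and the two easy properties above finish the lemma.
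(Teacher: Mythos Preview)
Your decomposition --- correctness, noise-tolerance preservation, then unforgeability via the dichotomy ``the two hashed strings differ $\Rightarrow$ break the underlying $\ell$-restricted scheme'' versus ``they coincide $\Rightarrow$ hash collision'' --- is exactly the paper's, and your treatment of the first two items and of the first branch matches the paper. On the collision branch the paper is terser than you: it asserts in one sentence that one ``guesses the index corresponding to $i$, outputs $m_i$, and then simulates the algorithm to gather $m_j$.'' You go further than the paper in articulating the order-of-commitment issue, and your instinct that this is the crux is correct.

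But the gap you flag is not merely delicate: for \cref{alg: OT} under a bare \Woof{} it is not closable. In $\UnfExp^{\OneToken,\Vrfy}$ there is no signing oracle, so the hash index appearing in a forgery is chosen freely by the adversary; neither the token nor the verification oracle carries any hash index, so there is no channel through which a reduction can plant the \Woof{} challenger's freshly sampled index into $\adv$'s view. Your $(\adv_0,\adv_1)$ can simulate everything and commit to some $x$ extracted from $\adv$'s transcript, but once the challenger draws $s$ you need $s$ to equal the adversary's chosen index, and ``guessing which of the polynomially many indices across $\adv$'s transcript'' does not buy this --- those indices were fixed before $s$ was drawn and are independent of it. Concretely, take any \Woof{} and redefine $h_{0^\secpar}$ to be the constant-zero function while having $I$ avoid $0^\secpar$; the family remains a \Woof{}, yet a forger can set its index to $0^\secpar$, pick any $m_0\neq m_1$, use its single token to honestly sign $0^\secpar\|0^{\ell}$, and win with certainty. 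The standard \Woof{}-based hash-and-sign argument works only because the honest signing oracle supplies the random indices, which is precisely what the $\OneToken$ game lacks. The argument goes through under collision resistance (as in~\cite{BS16a}), where the collision branch is immediate; under one-way functions alone it does not.
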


\begin{savenotes}

\begin{algorithm*}
    \caption{$\OT$.\\ The resulting scheme obtained by instantiating $\OT$ with the specific scheme $\CTMAC^\eta_{\ell}$ (\cref{alg: OTL}), is called $\oneunrest{\eta}$.}
    \label{alg: OT}
    \textbf{Assumes:} $\OTL$ is a $(\ell+\secpar)$-restricted $\Tmac$ for some $\ell=\ell(\secpar) \in \poly$, $\{h_r : \Bitspace^\ast \rightarrow \Bitspace^{\ell(|r|)}\}_{r\in\Bitspace^\ast}$ is a $\Woof$ with indexing function $I$.
    \begin{algorithmic}[1] 
        \Procedure{$\keygen$}{$1^\secpar$}

            \State \textbf{Return} $\OTL.\keygen(1^\secpar)$. 
        \EndProcedure
    \end{algorithmic}
    \begin{algorithmic}[1] 
        \Procedure{$\tokengen_k$}{}
            \State \textbf{Return} $\OTL.\tokengen_{k}$.
        \EndProcedure

    \end{algorithmic}
	\begin{algorithmic}[1] 
        \Procedure{$\Sign_{\ket{\Stamp}}$}{$m$}
            \State $s\gets I(1^\secpar)$ 
            \State \textbf{Return} $(s,\OTL.\Sign_{\ket{\Stamp}}(s||h_s(m)))$.
        \EndProcedure

    \end{algorithmic}
  	\begin{algorithmic}[1] 
    \Procedure{$\Vrfy_k$}{$m,\sigma$}
        \State Interpret $\sigma=(s,\sigma')$
        \State \textbf{Return} $\OTL.\Vrfy_{k}(s||(h_s(m)),\sigma')$.
    \EndProcedure
    \end{algorithmic}
\end{algorithm*}
\end{savenotes}

  \begin{proof}
  
The correctness of the scheme is trivial and, hence, we have omitted the proof. The claim about preservation of noise tolerance is also immediate since all the algorithms are the same in $\OTL$ and $\OT$ up to hashing the message which is a classical procedure, and therefore, is unaffected by the noise in the quantum communication channel.

We now prove that the unrestricted scheme $\OT$ is $\UnfDef^{\OneToken,\Vrfy}$. Suppose by way of contradiction, there exists a $\QPT$ adversary $\adv$ that wins $\UnfExp^{\OneToken,\Vrfy}_{\OT}(\secpar)$ with non-negligible probability. Let $\{m_i,(s_i,
\sigma'_i)\}_{w}$ be the documents submitted by $\adv$.
First, notice that except with negligible probability, for any $i<j$, $(s_i||h_{s_i}(m_i))\neq (s_j||h_{s_j}(m_j))$. Otherwise, $s_i=s_j$, and $\adv$ could be used to find a designated collision in $h_{s_i}$, by first uniformly guessing the index corresponding to $i$  and outputting $m_i$, and then simulating the algorithm to gather $m_j$. Since $\adv$ wins with non-negligible probability, the event that $\adv$ wins, and that there is no collision between the hashes of distinct documents, occurs with non-negligible probability. Let this probability be $\epsilon(\secpar)$. Conditioning on the above-mentioned event, there is a set $S'$ of size greater then $r+1$ such that for all $i,j\in S'$:
\begin{align}
&(s_i||h_{s_i}(m_i))\neq (s_j||h_{s_j}(m_j)),\\
&\OTL.\Vrfy_k((s_i||h_{s_i}(m_i)),\sigma'_i))=1,\\
&\OTL.\Vrfy_k((s_j||h_{s_j}(m_j)),(\sigma'_j))=1.
\end{align}

It is thus easy to use $\adv$ to construct an adversary $\Badv$ for $\OTL$ winning with probability $\epsilon(\secpar)$, contradicting that $\OTL$ is $\UnfDef^{\OneToken,\Vrfy}$.

\end{proof}
   
 \subsubsection{Unforgeability Against Polynomial Tokens Attacks}\label{sec:Unforgeability Against Polynomial Tokens Attacks}
In this section, we show how to lift any $\UnfDef^{\OneToken,\Vrfy}$ unrestricted scheme, such as $\OT$ (\cref{alg: OT}), to an $\UnfDef^{\tokengen,\Vrfy}$ unrestricted scheme $\TM$ (\cref{alg: Tom}), using an authenticated encryption scheme ($\Aenc$). In this full-blown scheme, $\TM$, the secret key is the key for the authenticated scheme. A token is generated by sampling a fresh secret key for $\OT$ and then running $\OT.\tokengen$ on the $\OT$ secret key to generate an $\OT$ token. The $\OT$ token, along with the encrypted secret key, forms the token for the full-blown scheme $\TM$.

A document is signed by first signing it under the $\OT$ token to get a $\OT$ signature for it, and appending the encrypted secret key to it. 
Verification of an alleged document-signature pair is done by first decrypting the cipher corresponding to the $\OT$ secret key, which is then used to run $\OT.\Vrfy$ on the other part of the signature. Verification accepts if and only if the authentication of the cipher accepts, and the $\OT$ verification succeeds. 

We demand that the $\Aenc$ scheme used here satisfies a property analogous to that of strong $\mac$. Assuming the existence of post-quantum one-way functions, authenticated encryptions schemes satisfying this strong property do exist, see \cref{rem: OWF->strong authenticated encryption}. 

\begin{restatable}{lemma}{RstarTofb}\label{lem: unrestricted onetime implies full blown unrestricted} 
Assuming the existence of a post-quantum classical-queries strong authenticated encryption scheme, there is a noise-tolerance preserving lift (\cref{def:noise tolerance}) of any unrestricted  $\UnfDef^{\OneToken,\Vrfy}$ $\Tmac$ (see \cref{def: ell-restricted TMAC,def:Unforgeability}) to an unrestricted  $\UnfDef^{\tokengen,\Vrfy}$ $\Tmac$, as shown in \cref{alg: Tom}.

\end{restatable}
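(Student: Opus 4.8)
The plan is to verify correctness and noise-tolerance preservation by inspection, and to prove $\UnfDef^{\tokengen,\Vrfy}$ security by reduction to the single-token security $\UnfDef^{\OneToken,\Vrfy}$ of the underlying scheme $\OT$, invoking the two guarantees of the strong authenticated encryption $\Aenc$ separately. Correctness is immediate: an honest token carries $e=\Aenc.\Enc_k(k')$ for a freshly sampled $\OT$ key $k'$, and $\TM.\Vrfy$ first recovers $k'$ via $\Aenc.\Dec_k(e)$ and then runs $\OT.\Vrfy_{k'}$, which accepts honest signatures by correctness of $\OT$. Noise tolerance is preserved because the noise of $\Noise{\delta}$ acts only on the qubits of the token --- that is, on the $\OT$-token component --- while the classical cipher $e$ is untouched; hence the acceptance probability of $\TM.\Vrfy$ on a noisy token equals that of $\OT.\Vrfy$ on a noisy $\OT$ token, which is $1-\negl$ by $\delta$-noise tolerance of $\OT$, so $\TM$ is $\tolerant{\delta}$. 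Finally the tokens remain conjugate-coding states with appended classical data, preserving that property.

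For unforgeability, suppose toward contradiction that some $\adv$ wins $\UnfExp^{\tokengen,\Vrfy}_{\adv,\TM}(\secpar)$ with non-negligible probability, making $r\in\poly$ token-gen queries that yield ciphers $e_1,\dots,e_r$ and submitting at least $r+1$ distinct valid documents (since there is no signing oracle, every distinct valid document counts). The first step is to use strong unforgeability of $\Aenc$ to show that, except with negligible probability, every cipher occurring inside a valid submitted signature is exactly one of $e_1,\dots,e_r$. I would build a forger against $\Aenc$ that simulates the entire $\TM$ game, sampling the $\OT$ keys itself and using its encryption oracle to form the $e_i$; it answers a verification query on a known cipher by running $\OT.\Vrfy$ with the corresponding known key, rejects any query carrying a fresh cipher, and collects all fresh ciphers $\adv$ ever submits. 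As long as no fresh cipher is valid the rejections are correct and the simulation is faithful; if some fresh cipher is valid, outputting a uniformly random collected fresh cipher wins the authentication game with inverse-polynomial probability, contradicting strong unforgeability.

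Conditioned on this good event, the $\geq r+1$ distinct valid documents carry ciphers from the size-$r$ set $\{e_1,\dots,e_r\}$, so by pigeonhole two distinct documents $m_a\neq m_b$ share a cipher $e_{i^\ast}$; both therefore decrypt to a common $\OT$ key and produce valid $\OT$ signatures for distinct documents under the single $\OT$ token of instance $i^\ast$ --- exactly a win in $\UnfExp^{\OneToken,\Vrfy}$ against $\OT$. I would then construct a single-token forger $\Badv$ that samples the $\Aenc$ key itself, guesses the overloaded index $i^\ast\in[r]$ uniformly, plants the challenge $\OT$ token at that index, and generates honest $\OT$ keys (and their encryptions) for all other indices.

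The main obstacle is that $\Badv$ does not know the $\OT$ key behind the challenge token, yet the $i^\ast$-th $\TM$ token must contain a cipher of that key and verification queries touching it would ordinarily decrypt that cipher. I would resolve this with CCA indistinguishability: replace the honest encryption of the challenge key by an encryption of the dummy string $0^{|k'|}$, and argue through a CCA distinguisher (which handles queries on the non-challenge ciphers via its decryption oracle, the planted key being used only through verification) that $\adv$'s success probability changes by at most a negligible amount. Crucially, $\Badv$ answers every verification query whose cipher equals the dummy $e_{i^\ast}$ by forwarding the $\OT$-signature component to its own $\OT$ verification oracle --- whose key is precisely the challenge key --- so the plaintext behind $e_{i^\ast}$ is never needed. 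Whenever $\Badv$ guesses $i^\ast$ correctly (probability $\geq 1/r$) its simulation is faithful and the two colliding documents form a valid $\OT$ forgery. Combining the negligible loss from the strong-unforgeability step, the negligible loss from the CCA step, and the $1/r$ factor from the guess, $\Badv$ wins $\UnfExp^{\OneToken,\Vrfy}$ against $\OT$ with non-negligible probability, contradicting $\UnfDef^{\OneToken,\Vrfy}$ of $\OT$ and completing the proof.
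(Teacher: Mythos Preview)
Your proposal is correct and follows essentially the same approach as the paper: first use strong unforgeability of $\Aenc$ to confine all valid ciphertexts to the $r$ issued ones, apply pigeonhole to find a collision, and then build the single-token $\OT$ forger by planting the challenge token at a guessed index with a dummy ciphertext, using CCA security to argue the swap is undetectable. The only cosmetic differences are that the paper encrypts a freshly sampled bogus $\OT$ key rather than $0^{|k'|}$, and additionally guesses the pair of colliding output indices (incurring an extra $1/w^2$ loss) rather than filtering outputs by ciphertext as you implicitly do.
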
 

\begin{algorithm} 
    \caption{$\TM$.\\ The resulting scheme obtained by instantiating $\TM$ with the specific scheme $\oneunrest{\eta}$ (\cref{alg: OT}) is called $\polyunrest{\eta}$.}
    \label{alg: Tom}
    \textbf{Assumes:} $\OT$ is an unrestricted $\Tmac$, and $\Aenc$ is a post-quantum classical-queries strong authenticated encryption scheme. 
    \begin{algorithmic}[1] 
        \Procedure{$\keygen$}{$1^\secpar$}
            \State \textbf{Return} $\Aenc.\keygen(1^\secpar)$.
        \EndProcedure
    \end{algorithmic}
    \begin{algorithmic}[1] 
        \Procedure{$\tokengen_k$}{}
            \State  Run  $\OTO.\keygen(1^\secpar)$ to obtain $\kappa$.
         \State    Run $\OTO.\tokengen_{\kappa}$ to obtain $\ket{\widetilde{\Stamp}}$.
           \State \textbf{Return} $(\ket{\widetilde{\Stamp}}, \Aenc.\enc_k(\kappa))$.
        \EndProcedure

    \end{algorithmic}
	\begin{algorithmic}[1] 
        \Procedure{$\Sign_{\ket{\Stamp}}$}{$m$}
            \State Interpret $\ket{\Stamp}$ as $(\ket{\widetilde{\Stamp}},e)$
            \State \textbf{Return} $(\OTO.\Sign_{\ket{\widetilde{\Stamp}}}(m),e)$.
        \EndProcedure

    \end{algorithmic}
  	\begin{algorithmic}[1] 
    \Procedure{$\Vrfy_k$}{$m,\sigma$}
            \State Interpret $\sigma$ as $(s,e)$
            \State  $\kappa\gets\Aenc.\Dec_{k}(e)$.
            \If{$\kappa=\bot$} 
                \State \textbf{Return 0}
            \Else
                \State \textbf{Return} $\OTO.\Vrfy_{\kappa}(m,s)$.
            \EndIf
    \EndProcedure
    \end{algorithmic}

\end{algorithm}

\begin{proof}
Given an unrestricted scheme $\UnfDef^{\OneToken,\Vrfy}$ $\Tmac$  $\OT$, we construct another unrestricted scheme $\TM$ given in \cref{alg: Tom} that is $\UnfDef^{\tokengen,\Vrfy}$.

Correctness: It is clear that in the $\TM$ scheme (\cref{alg: Tom}), the verification procedure accepts the output of the signing procedure  (assuming the same holds for $\OT$). 

The claim about preservation of noise tolerance is also immediate and, hence, we omit the proof.

Next, we show that $\TM$ is $\UnfDef^{\tokengen,\Vrfy}$. Assume towards contradiction, that there is a $\QPT$ adversary with oracle access to verification, $\adv$,  getting $r$ signing tokens, and submitting $w$ signed documents such that with non-negligible probability $r+1$ distinct documents all pass verification, meaning $\UnfExp^{\tokengen,\Vrfy}_{\adv,\TM}(\secpar)=1$ with non-negligible probability. Without loss of generality, assume that $\adv$ always outputs a fixed (per $\secpar$) number of signed documents $w$.

Let $\kaut$ be the secret key generated for $\TM$, which by construction, is a key for the authenticated encryption scheme $\Aenc$. Let $d_1,d_2,\ldots,d_{w}$ be the $w$ signed documents submitted by $\adv$, and let $S$ be the set of indices that the challenger accepts. Without loss of generality, each $d_i$ must have the form $d_i=(m_i,s_i,e_i)$, where $m_i$ is the document that was signed, $e_i$ is an encryption of the associated secret key for the $\OT$ scheme (whose decryption we will refer to as $\kappa_i$), and each $s_i$ is the  signature for $m_i$ by $\OT$. For all $i\in S$, since $\Vrfy_{\OT}(d_i)= 1$,   $e_i$ is a valid encryption under $\Aenc$, and 
$\Dec_{\kaut}(e_i)$ outputs a string $p_i$ such that $\OT.\Vrfy_{p_i}(m_i,s_i)$ accepts.

  Denote $\mathcal{E}_{Win}^\secpar$ to be the event that $\UnfExp^{\tokengen,\Vrfy}_{\adv,\TM}(\secpar)=1$, i.e.,
 \[ count\geq r+1,\]  where $count$ is the random variable counting the number of successful verifications as defined in Game~\ref{exp:UnfExp}. 
 Similarly, let $\mathcal{E}_{Dup}^\secpar$ be the event that for every $i\in S$, $e_i$ appeared in one of the tokens sent by the challenger in response to the $\tokengen$ queries of $\adv$. 
  Let,
  \begin{equation}\label{eq:evt: E_DUP and E_win}
\mathcal{E}^\secpar\equiv\mathcal{E}^\secpar_{Dup}\bigwedge \mathcal{E}_{win}^\secpar.
\end{equation}
We will prove the following lemma:
  \begin{lemma}
    \label{clm: y(secpar is non-negligible)}
    $\Pr[\mathcal{E}^\secpar]$ is non-negligible.
\end{lemma}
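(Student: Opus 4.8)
The plan is to exploit the decomposition $\Pr[\mathcal{E}^\secpar]=\Pr[\mathcal{E}_{win}^\secpar]-\Pr[\mathcal{E}_{win}^\secpar\wedge\neg\mathcal{E}_{Dup}^\secpar]$, which holds since $\mathcal{E}^\secpar=\mathcal{E}_{Dup}^\secpar\wedge\mathcal{E}_{win}^\secpar$ (see \cref{eq:evt: E_DUP and E_win}). By the contradiction hypothesis of \cref{lem: unrestricted onetime implies full blown unrestricted}, $\Pr[\mathcal{E}_{win}^\secpar]=\Pr[\UnfExp^{\tokengen,\Vrfy}_{\adv,\TM}(\secpar)=1]$ is non-negligible, so it suffices to prove that $\Pr[\mathcal{E}_{win}^\secpar\wedge\neg\mathcal{E}_{Dup}^\secpar]$ is negligible. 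I would derive this from the strong unforgeability of the authenticated encryption scheme $\Aenc$ (\cref{def: Unforgeable Encryption Schem}), which exists under post-quantum one-way functions by \cref{rem: OWF->strong authenticated encryption}. Intuitively, $\neg\mathcal{E}_{Dup}^\secpar$ on a winning transcript means that some accepting index $i\in S$ carries a ciphertext $e_i$ that was not among the tokens the challenger produced; since verification accepted $d_i=(m_i,s_i,e_i)$ we must have $\Aenc.\Dec_{\kaut}(e_i)\neq\bot$, so $e_i$ is a fresh valid ciphertext — exactly a forgery against $\Aenc$.

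The reduction builds a $\QPT$ adversary $\Badv$ for the game $\EncForge_{\Badv,\Aenc}(\secpar)$ (\cref{exp:unforgeable encryption experiment}) that runs $\adv$ internally and simulates $\UnfExp^{\tokengen,\Vrfy}_{\adv,\TM}(\secpar)$ using only classical access to the encryption oracle $\Aenc.\Enc_{\kaut}$ (without knowing $\kaut$). For each $\tokengen$ query, $\Badv$ samples a fresh key $\kappa\gets\OT.\keygen(1^\secpar)$, runs $\OT.\tokengen_{\kappa}$ for the quantum part, queries its encryption oracle to obtain $e=\Aenc.\Enc_{\kaut}(\kappa)$, returns $(\ket{\widetilde{\Stamp}},e)$, and records $(\kappa,e)$. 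The set of responses $e$ coincides both with the set $Q$ of \cref{exp:unforgeable encryption experiment} and with the ciphertexts appearing in the tokens, so ``$e\in Q$'' and ``$e$ appears in a token'' are the same condition.

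The crux, and the step I expect to be the main obstacle, is simulating the verification oracle, because $\Badv$ cannot decrypt arbitrary ciphertexts. I would resolve this as follows: on a verification query $(m,(s,e))$, if $e\in Q$ then $\Badv$ knows the matching $\kappa$ and returns $\OT.\Vrfy_{\kappa}(m,s)$ (which equals the true answer exactly), while if $e\notin Q$ it returns $0$. This simulation is faithful unless $\adv$ ever presents a fresh $e\notin Q$ with $\Aenc.\Dec_{\kaut}(e)\neq\bot$ — but any such $e$ is itself a forgery. To capture it, $\Badv$ guesses uniformly an index $t^\ast$ among the (polynomially many, say $T(\secpar)$) verification queries and final-submission items; when it reaches the $t^\ast$-th such item and it carries a fresh ciphertext $e\notin Q$, $\Badv$ halts and outputs $e$.

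For correctness, on the event $\mathcal{E}_{win}^\secpar\wedge\neg\mathcal{E}_{Dup}^\secpar$ there is at least one fresh, non-$\bot$-decrypting ciphertext (the accepting $e_i$); let it be the first such occurrence in chronological order over all queries and submission items. Every fresh ciphertext before that point decrypted to $\bot$ and was hence correctly rejected by $\Badv$, so up to the first occurrence $\adv$'s view is identical to the real game. If $t^\ast$ hits that first occurrence — probability at least $1/T(\secpar)$ — then $\Badv$ outputs $e\notin Q$ with $\Aenc.\Dec_{\kaut}(e)\neq\bot$, winning $\EncForge$. Therefore $\Pr[\EncForge_{\Badv,\Aenc}(\secpar)=1]\geq\frac{1}{T(\secpar)}\Pr[\mathcal{E}_{win}^\secpar\wedge\neg\mathcal{E}_{Dup}^\secpar]$, which by strong unforgeability of $\Aenc$ is negligible; since $T(\secpar)\in\poly$, $\Pr[\mathcal{E}_{win}^\secpar\wedge\neg\mathcal{E}_{Dup}^\secpar]\leq\negl$, and hence $\Pr[\mathcal{E}^\secpar]\geq\Pr[\mathcal{E}_{win}^\secpar]-\negl$ is non-negligible, proving \cref{clm: y(secpar is non-negligible)}. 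I would also note in passing that \emph{strong} (ciphertext-level) unforgeability is essential here, since $\Badv$'s forgery $e$ is only guaranteed to be a new ciphertext, not to decrypt to a new plaintext.
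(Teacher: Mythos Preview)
Your proposal is correct and follows essentially the same approach as the paper: both argue that if $\Pr[\mathcal{E}_{win}^\secpar\wedge\neg\mathcal{E}_{Dup}^\secpar]$ were non-negligible one could build a $\QPT$ forger against the strong unforgeability of $\Aenc$ by simulating $\adv$ with the encryption oracle, rejecting verification queries with unrecognized ciphertexts, and guessing the index of a fresh valid ciphertext among the polynomially many queries and submitted items. Your ``first fresh non-$\bot$ ciphertext'' analysis is a slightly cleaner way to argue simulation faithfulness than the paper's sketch, but the structure and the reliance on \emph{strong} (ciphertext-level) unforgeability are the same.
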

  \begin{proof}[Sketch proof]
       Suppose not, and  $\Pr[\mathcal{E}^\secpar]$ is negligible. Since $\Pr[\mathcal{E}_{Win}^\secpar]$ is non-negligible by assumption, the supposition implies $\Pr[\mathcal{E}_{Win}^\secpar\wedge (\mathcal{E}_{Dup}^\secpar)^c]$ is non-negligible. Denote by $\epsilon(\secpar)$ the non-negligible probability that both of the following events occur:
    \begin{enumerate}
        \item $(\mathcal{E}_{Win}^\secpar)$: $\UnfExp_{\adv,\TM}^{\tokengen,\Vrfy}(\secpar)=1$.
        \item $(\mathcal{E}_{Dup}^\secpar)^c$: There is an $i\in S$ such that $e_i$ did not appear in the tokens sent by the challenger.
    \end{enumerate}
    It is straightforward to construct a $\QPT$  $\Badv_{frg}$ with oracle access to $\Aenc.\Enc_{\kaut}(m,\sigma)$, that guesses the right index for $i$, and succeeds in  winning  $\EncForge_{\Badv_{frg},\Aenc}(\secpar)$ (Game~\ref{exp:unforgeable encryption experiment}) with a probability that is greater than $\epsilon(\secpar)$ times an inverse polynomial, implying that $\Pr[\mathcal{E}^\secpar]$ must be non-negligible. Here we use the fact that $\Aenc$ is a strong \emph{authenticated} encryption.
    \end{proof}

Conditioning on $\mathcal{E}^\secpar$, there are at most $r$ distinct values for the $e_i$'s. Hence, by the pigeonhole principle for some $i,j\in S $   $e'\equiv e_i=e_j $, and since $e'$ was appended in one of the tokens, it must be an encryption of  one of the secret keys for $\OT$ generated by the challenger during $\TM.\tokengen$. Denote $p\equiv\Dec_{\kaut}(e')$.

 Since $a_i=(m_i,s_i,e_i)$ and $a_j=(m_j,s_j,e_j)$ are accepted, we know that $\OT.\Vrfy_p(m_i, s_i)=1$ and $\OT.\Vrfy_p(m_j, s_j)=1$. This is already very close to forgery of the $\OT$ scheme, only $\adv$ has access to some extra information, namely the encrypted secret keys. In order to complete our proof via contradiction, we will construct an adversary $\Badv$ against the one-time scheme $\OT$ (\cref{alg: OT}) in the $\UnfExp^{\OneToken, \Vrfy}(\secpar)$ game (see Game~\ref{exp:unforgeable encryption experiment}), and using the $CCA$ encryption property of the underlying $\Aenc$ scheme, show that the adversary wins with non-negligible probability.
\paragraph{An adversary for $\OT$.}

$\Badv_{\OT}$ will act in the following manner: given a single token $\ket{\Stamp'}$, $\Badv_{\OT}$ will run $\OT.\keygen$ and $\OT.\tokengen$ $r-1$ times to produce $r-1$ secret keys and the corresponding tokens, respectively. $\Badv_{\OT}$ will then run $\Aenc.\keygen$ to generate the secret key $\kaut$ and use it to encrypt all the secret keys and append them accordingly to the tokens. Lastly, $\Badv_{\OT}$ will call $\OT.\keygen$ one more time to create a bogus secret key $\Bog$, encrypt it, and append it to $\ket{\Stamp'}$ (the original challenge token from the challenger) as the $r^{th}$ token. We denote the secret keys generated by $\Badv_{\OT}$ as the set $\{\kappa_i\}_i$. $\Badv_{\OT}$ will then shuffle those tokens and feed them to $\adv$. Whenever $\adv$ queries the oracle with $(m,sig,\kenc)$, $\Badv_{\OT}$ will do the following steps:
\begin{enumerate}
    \item Obtain $\Dec_{\kaut}(\kenc)=\kappa$, and if this fails, reject.
    \item If $\kappa$ is the bogus key $\Bog$, query the oracle on $(m,sig)$ and return its answer. Otherwise, check that $\OT.\Vrfy_{\kappa}(m,sig)=1$.

\end{enumerate}

 If all tests pass $\adv$ answers $1$.
 
 Upon receiving $\adv$'s answer $\{d_i\}_{i\in S}$, $\Badv_{\OT}$ will guess $\hat{i},\hat{j}\in [w]$ and output $(m_{\hat{i}},s_{\hat{i}}), (m_{\hat{j}},s_{\hat{j}})$.
 
 In a similar manner as before, we define analogous events in the simulation of $\adv$ in $\Badv_{\OT}$.
 Let $\mathcal{F}_{Win}^\secpar$ be the event within the $\Badv_\OT$ simulation, $\adv$ produces at least $r+1$ signed documents which would pass verification, i.e.,
 \[\mathcal{F}_{Win}^\secpar\equiv\{\text{within the simulation, }  count\geq r+1\}, \]
  where $count$ is the random variable counting the number of successful verifications as defined in Game~\ref{exp:UnfExp}.
Similarly, let $\mathcal{F}_{Dup}^\secpar,\mathcal{F}^\secpar$ be the events defined as follows:
  \[\mathcal{F}_{Dup}^\secpar\equiv\{\text{within the simulation, $\forall i\in S$, $e_i$ appeared in the tokens provided to $\adv$}\}. \]
 \begin{equation}\label{evt: F_DUP and F_win}
\mathcal{F}^\secpar\equiv\mathcal{F}_{Dup}^\secpar\bigwedge \mathcal{F}_{win}^\secpar \end{equation}
If $\mathcal{F}^\secpar$ occurs with non-negligible probability, then clearly $\Badv_{\OT}$ succeeds with non-negligible probability. The deduction holds due to the same arguments that hold for the  event in \cref{eq:evt: E_DUP and E_win}. The main arguments are that there must be some $i\neq j$ such that $e'\equiv e_i=e_j$, and 
 $\OT.\Vrfy_p(m_j, s_j)=1,\OT.\Vrfy_p(m_i, s_i)=1$.
Since the tokens were shuffled by $\Badv_{\OT}$ before being submitted to $\adv$, $e'$ corresponds to any given copy of the $\OT$ scheme with probability $\frac{1}{r}$. Clearly, $\Badv_{\OT}$ would pick the correct $i,j$ with a probability of at least $\frac{1}{w^2}$.
Hence, $\UnfExp_{\Badv_{\OT},\OT}^{\OneToken,\Vrfy}(\secpar)=1$ with probability at least $\frac{1}{rw^2}\Pr[\mathcal{F}^\secpar]$, which is non-negligible if $\Pr[\mathcal{F}^\secpar]$ is non-negligible.

Hence, the last thing that we need to complete the proof of \cref{lem: unrestricted onetime implies full blown unrestricted} (by reaching the desired contradiction), is to prove that $\Pr[\mathcal{F}^\secpar]$ is non-negligible, which we do next.
 
 \begin{lemma}
     \label{lem: z(secpar) is non-negligible}
        $\Pr[\mathcal{F}^\secpar]$ is non-negligible.
 \end{lemma}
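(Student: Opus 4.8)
The plan is to reduce the gap between $\Pr[\mathcal{F}^\secpar]$ and $\Pr[\mathcal{E}^\secpar]$ to the $\cca$-indistinguishability of $\Aenc$, and then invoke \cref{clm: y(secpar is non-negligible)}. The only difference between the true game $\UnfExp^{\tokengen,\Vrfy}_{\adv,\TM}(\secpar)$ (which defines $\mathcal{E}^\secpar$) and the simulation of $\adv$ inside $\Badv_{\OT}$ (which defines $\mathcal{F}^\secpar$) is the plaintext hidden inside the ciphertext $e^*$ appended to the challenge token $\ket{\Stamp'}$: in the true game it encrypts the genuine $\OT$ key $\kappa^*$ of that token, whereas in the simulation it encrypts the bogus key $\Bog$. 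First I would record the crucial observation that, despite this change, the answers $\Badv_{\OT}$ returns to $\adv$'s verification queries coincide with those of the real $\TM$ verification oracle: for a query $(m,sig,\kenc)$ with $\kenc=e^*$, the real oracle decrypts $e^*$ to $\kappa^*$ and returns $\OT.\Vrfy_{\kappa^*}(m,sig)$, which is exactly the value $\Badv_{\OT}$ obtains by forwarding $(m,sig)$ to its own $\OT$ verification oracle (whose hidden key is $\kappa^*$); for every other value of $\kenc$ the answer is computed by a decryption that does not depend on the plaintext of $e^*$.

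Next I would build a $\cca$ distinguisher $\mathcal{D}$ against $\Aenc$ (Game~\ref{exp:cca Indistinguishability Game}). $\mathcal{D}$ designates a uniformly random token position, generates a genuine $\OT$ key--token pair $(\kappa,\ket{\Stamp})$ for it together with a bogus key $\Bog$ of equal length, submits $(\kappa,\Bog)$ as its challenge pair, and obtains $e^*\gets\Aenc.\Enc_{\kaut}(m_b)$. For every other position $\mathcal{D}$ samples a genuine $\OT$ key, encrypts it using its encryption oracle, and appends it to the corresponding token; to the special token it appends $e^*$. It then runs $\adv$ on these tokens, answering each verification query $(m,sig,\kenc)$ by computing $\OT.\Vrfy$ with the key recovered via the decryption oracle when $\kenc\neq e^*$, and with the \emph{known} key $\kappa$ of the special token when $\kenc=e^*$ --- so the decryption oracle is never queried on the challenge ciphertext, as $\cca$ requires. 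Finally $\mathcal{D}$ inspects the transcript, checks whether at least $r+1$ distinct fresh documents verify \emph{and} every accepted ciphertext appeared among the appended tokens, and outputs $1$ iff this holds; the check is efficient because $\mathcal{D}$ can verify each submitted pair using the keys it holds together with the special-case rule for $e^*$.

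When $b=0$ the ciphertext $e^*$ encrypts the genuine key, so every token carries an encryption of its true key and the view of $\adv$ is exactly that of the true game; hence $\mathcal{D}$ outputs $1$ with probability $\Pr[\mathcal{E}^\secpar]$. When $b=1$ the ciphertext $e^*$ encrypts $\Bog$, reproducing precisely the simulation inside $\Badv_{\OT}$ (using the observation above that the query answers agree); hence $\mathcal{D}$ outputs $1$ with probability $\Pr[\mathcal{F}^\secpar]$. By the $\cca$ security of $\Aenc$ (\cref{def: cca security}) the two output probabilities differ by at most a negligible amount, so $\lvert \Pr[\mathcal{E}^\secpar]-\Pr[\mathcal{F}^\secpar]\rvert\leq\negl$. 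Since a non-negligible function minus a negligible one is still non-negligible, \cref{clm: y(secpar is non-negligible)} yields that $\Pr[\mathcal{F}^\secpar]$ is non-negligible.

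The step I expect to be the main obstacle is the indistinguishability argument around the verification oracle: one must ensure that the distinguisher can answer every query $\adv$ makes --- including queries whose embedded ciphertext is exactly the challenge ciphertext $e^*$ --- without ever decrypting $e^*$, while guaranteeing that the answers coincide with those of the real $\TM$ oracle. This is precisely where the $\cca$ decryption oracle (rather than mere $\cpa$ security) and the fact that $\Badv_{\OT}$'s underlying $\OT$ verification oracle carries the true key $\kappa^*$ of the challenge token are both indispensable.
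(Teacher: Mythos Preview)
Your argument is correct and follows essentially the same route as the paper: both build a $\cca$ distinguisher for $\Aenc$ that interpolates between the real $\TM$ game and $\Badv_{\OT}$'s simulation by swapping the plaintext under the designated ciphertext between the genuine $\OT$ key and a bogus one, answering verification queries via the decryption oracle except on the challenge ciphertext, where the known real key is used. Your explicit observation that $\Badv_{\OT}$'s answers on the special ciphertext coincide with the real oracle's (because the $\OT$ verification oracle carries the true key) is exactly the point the paper relies on implicitly; the only cosmetic difference is that the paper phrases the conclusion as a contradiction while you deduce $|\Pr[\mathcal{E}^\secpar]-\Pr[\mathcal{F}^\secpar]|\leq\negl$ directly.
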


\begin{proof} 
Assume to the contrary, that  $\Pr[\mathcal{F}^\secpar]$ is negligible. We will construct an efficient distinguisher for $\Aenc$ that wins the indistinguishability game in \cref{def: cca security} with non-negligible advantage.

\paragraph{A Distinguisher for $\Aenc$.}
$\Badv_{ind}$ is a distinguisher for Game~\ref{exp:cca Indistinguishability Game}. $\Badv_{ind}$ acts in the following manner: 
$\Badv_{ind}$ runs $\OT.\keygen, \OT.\tokengen$ $r$ times to create secret keys $\{\kappa_i\}_i$ and corresponding tokens   $\{\ket{\Stamp_i}\}_i$. $\Badv_{ind}$ then asks the encryption oracle to encrypt all but one of the secret keys $\{\kappa_i\}_{2\leq i\leq r}$. Then $\Badv_{ind}$ appends the corresponding encryptions to the respective tokens, barring the first token, $\ket{\Stamp_1}$.
 
$\Badv_{ind}$ runs $\OT.\keygen$ one last time to get a fresh secret key $p$, and sends $p$ and $\kappa_1$ as the challenge messages, and gets back $c$ which is the challenge cipher. $\Badv_{ind}$ appends $c$ to $\ket{\Stamp_1}$ to get a $\TM$ token. 

$\Badv_{ind}$ will then shuffle those tokens and feed them to $\adv$. Whenever $\adv$ queries the oracle with $(m,sig,\kenc)$ $\Badv_{ind}$ will:
\begin{enumerate}
    \item  If $\kenc\neq c$, call a decryption oracle to extract $\hat{\kappa}\equiv \Dec_{\kaut}(\kenc)$, or reject if not valid, and return $\OT.\Vrfy_{\hat{\kappa}}(m,sig)=1$
    \item If $\kenc=c$, check that $\OT.\Vrfy_{\kappa_1}(m,sig)=1$.
\end{enumerate}
 
 When $\adv$ submits $\{d_i\}_w$, $\Badv_{ind}$ would verify all of them in the exact same manner as that of oracle queries, construct the subset $S$, verify that for all $i\in S$, $e_i$'s are such that were appended in the tokens, and that $\abs{S}\geq r+1$. If all of the above-mentioned verification holds, $\Badv_{ind}$ would output $1$, and otherwise outputs $0$. 

Notice that if the challenge cipher $c$ is an encryption of $p$, the key unrelated to $\ket{\Stamp_1}$, then $\adv$'s view is exactly the same as it is in the simulation of $\Badv_{\OT}$. Hence, in that case, so $\Badv_{ind}$ will output $1$ with a probability of $\Pr[\mathcal{F}^\secpar]$. On the other hand, if $c=\enc_{\kaut}(\kappa_1)$, then the view of $\adv$ is the same as in $\UnfExp_{\adv,\TM}^{\tokengen,\Vrfy}(\secpar)$, so it will succeed with probability $\Pr[\mathcal{E}^\secpar]$. Hence,
\[|\Pr[\Badv_{ind}(enc_{\kaut}(p),\secpar)=1]-\Pr[\Badv_{ind}(enc_{\kaut}(\kappa_1),\secpar)=1]|=\Pr[\mathcal{F}^\secpar]-\Pr[\mathcal{E}^\secpar]|,\] which is non-negligible since $\Pr[\mathcal{F}^\secpar]$ is negligible as per assumption, but $\Pr[\mathcal{E}^\secpar]$ is non-negligible (\cref{clm: y(secpar is non-negligible)}). This is a contradiction to the security of $\Aenc$. 
Hence, it must be the case that $\Pr[\mathcal{F}^\secpar]$ is non-negligible, thus concluding the proof of the lemma.
 \end{proof}
\cref{lem: z(secpar) is non-negligible}
in turn finishes the proof of \cref{lem: unrestricted onetime implies full blown unrestricted} as mentioned above. 
  \end{proof}
 \subsubsection{Unforgeability in the Presence of a Signing Oracle}\label{sec:Unforgeability In The Presence of a Signing Oracle}
Lastly, an $\UnfDef^{\tokengen,\Vrfy}$ $\Tmac$ scheme is wrapped with randomness to get an $\UnfDef^{\tokengen,\Vrfy,\widetilde{\Sign}}$ $\Tmac$ scheme.  The idea is relatively simple. One can think of a naive reduction from $\UnfDef^{\tokengen,\Vrfy,\widetilde{\Sign}}$ to $\UnfDef^{\tokengen,\Vrfy}$ by simulating signing queries using additional tokens. However, the reduction fails essentially because the signing oracle ($\widetilde{\Sign}_k$) may be queried several times with the same document, providing different responses each time (see  \cref{sec:Comparison with Vanilla Unforgeability} for more details). 
A way to enforce that there are no multiple signing queries for the same document is to concatenate the document to be signed with fresh randomness. The randomness is then provided as part of the signature. Verification is done by simply verifying the document concatenated with the proclaimed randomness. In this way, we circumvent the issue mentioned above, while maintaining all previous properties. The full description of the construction is provided in \cref{alg: TMS}.

 \begin{restatable}{lemma}{fbToSIGN}\label{lem:-> signing oracle}

 There is a noise-tolerance preserving lift (\cref{def:noise tolerance}) of any unrestricted $\UnfDef^{\tokengen,\Vrfy}$ $\Tmac$ (see \cref{def: ell-restricted TMAC,def:Unforgeability}), to an unrestricted  $\UnfDef^{\tokengen,\Vrfy,\widetilde{\Sign}}$ $\Tmac$, as shown in \cref{alg: TMS}.
\end{restatable}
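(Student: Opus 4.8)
The plan is to dispatch correctness and noise-tolerance preservation in one line each and to spend the real effort on the unforgeability reduction. Correctness of the scheme in \cref{alg: TMS} is inherited directly from $\TM$: a signature for $m$ consists of an $\TM$-signature for the extended document $m\|r$ together with the sampled string $r$, and $\TMS.\Vrfy_k(m,(\sigma,r))$ just returns $\TM.\Vrfy_k(m\|r,\sigma)$, which accepts honest signatures whenever $\TM$ does. Noise tolerance is preserved because the only operations layered on top of $\TM$ are the classical sampling of $r$ and a classical concatenation; the quantum token and the measurement performed during signing are untouched, so for every $\delta$ the correctness-under-$\Noise{\delta}$ guarantee of $\TM$ transfers verbatim to $\TMS$.

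For unforgeability, suppose toward a contradiction that some $\QPT$ adversary $\adv$ wins $\UnfExp^{\tokengen,\Vrfy,\widetilde{\Sign}}_{\adv,\TMS}(\secpar)$ with non-negligible probability, making $r_{\mathsf{tok}}$ queries to $\tokengen_k$ and $r_{\mathsf{sign}}$ queries to $\widetilde{\Sign}_k$. I would construct an adversary $\Badv$ against $\TM$ in the weaker game $\UnfExp^{\tokengen,\Vrfy}_{\Badv,\TM}(\secpar)$. $\Badv$ runs $\adv$, forwarding every $\tokengen_k$ query unchanged (the two schemes share $\tokengen$), and translating every verification query $(m,(\sigma,r))$ into the $\TM$-query $(m\|r,\sigma)$. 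To answer a signing query on a document $m$, $\Badv$ calls its own $\tokengen_k$ oracle to obtain a fresh token $\ket{\Stamp}$, samples $r\sample\Bitspace^\secpar$ itself, and returns $(\TM.\Sign_{\ket{\Stamp}}(m\|r),r)$; this is distributed exactly as $\widetilde{\Sign}_{\TMS,k}(m)$, so $\adv$'s view is perfectly simulated. In total $\Badv$ issues $r_{\mathsf{tok}}+r_{\mathsf{sign}}$ token queries, and hence must produce $r_{\mathsf{tok}}+r_{\mathsf{sign}}+1$ distinct $\TM$-documents that pass $\TM.\Vrfy$ in order to win.

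The crux is assembling enough distinct valid $\TM$-documents, and this is exactly where the random suffix pays off. $\Badv$ submits two groups: the pairs $(m_i\|r_i,\sigma_i)$ extracted from $\adv$'s $\geq r_{\mathsf{tok}}+1$ distinct winning outputs, and the pairs $(m\|r,\sigma)$ it itself generated while simulating each of the $r_{\mathsf{sign}}$ signing queries; all of these pass $\TM.\Vrfy$ by construction. I would then establish distinctness in three cases: the winning outputs have pairwise distinct documents $m_i$ (they are counted as distinct in the $\TMS$ game), so their extensions $m_i\|r_i$ are distinct; the simulated-signing documents are mutually distinct because the freshly sampled suffixes collide only with probability at most $\binom{r_{\mathsf{sign}}}{2}2^{-\secpar}=\negl$; and no winning document can equal a simulated-signing document, since the former have $m_i\notin Q$ whereas the latter have document part lying in $Q$. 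This last, purely deterministic observation is precisely what the wrapper buys: the repeated-document signing queries that in the naive reduction collapse to a single $\TM$-document — and thereby break the count — are now separated into $r_{\mathsf{sign}}$ distinct $\TM$-documents by their random suffixes. Consequently, except with negligible probability, $\Badv$ obtains $(r_{\mathsf{tok}}+1)+r_{\mathsf{sign}}=r_{\mathsf{tok}}+r_{\mathsf{sign}}+1$ distinct valid $\TM$-documents whenever $\adv$ wins, so $\Badv$ wins its game with non-negligible probability, contradicting that $\TM$ is $\UnfDef^{\tokengen,\Vrfy}$. The only real obstacle is the bookkeeping: one must confirm that $\Badv$'s token count is exactly $r_{\mathsf{tok}}+r_{\mathsf{sign}}$, that the threshold $r_{\mathsf{tok}}+r_{\mathsf{sign}}+1$ is met, and that the negligible loss from suffix collisions does not swallow the non-negligible advantage.
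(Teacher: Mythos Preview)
Your proposal is correct and follows essentially the same approach as the paper: simulate signing queries by pulling extra tokens and appending fresh randomness, then submit both $\adv$'s winning outputs and the simulated signed documents, arguing distinctness via (i) distinct $m_i$'s among winners, (ii) birthday-type non-collision of the fresh suffixes, and (iii) the $m_i\notin Q$ versus $m'\in Q$ separation. Your case analysis for distinctness is in fact slightly more explicit than the paper's, which folds case (iii) into a reference to the winning condition of Game~\ref{exp:UnfExp}, but the argument is the same.
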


\begin{algorithm} 
    \caption{$\TMS$.\\ The resulting scheme obtained by instantiating $\TMS$ with the specific scheme $\polyunrest{\eta}$ (\cref{alg: Tom}), is called $\sigOrUnrest{\eta}$.  }
    \label{alg: TMS}
    \textbf{Assumes:} $\TM$ is an unrestricted $\Tmac$.     
    \begin{algorithmic}[1] 
        \Procedure{$\keygen$}{$1^\secpar$}
            \State \textbf{Return} $\TM.\keygen(1^\secpar)$.
        \EndProcedure
    \end{algorithmic}
    \begin{algorithmic}[1] 
        \Procedure{$\tokengen_k$}{}
         \State  \textbf{Return}   $\TM.\tokengen_{k}$
        \EndProcedure

    \end{algorithmic}
	\begin{algorithmic}[1] 
        \Procedure{$\Sign_{\ket{\Stamp}}$}{$m$}
            \State $\rand\sample \Bitspace^\secpar$
            \State \textbf{Return} $(\rand,\TM.\Sign_{\ket{\Stamp}}(m||\rand))$.
        \EndProcedure

    \end{algorithmic}
  	\begin{algorithmic}[1] 
    \Procedure{$\Vrfy_k$}{$m,\sigma$}
            \State Interpret $\sigma$ as $(\rand,sig)$.
            \State \textbf{Return} $\TM.\Vrfy_{k}(m||\rand,sig)$.
    \EndProcedure
    \end{algorithmic} 

\end{algorithm}

\begin{proof}
The correctness is immediate. The claim about preservation of noise tolerance is also immediate since all the algorithms are the same in $\TM$ and $\TMS$ up to adding randomness to the message, which is a classical procedure and is, hence, unaffected by noise in the quantum communication channel.

For unforgeability, let $\adv$ be an adversary winning $\UnfExp^{\tokengen,\Vrfy,\widetilde{\Sign}}_{\adv,\TMS}(\secpar)$ with non-negligible probability $\epsilon(\secpar)$. Without loss of generality, it can be assumed that $\adv$  always  makes exactly $j$ oracle queries to $\widetilde{\Sign}$ for some $j\in \poly$, and asks for $r\in \poly$ tokens. 


Next, we construct a corresponding adversary $\Badv$ winning  $\UnfExp^{\tokengen,\Vrfy}_{\Badv,\TM}(\secpar)$  with non-negligible probability. $\Badv$ will use $r+j$ tokens, and run $\adv$, supplied with $r$ of those tokens. If $\adv$ makes a query of the form $(m,(\rand,\sigma))$ to the verification oracle, $\Badv$ would query $(m||\rand,\sigma)$ to its verification oracle in $\UnfExp^{\tokengen, \widetilde{sign},\Vrfy}_{\TM}$ and answer accordingly. If $\widetilde{\Sign}_{k}$ is called with a document query $m_i$, $\Badv$ will sample a random $\rand_i\in \Bitspace^\secpar$, compute $\sigma_i\gets\Sign_{\ket{\Stamp_i}}(m_i||\rand_i)$ with one of its remaining tokens $\ket{\Stamp_i}$, and return to $\adv$ the response $(\rand_i,\sigma_i)$. As $\adv$ would make $j$ calls to $\widetilde{\Sign}_k$, $\Badv$ will not run out of tokens. $\adv$ outputs $\hat{m}_1,\ldots,\hat{m}_{w}$ and corresponding  signatures $(\hat{\rand}_1,\hat{\sigma}_1),\ldots,(\hat{\rand}_w,\hat{\sigma}_{w})$. $\Badv$ will then extract from that the documents $\hat{m}_1||\hat{\rand}_1,\ldots,\hat{m}_{w}||\hat{\rand}_{w}$, and  $\hat{\sigma}_1,\ldots,\hat{\sigma}_{w}$ as the corresponding signatures for them. $\Badv$ would output those, in addition to those signed documents that he generated by itself $(m_i||\rand_i,\sigma_i)_{i\in [j]}$. Due to the randomness of $(\rand_i)_{i\in [j]}$, they  are all distinct with overwhelming probability $\delta(\secpar)$, meaning  $(m_i||\rand_i)_{i\in [j]}$ are also distinct with that probability. These documents are bound to pass verification, as they were signed by the use of a token. In the winning event for $\adv$, there is a $r+1$  subset of   $(\hat{m}_i||\hat{\rand}_i)_{i\in [w]}$, which are both distinct from $(m_i||\rand_i)_{i\in [j]}$ and successfully pass verification (see the winning condition for  Game~\ref{exp:UnfExp}). As the view of $\adv$ is the same as in the true game $\UnfExp_{\adv,\TMS}^{\tokengen, \widetilde{sign},\Vrfy}$, this occurs with probability $\epsilon(\secpar)$. $\Badv$ could only lose only if $\adv$ loses in the simulation, or if the randomness sampled was not distinct(or both). By the union bound this means $\Badv$ loses  $\UnfExp^{\tokengen,\Vrfy}_{\Badv,\TM}(\secpar)$ with probability at most $1-\delta(\secpar)+1-\epsilon(\secpar)$, or alternatively, $\Badv$ wins with probability $\epsilon(\secpar)-(1-\delta(\secpar))$, which is a  non-negligible function, meaning such  $\adv$ cannot exist.
\end{proof}
\ifnum\llncs=0

\fi

\section{Drawbacks of the Conjugate TMAC}\label{sec:Drawbacks Of the Conjugate Tmac} 
In this section, we bring two attacks against $\CTMAC$ (see \cref{sec:Conjugate TMAC}): one by quantum access to the verification oracle, and the other violating strong unforgeability. 
\subsection{A Quantum Superposition Attack}\label{sec:A Quantum Superposition Attack}
As briefly discussed in \cref{sec: notions of security}, for $\Tmac$s with deterministic verifications, as $\CTMAC$ is, it is straightforward to define a stronger security notion where security is preserved even if the adversary is allowed quantum access to the verification function.
\begin{definition}\label{def:quantum queries unforgeable TMAC}
$\UnfExp^{(\ket{\Vrfy},\cdot,\cdot)}(\secpar)$ is the same as $\UnfExp^{(\Vrfy,\cdot,\cdot)}(\secpar)$, except that $\adv$ instead has access to the quantum unitary $\mathsf{U}_{\Vrfy_k}$ (which it can query polynomially many times) instead of $\Vrfy_k$.
\end{definition}

It is emphasized that $\UnfExp^{(\ket{\Vrfy},\cdot,\cdot)}(\secpar)$ is defined only when verification is deterministic, as $\mathsf{U}_{f}$ is only defined when $f$ is a function.
\begin{proposition}\label{prp: forgeability in light of super position attacks}
For all $\secpar$, there exists an adversary $\adv$ winning $\UnfExp^{\OneToken,\ket{\Vrfy}}_{\adv,\CTMAC}(\secpar)$ with certainty.

\end{proposition}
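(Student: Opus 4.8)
The plan is to exhibit an explicit adversary that uses its quantum oracle access to recover the full secret string $a$ and then outputs $\sigma_0=\sigma_1=a$. The reason this suffices is that $\Vrfy_k(0,a)$ checks $a|_{\Cons_0}=a|_{\Cons_0}$ and $\Vrfy_k(1,a)$ checks $a|_{\Cons_1}=a|_{\Cons_1}$, both of which hold trivially, so a single copy of $a$ simultaneously signs both documents. The naive route of learning $a$ directly is hopeless: for fixed $m$ the function $\Vrfy_k(m,\cdot)$ is a point function on the coordinates $\Cons_m$, so extracting $a|_{\Cons_m}$ from it alone is Grover-hard and would need exponentially many queries. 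Instead I would first recover the \emph{basis} string $b$ cheaply and \emph{without disturbing the token}, and only then read off $a$ from the token by measuring each qubit in its correct basis. Knowing $b$ turns the problem of reading $a$ from a hard cloning-type task into a trivial coordinate-wise measurement, since the $i$-th qubit of $\ket{\Stamp}=\bigotimes_i H^{b_i}\ket{a_i}$ reveals $a_i$ deterministically when measured in the basis indexed by $b_i$.

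The heart of the attack, and the step I expect to be the main obstacle, is recovering $b$ while keeping $\ket{\Stamp}$ intact. For each coordinate $j\in[\secpar]$ I would place the (single) token in the signature register, apply the Pauli $X$ on qubit $j$, invoke $\mathsf{U}_{\Vrfy_k}$ with the message register set to $0$ and a fresh ancilla, measure the ancilla to obtain a bit $c_j$, and then apply $X$ on qubit $j$ again. The claim is that $c_j=b_j$ and that the token is restored exactly (up to an irrelevant global phase). The key point to verify is that this measurement does not collapse the token: because $\Vrfy_k(0,\cdot)$ depends only on the $\Cons_0=\{i:b_i=0\}$ coordinates, and the token is a computational-basis product state $\bigotimes_{i\in\Cons_0}\ket{a_i}$ there, the verification outcome is constant across every branch of the superposition (which varies only on the Hadamard coordinates $\Cons_1$). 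Hence the oracle leaves the ancilla in a definite, unentangled state and the token undisturbed. One then checks the two cases: if $j\in\Cons_0$ the flip turns qubit $j$ into $\ket{\overline{a_j}}$ and verification fails, giving $c_j=0=b_j$; if $j\in\Cons_1$ the flip acts as $X\ket{\pm}=\pm\ket{\pm}$, a global phase invisible to $\Vrfy_k(0,\cdot)$, so verification still passes and $c_j=1=b_j$. In both cases the second $X$ restores the qubit, so after $\secpar$ rounds the token is exactly $\ket{\Stamp}$ again and the adversary knows all of $b$.

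Finally I would measure the restored token qubit by qubit: for each $i$ with $b_i=0$ measure in the computational basis, and for each $i$ with $b_i=1$ apply $H$ and measure, recovering $a_i$ with certainty in every case. Setting $\sigma_0=\sigma_1=a$ and submitting $(\sigma_0,\sigma_1)$ makes both verifications accept, so $\UnfExp^{\OneToken,\ket{\Vrfy}}_{\adv,\CTMAC}(\secpar)=1$ with probability $1$, using exactly $\secpar\in\poly$ quantum verification queries. The only delicate point requiring care in the write-up is the non-disturbance argument of the middle paragraph — making rigorous that querying $\mathsf{U}_{\Vrfy_k}$ on a token whose relevant coordinates are classical produces a product (rather than entangled) ancilla, so that the measurement and the subsequent uncomputation return the token unchanged.
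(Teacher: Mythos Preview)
Your proposal is correct and is essentially the same attack the paper gives: flip qubit $j$ with $X$, query $\mathsf{U}_{\Vrfy_k}$ on message $0$ with the token in the signature register, and read off $b_j$ from the ancilla without disturbing the token, then measure each qubit in its revealed basis to recover $a$. The only cosmetic differences are that the paper (citing~\cite{Lut10}) applies the restoring $X_j$ only in the ``fail'' case and measures the $j$-th qubit immediately before reassembling and moving on, whereas you uniformly undo $X_j$ and defer all basis-correct measurements to the end; both orderings work for the same reason you identified, namely that $\Vrfy_k(0,\cdot)$ is constant on the computational-basis support of the (flipped) token.
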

\begin{proof}
The  attack is very similar to the one in~\cite{Lut10}.
The idea is to learn the token one qubit at a time: given a token $\ket{\Stamp}$, to learn the $i^{th}$ qubit $\adv$ would query the verification oracle with $\ket{\left(0,X_i\Stamp\right)}\ket{0}$ for $X_i\equiv I\otimes\ldots \otimes X\otimes\ldots I$, where $X$ operates on the $i^{th}$, and $I$ is the identity. $\adv$ would then measure the result register. If verification passed, the $i^{th}$ qubit is either $\ket{+}$ or $\ket{-}$, and the signature register submitted has not been damaged at all by the measurement. If verification failed, the $i^{th}$ qubit is either $\ket{0}$ or $\ket{1}$. The adversary then restores $\ket{\Stamp}$ by applying $X_i$. In either case, measuring the qubit on the correct basis that the attacker now knows would reveal the exact quantum state of the $i^{th}$ qubit of $\ket{\Stamp}$. 
$\adv$ could then reassemble $\ket{\Stamp}$ and repeat the process above for all qubits, essentially uncovering the secret key.
\end{proof}
\subsection{A Break of Strong Unforgeability}\label{sec:A break of Strong Unforgeability}
 We can also consider a stricter definition regarding what counts as forgery. The security definition brought in \cref{sec: notions of security} only guarantees the inability of an adversary provided with $r$ token to produce $r+1$ signatures for distinct fresh documents. A stronger security notion would prohibit even the creation of $r+1$ fresh distinct signed documents, even if the documents themselves are not distinct, or not fresh. Often this discussion is null, as vanilla $\mac$ schemes usually have unique signatures. The $\Tmac$ schemes we present, however, have exponentially many signatures for every document. In analogy to strong $\mac$ schemes~\cite[Definition 4.3]{KL14}, we can thus also consider a strong variant of all of the above, and define corresponding security notions $\str\textit{-}\UnfDef^{(\cdot,\cdot,\cdot)}$, and security games $\str\textit{-}\UnfExp^{(\cdot,\cdot,\cdot)}(\secpar)$.
\begin{definition} \label{def:strong TMAC}
$\str\textit{-}\UnfExp^{(\cdot,\cdot,\cdot)}(\secpar)$ differs from $\UnfExp^{(\cdot,\cdot,\cdot)}(\secpar)$ (Game~\ref{exp:UnfExp}) by the following:

\begin{enumerate}
    \item The set $Q$ is defined to be all query-response pairs made to $\widetilde{\Sign}$, instead of only the queries.
    \item $count=\abs{\{(m_i,\sigma_i)| i\in S\wedge (m_i,\sigma_i)\notin Q\}}$.
\end{enumerate}
\end{definition}

\begin{proposition}\label{prp: strong forgery of Conjugate Tmac}
For all $\secpar$, there exists an adversary $\adv$ winning $\str\textit{-}\UnfExp^{\OneToken,\Vrfy}_{\adv,\CTMAC}(\secpar)$ with probability $\frac{1}{2}$.
\end{proposition}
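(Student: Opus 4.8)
The plan is to exhibit an explicit adversary $\adv$ and compute its success probability directly; no reduction is required. The key observation, and the feature that separates strong unforgeability from the ordinary notion, is that in $\str\textit{-}\UnfExp$ two valid pairs $(m,\sigma)$ and $(m,\sigma')$ count as two separate forgeries even when the document $m$ is repeated, provided $\sigma\neq\sigma'$ (see \cref{def:strong TMAC}). Since no signing oracle appears in $\str\textit{-}\UnfExp^{\OneToken,\Vrfy}$, the set $Q$ is empty, and with a single token ($r=1$) the adversary wins as soon as it submits two \emph{distinct} signatures of the \emph{same} bit that both pass verification, yielding $count=2\geq r+1$.

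Concretely, $\adv$ requests its one token $\ket{\Stamp}=H^b\ket{a}$ and measures it in the standard basis to obtain a string $\sigma$. As argued in the correctness proof (\cref{prp:correctness-one_restricted}), for every coordinate $i\in \Cons_0=\{i\in[\secpar]\mid b_i=0\}$ the corresponding qubit is $\ket{a_i}$, so $\sigma|_{\Cons_0}=a|_{\Cons_0}$ holds with certainty; hence $(0,\sigma)$ always passes $\CTMAC.\Vrfy_k(0,\cdot)$. The adversary then forms $\sigma'$ by flipping the first bit of $\sigma$ and submits the two pairs $(0,\sigma)$ and $(0,\sigma')$.

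First I would note that $\sigma\neq\sigma'$, so the two pairs are always distinct and $(0,\sigma)$ always verifies; everything reduces to analyzing $(0,\sigma')$. Verification of $(0,\sigma')$ inspects only the coordinates in $\Cons_0$, and $\sigma'$ differs from $\sigma$ solely at coordinate $1$. If $b_1=1$, then $1\notin\Cons_0$, the flipped coordinate is irrelevant to the check, and $\sigma'|_{\Cons_0}=\sigma|_{\Cons_0}=a|_{\Cons_0}$, so $(0,\sigma')$ also passes and both forgeries succeed. If $b_1=0$, then $1\in\Cons_0$ and $\sigma_1=a_1$, so $\sigma'_1=1-a_1\neq a_1$ and verification fails. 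Since $\keygen$ samples $b_1$ uniformly, the event $b_1=1$ occurs with probability exactly $\tfrac{1}{2}$, which is therefore the winning probability; when it occurs, $count=2$ and the game outputs $1$.

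There is essentially no technical obstacle here: the only points needing care are (i) confirming that the strong-unforgeability bookkeeping counts two same-document pairs as two forgeries, so that $count=2$, and (ii) checking that flipping a Hadamard-basis coordinate leaves the $\Cons_0$-restricted consistency test untouched, while flipping a computational-basis coordinate breaks it. Both follow immediately from the definitions. This strategy attains exactly $\tfrac{1}{2}$, which already refutes strong unforgeability and shows that $\CTMAC$ is not a strong $\Tmac$.
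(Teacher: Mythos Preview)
Your proof is correct and follows essentially the same approach as the paper: measure the token in the computational basis to obtain a valid signature $\sigma$ for the document $0$, flip one bit to get $\sigma'$, and observe that $\sigma'$ is also a valid signature for $0$ precisely when the flipped coordinate lies outside $\Cons_0$, i.e., when $b_1=1$, which happens with probability $\tfrac{1}{2}$. Your write-up is in fact more explicit than the paper's about the bookkeeping (that $Q=\emptyset$, $r=1$, and hence $count=2$ suffices), but the underlying attack is identical.
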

\begin{proof}
 For a key $k\equiv (a,b)$ generated by $\Cadv$, $\adv$ receives a token $\ket{\Stamp}=H^b\ket{a}$, and measures the token in the standard basis to obtain a classical string $\sigma$, representing a signature for $0$. There are, however, many other valid signatures this process could have generated. A change of any bit in $\sigma$ corresponding to a coordinate $i$ for which $b_i=1$ would result in such a signature. On average,  half the coordinates  are such, and all an adversary must do to uncover a fresh signature, is pick one bit of the valid signature and perturb it. The adversary thus wins with a probability of $\frac{1}{2}$.
\end{proof}

\section{TMAC as a MAC}\label{sec:Comparison with Vanilla Unforgeability}

In this section, we show how a $\Tmac$ can function as an unforgeable $\mac$ scheme, provided the $\Tmac$ is unforgeable against attacks with signing and verification oracles:
\begin{theorem}
Let $\TM$ be an $\UnfDef^{\widetilde{\Sign},\Vrfy}$ (see \cref{def:Unforgeability}) $\Tmac$. Then the $\mac$ scheme $\Cmac=(\TM.\keygen,\TM.\widetilde{\Sign},\TM.\Vrfy)$ derived from $\TM$, is $\cma$ unforgeable.
\end{theorem}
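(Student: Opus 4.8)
The plan is to observe that once the token-generation oracle is removed, the $\Tmac$ unforgeability notion $\UnfDef^{\widetilde{\Sign},\Vrfy}$ collapses exactly onto the standard notion of $\cma$ unforgeability for the derived $\mac$ scheme $\Cmac$, so that the theorem follows by a syntactic, oracle-for-oracle reduction. The crucial point is that in Game~\ref{exp:UnfExp} (see \cref{def:Unforgeability}) the quantity $r$ counts the number of calls made to $\tokengen_k$; since the adversary in the $\UnfDef^{\widetilde{\Sign},\Vrfy}$ game has no access to $\tokengen_k$, we have $r=0$, and the winning condition $count\geq r+1$ becomes $count\geq 1$. In other words, winning the $\Tmac$ unforgeability game here means producing at least one accepting pair $(m^*,\sigma^*)$ whose document $m^*$ was never submitted to $\widetilde{\Sign}_k$ --- which is precisely an existential forgery of the $\mac$ scheme $\Cmac$ whose signing algorithm is $\widetilde{\Sign}_k$ and whose verification is $\Vrfy_k$.

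Concretely, first I would fix an arbitrary $\cma$ adversary $\adv$ against $\Cmac$ and build an adversary $\Badv$ for $\UnfExp^{\widetilde{\Sign},\Vrfy}_{\Badv,\TM}(\secpar)$. Upon receiving $1^\secpar$, $\Badv$ runs $\adv$ and answers its oracle calls by forwarding them verbatim: every signing query $m$ that $\adv$ makes to its $\mac$ oracle is relayed to $\Badv$'s $\widetilde{\Sign}_k$ oracle, and every verification query $(m,\sigma)$ is relayed to $\Badv$'s $\Vrfy_k$ oracle, with the responses passed back unchanged. When $\adv$ halts with an alleged forgery $(m^*,\sigma^*)$, $\Badv$ submits the single pair $(m^*,\sigma^*)$, i.e.\ it plays $w=1$. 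Because $\Badv$ simulates both oracles perfectly, the view of $\adv$ is identical to its view in the real $\cma$ game, so the set $Q$ of signing queries is the same in both experiments.

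It then remains to check that the winning events coincide. If $\adv$ succeeds, then $\Vrfy_k(m^*,\sigma^*)=1$ and $m^*\notin Q$; hence in $\Badv$'s game the single submitted index lies in $S$ with $m^*\notin Q$, so $count=1=r+1$ and $\Badv$ wins. Consequently $\Pr[\UnfExp^{\widetilde{\Sign},\Vrfy}_{\Badv,\TM}(\secpar)=1]\geq \Pr[\adv\text{ forges}]$, and since $\TM$ is $\UnfDef^{\widetilde{\Sign},\Vrfy}$ the left-hand side is negligible; therefore so is $\adv$'s forging probability, establishing that $\Cmac$ is $\cma$ unforgeable. The only real care required --- rather than a genuine obstacle --- is to use the correct formulation of $\cma$ security: since $\Cmac$'s signing is \emph{non-deterministic} (each call to $\widetilde{\Sign}_k$ internally generates a fresh token), verification is not canonical, so the $\cma$ game must grant $\adv$ a verification oracle, which is exactly the oracle the reduction forwards to $\Vrfy_k$. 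I would also note in passing that the reduction never needs $\tokengen_k$, confirming that the $r=0$ reading of the game is the legitimate one.
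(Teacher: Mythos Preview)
Your proposal is correct and matches the paper's approach exactly: the paper states that ``the proof of this theorem follows by definition and, hence, is omitted,'' and what you have written is precisely the unpacking of that remark --- the observation that with no $\tokengen$ access we have $r=0$, so the $\Tmac$ winning condition $count\geq r+1$ becomes the standard existential-forgery condition for $\Cmac$. Your added note about needing a verification oracle in the $\cma$ game (since $\widetilde{\Sign}_k$ is randomized) is a nice clarification that the paper does not spell out.
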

The proof of this theorem follows by definition and, hence, is omitted.

It is tempting to argue that an $\UnfDef^{\tokengen,\Vrfy}$ $\Tmac$ is sufficient for the above theorem. An equivalent question is whether an $\UnfDef^{\tokengen,\Vrfy}$ $\Tmac$ is also an $\UnfDef^{\widetilde{\Sign},\Vrfy}$ $\Tmac$. The na\"ive intuition being that any document signed by the signing oracle could simply be signed by an extra token instead. However, this turns out to be not necessarily true.

\begin{proposition}

There exists an $\UnfDef^{\tokengen,\Vrfy}$ $\Tmac$ scheme which is not $\UnfDef^{\widetilde{\Sign},\Vrfy}$.
\end{proposition}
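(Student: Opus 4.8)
The goal is to exhibit a separation: a scheme that is secure when the adversary only has access to the token-generation and verification oracles, but becomes insecure once a signing oracle $\widetilde{\Sign}_k$ is added. The plan is to take any $\Tmac$ that is already known to be $\UnfDef^{\tokengen,\Vrfy,\widetilde{\Sign}}$ (our construction $\sigOrUnrest{\eta}$ works, or the scheme of~\cite{BS16a}) and deliberately sabotage its verification procedure so that repeated \emph{signing} queries on a fixed document leak the secret key, while a mere token-plus-verification adversary gains nothing new. The key asymmetry I want to exploit is exactly the one flagged in \cref{sec: notions of security} and \cref{sec:Unforgeability In The Presence of a Signing Oracle}: the signing oracle may be queried many times on the \emph{same} document and return a \emph{different} (randomized) signature each time, whereas in $\UnfExp^{\tokengen,\Vrfy}$ the only randomized resource is the token itself, which is consumed by a single signature.

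Concretely, first I would fix a base scheme $\TM$ that is $\UnfDef^{\widetilde{\Sign},\Vrfy}$ (hence also $\UnfDef^{\tokengen,\Vrfy}$) and whose signing algorithm uses fresh randomness $\rand$ appended to the signature, as in $\TMS$ (\cref{alg: TMS}). I would then define a modified scheme $\TM'$ that behaves identically except that its signing and verification are rigged so that the concatenation of the random strings $\rand$ used across \emph{distinct} signing calls for one document reveals the key. For instance, one can secret-share the key $k$ into $\secpar$ additive shares and arrange that the $i$th signature of a document embeds the $i$th share (indexed by the fresh randomness or a counter derivable from it), with $\Vrfy'$ accepting a forged signature once it is presented with all shares reassembled into $k$. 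The second step is to verify that $\TM'$ remains $\UnfDef^{\tokengen,\Vrfy}$: here I would argue that an adversary with only $\tokengen_k$ and $\Vrfy_k$ can produce at most one signature per token, so it never sees two independent randomized signatures of the same document, and therefore recovers at most one share --- insufficient to reconstruct $k$. Formally this is a reduction showing any break of $\TM'$ in the $\UnfExp^{\tokengen,\Vrfy}$ game yields a break of the base scheme $\TM$, since the extra leaked information (a single share) is simulatable.

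The third step is the easy direction: exhibiting the explicit attack in $\UnfExp^{\widetilde{\Sign},\Vrfy}$. The adversary requests \emph{no} tokens but queries $\widetilde{\Sign}_k$ on a single fixed document $m^\star$ exactly $\secpar$ times, collecting $\secpar$ signatures that together reveal all $\secpar$ shares and hence $k$. With $k$ in hand it signs $r+1 = 1$ fresh documents (indeed arbitrarily many), winning the game outright. Because $m^\star$ is in the signed set $Q$ but the freshly forged documents are not, this is a legitimate forgery under \cref{def:Unforgeability}.

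The main obstacle --- and the place where care is genuinely needed --- is the second step: ensuring that the sabotaged verification and the embedded shares do \emph{not} accidentally weaken the scheme against a $\tokengen,\Vrfy$ adversary, in particular that querying $\Vrfy_k$ polynomially many times cannot be used to extract more than one share per token (contrast with the strong-unforgeability break of \cref{sec:A break of Strong Unforgeability}, where verification queries alone leak the key of $\CTMAC$). To avoid that pitfall I would make $\Vrfy'$ reveal a share \emph{only} on acceptance of a genuinely signed signature rather than on each query, and prove via a reduction that the whole leakage is a deterministic function of information already available to the base-scheme simulator. Getting this reduction clean --- so that the single additionally-leaked share is information-theoretically or computationally harmless --- is the crux; the rest is bookkeeping.
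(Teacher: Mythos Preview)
There is a structural gap in your construction. In a $\Tmac$, the algorithm $\Sign$ receives only the token $\ket{\Stamp}$, never the key $k$ (\cref{def: TMAC scheme}); any share of $k$ that a signature could carry must therefore already sit inside the token produced by $\tokengen_k$. But the adversary in $\UnfExp^{\tokengen,\Vrfy}$ has polynomial access to $\tokengen_k$ itself, so whatever key material can be harvested by repeated calls to $\widetilde{\Sign}_k(m^\star)$---which is just $\tokengen_k$ followed by $\Sign$---is equally harvestable by repeated calls to $\tokengen_k$. Once enough shares are collected and $k$ is reconstructed, the $\tokengen$ adversary forges arbitrarily many fresh documents too, so your $\TM'$ would not be $\UnfDef^{\tokengen,\Vrfy}$. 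The asymmetry you correctly identified (that $\widetilde{\Sign}$ calls on $m^\star$ are ``free'' because $m^\star\in Q$) is about accounting in the winning condition, not about information flow; it does not open a leakage channel unavailable to a $\tokengen$ adversary.

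The paper exploits the accounting asymmetry directly, modifying only $\Vrfy$ and never attempting to leak $k$. Starting from a \emph{strongly} unforgeable base scheme $\Pi$ in which some fixed $m_1$ has at least two honestly generated signatures occurring with non-negligible probability, one lets $\Pi'.\Vrfy$ additionally accept, as a signature for a fixed $m_0\neq m_1$, any concatenation of two distinct valid $\Pi$-signatures of $m_1$. Two $\widetilde{\Sign}$ queries on $m_1$ then forge $m_0$ with non-negligible probability while $m_1\in Q$ and $r=0$. For $\UnfDef^{\tokengen,\Vrfy}$, any successful forger against $\Pi'$ either outputs $r+1$ signatures already valid in $\Pi$ (breaking $\Pi$), or outputs a concatenation-type $m_0$ signature, from which one extracts two distinct $(m_1,\cdot)$ pairs; together with the remaining signatures this yields $r+1$ distinct signed documents and breaks $\str$-$\UnfExp^{\tokengen,\Vrfy}$ for $\Pi$. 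The hypothesis of strong unforgeability is precisely what neutralizes the obstacle you flagged from \cref{sec:A break of Strong Unforgeability}: a $\tokengen$ adversary cannot manufacture two distinct $m_1$-signatures without already paying for them in tokens.
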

\begin{proof}
Let $\Pi$ be a deterministic verification $\str\textit{-}\UnfDef^{\tokengen,\Vrfy}$ scheme such that for some message $m_1$, there are at least two distinct signatures which $\widetilde{\Sign}$ generates with non-negligible probability (The scheme presented in~\cite{BS16a} satisfies these requirements; Our scheme is not \emph{strongly} unforgeable). We define a new $\Tmac$ scheme $\Pi'$ which is the same as $\Pi$ up to a modification to the verification algorithm.
Fix $m_0\neq m_1$. In addition to what $\Pi.\Vrfy$ accepts, $\Pi'.\Vrfy$ also accepts two distinct signatures for $m_1$ concatenated to each other, as a valid signature for $m_0$.  $\Pi'$ is clearly not $\UnfDef^{\widetilde{\Sign},\Vrfy}$. An adversary could easily request the oracle to sign $m_1$ twice, resulting (with non-negligible probability) in two distinct signatures for $m_1$. Concatenating those is a valid signature for $m_0$. 

Next we claim that $\Pi'$ is  $\UnfDef^{\tokengen,\Vrfy}$. Suppose not, then there exists $\adv$ asking for $\ell$ tokens, and submitting $(a_1,\sigma_1),\ldots,(a_{\ell+1},\sigma_{\ell+1})$ such that with non-negligible probability all are valid signed documents, and the documents are all distinct. Let $p_0(\secpar)$ denote the probability that $\adv$ succeeds, and among the signatures submitted there is a signature for $m_0$ which is not valid in $\Pi$, and $p_1(\secpar)$ the probability that $\adv$ succeeds, but all submitted signatures are also valid in $\Pi$. Consider $\Badv$ is an adversary against $\Pi$,  simulating $\adv$ until it receives its final reply. With probability $p_1(\secpar)$, the signatures submitted by $\adv$ are all also valid in $\Pi$, immediately implying $p_1(\secpar)$ is negligible. With probability $p_0(\secpar)$ and W.L.O.G, $a_{\ell+1}=m_0$ and $\sigma_{\ell+1}$ is a concatenation of two signatures for $m_1$. In such a scenario, (which is easily recognizable by $\Badv$), $\Badv$ could simply submit the two distinct signatures for $m_1$, alongside $(a_1,\sigma_1)\ldots(a_{\ell-1},\sigma_{\ell-1})$. Here W.L.O.G., it is assumed that if there originally were some $a_i=m_1$, then it was the $\ell^{th}$ document. As all the signed documents submitted by $\Badv$ are valid and distinct, this comprises a win in $\str\textit{-}\UnfExp_{\Badv,\Pi}^{\tokengen,\Vrfy}$ (\cref{def:strong TMAC}); hence, $p_1(\secpar)$ must be negligible as well, contradicting the assumption that  $p_0(\secpar)+p_1(\secpar)$ was non-negligible, and providing a separation between the security notions.
\end{proof}%

The flaw with the intuitive reduction is that the documents queried to the signing oracle might not be distinct. The intuitive reduction can be formalized  if we can prevent this event from occurring.\footnote{In~\cite{BS16a}, the authors point out that such events cannot occur except with negligible probability, in $\Tmac$ schemes that satisfy the 
strong variant of unforgeability as well as a property called unpredictability. Unpredictability means that if we sign (by $\widetilde{\Sign}$) a document twice, this results in two different signatures with overwhelming probability. While the second property is fulfilled by our construction, the first is not; hence, this approach is of little use in this work.}
In practice, any $\UnfDef^{\tokengen,\Vrfy}$ $\Tmac$ can be used to construct an $\UnfDef^{\widetilde{\Sign},\Vrfy}$ $\Tmac$, which is in fact, an $\UnfDef^{\tokengen,\widetilde{\Sign},\Vrfy}$ $\Tmac$, as discussed  in~\cref{sec:Unforgeability In The Presence of a Signing Oracle}.


\ifnum\llncs=0
\section{Unconditional Unforgeability for a Fixed Number of Tokens}\label{sec:Fixed J Secure Tokens}
  
In previous sections, we saw constructions of $\Unc$ $\UnfDef^{\OneToken,\Vrfy}$ and  $\UnfDef^{\tokengen,\Vrfy}$ $\Tmac$s. Moreover, we know by \cref{thm: impossibility of unconditionally secure TMAC scheme.} that unconditional security cannot be achieved for polynomially many tokens. This leads us to an intermediate question of achieving unconditional security for a fixed constant number of tokens. 

\begin{definition}
 For $j\in \mathbb{N}$, $\UnfExp^{\JToken,\Vrfy}_{\adv,\Pi}$ $\Tmac$ is identical to $\UnfExp^{\OneToken,\Vrfy}_{\adv,\Pi}$, but the adversary in the security game is allowed to make up to $j$ queries to $\tokengen$. The corresponding security notion is called $\UnfDef^{\JToken,\Vrfy}$ $\Tmac$.\label{def: J Tmac}
\end{definition}

We use the same ideas as in  \cref{sec:Unforgeability Against Polynomial Tokens Attacks} to achieve this notion of security unconditionally, for length-restricted schemes. The idea is to replace (computationally) secure authenticated encryption used in  \cref{sec:Unforgeability Against Polynomial Tokens Attacks} with a direct application of the "encrypt then sign" paradigm, using  $\mac$ and encryption schemes that satisfy information-theoretic variants of secrecy and unforgeability respectively, for a fixed number of documents, i.e., up to $j$ documents.
\begin{definition}
\label{def: J secure encryption}
 For $j\in \mathbb{N}$, a deterministic encryption scheme  $\Pi=(\keygen,\enc,\dec)$ is said to be   $\mathsf{statistically}$  $j\textit{-}\mathsf{secure}$ if for any two $j$-tuples of messages  $(m_1,m_2,...m_j)$ and $(m'_1,m'_2,...m'_j)$ and two $j$-tuples of cipher-texts $(c_1,c_2,...c_j)$ and $(c'_1,c'_2,...c'_j)$s  \[\abs{\Pr[\enc_k(m_1)=c_1,...\enc_k(m_1)=c_1]-\Pr[\enc_k(m'_1)=c'_1,...\enc_k(m'_1)=c'_1]}\leq \negl,\]

where the probability is taken over the distribution of $k\gets \keygen(1^\secpar)$, as well as the randomness of the encryption. 
\end{definition}

\begin{definition}
\label{def: J unforgeable mac}
 For $j\in \mathbb{N}$, a $\mac$ scheme $\Pi=(\keygen,\Sign,\Vrfy)$ is said to be $\Unc$  $j\textit{-}\UnfDef$ if for any $\secpar$, and every key $k$ generated by $\keygen(1^\secpar)$ and any  $j'< j$, the probability of any computationally unbounded adversary in hold of $j'$ signed document, cannot submit a signature for a fresh document except, with negligible probability.
\end{definition}

\begin{remark}
There exists a length-restricted deterministic  $\mac$ (with canonical verification) scheme satisfying \cref{def: J unforgeable mac} (\cite{WC81}). The computational cost of this scheme is polynomial in the security parameter, the length of the message and in $j$.
\end{remark}

\begin{remark}
There exists a length-restricted encryption scheme satisfying \cref{def: J secure encryption}.\footnote{The existence of such a scheme is a known folklore. The construction is as follows. A $j$-wise independent hash function $h$ sampled as the key, the domain and range of which is same as the message space. The encryption of a message $m$ is  $(h(r)\oplus m,r)$, where $r$ is sampled uniformly from the message space. See for example~\cite{DW17} for a private case of this result.} The computational cost of this scheme is polynomial in the security parameter, the length of the message and in $j$.
\end{remark}

\begin{restatable}{lemma}{RstarToJ}\label{lem: restricted onetime implies j times restricted} 
For any $\ell\in \NN$, there is a noise-tolerance preserving lift (\cref{def:noise tolerance}) of an $\ell$-restricted $\Unc$ $\UnfDef^{\OneToken,\Vrfy}$ $\Tmac$ (see \cref{def: ell-restricted TMAC,def:Unforgeability}), to an $\ell$-restricted  $\Unc$ $\UnfDef^{\JToken,\Vrfy}$ $\Tmac$ (see \cref{def: J Tmac}), as shown in \cref{alg: TomJ}.
\end{restatable}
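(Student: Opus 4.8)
The plan is to follow the blueprint of the computational lift in \cref{lem: unrestricted onetime implies full blown unrestricted}, replacing the post-quantum authenticated encryption by information-theoretic ingredients so that the resulting guarantee survives against computationally unbounded adversaries. Concretely, the scheme of \cref{alg: TomJ}, which I denote $\Pi_j$, uses an ``encrypt-then-authenticate'' object built from a statistically $j$-secure deterministic encryption (\cref{def: J secure encryption}) and an $\Unc$ $j$-unforgeable $\mac$ (\cref{def: J unforgeable mac}), both of which exist for length-restricted messages by the remarks following those definitions. Its secret key is the pair of keys for these two primitives; a token is produced by sampling a fresh $\OTO$ key $\kappa$, running $\OTO.\tokengen_\kappa$, and appending the authenticated encryption $e$ of $\kappa$; signing a document $m$ outputs $(\OTO.\Sign(m),e)$; and verification first checks the authentication of $e$, decrypts it to recover $\kappa$, and then runs $\OTO.\Vrfy_\kappa(m,\cdot)$. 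Since all the new operations are classical post-processing wrapped around the untouched $\OTO$ algorithms, correctness and the preservation of noise tolerance are immediate, exactly as in \cref{lem: unrestricted onetime implies full blown unrestricted}.

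For unforgeability I would run the same two-event analysis. Fix an unbounded $\adv$ that requests $j'\le j$ tokens and submits $w$ signed documents, and let $S$ be the set of accepted indices. First I would show that, except with negligible probability, every accepted ciphertext $e_i$ (for $i\in S$) already appeared among the $j'$ tokens: otherwise $\adv$ produced a fresh ciphertext that authenticates after having seen at most $j'$ valid ones, which an unbounded adversary could turn into a forgery against the underlying $j$-unforgeable $\mac$, contradicting \cref{def: J unforgeable mac}. Conditioning on this duplication event together with the winning event $|S|\ge j'+1$, the pigeonhole principle forces two accepted documents $m_i\neq m_j$ to share the same ciphertext $e'$, hence the same decrypted $\OTO$ key $p$, so that $\OTO.\Vrfy_p(m_i,s_i)=\OTO.\Vrfy_p(m_j,s_j)=1$, i.e.\ a single-token forgery of $\OTO$. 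I would then build $\Badv$ against $\OTO$ that embeds its challenge token in a uniformly guessed one of the $j'$ positions, generates the remaining $\OTO$ keys and tokens itself, and (holding the encryption and authentication keys it sampled) answers all of $\adv$'s verification queries directly, forwarding only the queries that reference the challenge ciphertext $e'$ to its own $\OTO.\Vrfy$ oracle; a correct index guess, which happens with probability at least $1/(j'\,w^2)$, yields a win.

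The remaining step is the analog of the indistinguishability argument (\cref{lem: z(secpar) is non-negligible}), where the content encrypted in the challenge position is switched from the true $\OTO$ key to a bogus one. In the computational proof this switch was bounded by $\cca$ indistinguishability; here I would instead invoke the \emph{statistical} $j$-security of the encryption. Because $\adv$ ever sees at most $j$ ciphertexts, the joint distribution of the $j'$ ciphertexts it receives is negligibly close under the two choices of the challenge message, so the probability of the ``good'' event in $\Badv$'s simulation differs from its probability in the real game by only a negligible amount. Combining the three steps bounds $\Pr[\UnfExp^{\JToken,\Vrfy}_{\adv,\Pi_j}(\secpar)=1]$ by a negligible function, which is the target notion of \cref{def: J Tmac}.

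I expect the main obstacle to be making the authentication and indistinguishability steps go through against an \emph{unbounded} adversary that also holds a verification oracle. Unlike the computational setting, I cannot simulate a distinguisher and appeal to a hardness game; I must argue directly that the verification (hence decryption) oracle leaks nothing beyond the at most $j$ ciphertexts already revealed. The key point is that authentication makes any ciphertext $\adv$ did not receive reject with overwhelming probability, so the decryption oracle is effectively answerable from the revealed ciphertexts alone, letting the statistical $j$-security and $j$-unforgeability bounds apply to the whole interaction. Care is also needed to ensure the $\mac$ stays $j$-unforgeable under the polynomially many adaptive verification queries the game permits, so that the additive degradation from failed verification attempts remains negligible — this is precisely the reason the lift is stated only for a fixed $j$ and for length-restricted documents.
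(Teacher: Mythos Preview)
Your proposal is correct and follows essentially the same route the paper takes: the paper's own proof of \cref{lem: restricted onetime implies j times restricted} consists of a single sentence stating that ``the proof of security follows the same principles as \cref{lem: unrestricted onetime implies full blown unrestricted},'' and your write-up is precisely that argument with the authenticated encryption replaced by the information-theoretic $j$-secure encryption plus $j$-unforgeable $\mac$ of \cref{alg: TomJ}. Your extra care about the verification oracle in the unbounded setting (arguing that authentication collapses the decryption oracle to the $j$ revealed ciphertexts, so the statistical $j$-security and $j$-unforgeability bounds extend to the full interaction) goes somewhat beyond what the paper spells out, but it is the right point to check and your handling of it is sound.
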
 

The proof of security follows the same principles as \cref{lem: unrestricted onetime implies full blown unrestricted}.
The resulting scheme obtained by instantiating $\TM$ with the specific scheme $\oneunrest{\eta}$ (\cref{alg: OT}), is called $\Junrest{\eta}$.

\begin{algorithm} 
     \caption{$\TM$ - An $\Unc$ $\UnfDef^{\JToken,\Vrfy}$ $\Tmac$, for parameter $j$.}
    \label{alg: TomJ}
    \textbf{Assumes:} $\OT$ is an $\ell$-restricted $\Tmac$, $\Cmac$, $\Cenc$ are families of encryption and $\mac$ schemes respectively, where if $\keygen$ is provided with the parameter $j$, the schemes constitute   a $j\textit{-}\UnfDef$ $\mac$, and a $j\textit{-}\Sec$ encryption, respectively. The domain of  $\Cenc$ is assumed to be the message space, and the domain of $\Cmac$ is assumed to be the cipher-text space $\Cenc$.
    \begin{algorithmic}[1] 
        \Procedure{$\keygen$}{$1^\secpar,j$}
            \State \textbf{Return} $(\Cenc.\keygen(1^\secpar,j),\Cmac.\keygen(1^\secpar,j))$.
        \EndProcedure
    \end{algorithmic}
    \begin{algorithmic}[1] 
        \Procedure{$\tokengen_k$}{}
            \State Interpret $k=(k_{\Cenc},k_{\Cmac})$. 
            \State $
            \kappa \gets \OT.\keygen(1^\secpar) $.
            \State   $\kenc\gets \Cenc.\enc_{k_{\Cenc}}(\kappa)$, $\ket{\widetilde{\Stamp}}\gets\OT.\tokengen_{\kappa}$.
            
            \State \textbf{Return}
            $(\ket{\widetilde{\Stamp}},\kenc,\Cmac.\Sign_{k_{\Cmac}}(\kenc))$.
        \EndProcedure

    \end{algorithmic}
	\begin{algorithmic}[1] 
        \Procedure{$\Sign_{\ket{\Stamp}}$}{$m$}
            \State Interpret $\ket{\Stamp}$ as $(\ket{\widetilde{\Stamp}},\kenc, \ksig)$.
            \State \textbf{Return} $(\OTO.\Sign_{\ket{\widetilde{\Stamp}}}(m),\kenc,\ksig)$.
        \EndProcedure

    \end{algorithmic}
  	\begin{algorithmic}[1] 
    \Procedure{$\Vrfy_k$}{$m,\sigma$}
    
            \State Interpret $\sigma$ as $(s,\kenc,\ksig)$, and $k$ as $(k_{\Cenc},k_{\Cmac})$.
            \State $Res\gets \Cmac.\Vrfy_{k_{\Cmac}}(\kenc,\ksig)$.
            \If{Res=0}
             \State \textbf{Return 0}.
            \Else
                \State $\kappa\gets \Cenc.\Dec_{k_{\Cenc}}(\kenc)$.
                \State \textbf{Return} $\OTO.\Vrfy_{\kappa}(m,s)$.
            \EndIf

    \EndProcedure
    \end{algorithmic}

\end{algorithm}

\fi

\printnomenclature
\label{pg:nomenclature}
\ifnum\masterthesis=1
    \bibliography{main}

\newcommand{\etalchar}[1]{$^{#1}$}
\begin{thebibliography}{BCG{\etalchar{+}}02}

\bibitem[Aar20]{Aar20}
S.~Aaronson.
\newblock \href{https://doi.org/10.1137/18M120275X}{Shadow Tomography of
  Quantum States}.
\newblock {\em {SIAM} J. Comput.}, 49(5), 2020,  arXiv:
  \href{https://arxiv.org/abs/1711.01053}{{\ttfamily 1711.01053}}.

\bibitem[AC13]{AC13}
S.~Aaronson and P.~F. Christiano.
\newblock \href{https://doi.org/10.4086/toc.2013.v009a009}{Quantum Money from
  Hidden Subspaces}.
\newblock {\em Theory Comput.}, 9:349--401, 2013,  arXiv:
  \href{https://arxiv.org/abs/1203.4740}{{\ttfamily 1203.4740}}.

\bibitem[AGKZ20]{AGK+20}
R.~Amos, M.~Georgiou, A.~Kiayias, and M.~Zhandry.
\newblock \href{https://doi.org/10.1145/3357713.3384304}{One-shot signatures
  and applications to hybrid quantum/classical authentication}.
\newblock In K.~Makarychev, Y.~Makarychev, M.~Tulsiani, G.~Kamath, and
  J.~Chuzhoy, editors, {\em Proccedings of the Annual {ACM} {SIGACT} Symposium
  on Theory of Computing,}, pages 255--268. {ACM}, 2020, Cryptology ePrint
  Archive: {{\ttfamily \href{https://eprint.iacr.org/2020/107.pdf}{Report
  2020/107}}}.

\bibitem[AM17]{AM17}
G.~Alagic and C.~Majenz.
\newblock \href{https://doi.org/10.1007/978-3-319-63715-0\_11}{Quantum
  Non-malleability and Authentication}.
\newblock In J.~Katz and H.~Shacham, editors, {\em Advances in Cryptology -
  {CRYPTO}}, volume 10402 of {\em Lecture Notes in Computer Science}, pages
  310--341. Springer, 2017,  arXiv:
  \href{https://arxiv.org/abs/1610.04214}{{\ttfamily 1610.04214}}.

\bibitem[BCG{\etalchar{+}}02]{BCG+02}
H.~Barnum, C.~Cr{\'{e}}peau, D.~Gottesman, A.~D. Smith, and A.~Tapp.
\newblock \href{https://doi.org/10.1109/SFCS.2002.1181969}{Authentication of
  Quantum Messages}.
\newblock In {\em 43rd Symposium on Foundations of Computer Science {(FOCS}},
  pages 449--458. {IEEE} Computer Society, 2002,  arXiv:
  \href{https://arxiv.org/abs/quant-ph/0205128}{{\ttfamily quant-ph/0205128}}.

\bibitem[BGI{\etalchar{+}}12]{BGIRSVY12}
B.~Barak, O.~Goldreich, R.~Impagliazzo, S.~Rudich, A.~Sahai, S.~P. Vadhan, and
  K.~Yang.
\newblock \href{https://doi.org/10.1145/2160158.2160159}{On the (im)possibility
  of obfuscating programs}.
\newblock {\em J. {ACM}}, 59(2):6:1--6:48, 2012.

\bibitem[BGS13]{BGS13}
A.~Broadbent, G.~Gutoski, and D.~Stebila.
\newblock \href{http://eprint.iacr.org/2013/343}{Quantum one-time programs}.
\newblock {\em {IACR} Cryptol. ePrint Arch.}, 2013:343, 2013,  arXiv:
  \href{https://arxiv.org/abs/1211.1080}{{\ttfamily 1211.1080}}.

\bibitem[BGZ18]{BGZ18}
A.~Broadbent, S.~Gharibian, and H.-S. Zhou.
\newblock Quantum One-Time Memories from Stateless Hardware, 2018, arXiv:
  {{\ttfamily 1511.01363}}.
\newblock Withdrawn.

\bibitem[BGZ21]{BGZ21}
A.~Broadbent, S.~Gharibian, and H.~Zhou.
\newblock \href{https://doi.org/10.22331/q-2021-04-08-429}{Towards Quantum
  One-Time Memories from Stateless Hardware}.
\newblock {\em Quantum}, 5:429, 2021,  arXiv:
  \href{https://arxiv.org/abs/1810.05226}{{\ttfamily 1810.05226}}.

\bibitem[BI20]{BI20}
A.~Broadbent and R.~Islam.
\newblock \href{https://doi.org/10.1007/978-3-030-64381-2\_4}{Quantum
  Encryption with Certified Deletion}.
\newblock In R.~Pass and K.~Pietrzak, editors, {\em Theory of Cryptography
  {TCC}}, volume 12552 of {\em Lecture Notes in Computer Science}, pages
  92--122. Springer, 2020,  arXiv:
  \href{https://arxiv.org/abs/1910.03551}{{\ttfamily 1910.03551}}.

\bibitem[BS17]{BS16a}
S.~Ben{-}David and O.~Sattath.
\newblock \href{http://eprint.iacr.org/2017/094}{Quantum Tokens for Digital
  Signatures}.
\newblock {\em {IACR} Cryptol. ePrint Arch.}, page~94, 2017.

\bibitem[BV04]{BV04}
S.~Boyd and L.~Vandenberghe.
\newblock {\em Convex Optimization}.
\newblock Cambridge University Press, 2004.

\bibitem[BW16]{BW16}
A.~Broadbent and E.~Wainewright.
\newblock \href{https://doi.org/10.1007/978-3-319-49175-2\_4}{Efficient
  Simulation for Quantum Message Authentication}.
\newblock In A.~C.~A. Nascimento and P.~S. L.~M. Barreto, editors, {\em
  Information Theoretic Security}, volume 10015 of {\em Lecture Notes in
  Computer Science}, pages 72--91, 2016,  arXiv:
  \href{https://arxiv.org/abs/1607.03075}{{\ttfamily 1607.03075}}.

\bibitem[BZ13]{BZ13}
D.~Boneh and M.~Zhandry.
\newblock \href{https://doi.org/10.1007/978-3-642-38348-9\_35}{Quantum-Secure
  Message Authentication Codes}.
\newblock In T.~Johansson and P.~Q. Nguyen, editors, {\em Advances in
  Cryptology - {EUROCRYPT}}, volume 7881 of {\em Lecture Notes in Computer
  Science}, pages 592--608. Springer, 2013, Cryptology ePrint Archive:
  {{\ttfamily \href{https://eprint.iacr.org/2012/606.pdf}{Report 2012/606}}}.

\bibitem[CGLZ19]{CGL19}
K.~Chung, M.~Georgiou, C.~Lai, and V.~Zikas.
\newblock \href{https://doi.org/10.3390/cryptography3030022}{Cryptography with
  Disposable Backdoors}.
\newblock {\em Cryptogr.}, 3(3):22, 2019, Cryptology ePrint Archive:
  {{\ttfamily \href{https://eprint.iacr.org/2018/352.pdf}{Report 2018/352}}}.

\bibitem[CLLZ21]{CLLZ21}
A.~Coladangelo, J.~Liu, Q.~Liu, and M.~Zhandry.
\newblock \href{https://eprint.iacr.org/2021/946}{Hidden Cosets and
  Applications to Unclonable Cryptography}.
\newblock {\em {IACR} Cryptol. ePrint Arch.}, 2021:946, 2021, Cryptology ePrint
  Archive: {{\ttfamily \href{https://eprint.iacr.org/2021/946.pdf}{Report
  2021/946}}}.

\bibitem[CS20]{CS20}
A.~Coladangelo and O.~Sattath.
\newblock \href{https://doi.org/10.22331/q-2020-07-16-297}{A Quantum Money
  Solution to the Blockchain Scalability Problem}.
\newblock {\em Quantum}, 4:297, July 2020, arXiv: {{\ttfamily 2002.11998}}.

\bibitem[{De }02]{Dk02}
E.~{De Klerk}.
\newblock {\em Aspects of semidefinite programming: Interior point algorithms
  and selected applications}.
\newblock Number~65 in Applied optimization, ISSN 1384-6485. Kluwer Academic
  Publishers, Netherlands, 2002.

\bibitem[FGH{\etalchar{+}}10]{FGH+10}
E.~Farhi, D.~Gosset, A.~Hassidim, A.~Lutomirski, D.~Nagaj, and P.~Shor.
\newblock \href{https://doi.org/10.1103/PhysRevLett.105.190503}{Quantum State
  Restoration and Single-Copy Tomography for Ground States of {H}amiltonians}.
\newblock {\em Phys. Rev. Lett.}, 105:190503, Nov 2010,  arXiv:
  \href{https://arxiv.org/abs/0912.3823}{{\ttfamily 0912.3823}}.

\bibitem[FGH{\etalchar{+}}12]{FGH+12}
E.~Farhi, D.~Gosset, A.~Hassidim, A.~Lutomirski, and P.~Shor.
\newblock \href{https://doi.org/10.1145/2090236.2090260}{Quantum money from
  knots}.
\newblock In {\em Proceedings of the Innovations in Theoretical Computer
  Science Conference}, pages 276--289. ACM, {ACM}, 2012,  arXiv:
  \href{https://arxiv.org/abs/1004.5127}{{\ttfamily 1004.5127}}.

\bibitem[Gav12]{Gav12}
D.~Gavinsky.
\newblock \href{https://doi.org/10.1109/CCC.2012.10}{Quantum Money with
  Classical Verification}.
\newblock In {\em Proceedings of the Conference on Computational Complexity,
  {CCC}}, pages 42--52. {IEEE} Computer Society, 2012,  arXiv:
  \href{https://arxiv.org/abs/1109.0372}{{\ttfamily 1109.0372}}.

\bibitem[GB08]{GB08}
M.~Grant and S.~Boyd.
\newblock Graph implementations for nonsmooth convex programs.
\newblock In V.~Blondel, S.~Boyd, and H.~Kimura, editors, {\em Recent Advances
  in Learning and Control}, Lecture Notes in Control and Information Sciences,
  pages 95--110. Springer-Verlag Limited, 2008.
\newblock \url{http://stanford.edu/~boyd/graph_dcp.html}.

\bibitem[GKR08]{GKR08}
S.~Goldwasser, Y.~T. Kalai, and G.~N. Rothblum.
\newblock \href{https://doi.org/10.1007/978-3-540-85174-5\_3}{One-Time
  Programs}.
\newblock In D.~A. Wagner, editor, {\em Advances in Cryptology - {CRYPTO}
  2008}, volume 5157 of {\em Lecture Notes in Computer Science}, pages 39--56.
  Springer, 2008.

\bibitem[Gol04]{Gol04}
O.~Goldreich.
\newblock {\em The Foundations of Cryptography - Volume 2: Basic Applications}.
\newblock Cambridge University Press, 2004.

\bibitem[GYZ17]{GYZ17}
S.~Garg, H.~Yuen, and M.~Zhandry.
\newblock \href{https://doi.org/10.1007/978-3-319-63715-0\_12}{New Security
  Notions and Feasibility Results for Authentication of Quantum Data}.
\newblock In J.~Katz and H.~Shacham, editors, {\em Advances in Cryptology -
  {CRYPTO}}, volume 10402 of {\em Lecture Notes in Computer Science}, pages
  342--371. Springer, 2017,  arXiv:
  \href{https://arxiv.org/abs/1607.07759}{{\ttfamily 1607.07759}}.

\bibitem[HRP{\etalchar{+}}06]{HRP+06}
P.~A. Hiskett, D.~Rosenberg, C.~G. Peterson, R.~J. Hughes, S.~Nam, A.~E. Lita,
  A.~J. Miller, and J.~E. Nordholt.
\newblock \href{https://doi.org/10.1088/1367-2630/8/9/193}{Long-distance
  quantum key distribution in optical fibre}.
\newblock {\em New Journal of Physics}, 8(9):193--193, sep 2006.

\bibitem[KL14]{KL14}
J.~Katz and Y.~Lindell.
\newblock {\em Introduction to Modern Cryptography, Second Edition}.
\newblock {CRC} Press, 2014.

\bibitem[KLH{\etalchar{+}}15]{KLH+15}
B.~Korzh, C.~C.~W. Lim, R.~Houlmann, N.~Gisin, M.~J. Li, D.~Nolan,
  B.~Sanguinetti, R.~Thew, and H.~Zbinden.
\newblock \href{https://doi.org/10.1038/nphoton.2014.327}{Provably secure and
  practical quantum key distribution over 307km of optical fibre}.
\newblock {\em Nature Photonics}, 9(3), 2015.

\bibitem[KLLN16]{GLL+16}
M.~Kaplan, G.~Leurent, A.~Leverrier, and M.~Naya{-}Plasencia.
\newblock \href{https://doi.org/10.1007/978-3-662-53008-5\_8}{Breaking
  Symmetric Cryptosystems Using Quantum Period Finding}.
\newblock In M.~Robshaw and J.~Katz, editors, {\em Advances in Cryptology -
  {CRYPTO}}, volume 9815 of {\em Lecture Notes in Computer Science}, pages
  207--237. Springer, 2016,  arXiv:
  \href{https://arxiv.org/abs/1602.05973}{{\ttfamily 1602.05973}}.

\bibitem[Lov03]{Lov03}
L.~Lov{\'{a}}sz.
\newblock \href{https://doi.org/10.1007/0-387-22444-0_6}{Semidefinite Programs
  and Combinatorial Optimization}.
\newblock In {\em Recent Advances in Algorithms and Combinatorics}, pages
  137--194. Springer New York, 2003.

\bibitem[Lut10]{Lut10}
A.~Lutomirski.
\newblock An online attack against Wiesner's quantum money, 2010,  arXiv:
  \href{https://arxiv.org/abs/1010.0256}{{\ttfamily 1010.0256}}.

\bibitem[MS07]{MS07}
R.~Mittal and M.~Szegedy.
\newblock \href{https://doi.org/10.1007/978-3-540-74240-1\_38}{Product Rules in
  Semidefinite Programming}.
\newblock In E.~Csuhaj{-}Varj{\'{u}} and Z.~{\'{E}}sik, editors, {\em
  Fundamentals of Computation Theory}, volume 4639 of {\em Lecture Notes in
  Computer Science}, pages 435--445. Springer, 2007.

\bibitem[MVW12]{MVW12}
A.~Molina, T.~Vidick, and J.~Watrous.
\newblock \href{https://doi.org/10.1007/978-3-642-35656-8\_4}{Optimal
  Counterfeiting Attacks and Generalizations for Wiesner's Quantum Money}.
\newblock In K.~I. et~al., editor, {\em Theory of Quantum Computation,
  Communication, and Cryptography, {TQC}}, volume 7582 of {\em Lecture Notes in
  Computer Science}, pages 45--64. Springer, 2012,  arXiv:
  \href{https://arxiv.org/abs/1202.4010}{{\ttfamily 1202.4010}}.

\bibitem[NC11]{NC11}
M.~A. Nielsen and I.~L. Chuang.
\newblock {\em Quantum Computation and Quantum Information: 10th Anniversary
  Edition}.
\newblock Cambridge University Press, New York, NY, USA, 10th edition, 2011.

\bibitem[PYJ{\etalchar{+}}12]{PYJ+12}
F.~Pastawski, N.~Y. Yao, L.~Jiang, M.~D. Lukin, and J.~I. Cirac.
\newblock \href{https://doi.org/10.1073/pnas.1203552109}{Unforgeable
  noise-tolerant quantum tokens}.
\newblock {\em Proceedings of the National Academy of Sciences},
  109(40):16079--16082, 2012,  arXiv:
  \href{https://arxiv.org/abs/1112.5456}{{\ttfamily 1112.5456}}.

\bibitem[Rob21]{Rob21}
B.~Roberts.
\newblock Security Analysis of Quantum Lightning.
\newblock In {\em {EUROCRYPT} {(2)}}, volume 12697 of {\em Lecture Notes in
  Computer Science}, pages 562--567. Springer, 2021.

\bibitem[RS19]{RS20}
R.~Radian and O.~Sattath.
\newblock \href{https://doi.org/10.1145/3318041.3355462}{Semi-Quantum Money}.
\newblock In {\em Proceedings of the Conference on Advances in Financial
  Technologies}, pages 132--146. {ACM}, 2019,  arXiv:
  \href{https://arxiv.org/abs/1908.08889}{{\ttfamily 1908.08889}}.

\bibitem[Son14]{Son14}
F.~Song.
\newblock \href{https://doi.org/10.1007/978-3-319-11659-4\_15}{A Note on
  Quantum Security for Post-Quantum Cryptography}.
\newblock In M.~Mosca, editor, {\em Post-Quantum Cryptography}, volume 8772 of
  {\em Lecture Notes in Computer Science}, pages 246--265. Springer, 2014,
  Cryptology ePrint Archive: {{\ttfamily
  \href{https://eprint.iacr.org/2014/709}{Report 2014/709}}}.

\bibitem[SY17]{SY17}
F.~Song and A.~Yun.
\newblock \href{https://doi.org/10.1007/978-3-319-63715-0\_10}{Quantum Security
  of {NMAC} and Related Constructions - {PRF} Domain Extension Against Quantum
  attacks}.
\newblock In J.~Katz and H.~Shacham, editors, {\em Advances in Cryptology -
  {CRYPTO}}, volume 10402 of {\em Lecture Notes in Computer Science}, pages
  283--309. Springer, 2017.

\bibitem[Unr10]{Unr10}
D.~Unruh.
\newblock \href{https://doi.org/10.1007/978-3-642-13190-5\_25}{Universally
  Composable Quantum Multi-party Computation}.
\newblock In H.~Gilbert, editor, {\em Advances in Cryptology - {EUROCRYPT}
  2010}, volume 6110 of {\em Lecture Notes in Computer Science}, pages
  486--505. Springer, 2010,  arXiv:
  \href{https://arxiv.org/abs/0910.2912}{{\ttfamily 0910.2912}}.

\bibitem[VB96]{VB96}
L.~Vandenberghe and S.~P. Boyd.
\newblock \href{https://doi.org/10.1137/1038003}{Semidefinite Programming}.
\newblock {\em {SIAM} Rev.}, 38(1):49--95, 1996.

\bibitem[WC81]{WC81}
M.~N. Wegman and L.~Carter.
\newblock \href{https://doi.org/10.1016/0022-0000(81)90033-7}{New Hash
  Functions and Their Use in Authentication and Set Equality}.
\newblock {\em J. Comput. Syst. Sci.}, 22(3):265--279, 1981.

\bibitem[Wic17]{DW17}
D.~Wichs.
\newblock
  \href{http://www.ccs.neu.edu/home/wichs/class/crypto-fall17/ps1.pdf}{Graduate
  Cryptography-Problem Set 1}.
\newblock Problem-Set, 2017.

\bibitem[Wie83]{Wie83}
S.~Wiesner.
\newblock \href{https://doi.org/10.1145/1008908.1008920}{Conjugate coding}.
\newblock {\em ACM Sigact News}, 15(1):78--88, 1983.

\bibitem[Zha21]{Zha21}
M.~Zhandry.
\newblock \href{https://doi.org/10.1007/s00145-020-09372-x}{Quantum Lightning
  Never Strikes the Same State Twice. Or: Quantum Money from Cryptographic
  Assumptions}.
\newblock {\em J. Cryptol.}, 34(1):6, 2021,  arXiv:
  \href{https://arxiv.org/abs/1711.02276}{{\ttfamily 1711.02276}}.

\end{thebibliography}
\fi

\end{document}